\newcommand{\norm}[1]{\left\lVert #1\right\rVert}
\newcommand{\txt}[1]{\quad\textnormal{#1}\quad}
\newcommand{\dau}[0]{\partial}
\newcommand{\eps}[0]{\epsilon}
\newcommand{\Li}[0]{\mathcal{L}}
\newcommand{\ip}[2]{\langle\: #1\, ,  #2 \:\rangle}
\newcommand{\pis}[0]{\tilde{\pi}}
\newcommand{\g}[0]{\mathfrak{g}}
\newcommand{\R}[0]{\mathbb{R}}
\newcommand{\Z}[0]{\mathbb{Z}}
\newcommand{\C}[0]{\mathbb{C}\;}
\newcommand{\N}[0]{\mathbb{N}}
\newcommand{\GL}[0]{GL}
\newcommand{\com}[0]{\circlearrowleft}
\newcommand{\hor}[0]{\mathrm{hor}}
\newcommand{\F}[0]{\mathcal{F}}
\long\def\symbolfootnote[#1]#2{\begingroup%
\def\thefootnote{\fnsymbol{footnote}}\footnote[#1]{#2}\endgroup}
\newcommand{\exeq}[0]{\stackrel{!}{=}}
\newcommand{\ra}[1]{\stackrel{#1}{\longrightarrow}}
\newcommand{\la}[1]{\stackrel{#1}{\longleftarrow}}
\newtheorem{theorem}{Theorem}[section]
\newtheorem*{theorem*}{Theorem}
\newtheorem*{lemma*}{Lemma}
\newtheorem*{claim*}{Claim}
\newtheorem{claim}[theorem]{Claim}
\newtheorem{corollary}[theorem]{Corollary}
\theoremstyle{definition}
\newtheorem*{definition*}{Definition}
\newtheorem{notation}[theorem]{Notation}
\newtheorem{remark}[theorem]{Remark}
\begin{document}
\title{\vspace{80pt}
\large Bachelor's Thesis\\
\vspace{20pt} \huge \textbf{Principal Bundles and Gauge Theories}}
\author{Matthijs V\'ak\'ar\\
      Student number 3367428 \\
        Universiteit Utrecht}
\date{Utrecht, June 21, 2011}
\maketitle
\vspace{240pt}
\Large\textbf{\qquad\qquad Supervisor: Erik van den Ban}\normalsize
\thispagestyle{empty}
\clearpage

\section*{Preface}
Principal bundles are of great mathematical importance. It will be argued that, in some sense, they are the best fibre bundles for a given structure group, from which all others can be constructed. For instance, one can use one principal bundle to understand all tensor bundles of a vector bundle or one principal bundle to understand all (Dirac and non-Dirac) spinor bundles. Moreover, they encode the structure of certain nice (i.e. smooth, free, and proper) group actions on a manifold. It is clear that these objects could be studied in their own right, without any physical applications. 

On the other hand, they are arguably \emph{the} most important spaces in modern physics. They arise when we want to resolve the issue that we are trying to give an intrinsic description of nature, while all measurements made with respect to the reference frame of some observer are non-intrinsic -- one might even argue that they say as much about the observer as about the observed. They are the spaces on which certain modern field theories describing elementary particle interactions, called \emph{gauge theories}, are formulated, even if the physicists themselves are not very explicit about it.

Mathematicians often equip principal bundles with some extra geometrical structure called a principal connection. These are the natural generalisations of the affine connections we know from Riemannian geometry. They have been an object of study in pure mathematics ever since Cartan. Following the famous paper by Yang and Mills in 1954, a theory of elementary particle physics was built using precisely this kind of mathematics (without being aware of it). However, as is often the case in physics and mathematics, communication was rather sporadic and it was realised only later that the physicists had been studying precisely what the mathematicians had already worked out years before. \cite{marmat}

It should be clear that the subject of principal bundles with connections, sometimes called gauge theory, is of interest from a mathematical perspective as well as from a physical one. This is also clear from the amount of material on the subject published in both disciplines. However, little has been written that would satisfy both mathematicians and physicists - as far as I am concerned, only \cite{blegau} comes close to doing this. Although I have no illusions of filling this gap, I will try to give an introduction to the subject from both points of view. Although most results in this thesis are not original - probably with the exception of the category-theoretic ones, especially those in section \ref{catfib} - I presume that the presentation and the derivations are less conventional.\\
\\
I assume that the reader has some familiarity with manifold geometry and the classical field theories of general relativity and electromagnetism. Naturally, I first discuss the mathematics of the subject and only then proceed to the physical side of the story, of which I unfortunately cover only the tip of the iceberg, restricting myself to a brief introduction to non-second-quantised gauge theories of the Yang-Mills kind. I have tried to keep both accounts as separate as possible, although the chapter on physics rests heavily on the mathematical preliminaries. Almost all results that are proven in the chapter on mathematics have some application in physics, even if this application is not entirely clear from this thesis.

\clearpage
\tableofcontents
\clearpage
\addcontentsline{toc}{chapter}{Key Insights}
\chapter*{Key Insights}
While writing this thesis, I have had a few insights that now form the guiding thread of the text. The reader might want to bear these in mind.
\begin{enumerate}
\item Mathematical insight: Principal bundles are bundles of frames of their associated bundles. We can formulate this principle as an equivalence of categories between principal $G$-bundles and $(G,\lambda)$-fibre bundles for an effective action $\lambda$.
\item Mathematical insight: It should be possible to construct categories of $(G,\lambda)$-fibre bundles with appropriate connections. (In the case of $\GL(\Bbbk^k)$ these would be affine connections.) Then the equivalence of item 1 should extend to an equivalence of categories between principal bundles with principal connections and these $(G,\lambda)$-bundles with $(G,\lambda)$-connections. In the case of $\GL(\Bbbk^k)$ it is clear from this thesis that this is true and we should easily be able to prove the general claim in an analogous way.
\item Didactic insight: Using this equivalence, we can view principal bundles with principal connections (which arise in particle physics) as a generalisation of vector bundles with affine connections (which arise in general relativity), the generalising step being the replacement of $\GL(\Bbbk^k)$ by an arbitrary Lie group $G$.
\item Physical insight: Because of item 1, principal bundles are the natural context for many physical gauge theories. Principal connections naturally arise when we try to give a geometric description of gauge-invariant interactions. Principal bundles with connections may serve as a unifying language in physics. Not only do they provide a mathematically rigorous framework for first-quantised gauge field theories, but they also give geometric interpretations for the physicists' calculations, helping to create an overview of this vast subject. \end{enumerate}
\clearpage
\qquad \vspace{50pt}\\
\\
\noindent \Huge \textbf{Notation and Conventions} \normalsize
\addcontentsline{toc}{chapter}{Notation and Conventions}\vspace{40pt}\\
Throughout the text, maps and actions will be assumed to be smooth, i.e. $C^\infty$, unless stated otherwise.\\
\\
At times, we use the notation $m\in M$, $g\in G$, $s\in S$, $f\in F$, $p\in P$, $v\in V$, without further comment. Moreover, principal connection $1$-forms are denoted by $\omega$ and equivariant maps in $C^\infty(P,W)$ from a principal $G$-bundle to a vector space with a $G$-action are denoted by $\psi$.

If $F\ra{\pi}M$ is a fibre bundle and $m\in M$, $F_m$ denotes the fibre of $\pi$ over $m$: $\pi^{-1}(m)$. Moreover, we use the conventional abuse of notation, sometimes writing $F$ for a fibre bundle instead of $F\ra{\pi}M$. A second abuse of notation is the dot notation for a group action if there is no ambiguity in which action we mean.\\
\\
A glossary of notation can be found at the end of this thesis.

\chapter{Fibre Bundles and Connections}

\section{Categories of Fibre Bundles}
\label{catfib}
I know from personal experience that the notion of a fibre bundle can be a very confusing one, if it is not introduced properly. This is because there are many different but related variants, and their definitions include many subtleties. Moreover, it is very hard to find a good account of these in the literature. I will therefore begin this thesis with some very strict categorical definitions in this domain, most of which I have had to formulate myself, since I could not find them in the literature\footnote{The book that comes closest is the classic reference \cite{stetop} by Steenrod.}.

These definitions allow us to interpret the associated bundle construction as defining an equivalence of categories (for an effective action). This result sets the tone for this entire thesis. If one reads between the lines in many books, it seems that some form of this result must be known, although it has not been (as far as I know) stated explicitly. Starting from this abstract point of view, I will gradually specialise to more specific variants of fibre bundles.

\begin{definition*}[Fibre bundle]A \emph{fibre bundle\index{fibre bundle} with standard fibre $S$} is the combination of
\begin{enumerate}
\item a smooth\footnote{Most of this theory can be developed in equal generality for the case of topological spaces, instead of smooth manifolds. However, from now on, we shall assume that we are working with the latter, since the smooth structure is essential to formulate the equations of many physical theories.} manifold $S$, called the \emph{fibre},
\item a map of manifolds $F\ra{\pi}M$, called the \emph{bundle projection} from the \emph{total space} $F$ onto the \emph{base space} $M$,
\end{enumerate}
such that for each $m\in M$, there exists a \emph{local trivialisation}, i.e. a neighbourhood $U$ of $m$ in $M$, such that $\pi^{-1}(U)\cong U \times S$ via a fibre-respecting diffeomorphism:
\begin{center}
\begin{tikzcd}[column sep=large,row sep=large]
\pi^{-1}(U) \arrow[r,"\psi"] \arrow[d,"{\pi|_{\pi^{-1}(U)}}"'] \arrow[dr,phantom,"\com",description] & U\times S \arrow[d,"\pi_1"]\\
U \arrow[r,equal] & U.
\end{tikzcd}
\end{center}
We define a \emph{map of fibre bundles} $\pi\ra{k}\pi'$ to be a tuple of maps $F\ra{k}F'$ and $M\ra{k_M}M'$ such that

\begin{center}
\begin{tikzcd}[column sep=large,row sep=large]
F \arrow[r,"k"] \arrow[d,"\pi"'] \arrow[dr,phantom,"\com",description] & F' \arrow[d,"{\pi'}"]\\
M \arrow[r,"{k_M}"'] & M'.
\end{tikzcd}
\end{center}

By a \emph{map of fibre bundles over $M$}, we mean a map $\pi\ra{k}\pi'$ between fibre bundles for which $M=M'$. Note that for some authors $k_M$ might still be nontrivial; here I mean the more restrictive case in which $k_M=\mathrm{id}_M$.

This gives a category $FB$ of fibre bundles and a category $FB_M$ of fibre bundles over $M$, where composition is defined as one would expect.
\end{definition*}
One often has preferred local trivialisations $(U_\alpha,\psi_\alpha)$ on a fibre bundle. A set of such local trivialisations such that $(U_\alpha)$ covers $M$ is called a \emph{fibre bundle atlas}\index{fibre bundle atlas}. Then, for each $\alpha,\beta$ there exists a unique function $(U_\alpha\cap U_\beta)\times S\ra{\phi_{\alpha\beta}}S$, such that $\psi_\alpha\circ\psi_\beta^{-1}(m,s)=(m,\phi_{\alpha\beta}(m,s))$. Equivalently, we find \emph{transition functions} $U_\alpha\cap U_\beta\ra{\phi_{\alpha\beta}}\mathrm{Diff}\; S$, with $\phi_{\alpha\beta}(m)(s)=\phi_{\alpha\beta}(m,s)$. Obviously, these maps $(\phi_{\alpha\beta})$ satisfy the \emph{\v Cech cocycle condition}\index{\v Cech cocycle condition} $\phi_{\alpha\beta}(m)\cdot \phi_{\beta\gamma}(m)=\phi_{\alpha\gamma}(m)$. A fibre bundle together with such a fibre bundle atlas is called a fibre bundle with coordinates or, briefly, a \emph{coordinate fibre bundle}\index{coordinate fibre bundle}.

\subsection{Coordinate $(G,\lambda)$-Bundles}
\label{cocyc}
Unfortunately, the group $\mathrm{Diff}\; S$ is difficult to handle. For instance, we run into trouble when we try to endow it with a smooth structure. It turns out to have a natural structure of an infinite-dimensional manifold (a Fr\'echet manifold if $S$ is compact), in which, indeed, the group operations are smooth maps. \cite{micman} This would lead to many technicalities. Fortunately, for many important examples of fibre bundles these transition functions only take values in a small part of $\mathrm{Diff}\; S$. To be precise, often, we can find a Lie group $G$, together with an action $G\times S\ra{\lambda} S$, such that the $\phi_{\alpha\beta}$ factor through $\lambda$. That is, we can find maps $U_\alpha\cap U_\beta\ra{g_{\alpha\beta}}G$, such that $\lambda\circ g_{\alpha\beta}=\phi_{\alpha\beta}$:
\begin{center}
\begin{tikzcd}[column sep=large,row sep=large]
U_\alpha\cap U_\beta \arrow[rrrr,"{\phi_{\alpha\beta}}"] \arrow[ddrr,"{g_{\alpha\beta}}"'] & & & & \mathrm{Diff}\; S\\
& & \com & &\\
& & G \arrow[uurr,"\lambda"'] & &.
\end{tikzcd}
\end{center} 
The special case of an \emph{effective} action\index{effective group action} $\lambda$, i.e. where we can interpret $\lambda$ as an (algebraic) embedding $G\ra{\lambda} \mathrm{Diff}\; S$, will turn out to be of special importance since the fibre bundles corresponding to these actions can be characterised in a coordinate-free way (by a maximal atlas). This leads to the following definitions.
\begin{definition*}[$G$-cocycle]Let $M$ be a manifold and $G$ a Lie group. We define a $G$-\emph{cocycle of transition functions} on it to be the combination of
\begin{enumerate}
\item an open cover $(U_\alpha)$,
\item a family of smooth mappings $U_\alpha\cap U_\beta\ra{g_{\alpha\beta}}G$, which satisfies the cocycle condition $g_{\alpha\beta}(m)\cdot g_{\beta\gamma}(m)=g_{\alpha\gamma}(m)$, for all $m\in U_\alpha\cap U_\beta\cap U_\gamma$ and $g_{\alpha\alpha}(m)=e$.
\end{enumerate}
We understand a \emph{map of cocycles} to consist of
\begin{enumerate}
\item a map $M\ra{h}M'$,
\item a tuple of cocycles $(U_\alpha,g_{\alpha\beta})$ on $M$ and $(U'_\alpha,g'_{\alpha\beta})$ on $M'$,
\item maps $U_\beta\cap h^{-1}(U'_\alpha)\ra{h_{\alpha\beta}}G$,
\end{enumerate}
 such that $h_{\alpha\beta}(m)\cdot g_{\beta\gamma}(m)=h_{\alpha\gamma}(m)$, for $m\in U_\gamma\cap U_\beta \cap h^{-1}(U'_\alpha)$ and $g'_{\alpha\beta}(h(m))\cdot h_{\beta\gamma}(m)=h_{\alpha\gamma}(m)$, for $m\in U_\gamma\cap h^{-1}(U'_\beta\cap U'_\alpha)$.\\
\\
It is easily seen that the $G$-cocycles on smooth manifolds form a category $CC^G$, where the composition of $(h,(U_\alpha,g_{\alpha\beta}),(U'_\alpha,g'_{\alpha\beta}),h_{\alpha\beta})$ and $(i,(U'_\alpha,g'_{\alpha\beta}),(U''_\alpha,g''_{\alpha\beta}),i_{\alpha\beta})$ is defined to be
$(i\circ h,(U_\alpha,g_{\alpha\beta}),(U''_\alpha,g''_{\alpha\beta}),(i\circ h)_{\alpha\beta})$, with $(i\circ h)_{\alpha\gamma}(m):=i_{\alpha\beta}(h(m))\cdot h_{\beta\gamma}(m)$\footnote{This is immediately seen to be independent of the choice of $\beta$.}.
\end{definition*}
\begin{definition*}[Coordinate $(G,\lambda)$-bundle structure] Suppose $F\ra{\pi}M$ is a fibre bundle, with standard fibre $S$, $G$ a Lie group and $G\times S\ra{\lambda} S$ is a left action of $G$ on $S$. A $(G,\lambda)$-fibre bundle with coordinates or, briefly, a \emph{coordinate $(G,\lambda)$-bundle structure}\index{coordinate $(G,\lambda)$-bundle} on $\pi$ for the action $\lambda$ consists of
\begin{enumerate}
\item a $G$-cocycle $(U_\alpha,g_{\alpha\beta})$ on $M$,
\item a fibre bundle atlas $(U_\alpha,\psi_\alpha)$,
\end{enumerate}
such that $\psi_\alpha\circ\psi_\beta^{-1}(m,s)=(m,\lambda(g_{\alpha\beta}(m),s))$. A coordinate $(G,\lambda)$-bundle is also called a coordinate $G$-bundle if there is no ambiguity about the action $\lambda$.\\
\\
We understand a \emph{map of coordinate $G$-bundles} to consist of
\begin{enumerate}
\item a map of cocycles $(h,(U_\alpha,g_{\alpha\beta}),(U'_\alpha,g'_{\alpha\beta}),h_{\alpha\beta})$,
\item a map of fibre bundles:
\begin{center}
\begin{tikzcd}[column sep=large,row sep=large]
F \arrow[r,"k"] \arrow[d,"\pi"'] & F' \arrow[d,"{\pi'}"]\\
M \arrow[r,"{k_M}"'] & M',
\end{tikzcd}
\end{center}
\end{enumerate}
such that $k_M=h$ and $\psi'_\alpha \circ k \circ \psi_\beta^{-1}(m,s)=(k_M(m),\lambda(h_{\alpha\beta}(m),s))$.\\
\\
Again, we can form a category, which we shall call $CFB^\lambda$, where the composition is understood to be the simultaneous composition of maps of cocycles and of fibre bundles.
\end{definition*}
Note that, in both categories $CC^G$ and $CFB^\lambda$, the isomorphisms are precisely those maps that are diffeomorphisms on the base space.

One can wonder if it is possible to construct a fibre bundle from an arbitrary $(G,\lambda)$-cocycle. The (affirmative) answer is given by the well-known fibre bundle construction theorem. Here, I formulate a categorical version of this theorem.
\begin{theorem}[Fibre bundle construction theorem]\index{fibre bundle construction theorem}\label{cfbcon} Let $G\times S\ra{\lambda} S$ be a smooth group action. We then have an equivalence of categories
$$CC^G\mathop{\leftrightarrows}^{CU^\lambda}_{CB^\lambda} CFB^\lambda.$$
\end{theorem}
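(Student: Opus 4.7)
The plan is to construct the functors $CU^\lambda$ and $CB^\lambda$ explicitly, verify functoriality, and then exhibit one literal identity and one natural isomorphism between their compositions and the identity functors on $CC^G$ and $CFB^\lambda$. The functor $CB^\lambda : CFB^\lambda \to CC^G$ is the evident forgetful functor: it sends a coordinate $(G,\lambda)$-bundle $(\pi,(U_\alpha,\psi_\alpha),(g_{\alpha\beta}))$ to its underlying cocycle $(U_\alpha,g_{\alpha\beta})$, and a morphism $(k,k_M,h_{\alpha\beta})$ to the map of cocycles $(k_M,(U_\alpha,g_{\alpha\beta}),(U'_\alpha,g'_{\alpha\beta}),h_{\alpha\beta})$; functoriality is essentially by definition of composition in $CFB^\lambda$.

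For $CU^\lambda : CC^G \to CFB^\lambda$ I would carry out the classical gluing construction. Given a cocycle $(U_\alpha,g_{\alpha\beta})$ on $M$, set $F := \bigsqcup_\alpha (U_\alpha \times S) / \sim$, where $(m,s)_\beta \sim (m,\lambda(g_{\alpha\beta}(m),s))_\alpha$ for $m \in U_\alpha \cap U_\beta$; the identities $g_{\alpha\alpha}=e$ and $g_{\alpha\beta}(m) g_{\beta\gamma}(m)=g_{\alpha\gamma}(m)$ make this an equivalence relation. Equip $F$ with the unique smooth structure in which the canonical injections $U_\alpha\times S \hookrightarrow F$ are open embeddings, smoothness of the transition maps being a consequence of smoothness of $\lambda$ and of the $g_{\alpha\beta}$. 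Projection onto the first factor yields $\pi$, and the inverses of the canonical embeddings serve as local trivialisations $\psi_\alpha$ satisfying $\psi_\alpha\circ\psi_\beta^{-1}(m,s)=(m,\lambda(g_{\alpha\beta}(m),s))$ by construction. On a map of cocycles $(h,(U_\alpha,g_{\alpha\beta}),(U'_\alpha,g'_{\alpha\beta}),h_{\alpha\beta})$ I would define $k$ locally by $k(m,s)_\beta := (h(m),\lambda(h_{\alpha\beta}(m),s))_\alpha$ on $U_\beta \cap h^{-1}(U'_\alpha)$; the two compatibility equations on the $h_{\alpha\beta}$ in the definition of a map of cocycles are precisely what is needed for $k$ to descend to a well-defined, smooth, fibre-respecting map, and they also deliver the required identity $\psi'_\alpha \circ k \circ \psi_\beta^{-1}(m,s)=(h(m),\lambda(h_{\alpha\beta}(m),s))$.

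By construction $CB^\lambda \circ CU^\lambda = \mathrm{id}_{CC^G}$ on the nose: gluing and then reading off the cocycle returns the original cocycle and the original map of cocycles verbatim. In the other direction, for a coordinate bundle $\pi$ with trivialisations $\psi_\alpha$ and cocycle $g_{\alpha\beta}$, I would define the natural isomorphism $\eta_\pi : CU^\lambda(CB^\lambda(\pi)) \to \pi$ on representatives by $(m,s)_\alpha \mapsto \psi_\alpha^{-1}(m,s)$; well-definedness on the quotient is exactly the transition identity $\psi_\alpha \circ \psi_\beta^{-1}(m,s) = (m,\lambda(g_{\alpha\beta}(m),s))$, $\eta_\pi$ is a fibre-preserving diffeomorphism because the $\psi_\alpha$ are, and it restricts to the identity on the underlying cocycle so is a morphism in $CFB^\lambda$. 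Naturality in $\pi$ is a straightforward diagram chase unwinding the definition of a morphism of coordinate bundles.

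The main obstacle will be the smooth-manifold bookkeeping for $F = \bigsqcup_\alpha (U_\alpha \times S)/\sim$: verifying Hausdorffness of the quotient topology, second countability (inherited from a countable refinement of $(U_\alpha)$), and that the charts $U_\alpha \times S$ assemble into a consistent smooth atlas via the transition maps $(m,s)\mapsto (m,\lambda(g_{\alpha\beta}(m),s))$, together with the check that the morphism $k$ attached to a cocycle map really descends to a smooth map between the glued total spaces. All remaining verifications — functoriality of $CU^\lambda$ under composition, naturality of $\eta$, and compatibility of $\eta$ with the cocycle-morphism identities — reduce to direct applications of the cocycle and cocycle-map equations.
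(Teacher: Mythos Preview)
Your proposal is correct and follows essentially the same route as the paper: one functor is forgetful, the other is the classical gluing construction, one composite is literally the identity on cocycles, and the other is naturally isomorphic to the identity via the maps $(m,s)_\alpha \mapsto \psi_\alpha^{-1}(m,s)$, which is exactly the paper's $\delta_\F = \psi_\alpha^{-1}\circ\psi_\alpha'$. The only discrepancy is that you have swapped the names: in the paper (and in the diagram $CC^G\mathop{\leftrightarrows}^{CU^\lambda}_{CB^\lambda} CFB^\lambda$, where the top arrow points left) $CU^\lambda$ is the forgetful functor $CFB^\lambda\to CC^G$ and $CB^\lambda$ is the construction $CC^G\to CFB^\lambda$, whereas you have assigned them the other way around.
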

\begin{proof}We define the functor $CU^\lambda$ as the obvious forgetful functor that sends a coordinate $(G,\lambda)$-bundle $(\pi,U_\alpha,\psi_\alpha,g_{\alpha\beta})$ to the cocycle $(U_\alpha,g_{\alpha\beta})$ and a map of coordinate bundles to its map of cocycles. Functoriality holds by definition\footnote{Moreover, I defined the composition of maps of cocycles in this way, so that functoriality would hold.} of the composition in $CFB^\lambda$.\\
\\
The definition of the functor $CB^\lambda$ is the non-trivial part of the equivalence. We proceed by first defining it on objects. Suppose we are given a cocycle $(U_\alpha,g_{\alpha\beta})\in CC^G$ on $M$. Write $\Sigma$ for the topological coproduct $\bigsqcup_\alpha U_\alpha\times S$. To be explicit, we understand the underlying set to be $\bigcup_\alpha \{\alpha\}\times U_\alpha\times S$. On this topological space, we define an equivalence relation:
$$\{\alpha\}\times U_\alpha\times S\ni (\alpha,m,s)\sim (\beta,m',s')\in \{\beta\}\times U_\beta\times S\txt{iff} (m,s)=(m',g_{\alpha\beta}(m)\cdot s'). $$
Note that the fact that this is an equivalence relation follows directly from the cocycle condition. We can divide out this relation and give $F:=\Sigma/\sim$ the quotient topology. Note that there exists a natural projection $\Sigma\ra{\pi'}M$, which we obtain by the fact that we have a projection $U_\alpha\times S\ra{\pi_1}U_\alpha\ra{i_\alpha}M$, where $\pi_1$ is the projection onto $U_\alpha$ and $i_\alpha$ is the obvious inclusion, for each term in the coproduct. Since points that are related by $\sim$ project along $\pi'$ to the same element, this projection descends to the quotient: $F\ra{\pi}M$.

The trivialisations $(\psi_\alpha)$ of the bundle are then defined as one would expect: $\psi_\alpha([\alpha,m,s]):=(m,s)$, if we write $[\alpha,m,s]\in F$ for the equivalence class of $(\alpha,m,s)\in \Sigma$. This map is well-defined since $(\alpha,m,s)\sim (\alpha,m',s')$ iff $(m,s)=(m',s')$. It is a homeomorphism since it is an inverse to the restriction of the quotient map. 

To see that $F$ is a Hausdorff space, suppose that $f,f'\in F$ are two distinct points. If $\pi(f)\neq \pi(f')$, we can separate them in $M$, which is Hausdorff, and take the inverse image under $\pi$ to obtain distinct neighbourhoods in $F$. Otherwise, we can separate $f$ and $f'$ in one chart $\psi_\alpha$, since we know $S$ to be Hausdorff. To see that $F$ is second countable, note that $M$ is so. Therefore, we can find a countable refinement $(U'_i)$ of $(U_\alpha)$. Then, each $U'_i$ lies in the domain of a local trivialisation $\psi_{\alpha_i}$. Since $S$ is second countable, we find a countable basis $(V_j)$ for its topology. We conclude that $\cup_{i,j} (\psi_{\alpha_i}^{-1}(U'_i\times V_j))$ is a basis for the topology of $F$. It is clearly countable.

We conclude that we have constructed a topological fibre bundle $F\ra{\pi}M$. We want to show that $\pi$ has structure group $(G,\lambda)$. For this, suppose that $U_\alpha\cap U_\beta\neq \emptyset$. Then, $m\in U_\alpha\cap U_\beta$ implies that $\psi_\beta^{-1}(m,s)\in \pi^{-1}(U_\alpha\cap U_\beta)=\pi^{-1}(U_\alpha)\cap\pi^{-1}(U_\beta)$. Since $\psi_\beta^{-1}(m,s)\in \pi^{-1}(U_\alpha)$, we can write it as $[\alpha,m',s']$ for some $m'\in U_\alpha,\; s'\in S$. Then, however, $m=m'$ and $s'=g_{\alpha\beta}(m)\cdot s$, by definition of $\sim$. We conclude that $\psi_\beta^{-1}(m,s)=[\alpha,m,g_{\alpha\beta}(m)\cdot s]$ and therefore $\psi_\alpha\circ\psi_\beta^{-1}(m,s)=(m,g_{\alpha\beta}(m)\cdot s)$. Now, the smooth structure on $F$ is obtained from that on $S$ by the maps $(\psi_\alpha)$. This gives a consistent notion of smoothness precisely because the transition functions act by the (smooth) action of a Lie group. Therefore $(\pi,U_\alpha,\psi_\alpha,g_{\alpha\beta})$ is a $(G,\lambda)$-coordinate bundle, which we denote by $CB^\lambda(U_\alpha,g_{\alpha\beta})$.

The next thing to consider is the definition of $CB^\lambda$ on arrows. Suppose we are given a map of cocycles $(h,(U_\alpha,g_{\alpha\beta}),(\tilde{U}_\alpha,\tilde{g}_{\alpha\beta}),h_{\alpha\beta})$
in $CC^G$. Write $F\ra{\pi}M$ and $\tilde{F}\ra{\pis}\tilde{M}$ respectively for the fibre bundles corresponding to these cocycles by $CB^\lambda$, as described above, and adopt similar notations $(\psi_\alpha)$ and $(\tilde{\psi}_\alpha)$ for their trivialisations. Suppose $f\in \pi^{-1}(U_\beta\cap h^{-1}(\tilde{U}_\alpha))$ and $f=\psi_\beta^{-1}(m,s)$. Then we can define the smooth function $\pi^{-1}(U_\beta\cap h^{-1}(\tilde{U}_\alpha))\ra{k_{\alpha\beta}}\tilde{F}$ by $\tilde{\psi}_\alpha \circ k_{\alpha\beta}(f)=\tilde{\psi}_\alpha \circ k_{\alpha\beta}\circ \psi_\beta^{-1}(m,s):=(h(m),h_{\alpha\beta}(m)\cdot s)$. Note that this defines a smooth function since $\tilde{\psi}_\alpha$ and $\psi_\beta$ are diffeomorphisms. (Explicitly, $k_{\alpha\beta}(f)=\tilde{\psi}_\alpha^{-1}(h(m),h_{\alpha\beta}(m)\cdot \pi_S(\psi_\beta(f))),$ if $m=\pi(f)$.)

Now, the sets $(\pi^{-1}(U_\beta\cap h^{-1}(\tilde{U}_\alpha)))_{\alpha,\beta}$ form an open cover of $F$. If we verify that the functions $k_{\alpha\beta}$ agree on overlaps of their domains, the sheaf property of smooth functions will give us a unique smooth function $F\ra{k}\tilde{F}$, such that $k|_{\pi^{-1}(U_\beta\cap h^{-1}(\tilde{U}_\alpha))}=k_{\alpha\beta}$. Then
\begin{align*}k_{\alpha\beta}(f)&=\tilde{\psi}_\alpha^{-1}(h(m),h_{\alpha\beta}(m)\cdot g_{\beta\gamma}(m)\cdot \pi_S(\psi_\gamma(f)))\\
&=\tilde{\psi}_\alpha^{-1}(h(m),h_{\alpha\gamma}(m)\cdot \pi_S(\psi_\gamma(f)))\\
&=k_{\alpha\gamma}(f)\\
&=\tilde{\psi}_\delta^{-1}(h(m),\tilde{g}_{\delta\alpha}(h(m))\cdot h_{\alpha\gamma}(m)\cdot \pi_S(\psi_\gamma(f)))\\
&=\tilde{\psi}_\delta^{-1}(h(m),h_{\delta\gamma}(m)\cdot \pi_S(\psi_\gamma(f)))\\
&=k_{\delta\gamma}(f),
\end{align*}
if $f\in \pi^{-1}(U_\gamma\cap U_\beta \cap h^{-1}(\tilde{U}_\alpha\cap\tilde{U}_\delta))=\pi^{-1}(U_\gamma\cap h^{-1}(\tilde{U}_\delta))\cap \pi^{-1}(U_\beta\cap h^{-1}(\tilde{U}_\alpha))$, $m=\pi(f)$ and $\pi_S$ denotes the projection onto $S$. (Here we have used the compatibility of $h_{\alpha\beta},\; g_{\alpha\beta}$ and $\tilde{g}_{\alpha\beta}$, which holds because we started with a map of cocycles.)
This gives us our smooth map $F\ra{k}\tilde{F}$. Obviously, $\pis\circ k=h\circ\pi=k_M\circ \pi$ and $h=k_M$. We conclude that $k$ gives us a map of coordinate $(G,\lambda)$-bundles. This will be our definition for $CB^\lambda$ on arrows: \begin{align*}
& CB^\lambda(h,(U_\alpha,g_{\alpha\beta}),(\tilde{U}_\alpha,\tilde{g}_{\alpha\beta}),h_{\alpha\beta}):=(h,(U_\alpha,\psi_\alpha,g_{\alpha\beta}),(\tilde{U}_\alpha,\tilde{\psi}_\alpha,\tilde{g}_{\alpha\beta}),h_{\alpha\beta},k).
\end{align*}
Again, functoriality is a simple consequence of the definition of composition. \\
\\
Finally, we check that these two functors indeed define an equivalence of categories. Obviously, $CU^\lambda\circ CB^\lambda=\mathrm{id}_{CC^G}$. Thus, what remains to verify is that we can find a natural isomorphism $CB^\lambda\circ CU^\lambda\stackrel{\eps}{\Longrightarrow} \mathrm{id}_{CFB^\lambda}$.

Let $\F=(U_\alpha,\psi_\alpha,g_{\alpha\beta},F\ra{\pi}M)\in CFB^\lambda$. Write
$\F'=(U_\alpha,\psi_\alpha',g_{\alpha\beta},F'\ra{\pi'}M)$ and $\kappa '$ for the images of objects $\F$ and arrows $\kappa$, respectively, of $CFB^\lambda$ under the functor $CB^\lambda\circ CU^\lambda$. We define the components of $\eps$ as follows. Let them be the identity map of cocycles together with the following maps $F'\ra{\delta_\F}F$, for each $\F$. We first define the diffeomorphism $\delta_{\F_\alpha}:=\psi_\alpha^{-1}\circ\psi_\alpha':\pi'^{-1}(U_\alpha)\ra{}\pi^{-1}(U_\alpha)$. Note that for $f'\in \pi'^{-1}(U_\alpha\cap U_\beta)$, if $m'=\pi'(f')$ and $s=\pi_S(\psi_\beta'(f'))$, then
\begin{align*}
\delta_{\F_\alpha}(f')&=\psi_\alpha^{-1}(\psi_\alpha'(f'))\\
&=\psi_\alpha^{-1}(m',g_{\alpha\beta}(m')\cdot s),
\end{align*}
while
\begin{align*}
\psi_\alpha(\delta_{\F_\beta}(f'))&=\psi_\alpha\circ\psi_\beta^{-1}(m',s)\\
&=(m',g_{\alpha\beta}(m')\cdot s).
\end{align*}
Since $\psi_\alpha$ is injective, $\delta_{\F_\alpha}(f')=\delta_{\F_\beta}(f')$ on overlaps.
Note that the open sets $(\pi'^{-1}(U_\alpha))$ cover $F'$. If we consider $\delta_{\F_\alpha}$ as a map $\pi'^{-1}(U_\alpha)\ra{}F$, the sheaf property of smooth maps gives us a unique map $F'\ra{\delta_\F}F$, such that $\delta_\F|_{\pi'^{-1}(U_\alpha)}=\delta_{\F_\alpha}$. This map is obviously a bijection and since each $\delta_{\F_\alpha}$ is a diffeomorphism, so is $\delta_\F$. Moreover, we easily see that $\pi\circ\delta_\F=\pi'$. So $\delta_\F$ is an isomorphism of fibre bundles. We conclude that the combination $\epsilon_\F$ of the identity map of cocycles and $\delta_\F$ is an isomorphism of coordinate $(G,\lambda)$-bundles.

Naturality of $\epsilon$ is also obvious. Since $\epsilon$ is the identity on cocycles, we only verify that $\delta_\F$ is natural in $\F$. Let $\kappa:=(h,\F:=(U_\alpha,\psi_\alpha,g_{\alpha\beta},F\ra{\pi}M),\tilde{\F}:=(\tilde{U}_\alpha,\tilde{\psi}_\alpha,\tilde{g}_{\alpha\beta},\tilde{F}\ra{\pis}\tilde{M}),h_{\alpha\beta},F\ra{k}\tilde{F})$ be an arrow in $CFB^\lambda$. We check that $\delta_{\tilde{\F}}\circ k'=k\circ \delta_\F$:
\begin{center}
\begin{tikzcd}[column sep=large,row sep=large]
F' \arrow[r,"{k'}"] \arrow[d,"{\delta_\F}"'] \arrow[dr,phantom,"\com",description] & \tilde{F}' \arrow[d,"{\delta_{\tilde{\F}}}"]\\
F \arrow[r,"k"'] & \tilde{F}.
\end{tikzcd}
\end{center}
Let $f'\in F'$, such that $m'=\pi'(f')\in U_\beta$. Then
\begin{align*}\tilde{\psi}_\alpha\circ\delta_{\tilde{\F}}\circ k'(f')&=\tilde{\psi}_\alpha\circ\tilde{\psi}_\gamma^{-1}\circ\tilde{\psi}_\gamma'\circ k'\circ \psi_\beta'^{-1}(m',\pi_S(\psi_\beta'(f')))\\
&=\tilde{\psi}_\alpha\circ\tilde{\psi}_\gamma^{-1}(h(m'),h_{\gamma\beta}(m')\cdot \pi_S(\psi_\beta'(f')))\\
&=(h(m'),\tilde{g}_{\alpha\gamma}(h(m'))\cdot h_{\gamma\beta}(m')\cdot \pi_S(\psi_\beta'(f')))\\
&=(h(m'),h_{\alpha\beta}(m')\cdot \pi_S(\psi_\beta'(f')))\\
&=\tilde{\psi}_\alpha\circ k\circ \psi_\beta^{-1}(m', \pi_S(\psi_\beta'(f')))\\
&=\tilde{\psi}_\alpha\circ k\circ \psi_\beta^{-1}\circ\psi_\beta'(f')\\
&=\tilde{\psi}_\alpha\circ k\circ \delta_\F(f').\\
\end{align*}
Since $\tilde{\psi}_\alpha$ is a diffeomorphism, naturality follows.\end{proof}
This is a remarkable result. It means that any two categories $CFB^\lambda$ and $CFB^\mu$ are equivalent for any two $G$-actions $\mu$ and $\lambda$ on a manifold. We may even be dealing with a situation in which $\mu$ and $\lambda$ are actions on two different manifolds. In this sense, the cocycle structure actually encodes precisely all relevant information about the coordinate $G$-bundle. 

For each Lie group $G$, there is one particularly easy canonical choice for a smooth $G$-action, namely the action $L$ of $G$ on itself by left multiplication. We can take the category $CFB^L$ as a representative of the equivalence class of categories equivalent to $CC^G$. The objects of this category are known as \emph{coordinate principal $G$-bundles}. (This may explain the name.) Therefore we also write $CPB^G$ for $CFB^L$.

\subsection{$(G,\lambda)$-Bundles}
Since, in modern differential geometry, we would like our theory to be coordinate-independent, we say that some coordinate $(G,\lambda)$-bundle structures are equivalent and describe the same $(G,\lambda)$-fibre bundle. Compare this with manifold theory, where a differentiable manifold is defined as a topological space, together with an equivalence class of coordinate atlases. Similarly, we can also define an equivalence relation on the coordinate $(G,\lambda)$-bundle structures for a given $G\times S\ra{\lambda}S$. We say that two coordinate $(G,\lambda)$-bundle structures $(U_\alpha,\psi_\alpha,g_{\alpha\beta})$ and $(\tilde{U}_\alpha,\tilde{\psi}_\alpha,\tilde{g}_{\alpha\beta})$ for the fibre bundle $F\ra{\pi}M$ are \emph{strictly equivalent} (in the terminology of \cite{stetop}) if there exist smooth maps $U_\beta\cap \tilde{U}_\alpha\ra{t_{\alpha\beta}}G$, such that $\tilde{\psi}_\alpha\circ\psi_\beta^{-1}(m,s)=(m,t_{\alpha\beta}(m)\cdot s)$, for all $s\in S$. Such an equivalence class is called a $(G,\lambda)$-fibre bundle. One can compare these with the equivalence classes of atlases that correspond to manifold structures on a topological space. Again, one more often speaks simply of $G$-bundles and considers the action $\lambda$ to be understood.

These objects are spaces that are ubiquitous in theoretical physics. They are of special importance in studying classical field theories. Therefore, it is useful to define a category $FB^\lambda$ which has these $(G,\lambda)$-fibre bundles as objects. A natural definition for the arrows of this category is that they are equivalence classes of arrows in $CFB^\lambda$ corresponding to the same map $F\ra{k}F'$ of fibre bundles. Of course, again, the isomorphisms in $FB^\lambda$ are precisely those maps that are diffeomorphisms on the base space. Note that this gives us an obvious quotient functor $CFB^\lambda\ra{Q^\lambda}FB^\lambda$ which `forgets' the particular cocycle structure, i.e. it sends each object and arrow to its equivalence class in the sense described above. Note also that we have the obvious functorial embedding $FB^\lambda\ra{}FB$. Of course, this also gives us a category $PB^G:=FB^L$ of \emph{principal $G$-bundles}\index{principal bundle}.

\subsection{Principal Bundles}
Note that, since we also have a right $G$-action $R$ on $G$ that commutes with $L$, we can define a global right $G$-action $\rho$ on $P$, if $P\ra{\pis}M$ is a principal $G$-bundle. This is the action that looks like right multiplication in a local trivialisation. Obviously, this action, called the \emph{principal right action}\index{principal right action} restricts to a transitive, free action on each fibre. Therefore, $G$ acts freely on $P$. Moreover, this action on $P$ is easily seen to be proper. These properties turn out precisely to characterise the fact that a fibre bundle $P\ra{\pis}M$ is a principal $G$-bundle. \cite{banlie}
\begin{theorem}[Quotient manifold theorem]\index{quotient manifold theorem} Let $\rho$ be a right action of a Lie group $G$ on a manifold $P$. Then the following conditions are equivalent:
\begin{enumerate}
\item The action is proper and free. 
\item The quotient map $P\ra{\pis}P/G$ is a principal fibre bundle\footnote{In particular, since the action is proper and free, the quotient space $P/G$ has a unique smooth structure such that the quotient map $P\ra{\pis}P/G$ is a smooth submersion.} with $\rho$ as a principal right action.
\end{enumerate}
\end{theorem}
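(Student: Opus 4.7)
The plan is to handle the two implications separately, with essentially all the work going into $(1) \Rightarrow (2)$. For $(2) \Rightarrow (1)$, I would work in a principal bundle atlas $(U_\alpha, \psi_\alpha)$: in any such trivialisation the right action takes the form $(m, h) \cdot g = (m, hg)$, from which freeness is immediate (right multiplication by $g \neq e$ has no fixed points in $G$). To check that the map $P \times G \to P \times P$, $(p, g) \mapsto (p, p \cdot g)$, is proper, I would cover a given compact $K \subset P \times P$ by finitely many sets of the form $\pi^{-1}(U_\alpha) \times \pi^{-1}(U_\beta)$ and use continuity of the transition cocycle together with the fact that left translations in $G$ are homeomorphisms to conclude that the preimage of $K$ is compact.

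For $(1) \Rightarrow (2)$, I would first equip $M := P/G$ with a smooth structure. Because $\rho$ is proper, the image of the map $P \times G \to P \times P$, $(p, g) \mapsto (p, p \cdot g)$, is closed; combined with the fact that $\pi: P \to M$ is an open map (saturations of open sets are unions of $G$-translates), this forces $M$ to be Hausdorff, while second countability is inherited from $P$. What remains is to construct smooth charts on $M$ and local trivialisations of $\pi$ simultaneously.

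The technical core is a \emph{local slice theorem}: for every $p \in P$ there exists a submanifold $S \ni p$ transverse to the orbit $p \cdot G$ such that the action map $\mu: S \times G \to P$, $(s, g) \mapsto s \cdot g$, is a diffeomorphism onto a $G$-invariant open neighbourhood of $p \cdot G$. I would argue in three steps. First, freeness makes $g \mapsto p \cdot g$ an injective immersion, and properness upgrades it to a proper embedding, so each orbit is a closed embedded submanifold diffeomorphic to $G$. Second, exponentiating a linear complement of $T_p(p \cdot G) \subset T_p P$ produces an initial transverse submanifold $S_0$ on which, by the Inverse Function Theorem, $\mu$ is a local diffeomorphism near $(p, e)$. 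The third step is the main obstacle: shrinking $S_0$ so that $\mu$ becomes globally injective. If this failed, one would obtain sequences $s_n, s_n' \to p$ in $S_0$ and elements $g_n$ staying outside some neighbourhood $V$ of $e$ with $s_n \cdot g_n = s_n'$; properness applied to $(s_n, g_n)$, whose image $(s_n, s_n')$ converges to $(p, p)$, yields a convergent subsequence $g_n \to g_\infty$ with $p \cdot g_\infty = p$, and freeness then gives $g_\infty = e \in V$, a contradiction.

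Granted the slice theorem, the rest is bookkeeping. Each $\pi(S)$ is homeomorphic to $S$ and furnishes a chart on $M$, while $\mu^{-1}$ provides a local trivialisation $\pi^{-1}(\pi(S)) \to \pi(S) \times G$. On the overlap of two slices $S$ and $S'$, points of $S$ and $S'$ lying on the same orbit are related by a unique smooth $G$-valued function, yielding transition functions that take values in $G$ and act on the fibre by left multiplication; the cocycle condition is automatic from uniqueness. Since $\rho$ appears in each such trivialisation as $(s, h) \cdot g = (s, hg)$, the resulting principal bundle atlas makes $\pi$ a principal $G$-bundle with $\rho$ as the principal right action, as required.
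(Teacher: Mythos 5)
The paper does not actually prove this theorem --- it is stated without proof and deferred to the reference \cite{banlie} --- so there is no internal argument to compare against. Your proposal is the standard slice-theorem proof and is correct in outline: the direction $(2)\Rightarrow(1)$ by reading off freeness and properness in a trivialisation, and the direction $(1)\Rightarrow(2)$ via (i) Hausdorffness of $P/G$ from openness of the quotient map plus closedness of the orbit relation, (ii) orbits as closed embedded copies of $G$, (iii) a transverse slice $S$ on which $\mu(s,g)=s\cdot g$ is a diffeomorphism onto a saturated open set, with injectivity forced by the sequential properness argument, and (iv) the induced charts, trivialisations and left-multiplication transition cocycle. Two points deserve one extra line each if you write this out in full: the Inverse Function Theorem only gives that $\mu$ is a local diffeomorphism near $(p,e)$, and you should invoke equivariance ($\mu(s,g)=\mu(s,e)\cdot g$ and $\rho_g$ a diffeomorphism) to propagate this to all of $S\times G$ before combining it with injectivity to get a global diffeomorphism onto its image; and in $(2)\Rightarrow(1)$ the properness check needs the observation that if $p$ and $p\cdot g$ lie in compact subsets of two trivialising charts then the base point automatically lies in a compact subset of the chart overlap, so the transition function there has compact image and $g$ is confined to a product of compact subsets of $G$. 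With those details supplied, the argument is complete and is exactly the proof the cited source gives.
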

It is easier to see what maps of principal bundles are. Of course, they are a special kind of maps of fibre bundles between principal bundles. The property that precisely captures the fact that they are maps of principal bundles is that they intertwine both principal right actions.\footnote{It should be noted that, although this is a common definition of a map of principal bundles and it is used in, for example, the standard reference on fibre bundles, \cite{husfib}, the standard reference on principal bundles, \cite{kobfou}, uses a broader class of maps. They define a map of principal bundles to be the combination of a map $\pis\ra{k}\pis'$ of fibre bundles from the principal $G$-bundle $\pis$ to the principal $G'$-bundle $\pis'$ and a homomorphism $G\ra{f}G'$ of groups, such that $k\circ \rho_{g}=\rho'_{f(g)}\circ k$, for all $g\in G$. (Here $\rho$ and $\rho'$ denote both principal right actions.) Let us denote the category of all principal fibre bundles with this larger collection of maps by $PB$. 

This category has some advantages as it can be used to describe more phenomena, like maps between principal bundles with different structure groups. As a consequence, the restriction of $PB$ to some fixed base space $M$ on which the arrows are demanded to be the identity, $PB_M$ has finite products. Indeed, these products arise in Yang-Mills theories. 

At this point, however, I see no possibility to adjust the definition of a map of $(G,\lambda)$-bundles to fit this definition. For instance, I cannot see how the associated bundle functor, which can be defined for a $G'$-action $\lambda$ and a map $(P,G)\ra{(k,f)}(P',G')$ to be the obvious map $P[\lambda\circ f]\ra{k[\lambda]}P'[\lambda]$, would extend to an equivalence from this category.} To see this, note that for a map of principal $G$-bundles 
\begin{center}
\begin{tikzcd}[column sep=large,row sep=large]
P \arrow[r,"k"] \arrow[d,"\pis"'] & P' \arrow[d,"{\pis'}"]\\
M \arrow[r,"{k_M}"'] & M',
\end{tikzcd}
\end{center} the following holds if we choose corresponding principal $G$-bundle atlases $(U_\alpha,\psi_\alpha,g_{\alpha\beta})$ and $(U'_\alpha,\psi'_\alpha,g'_{\alpha\beta})$:
\begin{align*}\psi_\alpha'(k(\rho(\psi_\beta^{-1}(m,g_1),g_2)))
&=\psi_\alpha'\circ k\circ \psi_\beta^{-1}(m,g_1\cdot g_2)\\ 
&=(k_M(m),h_{\alpha\beta}(m)\cdot (g_1\cdot g_2))\\
&=(k_M(m),(h_{\alpha\beta}(m)\cdot g_1)\cdot g_2)\\
&=\psi_\alpha'(\rho(\psi_\alpha'^{-1}(\psi_\alpha'(k(\psi_\beta^{-1}(m,g_1)))),g_2))\\
&=\psi_\alpha'(\rho(k(\psi_\beta^{-1}(m,g_1)),g_2)),
\end{align*}
so $k$ intertwines the principal right actions (which I both denoted by $\rho$). For the converse statement, note that the above also tells us how we should define $h_{\alpha\beta}$ in terms of $k$. It is easily checked that $h_{\alpha\beta}$ is a map of cocycles if $k$ intertwines the actions.

\subsection{A Coordinate-Free Equivalence}
Note that, in the case of an effective action $G\times S\ra{\lambda}S$, the strict equivalence of $(G,\lambda)$-bundle structures can also be formulated as the statement that their union is again a $(G,\lambda)$-bundle structure with respect to the $G$-valued transition functions which are uniquely determined since $G$ acts effectively. Therefore we can associate its maximal element to each equivalence class (given by the total union). We can also lift a map of $(G,\lambda)$-fibre bundles in a canonical way to a map between the maximal representatives of the equivalence classes. Indeed, define $(m,h_{\alpha\beta}(m)\cdot s):=\tilde{\psi}_\alpha \circ k \circ \psi_\beta^{-1}(m,s)$, if $k$ is the given fibre bundle map and $(U_\alpha,\psi_\alpha)$ and $(\tilde{U}_\alpha,\tilde{\psi}_\alpha)$ are the trivialisations of the maximal atlases on both $G$-bundles. (Note that this defines $h_{\alpha\beta}$ uniquely, since $\lambda$ is effective.) Therefore, we obtain a canonical section $FB^\lambda\ra{\Sigma^\lambda}CFB^\lambda$ of the quotient functor $Q^\lambda$. This will lead to a natural coordinate-free treatment of $(G,\lambda)$-fibre bundles when $\lambda$ is effective. 

In light of this discussion on a coordinate-free notion of a $(G,\lambda)$-fibre bundle, it is natural to wonder if we can find an atlas-independent form of theorem \ref{cfbcon}. We can hope to succeed since the construction of $\Sigma^\lambda$ is canonical. It turns out that we are at least able to obtain such a form for the equivalence between $CFB^\lambda$ and $CFB^\mu$ for $\lambda,\mu$ effective $G$-actions. That is, we have the following.
\begin{theorem}\label{fbcon} Let $G\times S\ra{\lambda} S$ and $G\times S'\ra{\mu}S'$ be effective actions, i.e. $\lambda,\mu$ define (algebraic) embeddings of groups $G\hookrightarrow \mathrm{Diff}\; S,\; \mathrm{Diff}\; S'$. We then have an equivalence of categories
$$FB^\mu\mathop{\leftrightarrows}^{C^{\lambda\mu}}_{C^{\mu\lambda}}  FB^\lambda.$$
\end{theorem}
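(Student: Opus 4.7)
The plan is to realise $C^{\lambda\mu}$ and $C^{\mu\lambda}$ as compositions of equivalences that chain Theorem \ref{cfbcon} with a new coordinate-stripping equivalence $CFB^\nu\simeq FB^\nu$ valid for effective $\nu$. Concretely, I would set
\[C^{\lambda\mu}:=Q^\lambda\circ CB^\lambda\circ CU^\mu\circ\Sigma^\mu\txt{and}C^{\mu\lambda}:=Q^\mu\circ CB^\mu\circ CU^\lambda\circ\Sigma^\lambda,\]
so that $C^{\lambda\mu}$ reads $FB^\mu\ra{\Sigma^\mu}CFB^\mu\ra{CU^\mu}CC^G\ra{CB^\lambda}CFB^\lambda\ra{Q^\lambda}FB^\lambda$. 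The task then splits into showing that each of the four intermediate functors is part of an equivalence of categories, after which $C^{\mu\lambda}$ is automatically a quasi-inverse of $C^{\lambda\mu}$, being the composition of quasi-inverses taken in reverse order.

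The outer pair $(CB^\nu,CU^\nu)$ is already known to form an equivalence by Theorem \ref{cfbcon}, with $CU^\nu\circ CB^\nu=\mathrm{id}_{CC^G}$ strictly and a natural isomorphism $\eps:CB^\nu\circ CU^\nu\Rightarrow\mathrm{id}_{CFB^\nu}$ constructed explicitly in that proof. The inner pair $(\Sigma^\nu,Q^\nu)$ is the new ingredient, and it requires effectivity. One half is free: $Q^\nu\circ\Sigma^\nu=\mathrm{id}_{FB^\nu}$ holds by the section property of $\Sigma^\nu$. What remains is a natural isomorphism $\eta:\mathrm{id}_{CFB^\nu}\Rightarrow\Sigma^\nu\circ Q^\nu$. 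Given $\F=(U_\alpha,\psi_\alpha,g_{\alpha\beta},F\ra{\pi}M)\in CFB^\nu$, the object $\Sigma^\nu(Q^\nu(\F))$ carries the maximal atlas $(\tilde U_\alpha,\tilde\psi_\alpha,\tilde g_{\alpha\beta})$ in the strict equivalence class of $\F$, on the same underlying fibre bundle $F\ra{\pi}M$. Effectivity of $\nu$ guarantees unique $G$-valued functions $h_{\alpha\beta}:U_\beta\cap\tilde U_\alpha\to G$ solving $\tilde\psi_\alpha\circ\psi_\beta^{-1}(m,s)=(m,h_{\alpha\beta}(m)\cdot s)$, and the identity on $F$ paired with this map of cocycles yields a well-defined isomorphism $\eta_\F$ in $CFB^\nu$, with inverse built from the $h_{\alpha\beta}^{-1}$. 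Naturality in $\F$ is automatic because both legs of the naturality square act identically on total spaces.

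Assembling these four equivalences furnishes the claim. The natural isomorphisms $C^{\mu\lambda}\circ C^{\lambda\mu}\Rightarrow\mathrm{id}_{FB^\mu}$ and $C^{\lambda\mu}\circ C^{\mu\lambda}\Rightarrow\mathrm{id}_{FB^\lambda}$ arise by concatenating copies of $\eta$, $\eps$, and the strict identity $CU^\nu\circ CB^\nu=\mathrm{id}_{CC^G}$ appearing in the middle of each composition. The principal obstacle is the construction of $\eta$: one must verify that extending a coordinate atlas to the maximal representative of its strict equivalence class genuinely defines a map in $CFB^\nu$ and that this construction is natural in $\F$. This rests entirely on the effectiveness hypothesis on $\nu$, which is precisely what makes $\Sigma^\nu$ available and the $h_{\alpha\beta}$ unambiguously determined.
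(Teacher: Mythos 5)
Your proposal is correct and produces the same functors as the paper, up to an immaterial swap of the labels: the paper sets $C^{\lambda\mu}:=Q^\mu\circ CB^\mu\circ CU^\lambda\circ\Sigma^\lambda$ (running $FB^\lambda\to FB^\mu$), so your $C^{\lambda\mu}$ is the paper's $C^{\mu\lambda}$. The genuine difference is where the non-trivial natural isomorphism lives. The paper's single key lemma is $CU^\lambda\circ\Sigma^\lambda\circ Q^\lambda\circ CB^\lambda\cong\mathrm{id}_{CC^G}$, proved by exhibiting the cocycle isomorphism $\epsilon_{\mathcal{C}}=(\mathrm{id}_M,t_{\alpha\beta})$ and then sandwiching it between $Q^\mu\circ CB^\mu$ and $CU^\mu\circ\Sigma^\mu$, finishing with Theorem \ref{cfbcon} and $Q^\mu\circ\Sigma^\mu=\mathrm{id}_{FB^\mu}$. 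You instead isolate the stronger and more modular statement that $(\Sigma^\nu,Q^\nu)$ is itself an equivalence $CFB^\nu\simeq FB^\nu$ for effective $\nu$, and then chain four equivalences. Your lemma implies the paper's (apply $CU^\lambda(-)CB^\lambda$ and use $CU^\lambda\circ CB^\lambda=\mathrm{id}_{CC^G}$), and the verification — unique smooth $h_{\alpha\beta}$ from effectivity, the two cocycle-compatibility identities, invertibility via $h_{\alpha\beta}^{-1}$ — is the same computation the paper carries out with its $t_{\alpha\beta}$, so nothing new needs to be checked; what your packaging buys is a reusable, cleanly stated equivalence $CFB^\nu\simeq FB^\nu$ that the paper only obtains implicitly. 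One point you should make explicit: the claim that naturality of $\eta$ is ``automatic because both legs act identically on total spaces'' relies on the fact that a morphism of coordinate $(G,\nu)$-bundles is determined by its underlying fibre-bundle map, which holds precisely because $\nu$ is effective (the $h_{\alpha\beta}$ are then forced by the trivialisations). That is the same use of effectivity the paper makes in its naturality check, but in your formulation it deserves a sentence rather than the word ``automatic''.
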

\begin{proof}The idea of the proof will be to use the functors of theorem \ref{cfbcon} to construct the equivalence: 
\begin{center}
\begin{tikzcd}[column sep=large,row sep=large]
CFB^\mu \arrow[r,leftarrow,shift left=0.7ex,"{CB^\mu}"] \arrow[r,shift right=0.7ex,"{CU^\mu}"'] \arrow[d,two heads,"{Q^\mu}"'] & CC^G \arrow[r,leftarrow,shift left=0.7ex,"{CU^\lambda}"] \arrow[r,shift right=0.7ex,"{CB^\lambda}"'] & CFB^\lambda \arrow[d,two heads,"{Q^\lambda}"]\\
FB^\mu \arrow[rr,leftarrow,dashed,shift left=0.7ex,"{C^{\lambda\mu}}"] \arrow[rr,dashed,shift right=0.7ex,"{C^{\mu\lambda}}"'] & & FB^\lambda.
\end{tikzcd}
\end{center}
Put differently, it gives a coordinate-independent characterisation of $G$-bundles. To realise this, we use the canonical sections $\Sigma^\lambda$ and $\Sigma^\mu$ for $Q^\lambda$ and $Q^\mu$ respectively, to construct the functors $C^{\lambda\mu}$ and $C^{\mu\lambda}$ exactly as one would expect. ($\Sigma^\lambda\circ Q^\lambda$ extends the atlas to the maximal compatible one. The resulting coordinate $(G,\lambda)$-fibre bundle is by definition strictly equivalent to the one we started with.)

The proof will rest on the following observation: $CU^\lambda\circ \Sigma^\lambda\circ Q^\lambda\circ CB^\lambda\cong \mathrm{id}_{CC^G}$. Therefore, we first verify this. Let $\mathcal{C}:=(U_\alpha,g_{\alpha\beta})$ be a $G$-cocycle on $M$. Then $CB^\lambda(U_\alpha,g_{\alpha\beta})=(\pi,U_\alpha,\psi_\alpha,g_{\alpha\beta})$ is the corresponding coordinate $(G,\lambda)$-bundle. Now, $\Sigma^\lambda\circ Q^\lambda\circ CB^\lambda(U_\alpha,g_{\alpha\beta})=(\pi,\tilde{U}_\alpha,\tilde \psi_\alpha,\tilde g_{\alpha\beta})$ is the compatible maximal coordinate $(G,\lambda)$-bundle. Therefore, there exist smooth maps $U_\beta\cap \tilde{U}_\alpha\ra{t_{\alpha\beta}}G$, such that $\tilde{\psi}_\alpha\circ\psi_\beta^{-1}(m,s)=(m,t_{\alpha\beta}(m)\cdot s)$, for all $s\in S$. Now, this allows us to define a map of $G$-cocycles $\epsilon_{\mathcal{C}}:=(\mathrm{id}_M, t_{\alpha\beta})$ from $\mathcal{C}=(U_\alpha,g_{\alpha\beta})$ to $\tilde{\mathcal{C}}:=CU^\lambda\circ \Sigma^\lambda\circ Q^\lambda\circ CB^\lambda(\mathcal{C})=(\tilde U_\alpha,\tilde{g}_{\alpha\beta})$. It is easily verified that this is indeed a map of cocycles. Indeed,
\begin{align*}
(m,t_{\alpha\beta}(m)\cdot g_{\beta\gamma}(m)\cdot s)&=\tilde{\psi}_\alpha\circ \psi_\beta^{-1}\circ\psi_\beta\circ\psi_\gamma^{-1}(m,s)\\
&=\tilde \psi_\alpha\circ\psi_\gamma^{-1}(m,s)\\
&=(m,t_{\alpha\gamma}(m)\cdot s)
\end{align*}
and therefore $t_{\alpha\beta}(m)\cdot g_{\beta\gamma}(m)=t_{\alpha\gamma}(m)$, since $\lambda$ is effective. Similarly $\tilde{g}_{\alpha\beta}(m)\cdot t_{\beta\gamma}(m)=t_{\alpha\gamma}(m)$. Moreover, $\epsilon_\mathcal{C}$ is immediately seen to be an isomorphism. Indeed, $(\mathrm{id}_M,t'_{\alpha\beta})$, with $t'_{\alpha\beta}(m):=t_{\beta\alpha}(m)^{-1}$ on $\tilde U_\beta\cap U_\alpha$, is its inverse. 

Moreover, naturality of this isomorphism in $\mathcal{C}$ follows immediately. Indeed, let $\mathcal{C}\ra{(h,h_{\alpha\beta})}\mathcal{C}'$ be a map of cocycles. Then $\epsilon_{\mathcal{C}'}\circ (h,h_{\alpha\beta})=(h,(t'\circ h)_{\alpha\beta})$, while $(h,\tilde h_{\alpha\beta})\circ\epsilon_{\mathcal{C}}=(h,(\tilde h\circ t)_{\alpha\beta})$, where, by $(\tilde h\circ t)_{\alpha\beta}$, I mean the composition by the natural pointwise multiplication (as described in paragraph \ref{cocyc}). What we have to verify, therefore, is that $t'_{\alpha\beta}(h(m))\cdot h_{\beta\gamma}(m)=\tilde h_{\alpha\beta}(m)\cdot t_{\beta\gamma}(m)$, for $m\in U_\gamma\cap \tilde U_\beta\cap h^{-1}(\tilde U '_\alpha)\cap h^{-1}(U'_\beta)$. This is almost immediate, if we write $k$ for the unique map of fibre bundles that corresponds to $(h,h_{\alpha\beta})$:
\begin{align*}(h(m),t'_{\alpha\beta}(h(m))\cdot h_{\beta\gamma}(m)\cdot s)&=\tilde\psi '_\alpha\circ{\psi '}_\beta^{-1}(h(m),h_{\beta\gamma}(m)\cdot s)\\
&=\tilde\psi '_{\alpha}\circ{\psi '}_\beta^{-1}\circ \psi '_\beta\circ k\circ\psi_\gamma^{-1}(m, s)\\
&={\tilde{\psi}'}_{\alpha}\circ k\circ{\tilde{\psi} '}_\beta{}^{-1}\circ {\tilde{\psi} '}_\beta\circ\psi_\gamma^{-1}(m, s)\\
&=\tilde{\psi} '_{\alpha}\circ k\circ{\tilde {\psi} '}_\beta{}^{-1}\circ (h(m), t_{\beta\gamma}(m)\cdot s)\\
&=(h(m),\tilde{h}_{\alpha\beta}(m)\cdot t_{\beta\gamma}(m)\cdot s).
\end{align*}
Effectiveness of the action does the rest. So we conclude that $CU^\lambda\circ \Sigma^\lambda\circ Q^\lambda\circ CB^\lambda\cong \mathrm{id}_{CC^G}$. Of course, if we interchange $\mu$ and $\lambda$, we also find that $CU^\mu\circ \Sigma^\mu\circ Q^\mu\circ CB^\mu\cong \mathrm{id}_{CC^G}$.

We define $C^{\lambda\mu}:=Q^\mu\circ CB^\mu\circ CU^\lambda\circ \Sigma^\lambda$ and $C^{\mu\lambda}$ with $\lambda$ and $\mu$ interchanged. Then 
\begin{align*}C^{\lambda\mu}\circ C^{\mu\lambda}&=(Q^\mu\circ CB^\mu\circ CU^\lambda\circ \Sigma^\lambda)\circ(Q^\lambda\circ CB^\lambda\circ CU^\mu\circ \Sigma^\mu)\\
&=Q^\mu\circ CB^\mu\circ (CU^\lambda\circ \Sigma^\lambda\circ Q^\lambda\circ CB^\lambda)\circ CU^\mu\circ \Sigma^\mu\\
&\cong Q^\mu\circ CB^\mu\circ \mathrm{id}_{CC^G}\circ CU^\mu\circ \Sigma^\mu\txt{(because of the above)}\\
&= Q^\mu\circ \Sigma^\mu\txt{(because of theorem \ref{cfbcon})}\\
&=\mathrm{id}_{FB^\mu}.
\end{align*}
Again, we can interchange the roles of $\mu$ and $\lambda$ to obtain the second part of the equivalence of categories. This proves the theorem.
\end{proof}

\begin{remark} Perhaps the reader was expecting a direct analogue of theorem \ref{cfbcon} in the form of an equivalence of categories between $FB^\lambda$ and a category $C^G$ that is obtained from $CC^G$ by dividing out an equivalence relation on the objects and arrows. Indeed, I initially expected that such an equivalence would be an easy stepping stone to obtain the final result of theorem \ref{cfbcon}, which I really was trying to work towards. My candidate for the category $C^G$ was one of cohomology classes of $G$-cocycles as objects and some corresponding equivalence classes of arrows. This seemed the natural thing to do. Indeed, since my initial attempt I have found a result in \cite{husfib} that establishes that there exists a (not necessarily natural) bijective correspondence between isomorphism classes of $(G,\lambda)$-fibre bundles and cohomology classes of $G$-cocycles. However, my attempt failed because of the difficulty of constructing a canonical section of the quotient functor that sends a cocycle to its cohomology class. To obtain the categorical analogue of the result in Husem\"oller, one would like to have such a section to obtain one of the natural isomorphisms in the suspected equivalence of categories between $FB^\lambda$ and $C^G$. However, it seems plausible that the result can still be proven without such a canonical section. One would then have to pick an arbitrary representative of every cohomology class and show that the final result does not depend on the choice that was made.
\end{remark}

\subsection{Associated Bundles and Generalised Frame Bundles} \label{sec:asbun}
We have seen that, given a $G$-cocycle on $M$, there is one preferred $G$-bundle that we can always construct: the principal $G$-bundle. Therefore, given a $(G,\lambda)$-fibre bundle, we can construct the corresponding principal bundle from it, using theorem \ref{fbcon}. The question arises what the relation is between our original fibre bundle and the principal bundle we constructed from it. 

The answer is that the principal bundle can be given the interpretation of a \emph{bundle of generalised frames}\index{bundle of generalised frames} for the original fibre bundle or, reasoning in the other direction, that the fibre bundle is simply an \emph{associated bundle}\index{associated bundle} to the principal bundle. Both constructions can be understood to be functors. If we write $L$ for the action of $G$ on itself by left multiplication and let $\lambda$ be an effective left $G$ action on $S$, the generalised frame bundle functor $F^\lambda$ is easily seen to be (naturally isomorphic to) $C^{\lambda L}$, while we will see that the associated bundle functor $-[\lambda]$ corresponds to $C^{L\lambda}$. Theorem \ref{fbcon} states that these two functors are each other's pseudoinverse. I will discuss the explicit realisation of both constructions, starting with that of the so-called associated bundles.\\
\\
Suppose we are given a principal bundle $P\ra{\pis}M$ with structure group $G$. Then, for each (not necessarily effective) left action $G\times S\ra{\lambda}S$ on a manifold $S$, we can construct a right action $P\times S\times G\ra{R}P\times S$ by $(p,s,g)\mapsto (p\cdot g,g^{-1} \cdot s)$. This action is obviously free, since the principal action is. Properness can also easily be verified; however, this is somewhat technical, and we will therefore not write it out. Moreover, the action respects the fibre, and hence the quotient manifold theorem tells us that we can construct a so-called associated fibre bundle $P\times S/R=: P[\lambda]\ra{\pi}M$. (Here, $\pi$ is induced by the $G$-invariant map $P\times S \ra{\pis\circ\pi_1}M$.) However, following \cite{micnat}, we can also obtain the smooth structure on $P[\lambda]$ in a different way, which ties in directly with the rest of our discussion.

\begin{claim}\label{assbund} $P[\lambda]\ra{\pi}M$ is a fibre bundle over $M$ with standard fibre $S$, in such a way that for each $p\in P$, $\{p\}\times S \ra{q_p} P[\lambda]_{\pis(p)}$, that is, the restriction of the quotient map $P\times S\ra{q} P[\lambda]$, is a diffeomorphism to the fibre of $P[\lambda]$ above $\pis(p)$ and the equation $q(p,s)=v$ has a unique smooth solution $P\times_M P[\lambda]\ra{\tau}S$.
\end{claim}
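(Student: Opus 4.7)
The plan is to construct explicit local trivialisations of $P[\lambda]$ from a principal bundle atlas of $P$, apply Theorem \ref{cfbcon} to obtain the $(G,\lambda)$-bundle structure, and then read off the remaining assertions in local coordinates. Choose a principal bundle atlas $(U_\alpha,\psi_\alpha,g_{\alpha\beta})$ of $P\ra{\pis}M$, and for each $\alpha$ define
$$\chi_\alpha\colon U_\alpha\times S\longrightarrow \pi^{-1}(U_\alpha),\quad (m,s)\longmapsto q\bigl(\psi_\alpha^{-1}(m,e),s\bigr).$$
The first step is to verify that $\chi_\alpha$ is a bijection onto $\pi^{-1}(U_\alpha)$: any class in $\pi^{-1}(U_\alpha)$ has a representative of the form $(\psi_\alpha^{-1}(m,g),s)$ which, by using the relation $(p,s)\sim(p\cdot g,g^{-1}\cdot s)$, is equivalent to $(\psi_\alpha^{-1}(m,e),g\cdot s)$, giving surjectivity; injectivity follows because the principal right action is free and $\psi_\alpha$ identifies the $G$-orbit through $\psi_\alpha^{-1}(m,e)$ with $\{m\}\times G$.

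Next I would compute the transition maps. Using $\psi_\alpha\circ\psi_\beta^{-1}(m,g)=(m,g_{\alpha\beta}(m)\cdot g)$, one obtains $\psi_\beta^{-1}(m,e)=\psi_\alpha^{-1}(m,g_{\alpha\beta}(m))=\psi_\alpha^{-1}(m,e)\cdot g_{\alpha\beta}(m)$, so
$$\chi_\alpha^{-1}\circ\chi_\beta(m,s)=(m,\lambda(g_{\alpha\beta}(m),s)).$$
This exhibits $(\chi_\alpha^{-1},g_{\alpha\beta})$ as a $(G,\lambda)$-coordinate bundle structure on the set-theoretic quotient $P[\lambda]\ra{\pi}M$, where $\pi$ is induced by $\pis\circ\pi_1$. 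The fibre bundle construction theorem \ref{cfbcon} then endows $P[\lambda]$ with a unique smooth structure making the $\chi_\alpha^{-1}$ diffeomorphisms; this structure agrees with the one coming from the quotient manifold theorem because, in either case, the restriction of $q$ to $\pis^{-1}(U_\alpha)\times S$ is identified through $\psi_\alpha\times\mathrm{id}_S$ with $(m,g,s)\mapsto\chi_\alpha(m,g\cdot s)$, a surjective submersion.

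For the claim about $q_p$, fix $p\in P$, pick any $\alpha$ with $p\in\pis^{-1}(U_\alpha)$ and write $p=\psi_\alpha^{-1}(m,g)$. Then $q(p,s)=\chi_\alpha(m,g\cdot s)$, so $q_p$ coincides with the composition of the diffeomorphism $s\mapsto(m,g\cdot s)$ with $\chi_\alpha$ restricted to $\{m\}\times S$; it is therefore a diffeomorphism onto $P[\lambda]_{\pis(p)}$. For the final claim, existence and uniqueness of a solution $s$ to $q(p,s)=v$ are immediate from $q_p$ being a bijection, and in the local chart just used one reads off
$$\tau(p,v)=g^{-1}\cdot\pi_S\bigl(\chi_\alpha^{-1}(v)\bigr),\quad g=\pi_G(\psi_\alpha(p)),$$
which is smooth on $\pis^{-1}(U_\alpha)\times_M\pi^{-1}(U_\alpha)$; as these sets cover $P\times_M P[\lambda]$ and the formula is independent of $\alpha$ by uniqueness, the sheaf property of smooth maps yields the desired global smooth $\tau$.

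The main subtlety is reconciling the two possible smooth structures on $P[\lambda]$ (from the quotient manifold theorem versus from the fibre bundle construction); this is not a deep obstacle but requires care, and the cleanest route is the one above, where we define the smooth structure through the $\chi_\alpha$ and afterwards verify that $q$ is a smooth surjective submersion, which forces agreement with any other smooth structure for which $q$ is a submersion.
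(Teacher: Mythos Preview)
Your proposal is correct and follows essentially the same route as the paper: build local trivialisations of $P[\lambda]$ from a principal bundle atlas, compute the transition functions to be $(m,g_{\alpha\beta}(m)\cdot s)$, invoke the fibre bundle construction theorem, and then check that $q$ is a submersion so the smooth structure agrees with the quotient one. The only noticeable difference is in the last step: the paper obtains smoothness of $\tau$ by applying the implicit function theorem to $q(p,s)=v$ (using that $q_p$ is a diffeomorphism, hence $D_s q$ is invertible), whereas you write down the explicit local formula $\tau(p,v)=g^{-1}\cdot\pi_S(\chi_\alpha^{-1}(v))$ and glue via the sheaf property; both are valid, and your version is arguably more transparent.
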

\begin{proof} Suppose $(U_\alpha,\phi_\alpha)$ is the maximal principal bundle atlas for $P$. (So $\pis^{-1}(U_\alpha)\ra{\phi_\alpha} U_\alpha\times G$.) From this we will construct a $(G,\lambda)$-fibre bundle atlas for $P[\lambda]$. Let $\pi^{-1}(U_\alpha)\ra{\psi_\alpha} U_\alpha\times S$ be defined by $\psi_\alpha^{-1}(m,s):=q(\phi_\alpha^{-1}(m,e),s)$, where $e$ denotes the identity element in $G$ and $q$ is the quotient map $P\times S\ra{q}P[\lambda]$. (We will show that $\psi_\alpha^{-1}$ is indeed invertible in a minute.) Note that, defined this way, $\psi_\alpha^{-1}$ respects the fibre and is smooth.

For $v\in\pi^{-1}(U_\alpha)$, choose a representative $v=q(p,s)$ with $\pis(p)=m\in U_\alpha$. Since $P_m$ is a free and transitive right $G$-space, there is a unique $g\in G$ such that $p=\phi_\alpha^{-1}(m,e)\cdot g$. Hence $v=q(\phi_\alpha^{-1}(m,e),g\cdot s)$, and this representative is unique. Therefore $\psi_\alpha^{-1}$ is indeed bijective, so our notation was appropriate.

Now, if we agree to write $g_{\alpha\beta}$ for the transition function, i.e. $\phi_\alpha\circ\phi_\beta^{-1}(m,g)=(m,g_{\alpha\beta}(m)\cdot g)$, then 
\begin{align*}\psi_\beta^{-1}(m,s)&=q(\phi_\beta^{-1}(m,e),s)\txt{(*)}\\
&=q(\phi_\alpha^{-1}(m,g_{\alpha\beta}(m)),s)\\
&=q(\phi_\alpha^{-1}(m,e)\cdot g_{\alpha\beta}(m),s)\\
&=q(\phi_\alpha^{-1}(m,e),g_{\alpha\beta}(m)\cdot s)\\
&=\psi_\alpha^{-1}(m,g_{\alpha\beta}(m)\cdot s).
\end{align*}
We conclude that $\psi_\alpha\circ\psi_\beta^{-1}(m,s)=(m,g_{\alpha\beta}(m)\cdot s)$. Since we know $g_{\alpha\beta}$ to satisfy the cocycle condition ($P$ is a fibre bundle) we conclude that, by the fibre bundle construction theorem, $P[\lambda]$ has a unique structure as a fibre bundle with the $G$-atlas $(U_\alpha,\psi_\alpha)$. 
Note that $(*)$ implies that $q$ is a smooth submersion with respect to the smooth structure that $P[\lambda]$ obtains from this fibre bundle atlas. Therefore this smooth structure coincides with the one we would obtain from an application of the quotient manifold theorem.

For the last assertion, we observe that by definition of $\psi_\alpha$
\begin{center}
\begin{tikzcd}[column sep=large,row sep=large]
P|_{U_\alpha}\times S \arrow[r,"{\phi_\alpha \times \mathrm{id}_{S}}"] \arrow[d,"q"'] \arrow[dr,phantom,"\com",description] & U_\alpha\times G\times S \arrow[d,"{\mathrm{id}_{U_\alpha} \times \mathrm{ev}}"]\\
P[\lambda]|_{U_\alpha} \arrow[r,"{\psi_\alpha}"'] & U_\alpha\times S,
\end{tikzcd}
\end{center}
where $G\times S \ra{\mathrm{ev}} S$ is given by $(g,s)\mapsto g(s)$.
Moreover, the horizontals in the diagram are diffeomorphisms, so we see that $q_p$ is indeed a diffeomorphism.

Now, consider the equation $q(p,s)=v$. Note that $\det D_s q(p,s)\neq 0$, since $q_p$ is a diffeomorphism for each $p$. (By $D_s$ we mean the square matrix of partial derivatives with respect to coordinates corresponding to $s$.)  The implicit function theorem (in local coordinates) then tells us that a unique smooth solution $P\times_M P[\lambda]\ra{\tau}S$ exists such that $q(p,\tau(p,v))=v$.
\end{proof}
Again, we would like to extend this definition on objects to a functor $PB^G\ra{-[\lambda]} FB^\lambda$. Of course, there is only one natural way to do this. We send a morphism $P\ra{k}P'$ of principal bundles to the map $k[\lambda]=(k\times \mathrm{id}_S)/G$, i.e. $(k[\lambda])([p,s])=[k(p),s]$\footnote{Here $[p,s]$ denotes the equivalence class of $(p,s)$.}. This map is well-defined. Indeed, $(k[\lambda])([(p,s)\cdot g])=[k(p\cdot g),g^{-1}\cdot s]=[k(p)\cdot g,g^{-1}\cdot s]=[k(p),s]=(k[\lambda])([p,s])$, where the second identity holds precisely because $k$ is a map of principal bundles, i.e. $G$-equivariant. Functoriality is clear, i.e. $(k' \circ k)[\lambda]=k'[\lambda]\circ k[\lambda]$.

Note that, by definition of $k[\lambda]$, the following square commutes
\begin{center}
\begin{tikzcd}[column sep=large,row sep=large]
P\times S \arrow[r,"{k\times \mathrm{id}_S}"] \arrow[d,"q"'] & P'\times S \arrow[d,"{q'}"]\\
P[\lambda] \arrow[r,dashed,"{k[\lambda]}"'] & P'[\lambda].
\end{tikzcd}
\end{center}
Therefore $k[\lambda]\circ q$ is a smooth map, since $q' \circ (k \times \mathrm{id}_S)$ is so. Since $q$ is a surjective submersion, we conclude that $k[\lambda]$ is smooth.

To conclude our discussion of the associated bundle, we check that $k[\lambda]$ is a morphism of $(G,\lambda)$-bundles. Denote by $\pis$ and $\pis'$, respectively, the projections from $P[\lambda]$ and $P'[\lambda]$ to $M$ and $M'$. Note that $k\times \mathrm{id}_S$ is a map of fibre bundles:
\begin{center}
\begin{tikzcd}[column sep=large,row sep=large]
P\times S \arrow[r,"{k\times \mathrm{id}_S}"] \arrow[d,"{\pis \circ q}"'] \arrow[dr,phantom,"\com",description] & P'\times S \arrow[d,"{\pis'\circ q'}"]\\
M \arrow[r,"{k_M}"'] & M'.
\end{tikzcd}
\end{center}
Therefore $k_M\circ \pis \circ q=\pis'\circ q'\circ (k\times \mathrm{id}_S)=\pis' \circ k[\lambda] \circ q$. Now, since $q$ is a surjection, we conclude that $k_M\circ \pis=\pis'\circ k[\lambda]$, so $k[\lambda]$ is a map of fibre bundles. 

The last thing we want to do is to verify that this map acts by $\lambda$. Let $p\in P_m$ and choose $p'\in P'_{k_M(m)}$. For this paragraph, write $\bar q_p:S\to P[\lambda]_{m}$ and $\bar q'_{p'}:S\to P'[\lambda]_{k_M(m)}$ for the maps $s\mapsto [p,s]$ and $s\mapsto [p',s]$. Since $G$ acts freely and transitively on the fibres of $P'$, there is a unique $g\in G$ such that $k(p)=p'\cdot g$. Then
$$k[\lambda](\bar q_p(s))=[k(p),s]=[p'\cdot g,s]=[p',g\cdot s]=\bar q'_{p'}(\lambda_g(s)).$$
Thus, in the fibre coordinates determined by $p$ and $p'$, $k[\lambda]$ is given by the action map $\lambda_g:S\to S$.
\begin{corollary}In the case of an effective action $\lambda$, the associated bundle construction can equivalently be described as
$-[\lambda]\cong C^{L\lambda}$.
\end{corollary}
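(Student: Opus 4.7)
The plan is to unfold $C^{L\lambda}=Q^\lambda\circ CB^\lambda\circ CU^L\circ \Sigma^L$ and to exploit the fact that, for $P\in PB^G$, the representative $CB^\lambda\circ CU^L\circ \Sigma^L(P)$ of $C^{L\lambda}(P)$ and the coordinate $(G,\lambda)$-bundle structure on $P[\lambda]$ furnished by Claim \ref{assbund} share the very same $G$-cocycle $(g_{\alpha\beta})$ coming from the maximal principal atlas $(U_\alpha,\phi_\alpha)$ on $P$. Concretely, the former is the coproduct $\bigsqcup_\alpha U_\alpha\times S$ quotiented by $(\alpha,m,s)\sim (\beta,m,g_{\alpha\beta}(m)\cdot s)$ with its tautological charts $\tilde\psi_\alpha$, while the latter is $P[\lambda]$ with the charts $\psi_\alpha$ satisfying $\psi_\alpha\circ\psi_\beta^{-1}(m,s)=(m,g_{\alpha\beta}(m)\cdot s)$ constructed in that claim.

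The candidate natural isomorphism $\eta_P:C^{L\lambda}(P)\to P[\lambda]$ is then defined componentwise by $\eta_P([\alpha,m,s]):=\psi_\alpha^{-1}(m,s)$. Well-definedness under $\sim$ is immediate from $\psi_\alpha^{-1}(m,g_{\alpha\beta}(m)\cdot s)=\psi_\beta^{-1}(m,s)$, which is just the transition law on $P[\lambda]$. Locally $\eta_P$ is literally the identity, $\psi_\alpha\circ\eta_P\circ\tilde\psi_\alpha^{-1}=\mathrm{id}_{U_\alpha\times S}$, so it assembles into a fibre-preserving diffeomorphism over $M$, and since both sides are governed by the same $g_{\alpha\beta}$, it qualifies as an isomorphism in $CFB^\lambda$; applying $Q^\lambda$ yields the desired isomorphism in $FB^\lambda$.

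For naturality, a morphism $k:P\to P'$ of principal bundles induces a cocycle morphism $h_{\alpha\beta}$ via $\phi'_\alpha\circ k\circ \phi_\beta^{-1}(m,g)=(k_M(m),h_{\alpha\beta}(m)\cdot g)$. The explicit formula for $CB^\lambda$ on arrows from the proof of Theorem \ref{cfbcon} then forces $C^{L\lambda}(k)$ to act locally as $[\beta,m,s]\mapsto [\alpha,k_M(m),h_{\alpha\beta}(m)\cdot s]$. I would compute $\psi'_\alpha\circ k[\lambda]\circ \psi_\beta^{-1}(m,s)$ by unwinding $\psi_\beta^{-1}(m,s)=q(\phi_\beta^{-1}(m,e),s)$ from Claim \ref{assbund} and using the $G$-equivariance of $k$ together with $k\circ \phi_\beta^{-1}(m,e)=(\phi'_\alpha)^{-1}(k_M(m),h_{\alpha\beta}(m))=(\phi'_\alpha)^{-1}(k_M(m),e)\cdot h_{\alpha\beta}(m)$, to obtain the same local formula $(k_M(m),h_{\alpha\beta}(m)\cdot s)$. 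With the two local expressions matching on every chart, the naturality square commutes.

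The main obstacle is this last local computation; everything else is bookkeeping aligning the cocycle extracted from $-[\lambda]$ via Claim \ref{assbund} with the one that $C^{L\lambda}$ is built from. Effectiveness of $\lambda$ does not enter in any essential way into $\eta_P$ itself, but is exactly what places the corollary within the scope of Theorem \ref{fbcon}, ensuring that $C^{L\lambda}$ is a genuine equivalence. With the matching in hand, $\eta$ is a natural isomorphism of functors $PB^G\to FB^\lambda$, which is precisely the statement of the corollary.
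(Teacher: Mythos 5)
Your proposal is correct and follows the same route as the paper: the paper's proof likewise rests on matching the transition functions $\psi_\alpha\circ\psi_\beta^{-1}(m,s)=(m,g_{\alpha\beta}(m)\cdot s)$ from Claim \ref{assbund} against the cocycle used in the construction of Theorem \ref{cfbcon}, with naturality deduced from the local description of $k[\lambda]$. You have simply written out in full the chart-level identifications and the naturality computation that the paper leaves implicit.
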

\begin{proof}The statement that $P[\lambda]\cong C^{L\lambda}(P)$ follows immediately by comparison of the conclusion that $\psi_\alpha\circ\psi_\beta^{-1}(m,s)=(m,g_{\alpha\beta}(m)\cdot s)$ in the proof of claim \ref{assbund} with the construction in the proof of theorem \ref{cfbcon}. The naturality of the isomorphism should be clear from the discussion above.
\end{proof}
Therefore, there is also a general (pseudo)inverse construction to that of the associated bundle corresponding to an effective action. Indeed, suppose we are given a fibre bundle $F\ra{\pi}M$, with fibre $S$ and structure group $G$ that acts by an action $\lambda$. Then, as a consequence of theorem \ref{fbcon}, $C^{\lambda L}$ is a pseudoinverse to $C^{L \lambda}$ and therefore also to $-[\lambda]$.

We can also find an interpretation of this inverse construction. Note that we can consider the elements of $C^{\lambda L}(F)$ as generalised frames\footnote{That is, a (linear) frame being a linear isomorphism (and thus a diffeomorphism) $\Bbbk^k\ra{}V_m$, if $V$ is a vector bundle and either $\Bbbk=\R$ or $\Bbbk=\C$.} for $F$. We identify a $p\in C^{\lambda L}(F)$ with $S\cong \{p\}\times S\ra{q_p} ( C^{\lambda L}(F))[\lambda]_{\pis(p)}\cong F_{\pis(p)}$, which is a diffeomorphism according to claim \ref{assbund}. This enables us to see $ C^{\lambda L}(F)_m$ as a subset $F^\lambda(F)_m\subset \mathrm{Diff}(S,F_m)$. Note that $F^\lambda(F)$ contains precisely the frames corresponding to the fibre components of arrows between $S\times U$ and $F|_U$ in $FB_U^\lambda$, if $U$ is some open in $M$. In particular, sections of $F^\lambda(F)$ correspond to trivialisations of $F$. Also note that the action of $C^{\lambda L}$ on arrows corresponds in a natural way to postcomposition in the identification $C^{\lambda L}(F)\cong F^\lambda(F)$, i.e.
$$\label{eq:asfrid}q(C^{\lambda L}(k)(p),-)=q_{C^{\lambda L} (k)(p)}=k\circ q_p=k\circ q(p,-),\txt{(*)}$$
if $F\ra{k}F'$ is a map of $(G,\lambda)$-fibre bundles. We have obtained the following result.
\begin{corollary}
We have a natural isomorphism $C^{\lambda L}\cong F^\lambda$, where $F^\lambda$ is defined as described above.
\end{corollary}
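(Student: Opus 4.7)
The plan is to exhibit the natural isomorphism $\eta_F \colon C^{\lambda L}(F) \to F^\lambda(F)$ pointwise by composing the canonical fibrewise diffeomorphism from claim \ref{assbund} with the natural isomorphism $C^{\lambda L}(F)[\lambda] \cong F$ coming from theorem \ref{fbcon} (together with the preceding corollary $-[\lambda] \cong C^{L\lambda}$). Concretely, for $p \in C^{\lambda L}(F)$ with $m := \pis(p)$, claim \ref{assbund} supplies the diffeomorphism $q_p \colon \{p\}\times S \to C^{\lambda L}(F)[\lambda]_m$, and by theorem \ref{fbcon} there is a natural isomorphism $\mu_F \colon C^{\lambda L}(F)[\lambda] \to F$ of $(G,\lambda)$-bundles. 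I set $\eta_F(p) := (\mu_F)_m \circ q_p \colon S \to F_m$, which is a diffeomorphism by construction.

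Next I would check that $\eta_F$ takes values in $F^\lambda(F)$ and is a bijection onto it. Fix a trivialising chart $(U_\alpha, \phi_\alpha)$ of $C^{\lambda L}(F)$ and the induced chart $(U_\alpha, \psi_\alpha)$ of $F$ as in the proof of claim \ref{assbund}. Writing $p = \phi_\alpha^{-1}(m,g)$, the local formula for $q$ gives $\eta_F(p) = \psi_\alpha^{-1}(m,-) \circ \lambda_g$, which is manifestly the fibre component over $m$ of the $(G,\lambda)$-bundle arrow $U_\alpha \times S \to F|_{U_\alpha}$, $(m',s) \mapsto \psi_\alpha^{-1}(m',\lambda_g s)$. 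This shows $\eta_F(p) \in F^\lambda(F)_m$; conversely, every frame in $F^\lambda(F)_m$ arises from some trivialisation-style arrow over a neighbourhood of $m$, and effectiveness of $\lambda$ forces the $g$ above to be unique, yielding the inverse of $\eta_F$. Transporting the smooth principal $G$-bundle structure of $C^{\lambda L}(F)$ through $\eta_F$ turns $F^\lambda(F)$ into a principal $G$-bundle with $\eta_F$ a diffeomorphism; the right $G$-action translates into precomposition with $\lambda_g \colon S \to S$.

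Naturality in $F$ is then essentially the identity $(*)$ already displayed in the excerpt. Given a morphism $k \colon F \to F'$ in $FB^\lambda$, $(*)$ reads $k \circ q_p = q_{C^{\lambda L}(k)(p)}$, and combining this with the naturality of $\mu$ gives
\[
k \circ \eta_F(p) \;=\; (\mu_{F'})_{\pis(p)} \circ q_{C^{\lambda L}(k)(p)} \;=\; \eta_{F'}\bigl(C^{\lambda L}(k)(p)\bigr),
\]
which is precisely the naturality square for $\eta$, once we confirm that $F^\lambda(k)$ acts as postcomposition $\hat{p} \mapsto k \circ \hat{p}$ (this is the only reasonable definition compatible with the interpretation of frames, and it is forced if we want $\eta$ to be natural).

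The main obstacle is not any single computation but rather a definitional one: the preceding text only specifies $F^\lambda(F)$ as a subset of $\mathrm{Diff}(S, F_m)$, without intrinsically topologising it or turning $F^\lambda$ into a functor to $PB^G$. The substance of the corollary is therefore that transporting the smooth and principal-bundle structure from $C^{\lambda L}(F)$ along $\eta_F$ produces structure on $F^\lambda(F)$ that is \emph{canonical} — independent of the chart computations above and natural in $F$ — so that $F^\lambda$ genuinely becomes a well-defined functor naturally isomorphic to $C^{\lambda L}$. Once the local formula $\eta_F(\phi_\alpha^{-1}(m,g)) = \psi_\alpha^{-1}(m,-) \circ \lambda_g$ is in hand, this reduces to the observation that the right-hand side depends only on the frame, not on the chart used to describe it.
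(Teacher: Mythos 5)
Your argument is essentially the paper's own: the paper gives no separate proof of this corollary, treating it as a summary of the preceding discussion, which identifies $p$ with the diffeomorphism $S\cong\{p\}\times S\ra{q_p}C^{\lambda L}(F)[\lambda]_{\pis(p)}\cong F_{\pis(p)}$ and uses the displayed identity $(*)$ for naturality — exactly your $\eta_F$ and your naturality square. You in fact supply more detail than the paper (the local formula $\eta_F(\phi_\alpha^{-1}(m,g))=\psi_\alpha^{-1}(m,-)\circ\lambda_g$, the role of effectiveness, and the observation that $F^\lambda(F)$ only acquires its smooth principal-bundle structure by transport along $\eta_F$), all of which is correct and consistent with the text.
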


Summarising, we have made the equivalence of categories
\begin{center}
\begin{tikzcd}[column sep=huge]
PB^G \arrow[r,leftarrow,shift left=0.7ex,"{C^{\lambda L}}"] \arrow[r,shift right=0.7ex,"{C^{L\lambda}}"'] & FB^\lambda
\end{tikzcd}
\end{center}
(which we had established for an effective action $G\times S\ra{\lambda}S$) concrete in the form of the associated bundle construction $-[\lambda]\cong C^{L\lambda}$ and that of the generalised frame bundle $F^\lambda\cong C^{\lambda L}$. Naturally, these constructions are also each other's pseudoinverse:
\begin{center}
\begin{tikzcd}[column sep=huge]
PB^G \arrow[r,leftarrow,shift left=0.7ex,"{F^\lambda}"] \arrow[r,shift right=0.7ex,"{-[\lambda]}"'] & FB^\lambda.
\end{tikzcd}
\end{center}
Therefore, whenever we encounter a $(G,\lambda)$-fibre bundle over a manifold $M$, we can interpret it as an associated bundle of some principal $G$-bundle over $M$, which then takes on the role of a bundle of generalised frames for the original fibre bundle. This interpretation is of importance when trying to understand why principal bundles arise in physics.

\begin{remark}The reader might wonder why we have to demand that $\lambda$ be effective to obtain the result that the associated bundle functor defines an equivalence of categories between $PB^G$ and $FB^\lambda$. Indeed, we have seen that the associated bundle functor can be defined for arbitrary $G$-actions. To see what can go wrong, consider the following (somewhat extreme) example. Let $G\times S \ra{\lambda}S$ be the trivial action of a non-trivial Lie group $G$ on a one-point manifold $S=\{p\}$. Then we can construct a category $FB^\lambda\cong \mathrm{Mfd}$, which is, of course, just isomorphic to the category of smooth manifolds. Moreover, we have an associated bundle functor $PB^G\ra{-[\lambda]}FB^\lambda\cong \mathrm{Mfd}$, which just sends each principal bundle to its base space. Obviously, this functor is not faithful\footnote{As is easily checked, it is full and essentially surjective on objects, though.}, so it cannot possibly define an equivalence of categories. 
\end{remark}
\subsection{Vector Bundles and Principal $\GL(\Bbbk^k)$-Bundles}\label{vbequivalence}
An important special case of this equivalence of categories is that of the standard action $\lambda$ of $\GL(\Bbbk^k)$ on the vector space $\Bbbk^k$, where either $\Bbbk=\R$ or $\Bbbk=\C$. Note that the objects of $FB^\lambda$ are then just the $k$-vector bundles over the field $\Bbbk$, while the arrows are morphisms of vector bundles (morphisms of fibre bundles that are linear on the fibre) that are invertible on each fibre. We therefore also denote this category by $VB^k_\Bbbk$.

Let $V\ra{\pi}M$ be a $k$-vector bundle. The corresponding $\GL(\Bbbk^k)$-principal bundle of linear frames $F^\lambda(V)\ra{\pis}M$ is often simply called \emph{the frame bundle}. This can be regarded as the open subbundle of the $\mathrm{Hom}$-bundle $\mathrm{Hom}(\Bbbk^k\times M,V)$ defined by the condition that the determinant is non-zero. Observe that the existence of a canonical basis for $\Bbbk^k$ tells us that a (local) frame for a $k$-vector bundle is the same as a set of $k$ linearly independent (local) sections of the bundle. The inverse construction $PB^{\GL(\Bbbk^k)}\ra{-[\lambda]}VB^k_\Bbbk$ is sometimes called \emph{the associated vector bundle construction}.

\begin{remark}
One could argue that $VB_\Bbbk^k$ is not the right category to study when discussing vector bundles, since we would also like to consider maps that are not invertible on the fibre. One could advocate that the statement of the equivalence of categories between $VB_\Bbbk^k$ and $PB^{\GL(\Bbbk^k)}$\, could be restricted to the statement that we have a bijective correspondence on objects. However, this equivalence of categories tells us more. It tells us that this bijective correspondence is natural with respect to a set of morphisms of vector bundles (and principal bundles), which includes the isomorphisms. Note that in the case in which we only consider the corresponding categories of bundles over a manifold $M$ with morphisms that are the identity (or some automorphism) on $M$, this set of arrows consists precisely of the isomorphisms. 
\end{remark}

\subsection{Orthogonal Frame Bundles}\label{sec:ortheq}
We can also start with a vector bundle equipped with a metric, i.e. a smooth fibrewise symmetric bilinear form (or Hermitian form in the complex case) $g$ that is non-degenerate in the sense that it identifies each fibre with its dual (antilinearly in the Hermitian case). This is a situation that we encounter all the time in physics. $g$ can either be a Euclidean, Hermitian, or Lorentzian metric. The examples are numerous. In these physical examples, we are usually not so much interested in the category of all vector bundles, but in the category $MVB_{\Bbbk,(p,q)}^k$ of metrised $k$-dimensional vector bundles over the field $\Bbbk$ with signature $(p,q)$. Of course, its objects are just $k$-vector bundles together with a specified $(p,q)$-metric while its arrows are vector bundle morphisms which are also isometries. (Note that isometries between $k$-vector spaces are automatically invertible, so these morphisms are indeed fibrewise invertible.) We can equivalently describe these metrised vector bundles in the following way.
\begin{claim}\label{orth}
Let $O(\Bbbk,(p,q))\times\Bbbk^k\ra{\lambda}\Bbbk^k$ be the standard action of the orthogonal group corresponding to the standard $(p,q)$-inner product $\ip{\cdot}{\cdot}$ on $\Bbbk^k$. Then we have an isomorphism of categories
\begin{center}
\begin{tikzcd}[column sep=huge]
MVB_{\Bbbk,(p,q)}^k \arrow[r,leftarrow,shift left=0.7ex,"{M^k}"] \arrow[r,shift right=0.7ex,"{D^k}"'] & FB^\lambda.
\end{tikzcd}
\end{center}
\end{claim}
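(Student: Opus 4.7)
The plan is to define the two functors explicitly and check they are mutually inverse on the nose (not merely up to natural isomorphism). Using the effectiveness of the standard action $\lambda$, I would always represent an object of $FB^\lambda$ by its maximal coordinate atlas, as described in the coordinate-free subsection above. This means I can unambiguously talk about ``the'' trivializations compatible with the $(O(\Bbbk,(p,q)),\lambda)$-structure.

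For the functor $M^k\colon FB^\lambda \to MVB_{\Bbbk,(p,q)}^k$, I would take a $(G,\lambda)$-bundle $(V,\pi,M)$ with maximal atlas $(U_\alpha,\psi_\alpha,g_{\alpha\beta})$, note that each $\psi_\alpha$ gives each fibre of $\pi^{-1}(U_\alpha)$ the structure of a copy of $\Bbbk^k$ (so $V$ is a vector bundle, since the transition functions take values in $O(\Bbbk,(p,q))\subset\mathrm{GL}(\Bbbk^k)$), and transport the standard $(p,q)$-inner product on $\Bbbk^k$ to the fibres $V_m$ via $\psi_\alpha$. Because transition functions preserve this inner product by definition of the orthogonal group, the resulting fibrewise metric $g_m$ is independent of the choice of chart; smoothness follows since in any chart $g$ is the pullback of the standard bilinear form on $U_\alpha\times\Bbbk^k$. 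On arrows: a morphism $k$ in $FB^\lambda$ looks locally like $(m,s)\mapsto(k_M(m),h_{\alpha\beta}(m)\cdot s)$ with $h_{\alpha\beta}(m)\in O(\Bbbk,(p,q))$, hence is a fibrewise linear isometry, i.e.\ a morphism in $MVB_{\Bbbk,(p,q)}^k$.

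For the inverse $D^k$, given $(V,g)$, I would define the $G$-atlas to consist of all local trivializations $\psi\colon\pi^{-1}(U)\to U\times\Bbbk^k$ that are fibrewise isometries onto the standard $(p,q)$-inner product. The existence of such trivializations near any point follows from Gram--Schmidt orthonormalization applied to an arbitrary smooth local frame (together with a sign choice to fix the signature), which yields a smooth orthonormal frame and hence a smooth orthonormal trivialization. Any two such trivializations differ by a transition function that is fibrewise an isometry of $(\Bbbk^k,\ip{\cdot}{\cdot})$, i.e.\ takes values in $O(\Bbbk,(p,q))$, and together they automatically form a maximal $(G,\lambda)$-atlas. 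On arrows: an isometric vector bundle map, read through orthonormal trivializations, has fibre component in $O(\Bbbk,(p,q))$, so it is a morphism in $FB^\lambda$.

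It then remains to check that $M^k\circ D^k=\mathrm{id}_{MVB_{\Bbbk,(p,q)}^k}$ and $D^k\circ M^k=\mathrm{id}_{FB^\lambda}$. The first is pointwise: starting with $(V,g)$, the metric reconstructed from orthonormal trivializations coincides with $g$ by orthonormality. The second uses maximality: starting with a maximal $\lambda$-atlas, the constructed metric makes precisely those trivializations orthonormal, and maximality forces equality with the collection of all orthonormal trivializations. The induced identifications on morphisms are immediate from the local descriptions. The main subtle step, I expect, is the smoothness of the Gram--Schmidt process (so that $D^k$ really lands in a smooth bundle atlas) and the bookkeeping that the resulting atlas is maximal — both routine but easy to bungle; once in hand the two compositions unwind to strict identities precisely because $\lambda$ is effective, so no natural-transformation data is needed.
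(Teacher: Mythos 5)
Your proposal is correct and follows essentially the same route as the paper: $M^k$ transports the standard $(p,q)$-inner product through the (maximal) atlas, using orthogonality of the transition functions for well-definedness, and $D^k$ rests on pointwise Gram--Schmidt to produce smooth orthonormal frames whose transition functions then land in $O(\Bbbk,(p,q))$. The only difference is cosmetic: you describe $D^k$'s atlas directly as the set of all fibrewise-isometric trivialisations and spell out that the two composites are strict identities, whereas the paper constructs one orthonormal atlas and asserts the inverse property; the mathematical content is identical.
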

\begin{proof}We first construct a metrisation functor $M^k$. Let $\F:=(\pi,[U_\alpha,\psi_\alpha,g_{\alpha\beta}])\in FB^\lambda$ be an $(O(\Bbbk,(p,q)),\lambda)$-fibre bundle (which is of course a vector bundle). Pick the corresponding maximal atlas structure, i.e. $\Sigma^\lambda(\F)=(\pi,(U_\alpha,\psi_\alpha,g_{\alpha\beta}))$. Then each $\psi_\alpha$ gives us a (pseudo) inner product $g_\alpha$ on $\pi^{-1}(U_\alpha)$ (from the standard inner product on $\Bbbk^k$). Moreover, these inner products agree on chart overlaps, since the transition functions act by the standard action of the orthogonal group. The sheaf property of smooth functions gives us a unique inner product $g$ on $\pi$. We write $M^k(\F):=(\F,g)$ for the metrised vector bundle we have constructed. Note that arrows in $FB^\lambda$ are indeed isometries. Indeed, in charts, they are described by an action of the orthogonal group, so we can define $M^k$ to be the identity on maps.

The demetrisation functor $D^k$ will just be the functor that forgets the metric and is the identity on arrows. However, we have to show that a metrised vector bundle has an atlas with transition functions that only take values in $O(\Bbbk,(p,q))\subset \GL(\Bbbk^k)$. This is obvious when we see that we can construct local orthonormal frames for the bundle.

Indeed, let $(V\ra{\pi}M,g)$ be a metrised vector bundle. Let $(U_\alpha,\psi_\alpha,g_{\alpha\beta})$ be its maximal vector bundle atlas (so $(g_{\alpha\beta})$ have values in $\GL(\Bbbk^k)$). Let $(f_{\alpha,j})$, $U_\alpha\ra{f_{\alpha,j}}V|_{U_\alpha}$ be a local frame for $V\ra{\pi}M$. After refining the cover if necessary, the pointwise application of the Gram-Schmidt procedure gives us an orthonormal frame $(f'_{\alpha,j})$, i.e. a frame such that $g(f'_{\alpha,j}(m),f'_{\alpha,k}(m))=0$ if $j\neq k$ and $g(f'_{\alpha,j}(m),f'_{\alpha,j}(m))=\pm 1$, where the $-$-sign occurs when $j\leq p$ and the $+$-sign holds for the other $q$ indices. Of course, these orthonormal frames define a strictly equivalent atlas $(U_\alpha,\psi'_\alpha,g'_{\alpha\beta})$, by $\psi'^{-1}_\alpha(m,(v^1,\ldots,v ^k)):=\sum_{j=1}^k v^j f_{\alpha,j}'(m)$. Then, clearly, $\ip{v}{w}=g(\psi'_\alpha {}^{-1}(m,v),\psi'_\alpha {}^{-1}(m,w))$, where $\ip{\cdot}{\cdot}$ is the $(p,q)$-inner product on $\Bbbk^k$. It follows that
\begin{align*}\ip{v}{w}&=g(\psi'_\beta {}^{-1}(m,v),\psi'_\beta {}^{-1} (m,w))\\
&=g(\psi'_\alpha {}^{-1}(m,g'_{\alpha\beta}(m)\cdot v),\psi'_\alpha {}^{-1} (m,g'_{\alpha\beta}(m)\cdot w))\\
&=\ip{g'_{\alpha\beta}(m)\cdot v}{g'_{\alpha\beta}(m)\cdot w}.
\end{align*}
We conclude that the image of $g'_{\alpha\beta}$ lies in $O(\Bbbk,(p,q))$. This proves the claim, since $M^k$ and $D^k$ are clearly each other's inverse.
\end{proof}
\begin{remark}
Since we can construct a Riemannian metric on each vector bundle by using partitions of unity, we see that we can always specify a $k$-dimensional vector bundle over $\Bbbk$ by transition functions with values in $O(\Bbbk,(k,0))$. Put differently, by choosing a metric, we can interpret it as a $(O(\Bbbk,(k,0)),\lambda)$-fibre bundle. Note that this interpretation is highly non-canonical, depending on the choice of a metric.
\end{remark}
\begin{corollary}\label{metrequi}Now, theorem \ref{fbcon} tells us that we obtain an equivalence of categories $$MVB_{\Bbbk,(p,q)}^k\mathop{\leftrightarrows}^{}_{}PB^{O(\Bbbk,(p,q))}.$$ Similarly, we obtain an equivalence of categories
$$OMVB_{\Bbbk,(p,q)}^k\mathop{\leftrightarrows}^{}_{}PB^{SO(\Bbbk,(p,q))}$$  between the categories $OMVB_{\Bbbk,(p,q)}^k$ of oriented $(p,q)$-metrised $k$-vector bundles and that of $SO(\Bbbk,(p,q))$-principal bundles\footnote{Here $SO(\Bbbk,(p,q))$ is the subgroup of orientation-preserving elements of $O(\Bbbk,(p,q))$ (those with determinant $+1$). Of course, in this somewhat clumsy notation, $U(n)=O(\C\, ,(n,0))$ and $O(p,q)=O(\R\, , (p,q))$ and similarly for their orientation-preserving counterparts.}  
\end{corollary}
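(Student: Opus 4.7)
The plan is to deduce both equivalences from the machinery already assembled: Claim \ref{orth} provides an isomorphism of categories $MVB_{\Bbbk,(p,q)}^k \cong FB^\lambda$ with $\lambda$ the standard $O(\Bbbk,(p,q))$-action on $\Bbbk^k$, and Theorem \ref{fbcon} upgrades this to an equivalence with the principal bundle category, provided $\lambda$ is effective. So the first task is the quick observation that the standard action of $O(\Bbbk,(p,q))$ on $\Bbbk^k$ is effective: an orthogonal matrix fixing every vector in $\Bbbk^k$ is the identity, so $\lambda$ gives an (algebraic) embedding $O(\Bbbk,(p,q)) \hookrightarrow \mathrm{Diff}\,\Bbbk^k$. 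With effectiveness in hand, applying Theorem \ref{fbcon} with $\mu = L$ (left multiplication) yields the equivalence $FB^\lambda \leftrightarrows FB^L = PB^{O(\Bbbk,(p,q))}$, and composing with the isomorphism $M^k\colon FB^\lambda \to MVB_{\Bbbk,(p,q)}^k$ from Claim \ref{orth} gives the first asserted equivalence.

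For the oriented case, I would first establish the analog of Claim \ref{orth}, namely an isomorphism of categories $OMVB_{\Bbbk,(p,q)}^k \cong FB^{\lambda'}$, where $\lambda'$ denotes the restriction of $\lambda$ to $SO(\Bbbk,(p,q))$. The demetrisation/deorientation functor is again the obvious forgetful one; the content is in the other direction. Given an $(SO(\Bbbk,(p,q)),\lambda')$-atlas, the metrisation functor $M^k$ from Claim \ref{orth} produces the metric (the restriction to $SO$ does not affect this step, since the inner product is $O$-invariant and hence $SO$-invariant). The orientation is obtained by declaring a local frame $\psi_\alpha^{-1}(m,e_1),\ldots,\psi_\alpha^{-1}(m,e_k)$ to be positively oriented; the fact that all transition functions $g_{\alpha\beta}$ now take values in $SO(\Bbbk,(p,q))$ (so have determinant $+1$) guarantees the orientations patch consistently on overlaps. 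Conversely, given an oriented metrised vector bundle, applying Gram--Schmidt to a local frame and, if necessary, swapping two basis vectors yields a \emph{positively oriented} orthonormal local frame; the resulting atlas has transition functions in $SO(\Bbbk,(p,q))$, because they preserve both the inner product and the orientation.

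Then, again noting that $\lambda'$ is effective (it is the restriction of an effective action to a subgroup acting on the same space, and the identity is the only element fixing every vector), Theorem \ref{fbcon} with $\mu = L$ gives $FB^{\lambda'} \leftrightarrows PB^{SO(\Bbbk,(p,q))}$, and composition yields the second equivalence.

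The only step that is not entirely automatic is the orientation bookkeeping in the $SO$-version of Claim \ref{orth}, in particular verifying that the sign correction in Gram--Schmidt can indeed be carried out smoothly and locally so as to produce a positively oriented orthonormal frame on a neighbourhood of any point; this is where I would be most careful, though it is standard. Everything else is a formal composition of functors already constructed, relying on Theorem \ref{fbcon}.
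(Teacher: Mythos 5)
Your proposal is correct and is exactly the argument the paper intends: the corollary is stated as an immediate consequence of Claim \ref{orth} composed with Theorem \ref{fbcon} (using effectiveness of the standard orthogonal action and of $L$), with the oriented case handled ``similarly'' via an $SO$-analogue of Claim \ref{orth}, which you have spelled out in the same way. The one point worth flagging in your orientation bookkeeping is that the sign correction after Gram--Schmidt is genuinely a swap (or negation) of a basis vector only when $\Bbbk=\R$; for $\Bbbk=\C$, where ``determinant $+1$'' means $SU(n)$ and an orientation amounts to a unit section of the determinant line bundle, one instead rescales a frame vector by a smoothly varying phase --- but this is equally standard and does not affect the argument.
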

An important advantage of using an orthonormal frame bundle instead of a general frame bundle when describing physics is that its structure group $G$ is in many cases (in the case of a positive definite metric) compact. This means, for example, that, using the Haar measure, we can integrate over it to average quantities, making them invariant under the principal right action. Since the principal right action represents a change of basis, or gauge transformation, in the original vector bundle, this can be an important demand for quantities that are of physical interest. Moreover, in a very similar way, it will allow us to define invariant inner products on vector spaces on which we have a linear $G$-action (in particular an $\mathrm{Ad}$-invariant inner product on $\g$), which will be crucial ingredients in Yang-Mills theories.

This turns out to be a very useful way of looking at these bundles. Indeed, modern classical field theory is not formulated in the language of metrised vector bundles but in that of principal bundles with orthogonal structure groups. For example, the electromagnetic force is dealt with on a principal $U(1)$-bundle, the weak force on a principal $SU(2)$-bundle and the strong force on a principal $SU(3)$-bundle. \cite{frageo} Moreover, the spacetime metric in general relativity gives rise to an $O(1,3)$-principal bundle that has the Lorentz frames as elements. Using this principal bundle, we can hope to define spinors on a curved spacetime, a construction that would be incomprehensible without the language of principal bundles.

\begin{remark}
Of course, we never used the symmetry of the inner product in the proof of claim \ref{orth}. A similar result therefore also holds, for example, for symplectic forms: the category of $2k$-dimensional symplectic vector bundles and the category of principal bundles with $Sp(2k,\Bbbk)$ as a structure group are equivalent. One encounters an example of this fact when one constructs the (two-dimensional) spinor bundle corresponding to a spin structure on a manifold. Since this is an associated bundle to a $SL(2,\C)$-principal bundle and $SL(2,\C)= Sp(2,\C)$, we naturally obtain a symplectic form on the spinor bundle, from the canonical symplectic form on $\C^2$. This symplectic form is used intensively by physicists, since it is closely related to the spacetime metric in general relativity, when we consider spin structures on a spacetime manifold. \end{remark}

\subsection{How do sections transform under the equivalence?}
In field theories, e.g. classical electromagnetism, physical quantities are described by sections of certain vector bundles over spacetime. While doing physics, we like to do calculations in concrete situations to derive concrete results. To do this, it is very convenient to drop the abstract notation we have been using, to pick a basis for each fibre and do the calculations in coordinates corresponding to this basis. Of course, we want this basis to vary smoothly with the base point. So, briefly, we pick a local frame for our vector bundle when we want to do computations.

This is where the frame bundle comes in. A local frame is, of course, just a local section of the frame bundle. So we have a natural idea of how sections of principal $\GL(\Bbbk^k)$-bundles should transform under the equivalence of categories: they correspond to frames of the canonical associated bundle.

The converse question also naturally arises. The quantities of physical interest are sections of a vector bundle. If we want to describe the physics in terms of the corresponding frame bundle, we would like to know what the counterparts of these sections are in this context. The answer (in a slightly more general context) turns out to be as follows.

\begin{claim}\label{seccoo} Let $(P,\pi,M,G)$ be a principal fibre bundle and let $G\times S\ra{\lambda}S$ be an effective action. Denote by $ C^\infty(P,S)^G$ the space of $G$-equivariant mappings in the sense that $\Phi(p\cdot g)=g^{-1}\cdot \Phi(p)$ for all $g\in G$, $p\in P$ and $\Phi\in  C^\infty(P,S)^G$. Then, sections $\sigma \in \Gamma(P[\lambda])$ of the associated bundle are in one-to-one correspondence with $G$-equivariant mappings $\Phi\in  C^\infty(P,S)^G$.
\end{claim}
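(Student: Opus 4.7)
The plan is to construct the bijection explicitly in both directions using the diffeomorphism $\{p\}\times S\ra{q_p}P[\lambda]_{\pis(p)}$ and the smooth solution map $P\times_M P[\lambda]\ra{\tau}S$ of the equation $q(p,s)=v$ that were produced in claim \ref{assbund}. Given an equivariant $\Phi\in C^\infty(P,S)^G$, define a map $\sigma_\Phi:M\to P[\lambda]$ by
\[
\sigma_\Phi(m):=q(p,\Phi(p)),\qquad p\in\pis^{-1}(m).
\]
Well-definedness follows from equivariance: if $p'=p\cdot g$, then $\Phi(p')=g^{-1}\cdot\Phi(p)$, and by the defining relation of the quotient $q(p\cdot g,g^{-1}\cdot\Phi(p))=q(p,\Phi(p))$. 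Conversely, given a section $\sigma\in\Gamma(P[\lambda])$, define
\[
\Phi_\sigma(p):=\tau(p,\sigma(\pis(p))).
\]
Here $\tau$ is exactly the smooth map produced in claim \ref{assbund}, so $\Phi_\sigma$ is automatically smooth, and it is the unique function on $P$ whose value $s=\Phi_\sigma(p)$ solves $q(p,s)=\sigma(\pis(p))$.

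The first substantive check is that $\Phi_\sigma$ is equivariant. Since $\pis(p\cdot g)=\pis(p)$, we have
\[
q(p\cdot g,\Phi_\sigma(p\cdot g))=\sigma(\pis(p))=q(p,\Phi_\sigma(p))=q(p\cdot g,g^{-1}\cdot\Phi_\sigma(p)),
\]
using once again that $(p,s)\sim(p\cdot g,g^{-1}\cdot s)$. Because $q_{p\cdot g}$ is a bijection on $\{p\cdot g\}\times S$, this forces $\Phi_\sigma(p\cdot g)=g^{-1}\cdot\Phi_\sigma(p)$, as desired.

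Next I would verify that the two constructions are mutually inverse, which is essentially formal: $\Phi_{\sigma_\Phi}(p)$ is the unique $s$ with $q(p,s)=q(p,\Phi(p))$, hence equals $\Phi(p)$; and $\sigma_{\Phi_\sigma}(m)=q(p,\Phi_\sigma(p))=\sigma(m)$ by the defining property of $\Phi_\sigma$.

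The main obstacle is smoothness, which I expect to be the most delicate point, but it reduces cleanly to the smoothness statements already extracted from claim \ref{assbund}. For $\Phi_\sigma$, smoothness is immediate from that of $\tau$ and $\sigma$. For $\sigma_\Phi$, I would argue locally: on a principal bundle trivialisation $\pis^{-1}(U_\alpha)\cong U_\alpha\times G$ pick the canonical section $s_\alpha:U_\alpha\to P$ given by $m\mapsto(m,e)$, and observe that
\[
\sigma_\Phi|_{U_\alpha}=q\circ(s_\alpha,\Phi\circ s_\alpha),
\]
which is a composition of smooth maps. Patching these local expressions (or, equivalently, using that $\pis$ is a surjective submersion so that smoothness of $\sigma_\Phi\circ\pis=q\circ(\mathrm{id}_P,\Phi)$ descends to smoothness of $\sigma_\Phi$) completes the argument.
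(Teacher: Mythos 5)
Your proposal is correct and follows essentially the same route as the paper's proof: the same two constructions $\sigma_\Phi(m)=q(p,\Phi(p))$ and $\Phi_\sigma(p)=\tau(p,\sigma(\pis(p)))$, the same use of the quotient relation and the injectivity of $q_p$ for well-definedness and equivariance, and the same smoothness arguments (smoothness of $\tau$ in one direction, descent along the surjective submersion $\pis$ in the other).
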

\begin{proof} Let $\Phi\in C^\infty(P,S)^G$. Then we define a section $\sigma_\Phi$ of the associated bundle as follows. Obviously $P\ra{\mathrm{id}_P\times \Phi} P\times S$ is $G$-equivariant. Taking the quotient by the $G$-action we obtain a smooth section $\sigma_\Phi=(\mathrm{id}_P\times \Phi)/G: M\cong P/G\ra{}(P\times S)/G=P[\lambda]$. More explicitly, $\sigma_\Phi(m):= [p,\Phi(p)]$, where $p\in \pis^{-1}(m)$. This expression is well-defined, for if $\pis(p)=\pis(q) \in M$, then there exists a $g\in G$ such that $q=p\cdot g$. Then $[q,\Phi(q)]=[p\cdot g, \Phi(p\cdot g)]=[p\cdot g, g^{-1}\Phi(p)]=[p,\Phi(p)]$, where the second equality holds since $\Phi$ is $G$-equivariant. Moreover, we see that $\sigma_\Phi$ is indeed a section of $P[\lambda]$. Further, $\sigma_\Phi\circ \pis=q\circ (\mathrm{id}_P\times \Phi)$ and thus smooth, where $P\times S\ra{q} P[\lambda]$ is the quotient map. Since $\pis$ is a surjective submersion, $\sigma_\Phi$ is smooth as well.

Conversely, given $\sigma\in \Gamma(P[\lambda])$, we define $\Phi_\sigma\in  C^\infty(P,S)^G$ by $\Phi_\sigma(p):=q_p^{-1}(\sigma(\pis(p)))$, $q_p$ is the restriction of $q$, as in claim \ref{assbund}. It is easy to see that this gives a $G$-equivariant map. Indeed, $q_p(s)=[p,s]=[p\cdot g,g^{-1}\cdot s]=q_{p\cdot g}(g^{-1}\cdot s)$ and therefore $\Phi_\sigma(p\cdot g)=q_{p\cdot g}^{-1}(\sigma(m))=q_{p\cdot g}^{-1}(q_p(\Phi_\sigma(p)))=g^{-1}\cdot \Phi_\sigma(p)$. Smoothness of $\Phi_\sigma$ is also apparent, if we note that $\Phi_\sigma(p)=q_p^{-1}(\sigma(\pis(p)))=\tau(p,\sigma(\pis(p)))$, where $\tau$ is the smooth map defined in claim \ref{assbund}.

Finally, these constructions are each other's inverse, since
$$\Phi_{\sigma_\Phi}(p)=q_p^{-1}(\sigma_\Phi(\pis(p)))=q_p^{-1}([p,\Phi(p)])=\Phi(p)\txt{and}$$
$$\sigma_{\Phi_\sigma}(m)=[p,\Phi_\sigma(p)]=[p,q_p^{-1}(\sigma(m))]=q_p(q_p^{-1}(\sigma(m)))=\sigma(m).$$
\end{proof}

This is of course what physicists do all the time: they choose a local frame $e\in \Gamma(F^\lambda(V)|_U)$ for a vector bundle $V\ra{}M$ and identify a section $\sigma\in \Gamma(V)$ with its coordinates $M\supset U\ra{\Phi_\sigma\circ e}\Bbbk^k$ with respect to the frame $e$. Claim \ref{seccoo} tells us that this identification is valid.

This approach has the advantage that it is very concrete: a section of $V$ becomes just a function from $M$ that has some numbers as output, which enables us to do computations. However, we must always remember that the quantity $\Phi_\sigma\circ e$ is not an intrinsic characterisation of $\sigma$ ($\Phi_\sigma$ is though!), since it depends on the frame we choose to describe $\sigma$.

\begin{remark}
In the theory of relativity, this principle has some remarkable manifestations. For example, the electromagnetic field is represented mathematically by a $2$-form $M\ra{F}\bigwedge^2 T^*M$ on the spacetime manifold $M$. An observer has an associated coordinate frame $e\in F^\lambda(V)|_U$, with $e_{\mu\nu}=d x^\mu\wedge d x^\nu$, if $(x^\mu)$ are geodesic normal coordinates in which the observer is at rest. With respect to this frame, we can describe $F$ in coordinates $M\ra{\Phi_F\circ e=(F_{\mu \nu})}\bigwedge^2(\R^4)^*$. However, another observer will have their own coordinate frame $e'$ and with respect to this frame $F$ will have different coordinates $(F_{\mu\nu}')$. Components of $F$ that one observer would describe as magnetic fields will be electric fields for the other and vice versa.
\end{remark}

\subsection{Categories of Principal Bundles Revisited}
Principal bundles have turned out to play a fundamental role in modern physics. It is therefore important to stop and consider their properties.

There is some ambiguity when dealing with principal bundles from a categorical point of view. Of course, everyone would agree on what we would like to call a principal bundle. However, what maps should we consider? We have two choices to make:
\begin{enumerate}
\item Do we allow different base spaces $M$ (in which case it seems natural to allow the maps to be non-trivial on the base)?
\item Do we allow different structure groups $G$ (in which case it seems natural to allow the maps to be non-trivial on the group)?
\end{enumerate}
This gives us our four categories of principal bundles, which turn out to be very useful: of course there is the biggest one, $PB$, where we allow everything to happen, then there is $PB^G$ (the one of \cite{husfib}), where the answers to the questions are, respectively, yes and no, $PB_M$ (the one of \cite{kobfou}), where the answers are the other way around, and finally $PB_M^G$, where we answer no to both questions. The first category is essential when we want to emphasise the role of a principal bundle as a special case of a $(G,\lambda)$-fibre bundle. On the other hand, the second two are more workable when we want to do physics. In particular, we have the following results that will be important for gauge theory.
\begin{claim}
$PB_M$ has finite products.  To be precise, if $P_i\ra{\pis_i}M$ is a principal $G_i$-bundle, then the fibred product $P_1\times_MP_2\ra{\pis}M$ is a principal $G_1\times G_2$-bundle. This is the categorical product.
\end{claim}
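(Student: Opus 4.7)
The plan is to proceed in two stages: first verify that $P_1\times_M P_2$ really is a principal $(G_1\times G_2)$-bundle, then verify the universal product property in the morphism class of $PB_M$ (pairs $(k,f)$ consisting of a base-fixing map of total spaces together with a group homomorphism intertwining the principal right actions, as described in the footnote on \cite{kobfou}).

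For the first stage, I would start by choosing a common open cover $(U_\alpha)$ of $M$ that trivialises both bundles simultaneously (by intersecting the domains of chosen principal atlases for $P_1$ and $P_2$). Over each $U_\alpha$ we have trivialisations $\phi_\alpha^{(i)}:\pi_i^{-1}(U_\alpha)\to U_\alpha\times G_i$, and I would define
\[
\Psi_\alpha:\pi^{-1}(U_\alpha)\longrightarrow U_\alpha\times (G_1\times G_2),\qquad (p_1,p_2)\longmapsto (m,g_1,g_2),
\]
where $\phi_\alpha^{(i)}(p_i)=(m,g_i)$. This is a bijection onto, and using these $\Psi_\alpha$ as a smooth atlas endows $P_1\times_M P_2$ with the smooth manifold structure of the fibred product (equivalently, one checks that $P_1\times_M P_2$ is a closed embedded submanifold of $P_1\times P_2$ since the $\pi_i$ are submersions). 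The transition functions are manifestly $g^{(1)}_{\alpha\beta}\times g^{(2)}_{\alpha\beta}$, which is a $(G_1\times G_2)$-cocycle satisfying the required condition of Theorem~\ref{cfbcon}. The componentwise right action $(p_1,p_2)\cdot (g_1,g_2):=(p_1\cdot g_1,p_2\cdot g_2)$ is well defined (since the $G_i$-actions fix fibres of the $\pi_i$), and in the trivialisations $\Psi_\alpha$ it becomes right multiplication on $G_1\times G_2$; hence it is the principal right action of the $(G_1\times G_2)$-bundle structure we just constructed.

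For the second stage, I would exhibit the product projections. Define $\mathrm{pr}_i:P_1\times_M P_2\to P_i$ to be the obvious projection and $\mathrm{pr}_i^G:G_1\times G_2\to G_i$ the group projection. Componentwise definitions immediately give $\mathrm{pr}_i((p_1,p_2)\cdot(g_1,g_2))=\mathrm{pr}_i(p_1,p_2)\cdot \mathrm{pr}_i^G(g_1,g_2)$ and compatibility over $M$, so $(\mathrm{pr}_i,\mathrm{pr}_i^G)$ is a morphism in $PB_M$. Now, given an arbitrary principal $H$-bundle $Q\to M$ together with morphisms $(k_i,f_i):Q\to P_i$ in $PB_M$, set
\[
k:Q\longrightarrow P_1\times_M P_2,\qquad q\longmapsto (k_1(q),k_2(q)),
\]
which lands in the fibred product because both $k_i$ are over $\mathrm{id}_M$, and $f:=(f_1,f_2):H\to G_1\times G_2$. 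Equivariance $k(q\cdot h)=k(q)\cdot f(h)$ follows from the equivariance of each $(k_i,f_i)$, and smoothness follows from the universal property of the fibred product in the smooth category. Uniqueness is immediate from $\mathrm{pr}_i\circ k=k_i$ and $\mathrm{pr}_i^G\circ f=f_i$.

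I expect no serious obstacle. The only mildly delicate point is making sure that the smooth structure coming from $\Psi_\alpha$ agrees with the canonical one on the fibred product of submersions (so that $k$ defined above is genuinely smooth); this is handled by checking in trivialisations, where $\Psi_\alpha$ just becomes $(\phi_\alpha^{(1)},\phi_\alpha^{(2)})$ restricted to the diagonal in the base factors.
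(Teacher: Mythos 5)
Your proposal is correct and follows essentially the same route as the paper's own (much terser) proof: form the fibred product, equip it with the componentwise $(G_1\times G_2)$-structure via a common trivialising cover, take the obvious componentwise projections as the product cone, and check the universal property by pairing morphisms. You simply supply the details — the cocycle verification and the smooth-structure compatibility — that the paper leaves to the reader.
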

\begin{proof}
We take the projection from $\pis$ to $\pis_i$ to be the map of principal bundles that is the obvious projection onto the $i$-th component as a map of fibre bundles and the projection $G_1\times G_2\ra{}G_i$ on the structure group. The universal property of the product is easily checked.
\end{proof}
It should be noted that physicists usually call $\pis$ the \emph{spliced principal bundle}\index{spliced principal bundle} of $\pis_1$ and $\pis_2$. \footnote{Although the importance of this construction might not be evident from this thesis, it is one of the more important tools in gauge theory. When we deal with particles that interact with multiple gauge fields or with particles with spin, the construction is essential to obtain the correct principal bundle on which the physics is described.}

The following is a definition that is of the most fundamental importance in gauge theory.
\begin{definition*}Let $P\ra{\pis}M\in PB^G_M$ be a principal $G$-bundle. We call $\mathrm{Aut}(\pis)$, the group of automorphisms of $\pis$ in $PB^G_M$, the group of \emph{gauge transformations}\index{gauge transformation} of $\pis$ and denote it by $\mathrm{Gau}(\pis)$. To be explicit, this group consists of diffeomorphisms $P\ra{k}P$, such that $\pis\circ k=\pis$ and such that $\rho_g\circ k=k\circ \rho_g$, for all $g\in G$. Note that any arrow in $PB^G_M$ is an isomorphism, so $\mathrm{Gau}\, \pis$ consists precisely of the endomorphisms of $\pis$ in $PB^G_M$.
\end{definition*}
Just like $\mathrm{Diff}\, S$, $\mathrm{Gau}\, \pis$ has the structure of an infinite-dimensional manifold on which the group operations are smooth. \cite{micman} In the line of thought of the previous paragraph, we have the following result.
\begin{claim}\label{isomgauge}
There is an isomorphism of groups $C^\infty(P,G)^G\cong \mathrm{Gau}\, \pis$, where we take $G$ to act on itself on the left by conjugation. Claim \ref{seccoo} now tells us that we can see gauge transformations as sections of the obvious associated $(G,C)$-fibre bundle.
\end{claim}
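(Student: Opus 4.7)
The plan is to construct the isomorphism directly using the principal right action and then verify it matches the structure of Claim \ref{seccoo}. The key idea is that, since $\pis\circ k=\pis$ for any gauge transformation $k$, each $k(p)$ lies in the same fibre as $p$, and since $G$ acts freely and transitively on fibres there is a unique $\Phi_k(p)\in G$ with $k(p)=p\cdot \Phi_k(p)$. This defines the map $\mathrm{Gau}(\pis)\to C^\infty(P,G)^G$; the inverse is given by $\Phi\mapsto k_\Phi$, where $k_\Phi(p):=p\cdot\Phi(p)$.

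First, I would verify that $\Phi_k$ is smooth. This is a local question, so I would pass to a principal trivialisation $\pis^{-1}(U_\alpha)\cong U_\alpha\times G$; there $k$ takes the form $(m,g)\mapsto (m,f_\alpha(m,g))$ for a smooth map $f_\alpha$, and $G$-equivariance of $k$ together with freeness of $L$ force $\Phi_k$ to read off as $f_\alpha(m,e)$ times $g^{-1}$-conjugates in the chart, so smoothness follows. Next, I would compute equivariance: from $k(p\cdot g)=k(p)\cdot g$ I get $p\cdot g\cdot \Phi_k(p\cdot g)=p\cdot \Phi_k(p)\cdot g$, and free action of $G$ on $P$ yields $\Phi_k(p\cdot g)=g^{-1}\Phi_k(p)g$, i.e.\ equivariance with respect to the conjugation action $C$, as required.

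Conversely, for $\Phi\in C^\infty(P,G)^G$, I would check that $k_\Phi(p):=p\cdot\Phi(p)$ defines a gauge transformation: it is clearly smooth and fibre-preserving, and the same calculation as above (now run backwards) shows $k_\Phi(p\cdot g)=k_\Phi(p)\cdot g$. Smoothness of an inverse comes from observing $k_{\Phi}\circ k_{\Phi^{-1}}=\mathrm{id}_P=k_{\Phi^{-1}}\circ k_\Phi$, where $\Phi^{-1}(p):=\Phi(p)^{-1}$ is still in $C^\infty(P,G)^G$ because conjugation respects inversion. The two assignments $k\mapsto \Phi_k$ and $\Phi\mapsto k_\Phi$ are manifestly mutually inverse by construction.

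Finally, to promote this to a group isomorphism I would compute the composition:
\[
(k_\Phi\circ k_\Psi)(p)=k_\Phi(p\cdot\Psi(p))=p\cdot\Psi(p)\cdot\Phi(p\cdot\Psi(p))=p\cdot\Psi(p)\cdot\Psi(p)^{-1}\Phi(p)\Psi(p)=p\cdot\Phi(p)\Psi(p),
\]
using equivariance of $\Phi$ in the third equality. Thus composition of gauge transformations corresponds to the pointwise product on $C^\infty(P,G)^G$, so the bijection is a group isomorphism. The only genuinely non-routine step is the smoothness of $\Phi_k$; once that is in hand, everything else is an equivariance computation plus invocation of the freeness of the principal action. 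The observation with Claim \ref{seccoo} is then automatic: modulo the minor issue that the conjugation action is effective only when $Z(G)$ is trivial, the bijection reproduces exactly the section/equivariant-map correspondence for the associated bundle $P[C]$.
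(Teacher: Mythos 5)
Your proof is correct and follows essentially the same route as the paper: the same bijection $k\leftrightarrow\Phi_k$ defined by $k(p)=p\cdot\Phi_k(p)$, the same equivariance computation from $k(p\cdot g)=k(p)\cdot g$ using freeness, and the same composition calculation showing $k_\Phi\circ k_\Psi=k_{\Phi\cdot\Psi}$. Your explicit local-trivialisation check of smoothness of $\Phi_k$ (giving $\Phi_k(m,g)=g^{-1}f_\alpha(m,e)g$) is slightly more detailed than the paper's one-line remark, but it is the same argument.
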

\begin{proof}Let $\Phi\in C^\infty(P,G)^G$. Then we define a map $P\ra{k_\Phi}P$ by $k_\Phi(p):=p\cdot \Phi(p)$. Obviously, this is a map of fibre bundles that is the identity on $M$. Moreover $k_\Phi(p\cdot g)=(p\cdot g)\cdot \Phi(p\cdot g)=(p\cdot g)\cdot g^{-1}\cdot \Phi(p)\cdot g=p\cdot \Phi(p)\cdot g=k_\Phi(p)\cdot g$, so $k_\Phi$ is a map of principal bundles that is the identity on $G$. We conclude that $k_\Phi\in \mathrm{Gau}\, \pis$.

Conversely, let $k\in \mathrm{Gau}\, \pis$ and define $P\ra{\Phi_k}G$ by the implicit relation $k(p)=p\cdot \Phi_k(p)$. (Note that this is a good definition since the principal right action is transitive and free on the fibre. Moreover, smoothness is immediate from the smoothness of the group multiplication, if we write out the definition in a chart.) Now, $p\cdot g\cdot \Phi_k(p\cdot g)=k(p\cdot g)=k(p)\cdot g=p\cdot \Phi_k(p)\cdot g$. Therefore $\Phi_k(p\cdot g)=g^{-1}\cdot \Phi_k(p)\cdot g$. We conclude that $\Phi_k\in C^\infty(P,G)^G$.

Now, obviously, these constructions are each other's inverse. Finally, let $\Phi,\Phi'\in C^\infty(P,G)^G$. Then $k_\Phi\circ k_{\Phi'}(p)=p\cdot \Phi'(p)\cdot\Phi(p\cdot \Phi'(p))=p\cdot\Phi'(p)\cdot\Phi'(p)^{-1}\cdot\Phi(p)\cdot \Phi'(p)=p\cdot \Phi(p)\cdot \Phi'(p)=k_{\Phi\cdot \Phi'}(p)$.
\end{proof}

\begin{remark}
These gauge transformations are a generalisation to arbitrary spacetimes of what physicists call gauge transformations. Indeed, every principal bundle over Minkowski spacetime $M=\R^4$ is trivial, since $\R^4$ is contractible. \cite{stetop} Therefore, in the case of a principal $G$-bundle $\pis$ over $M=\R^4$ (which we know to be trivial), we can construct a special class of gauge transformations from maps $M\ra{\phi}G$, by defining $f_\phi\in\mathrm{Gau}\, \pis$ by $f_\phi(m,g)=(m,\phi(m)\cdot g)$. Indeed, for a trivial principal bundle, we easily see that every gauge transformation is of this form. These transformations are what physicists call a local gauge transformation, in contrast to the case in which they demand that $\phi$ be constant, in which case they talk about global gauge transformations. It is important to see, however, that in general, such a difference cannot be made and we can only talk about plain gauge transformations.\end{remark}

\section{Connections on Fibre Bundles}
In this section, I introduce the notion of a connection on a fibre bundle, which is central both in the general theory of relativity, where it determines the geodesics along which a point particle moves through spacetime (and in some sense is a potential for the gravitational field), and in elementary particle physics, where it describes the gauge potentials, the mathematical abstractions of the fundamental interactions of matter.

Many different notions of a connection are used in the literature. I will use a particularly general geometric definition, in accordance with \cite{micnat}, and will treat a few special cases separately. I hope that this abstract approach will help the reader to see the connections between the mathematics used in the different theories of physics, which are insufficiently emphasised in many references.

\subsection{A Geometric Definition}
\label{geocon}
Let $F\ra{\pi}M$ be a fibre bundle. Note that at each point $f\in F$ the kernel $\ker T_f \pi$ is a linear subspace of $T_f F$. This is easily seen to define a subbundle $\ker T\pi \subset TF$ over $F$, called the \emph{vertical bundle}\index{vertical bundle} and denoted $VF$.

This bundle can also be given another interpretation. To see this, note that
$$0\ra{}T_f F_m\ra{T_f i_m}T_f F\ra{T_f \pi} T_m M\ra{}0$$
is a short exact sequence. Here $m=\pi(f)$ and $F_m \ra{i_m} F$ denotes the inclusion. (Indeed $\pi\circ i_m$ has constant value m, so $\mathrm{im}\; T_f i_m \subset VF_f$. Moreover their dimensions are equal.) \footnote{Furthermore, if $F$ is a vector bundle, we can identify $T_f F_m\cong F_m$, so, in that case, we can see the vertical subspace $VF_f$ at a point $f$ as a copy of $F_m$.} The short exact sequence above also tells us that we can identify the quotient space $T_fF/VF_f$ with $T_mM$. This gives us the following identity of vector bundles over $F$:
$$TF/VF\cong \pi^* TM,$$
where $\pi^*TM$ denotes the pullback bundle of $TM$ along $\pi$. Of course, this calls for an explicit realisation of $TF$ as a direct sum of $VF$ and $\pi^* TM$. This is given by a connection.
\begin{definition*}[Connection]
Let $F\ra{\pi}M$ be a fibre bundle. Let $\pi^* TM\ra{\iota} TF$ be an embedding of vector bundles over $F$ such that $T\pi\circ\iota(f,X)=X$, and denote its image by $HF$. We call $HF$ a \emph{connection}\index{connection} if \; $TF=HF\oplus VF$ and $HF_f$ is called the \emph{horizontal subspace}\index{horizontal bundle} at $f\in F$. The projections $HF\la{}TF\ra{}VF$ are denoted by $P_H$ and $P_V$ respectively. \cite{banrie}
\end{definition*}
\begin{remark}Although I will interchangeably use the vertical projection $P_V$ and the horizontal bundle $HF$ to denote a connection, what I mean by a connection is the whole product diagram $HF\la{P_H}TF\ra{P_V}VF$  in the category $FB_F$ of fibre bundles over $F$.
\end{remark}
\subsection{Curvature}
\label{sec:curv}
Note that we can see $HF$ and $VF$ as distributions on $F$. We can of course ask whether these distributions are integrable. Recall Frobenius' theorem, which gives a useful criterion for integrability \cite{wesdif}:

\begin{theorem}[Frobenius]\index{Frobenius' theorem}\label{thm:frob} A $l$-dimensional distribution $H$ on a manifold $F$ is integrable, i.e. there exists a $l$-dimensional foliation $\mathcal{F}$ such that for all $f\in F:$ $H_f$ is the tangent space $T_f\mathcal{F}_f$ to the leaf $\mathcal{F}_f$ through $f$, if and only if it is involutive, i.e. for each pair of vector fields $X,Y\in \Gamma(H)\subset \mathcal{X}(F)$, $[X,Y]\in \Gamma(H)$.
\end{theorem}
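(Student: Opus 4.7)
The plan is to prove the two directions separately; the forward implication (integrable $\Rightarrow$ involutive) is essentially formal, and almost all the content is in the converse.

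For ($\Rightarrow$), suppose a foliation $\mathcal{F}$ tangent to $H$ exists. Any $X \in \Gamma(H)$ is then tangent to every leaf, so it restricts to a vector field on each leaf $\mathcal{F}_f$. Since the Lie bracket of vector fields tangent to a submanifold is tangent to the submanifold (the bracket computed on the leaf coincides with the restriction of the ambient bracket), $[X,Y]$ is tangent to the leaves for all $X,Y \in \Gamma(H)$, i.e.\ $[X,Y] \in \Gamma(H)$.

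For ($\Leftarrow$), I would work locally around a fixed $f_0 \in F$ and try to produce coordinates $(x^1,\ldots,x^n)$ in which $H$ is spanned by $\partial/\partial x^1,\ldots,\partial/\partial x^l$; the required foliation then consists of the level sets of $(x^{l+1},\ldots,x^n)$, and patching these local foliation charts is straightforward because leaves are determined pointwise. The local construction proceeds in three steps. First, start from any local frame $X_1,\ldots,X_l$ of $H$ near $f_0$; by permuting ambient coordinates and applying a pointwise invertible linear change of basis on the $X_i$ (using that the $l \times l$ block of coefficients of the $X_i$ in the first $l$ coordinate directions is invertible at $f_0$, hence near $f_0$), arrange that in some chart $(y^1,\ldots,y^n)$ centered at $f_0$ the adjusted frame takes the \emph{adapted form}
\[
\tilde X_j \;=\; \frac{\partial}{\partial y^j} + \sum_{k=l+1}^{n} b^k_j(y)\,\frac{\partial}{\partial y^k},\qquad j=1,\ldots,l.
\]
Second, compute $[\tilde X_j,\tilde X_k]$ in these coordinates. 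Because the first $l$ components of each $\tilde X_j$ are constant (equal to $\delta^i_j$), the $\partial/\partial y^1,\ldots,\partial/\partial y^l$ components of the bracket vanish identically. By involutivity, $[\tilde X_j,\tilde X_k] \in \Gamma(H)$, so it is a $C^\infty$-linear combination of $\tilde X_1,\ldots,\tilde X_l$; but such a combination whose first $l$ components vanish must itself be zero. Hence the adapted frame is pairwise commuting.

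Third, invoke the theorem on simultaneous rectification of commuting vector fields: if $\tilde X_1,\ldots,\tilde X_l$ are linearly independent and pairwise commuting near $f_0$, then there exist coordinates $(x^1,\ldots,x^n)$ centered at $f_0$ with $\tilde X_j = \partial/\partial x^j$ for $j \le l$. The local leaves are then the slices $\{x^{l+1}=\mathrm{const},\ldots,x^n=\mathrm{const}\}$, and these slices manifestly have $H$ as their tangent space.

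The main obstacle is this last step. It ultimately rests on the flow-box theorem together with the fact that the flows of commuting vector fields commute: picking an $(n-l)$-dimensional slice $\Sigma$ through $f_0$ transverse to $H$, the map
\[
(t^1,\ldots,t^l,q) \;\longmapsto\; \Phi^1_{t^1}\circ\cdots\circ\Phi^l_{t^l}(q),\qquad q \in \Sigma,
\]
is a local diffeomorphism from a neighborhood of $(0,f_0)$ in $\R^l \times \Sigma$ onto a neighborhood of $f_0$ in $F$, and commutativity of the flows ensures that the coordinate vector fields along the first $l$ factors are precisely $\tilde X_1,\ldots,\tilde X_l$. Once these local rectifying charts are produced, the global foliation is obtained by declaring the leaf through $f$ to be the maximal connected injectively immersed submanifold built from overlapping slice neighborhoods, with well-definedness following from the local uniqueness of the foliation in each chart.
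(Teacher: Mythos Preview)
Your proof outline is correct and follows the standard route to Frobenius' theorem: the easy forward direction via tangency to leaves, and the substantive converse via an adapted local frame that is forced by involutivity to commute, followed by simultaneous rectification of commuting vector fields. The argument is sound; the only places where a referee might want more detail are the global patching of leaves (maximal connected integral manifolds) and the verification that the flow map in your third step is a diffeomorphism, but both are routine.

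However, there is nothing to compare against: the paper does not prove this theorem. It is quoted from \cite{wesdif} with the phrase ``Remember the theorem of Frobenius,'' and no proof is supplied. So your proposal is not so much an alternative to the paper's argument as a filling-in of a result the paper takes as background.
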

Obviously, $VF$ is integrable. 

\begin{claim}\label{cl:vertb}
$VF$ is integrable.
\end{claim}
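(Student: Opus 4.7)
The plan is to give two short proofs, the first direct and the second via Frobenius' theorem (Theorem \ref{thm:frob}), which was just recalled. The claim is essentially immediate from the fact that $\pi$ is a submersion, and the foliation that witnesses integrability is the most obvious one imaginable: the fibres of $\pi$ themselves.

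First I would argue directly. Since $F\ra{\pi}M$ is a fibre bundle, $\pi$ is a surjective submersion, so for every $m\in M$ the fibre $F_m=\pi^{-1}(m)$ is an embedded closed submanifold of $F$. The family $\mathcal{F}:=(F_m)_{m\in M}$ then partitions $F$ into pairwise disjoint submanifolds of constant dimension $\dim S$. By the very definition of the vertical bundle, for any $f\in F$ with $m=\pi(f)$, one has
$$VF_f=\ker T_f\pi = T_f F_m = T_f\mathcal{F}_f,$$
where the middle equality was already noted just before the definition of a connection (it follows from the short exact sequence there, together with a dimension count). Thus $\mathcal{F}$ is a foliation whose leaf through $f$ has tangent space exactly $VF_f$, so $VF$ is integrable.

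Alternatively, I would verify the involutivity criterion of Theorem \ref{thm:frob} and so avoid invoking the submanifold structure of the fibres explicitly. Let $X,Y\in\Gamma(VF)$, i.e.\ $T\pi\circ X=0$ and $T\pi\circ Y=0$. Both vector fields are therefore $\pi$-related to the zero vector field $0_M$ on $M$. By the naturality of the Lie bracket under $\pi$-relatedness, $[X,Y]$ is $\pi$-related to $[0_M,0_M]=0_M$, so $T\pi\circ[X,Y]=0$, i.e.\ $[X,Y]\in\Gamma(VF)$. Involutivity, hence integrability, follows at once.

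There is no real obstacle here; the only thing to be slightly careful about is not to confuse the horizontal distribution $HF$ (which need not be integrable, and whose failure to be so will soon be what is called the curvature) with the vertical one $VF$, for which integrability is essentially built into the definition of a fibre bundle.
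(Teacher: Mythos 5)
Your proposal is correct, and it in fact contains the paper's argument as a special case. Your second proof is, at bottom, the same as the one in the text: the paper verifies involutivity by computing $T\pi([X,Y])$ through the flow characterisation of the Lie bracket, first showing $\pi\circ\exp(tX)=\pi$ and then differentiating $T(\pi\circ\exp(-tX))\circ Y\circ\exp(tX)$ — which is precisely a hands-on proof of the statement you invoke as a black box, namely that vector fields $\pi$-related to $0_M$ have bracket $\pi$-related to $[0_M,0_M]=0_M$. Citing naturality of the bracket under $\pi$-relatedness is cleaner and hides the flow computation, at the cost of assuming that lemma. Your first proof is genuinely different from what the paper does: rather than checking the involutivity criterion of Frobenius, you exhibit the integral foliation directly as the partition of $F$ into fibres $F_m$, using that $\pi$ is a submersion and the identification $VF_f=\ker T_f\pi=T_fF_m$ already established via the short exact sequence preceding the definition of a connection. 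This route bypasses Frobenius entirely and makes transparent why, as you note, integrability of $VF$ is built into the definition of a fibre bundle, whereas integrability of $HF$ is the genuinely delicate question measured by the curvature. Both arguments are complete; either would serve in place of the paper's proof.
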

\begin{proof}Suppose $X,Y\in\Gamma(VF)$. For the flow of $X$, we have
$$\frac{d}{dt}(\pi\circ\exp(tX))(f)=T_{\exp(tX)(f)}\pi\bigl(X(\exp(tX)(f))\bigr)=0,$$
so $\pi\circ\exp(tX)=\pi$ for all times for which the flow is defined. Therefore
$$T\pi([X,Y])=T\pi\frac{d}{dt}T(\exp(-tX))\circ Y\circ\exp(tX) |_{t=0}=\frac{d}{dt}T(\pi \circ\exp(-tX))\circ Y\circ\exp(tX) |_{t=0}=\frac{d}{dt}T\pi\circ Y\circ\exp(tX) |_{t=0}=0,$$
where, in the last equality, we used that $T\pi\circ Y=0$.
\end{proof}
The question turns out to be more subtle for $HF$. We begin by noting the following. (This proof is based on the discussion in \cite{duipri}.)
\begin{claim}\label{bramod}
Let $H$ be a distribution on $F$. For $X,Y\in\Gamma(H)$, $[X,Y](f)\;\mathrm{mod}\; H_f\in T_f F/H_f$ depends only on the values of $X$ and $Y$ at $f$. In particular, the question of involutivity can be answered pointwise.
\end{claim}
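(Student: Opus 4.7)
The plan is to reduce the statement to the following assertion: if $X\in\Gamma(H)$ satisfies $X(f)=0$, then $[X,Y](f)\in H_f$ for every $Y\in\Gamma(H)$. Granted this, bilinearity and antisymmetry of the Lie bracket immediately yield the claim: given $X_1,X_2,Y_1,Y_2\in\Gamma(H)$ with $X_1(f)=X_2(f)$ and $Y_1(f)=Y_2(f)$, write
\[[X_1,Y_1]-[X_2,Y_2]=[X_1-X_2,Y_2]+[X_1,Y_1-Y_2],\]
apply the assertion to the first summand directly, and to the second summand after using $[X,Y]=-[Y,X]$. Both summands then lie in $H_f$ at the point $f$, so $[X_1,Y_1](f)\equiv [X_2,Y_2](f)\bmod H_f$.

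To prove the assertion I would work locally around $f$. Since $H$ is a smooth subbundle of $TF$, it admits a local frame $e_1,\dots,e_l$ on some open neighbourhood of $f$, and $X$ restricts to $X=\sum_i a^i e_i$ for smooth functions $a^i$. The hypothesis $X(f)=0$ combined with the linear independence of $e_1(f),\dots,e_l(f)$ forces $a^i(f)=0$ for all $i$. The Leibniz rule for the Lie bracket on functions times vector fields now gives
\[[X,Y]=\sum_i\bigl(a^i[e_i,Y]-(Y a^i)\,e_i\bigr).\]
Evaluating at $f$, the first group of summands vanishes because $a^i(f)=0$, while $-\sum_i(Y a^i)(f)\,e_i(f)$ lies in $H_f$ because each $e_i(f)\in H_f$. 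Therefore $[X,Y](f)\in H_f$, which is what we wanted.

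The ``in particular'' part is then immediate. Involutivity of $H$ is by definition the condition that $[X,Y]\in\Gamma(H)$, i.e. $[X,Y](f)\in H_f$ for all $X,Y\in\Gamma(H)$ and all $f\in F$; by what we have just shown, the validity of this condition at $f$ depends only on $X(f)$ and $Y(f)$, and can thus be verified by checking the bracket on any basis of $H_f$ extended to local sections of $H$.

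I do not expect a genuine obstacle here; the only point requiring a bit of care is the use of a local frame for $H$, which relies on $H$ being a smooth subbundle of $TF$ rather than merely a pointwise assignment of subspaces. Once that is in place, the Leibniz rule does all the work, and the argument is essentially the standard proof that the bracket descends to a tensor on $TF/H$.
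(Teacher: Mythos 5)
Your proof is correct, but it takes a genuinely different route from the one in the paper. You argue "primally": you reduce, via bilinearity and antisymmetry of the bracket and the decomposition $[X_1,Y_1]-[X_2,Y_2]=[X_1-X_2,Y_2]+[X_1,Y_1-Y_2]$, to the case where one of the fields vanishes at $f$, and then dispose of that case with a local frame $e_1,\dots,e_l$ of $H$ and the Leibniz rule $[a^ie_i,Y]=a^i[e_i,Y]-(Ya^i)e_i$. The paper argues "dually": it writes $H|_U=\bigcap_i\ker\alpha_i$ for local $1$-forms $\alpha_i$ and uses Cartan's formula to get $\alpha_i([X,Y])=-d\alpha_i(X,Y)$ whenever $i_X\alpha_i=i_Y\alpha_i=0$, the right-hand side being manifestly tensorial in $X$ and $Y$. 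Your version is more elementary (no Cartan formula, no construction of annihilating forms, which the paper has to do by hand via a trivialisation of a complementary bundle) and is the standard tensoriality argument; the paper's version buys an explicit closed formula for the class of $[X,Y]$ modulo $H$ in terms of $d\alpha_i$, which is conceptually closer to the curvature-as-a-two-form picture developed afterwards. Both arguments use that $H$ is a genuine smooth constant-rank subbundle, in your case to produce a local frame, in the paper's case to produce a local co-frame; you correctly flag this as the only point needing care.
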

\begin{proof}
Let $r=\mathrm{rank}\,H$ and $n=\dim F$. Choose a neighbourhood $U$ of $f$ and vector fields $E_1,\ldots,E_r$ spanning $H|_U$. Extend them to a local frame $E_1,\ldots,E_n$ of $TF|_U$, and let $\alpha^{r+1},\ldots,\alpha^n$ be the corresponding dual coframe elements annihilating $H|_U$. For $X,Y\in\Gamma(H)$, $\alpha^j(X)=\alpha^j(Y)=0$, hence
$$\alpha^j([X,Y])=-d\alpha^j(X,Y)\qquad (j=r+1,\ldots,n).$$
The right-hand side at $f$ depends only on $X(f)$ and $Y(f)$. The values of the $\alpha^j$ determine the class of a tangent vector modulo $H_f$, so $[X,Y](f)\;\mathrm{mod}\; H_f$ depends only on $X(f)$ and $Y(f)$.
\end{proof}
Let us introduce the following $\Bbbk$-bilinear map $\mathcal{X}(F)\times\mathcal{X}(F)\ra{R}\Gamma(VF)$;
$$-R(X,Y):=P_V([P_H X,P_H Y]).$$
Because of claim \ref{bramod}, the following is true. 
\begin{corollary}
For $X,Y\in\mathcal{X}(F)$, $R(X,Y)(f)$ depends only on the values of $X$ and $Y$ at $f$. We can therefore view $R$ as a map $TF\wedge_F TF\ra{R}VF$ of vector bundles over $F$.\end{corollary}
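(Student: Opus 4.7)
The plan is to reduce the statement to Claim \ref{bramod} applied to the horizontal distribution $HF$, together with the fact that $P_V$ identifies $TF/HF$ with $VF$.

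First I would observe that, since $HF\subset TF$ is a subbundle, the projection $P_H:TF\to HF$ is a smooth map of vector bundles over $F$. Hence for any vector field $X\in\mathcal{X}(F)$, the horizontal part $P_HX$ is a section of $HF$, and its value at $f\in F$ equals $P_H(X(f))$, which depends only on $X(f)$. Thus $(P_HX)(f)$ and $(P_HY)(f)$ are determined pointwise by $X(f)$ and $Y(f)$.

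Next I would apply Claim \ref{bramod} to the distribution $HF$ and the horizontal vector fields $P_HX,P_HY\in\Gamma(HF)$: the class $[P_HX,P_HY](f)\;\mathrm{mod}\;HF_f$ in $T_fF/HF_f$ depends only on the values $(P_HX)(f)$ and $(P_HY)(f)$. Since $TF=HF\times_F VF$, the vertical projection $P_V:TF\to VF$ has kernel exactly $HF$ and therefore descends to a fibrewise linear isomorphism $TF/HF\cong VF$. Applying this isomorphism to the previous observation yields that
\[
-R(X,Y)(f)=P_V\bigl([P_HX,P_HY]\bigr)(f)
\]
depends only on $X(f)$ and $Y(f)$. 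This provides a well-defined fibrewise bilinear map $TF\oplus_F TF\to VF$; smoothness follows because applying it to any pair of smooth vector fields yields a smooth section, which is equivalent to smoothness of the corresponding bundle map.

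Finally, antisymmetry of the Lie bracket gives $R(X,Y)=-R(Y,X)$ after composing with the linear maps $P_H$ and $P_V$, so the bilinear map factors through the exterior product $TF\wedge_F TF$, yielding the desired bundle map $R:TF\wedge_F TF\to VF$. I do not expect a genuine obstacle here; the only point requiring a little care is the invocation of Claim \ref{bramod} for the distribution $HF$ (rather than arbitrary vector fields on $F$), which is why the projection $P_H$ must be inserted before taking the bracket.
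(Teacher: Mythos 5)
Your proposal is correct and follows essentially the same route as the paper: insert the horizontal projections, invoke Claim \ref{bramod} for the distribution $HF$, identify $T_fF/HF_f$ with $VF_f$ via $P_V$, and use antisymmetry of the bracket to pass to the wedge product. The only difference is that you spell out explicitly that $(P_HX)(f)$ depends only on $X(f)$ because $P_H$ is a bundle map, a point the paper leaves implicit.
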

\begin{proof}
$P_V([P_H X, P_H Y])$ is the unique representative of $[P_H X,P_H Y]\;\mathrm{mod} \; HF$ in $\mathrm{ker}\; P_H$. Since $P_H X,P_H Y\in \Gamma(HF)$, claim \ref{bramod} now tells us that we can see $R$ as a map $HF\otimes HF\ra{R}VF$ of vector bundles over $F$. To complete the proof, make the observation that $R(X,Y)(f)$ depends anti-symmetrically on $X$ and $Y$.
\end{proof}
This map is known as the \emph{curvature}\index{curvature} of the connection. Of course, by the theorem of Frobenius, $R$ measures whether $HF$ is integrable. This is the case precisely if $R=0$.

\subsection{Horizontal Lifts}\label{sec:hor}
It follows from the definition of a connection that the restriction of $T_f\pi$ to $HF_f$ defines a linear isomorphism to $T_{\pi(f)}M$. This allows us to lift a vector $v\in T_{\pi(f)}M$ to a unique horizontal vector $v_\hor\in HF_f$. Let $X\in\mathcal{X}(M)$ be a vector field. Then we can also construct its pointwise \emph{horizontal lift}\index{horizontal lift} $F\ra{X_\hor}HF\subset TF$. This is again a smooth vector field since it is nothing other than
\begin{center}
\begin{tikzcd}[column sep=normal]
F \arrow[r,"{\mathrm{id}_F\times \pi}"] & F\times M \arrow[r,"{\mathrm{id}_F\times X}"] & \pi^* TM \arrow[r,"\iota"] & HF\subset TF,
\end{tikzcd}
\end{center}
where we understand $\pi^* TM$ to be embedded in $F\times TM$ and where $TM\ra{\pi_M}M$ is the bundle projection. Summarising, we have obtained a unique horizontal lift:
\begin{center}
\begin{tikzcd}[column sep=large,row sep=large]
F \arrow[r,"{X_\hor}"] \arrow[d,"\pi"'] \arrow[drr,phantom,"\com",description] & HF \hspace{4pt} \subset & TF \arrow[d,"{T\pi}"]\\
M \arrow[rr,"X"'] & & TM.
\end{tikzcd}
\end{center}

How are the flows of these vector fields related? If $\gamma$ is a solution curve for the equation $df/dt=X_\hor(f)$, then $\pi\circ \gamma$ satisfies $dm/dt=X(m)$, so we see that
$$\pi\circ e^{t X_\hor}=e^{tX}\circ \pi.$$
Therefore, for any two $X,Y\in\mathcal{X}(M)$:
$$T_f\pi([X_\hor,Y_\hor](f))=[X,Y](\pi(f)).$$
We conclude that
\begin{align}\label{eq:curv}-R(X_\hor(f),Y_\hor(f))&=P_V([X_\hor,Y_\hor])(f)\nonumber\\
&=[X_\hor,Y_\hor](f)-[X,Y]_\hor(f).
\end{align}
where $TF\ra{P_V}VF$ is the projection onto the vertical bundle. We have found the following.

\begin{corollary}\label{curveasy}
If we use the identification $\pi^* TM\stackrel{\iota}{\cong} HF\subset TF$, we can view $R$ as a map $\pi^*(TM\wedge TM)\ra{R} VF$ given by $(X_m\wedge Y_m,f)\mapsto (X\wedge Y,f)\mapsto [X,Y]_\hor(f)-[X_\hor, Y_\hor](f)$, where $X,Y$ are vector fields on $M$ such that $X(m)=X_m,Y(m)=Y_m\in T_m M$ and $m=\pi(f)$. 
\end{corollary}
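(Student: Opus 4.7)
The plan is to observe that the corollary is essentially a bundle-theoretic repackaging of equation (\ref{eq:curv}), which appeared immediately above it. The content reduces to two checks: well-definedness of the proposed formula, and its coincidence with $R$ pulled back along $\iota$.

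First, I would invoke the preceding corollary to treat $R$ as a bundle morphism $TF \wedge_F TF \to VF$ over $F$. The embedding $\iota\colon \pi^*TM \hookrightarrow TF$ with image $HF$ sends $(X_m, f)$ to the horizontal lift $X_\hor(f)$, and this is independent of any choice of extension of $X_m$ to a vector field on $M$ (by the very definition of horizontal lift given in section \ref{sec:hor}, as the composition $F \to F \times M \to \pi^*TM \to HF$). Pre-composing $R$ with $\iota \wedge_F \iota$ therefore gives a well-defined bundle map $\pi^*(TM \wedge TM) \to VF$ with value $R(X_\hor(f), Y_\hor(f))$ at $(X_m \wedge Y_m, f)$.

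Next, I would simply cite equation (\ref{eq:curv}), which was derived just above from the flow identity $\pi \circ e^{tX_\hor} = e^{tX} \circ \pi$ together with the definition $-R(U,V) := P_V([P_H U, P_H V])$:
\[
-R(X_\hor(f), Y_\hor(f)) \;=\; [X_\hor, Y_\hor](f) - [X, Y]_\hor(f).
\]
This is the claimed formula (modulo the sign built into the definition of $R$). It also delivers a bonus observation: since the left-hand side depends only on $X_\hor(f)$ and $Y_\hor(f)$, the right-hand side is automatically independent of the choice of extensions $X, Y \in \mathcal{X}(M)$ of $X_m, Y_m$, a fact which is not manifest from the Lie bracket expression alone. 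There is no real obstacle here; the substantive work was done in the preceding corollary (to justify viewing $R$ as a bundle map) and in the derivation of (\ref{eq:curv}) (to identify it with the Lie bracket expression).
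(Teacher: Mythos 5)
Your proposal is correct and matches the paper's own (implicit) argument: the paper offers no separate proof, treating the corollary as immediate from equation \ref{eq:curv} together with the pointwise dependence established in the preceding corollary, which is exactly the reduction you make. Your added remarks on well-definedness of $\iota$ and independence of the extensions $X,Y$ are sound and only make explicit what the paper leaves tacit (including the sign discrepancy between the corollary's statement and \ref{eq:curv}, which you correctly flag).
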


\subsection{Connections and Trivialisations}
The connections on a fibre bundle $F\ra{\pi}M$ form a sheaf. It is therefore very useful to understand what a connection looks like locally. 

In particular, let us take a local trivialisation $(U_\alpha,\psi_\alpha)$ for our fibre bundle around a point $f\in F$. If we write $S$ for the standard fibre, $F|_{U_\alpha}\ra{\psi_\alpha} U_\alpha\times S$. Then in this trivialisation 
\begin{align*}T\psi_\alpha(HF|_{U_\alpha})&=T\psi_\alpha \circ \iota(\pi^*( TM|_{U_\alpha}))\\
&=\{T\psi_\alpha\circ \iota (X,f)\; | \; \pi_M(X)=\pi(f)\; \}\\
&=\{ (X, -\Gamma^\alpha (X,s))\in TU_\alpha \times TS\; | \; (X,s)\in TM|_{U_\alpha}\times S \},
\end{align*}
for some smooth function $TM|_{U_\alpha}\times S\ra{\Gamma^\alpha} TS$, known as the Christoffel form for the trivialisation. This is of course defined as the composition
\begin{center}
\scriptsize
\begin{tikzcd}[column sep=small,row sep=large]
(X,s) \arrow[rrr,mapsto] & & & (X,\psi_\alpha^{-1}(\pi_M(X),s))\\
TM|_{U_\alpha}\times S \arrow[rr,"{\mathrm{id}_{TM|_{U_\alpha}}\times (\psi_\alpha^{-1}\circ(\pi_M\times \mathrm{id}_S))}"] \arrow[rrdddddd,"{\Gamma^\alpha}"'] & & (TM\times_M F)|_{U_\alpha} \arrow[dd,"\iota"'] & {} \arrow[dd,mapsto]\\
& & &\\
& & \begin{array}{c}HF\\ \cap\\  TF\end{array} \arrow[dd,"{T\psi_\alpha}"'] & \iota(X,\psi_\alpha^{-1}(\pi_M(X),s)) \arrow[dd,mapsto]\\
& & &\\
& & TM|_{U_\alpha}\times TS \arrow[dd,"{-\pi_{TS}}"'] & (X,-\Gamma^\alpha(X,s)) \arrow[dd,mapsto]\\
& & &\\
& & TS & \Gamma^\alpha(X,s).
\end{tikzcd}
\normalsize
\end{center}
More explicitly,
$$\Gamma^\alpha (X,s)=\pi_{TS}\circ T\psi_\alpha \circ P_V \circ T\psi_\alpha^{-1}(X,0_s),$$
where $TU_\alpha\times TS\ra{\pi_{TS}}TS$ is the projection onto the second component and $0_s$ denotes the zero element in $T_s S$.
Then the projection $P_V$ looks as follows:
\begin{equation}\label{eq:chri}\pi_{TS}\circ T\psi_\alpha\circ P_V\circ T\psi_\alpha^{-1}(X,\eta)=\eta+\Gamma^\alpha(X,\pi_S(\eta)),
\end{equation}
where $\eta\in TS$ and $TS\ra{\pi_S}S$ is the canonical projection.
\subsection{The Pullback}
The pullback is an essential tool when working with fibre bundles. It is a well-known fact that the categorical pullback of a fibre bundle along an arbitrary map exists in the smooth category. This follows from the fact that a fibre bundle is a submersion and therefore transversal to any map\footnote{Remember that a pair of maps $M_1\ra{f_1}N$, $M_2\ra{f_2}N$ is said to be transversal if $\mathrm{im}\; T_{m_1}f_1+\mathrm{im}\; T_{m_2}f_2=T_nN$ for every triple $(m_1,m_2,n=f_1(m_1)=f_2(m_2))\in M_1\times M_2\times N$. It is a direct consequence of the submersion theorem that the topological pullback of a pair of transversal maps is a smooth submanifold of $M_1\times M_2$.}. Therefore, given a fibre bundle $F\ra{\pi}M$ with standard fibre $S$ and a map $N\ra{\phi}M$, we can form a pullback
\begin{center}
\begin{tikzcd}[column sep=large,row sep=large]
\phi^*F \arrow[r,"{\pi^*\phi}"] \arrow[d,"{\phi^*\pi}"'] \arrow[dr,phantom,"\lrcorner",very near start] & F \arrow[d,"\pi"]\\
N \arrow[r,"\phi"'] & M
\end{tikzcd}
\end{center}
in the smooth category. Moreover, we then have the following.

\begin{claim}The pullback $\phi^*\pi$ is a smooth fibre bundle over $N$ with standard fibre $S$. Furthermore, if $\pi$ is equipped with a connection $P_V$, then the pullback induces a unique connection\index{pullback connection} $\phi^*P_V$ on $\phi^*\pi$ such that $T (\pi^*\phi)\circ \phi^*P_V=P_V\circ T(\pi^*\phi)$.
\end{claim}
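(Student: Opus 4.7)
The plan is to proceed in two stages: first establish that $\phi^*\pi$ is a fibre bundle with standard fibre $S$, then construct the promised connection and check its uniqueness.

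For the first stage I would work with the explicit model $\phi^*F=\{(n,f)\in N\times F\mid \phi(n)=\pi(f)\}$, with $\phi^*\pi$ and $\pi^*\phi$ the two canonical projections. Given a local trivialisation $\pi^{-1}(U)\ra{\psi}U\times S$ of $\pi$, the natural candidate trivialisation of $\phi^*\pi$ over $\phi^{-1}(U)$ is $(n,f)\mapsto(n,\pi_S(\psi(f)))$, with smooth inverse $(n,s)\mapsto(n,\psi^{-1}(\phi(n),s))$. Checking that this is a fibre-respecting diffeomorphism is routine, and as $m=\phi(n)$ ranges over $M$ the sets $\phi^{-1}(U)$ cover $N$.

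For the second stage the key observation is that at a point $(n,f)\in\phi^*F$ the tangent space has the explicit description
$$T_{(n,f)}(\phi^*F)=\{(X_n,X_f)\in T_nN\times T_fF\mid T_n\phi(X_n)=T_f\pi(X_f)\},$$
under which the vertical bundle becomes $V(\phi^*F)_{(n,f)}=\{(0,X_f)\mid X_f\in VF_f\}$. I would then define
$$\phi^*(HF)_{(n,f)}:=\{(X_n,X_f)\in T_{(n,f)}(\phi^*F)\mid X_f\in HF_f\}$$
and show that this is a smooth subbundle complementary to $V(\phi^*F)$. The splitting of an arbitrary tangent vector $(X_n,X_f)$ is forced: since $P_VX_f\in VF_f$ lies in $\ker T\pi$ we have $T\pi(P_HX_f)=T\pi(X_f)=T\phi(X_n)$, so $(X_n,P_HX_f)$ satisfies the pullback relation and is horizontal, while $(0,P_VX_f)$ is vertical. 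This gives the decomposition $(X_n,X_f)=(X_n,P_HX_f)+(0,P_VX_f)$, and the intersection $\phi^*(HF)\cap V(\phi^*F)=0$ is obvious. Smoothness of the resulting $\phi^*P_V$ follows from smoothness of $P_V$ together with smoothness of $T(\pi^*\phi)$ via the above formula.

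The compatibility $T(\pi^*\phi)\circ\phi^*P_V=P_V\circ T(\pi^*\phi)$ drops out of the definition, since $T(\pi^*\phi)(X_n,X_f)=X_f$ and $\phi^*P_V(X_n,X_f)=(0,P_VX_f)$. For uniqueness, any connection on $\phi^*\pi$ satisfying the compatibility must send $(X_n,X_f)$ to some vertical vector $(0,Y_f)$; applying $T(\pi^*\phi)$ forces $Y_f=P_VX_f$, pinning down the connection. I do not foresee a serious obstacle; the main subtlety is simply keeping straight the identifications inside $T(\phi^*F)$, but the transversality of $\pi$ with $\phi$ (automatic since $\pi$ is a submersion) makes every identification canonical.
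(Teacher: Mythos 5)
Your proof is correct and follows essentially the same route as the paper's: both arguments hinge on the fact that $T(\pi^*\phi)$ restricts to a linear isomorphism between the vertical subspaces of $\phi^*F$ and $F$, which simultaneously forces the definition of $\phi^*P_V$ and gives its uniqueness. Your explicit fibre-product description of $T_{(n,f)}(\phi^*F)$ makes the horizontal/vertical splitting (and hence the fact that $\phi^*P_V$ really is a connection in the paper's sense) slightly more transparent than the paper's abstract verification of idempotence via the commuting square, but the underlying idea is identical.
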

\begin{proof}Choose some fibre bundle atlas $(U_\alpha,\psi_\alpha)$ for $\pi$. Then $(\phi^{-1}(U_\alpha),(\phi^*\pi,\pi_S\circ \psi_\alpha\circ \pi^*\phi))$ is seen to be a fibre bundle atlas for $\phi^*\pi$.

Why is $\phi^*P_V$ a connection, i.e. a projection $T(\phi^*F)\ra{}V(\phi^*F)$? The main observation is that $T_\xi(\pi^*\phi)$ restricts to a linear isomorphism $(V\phi^*F)_\xi=T_\xi(\phi^*F)_{\phi^*\pi(\xi)}\ra{\cong}T_{\pi^*\phi(\xi)}F_{\pi(\pi^*\phi(\xi))}= (VF)_{\pi^*\phi(\xi)}$. (Indeed, $\pi^*\phi$ is an isomorphism $(\phi^*F)_{\phi^*\pi(\xi)}\ra{}F_{\pi(\pi^*\phi(\xi))}$.)

This means that the following commuting square uniquely defines $\phi^*P_V$. 
\begin{center}
\scriptsize
\begin{tikzcd}[column sep=large,row sep=large]
T_\xi(\phi^*F) \arrow[r,"{T_\xi(\pi^*\phi)}"] \arrow[d,"{(\phi^*P_V)_\xi}"'] \arrow[dr,phantom,"\com",description] & T_{\pi^*\phi(\xi)}F \arrow[d,"{(P_V)_{\pi^*\phi(\xi)}}"]\\
\begin{array}{c}(V\phi^*F)_\xi\\=T_\xi(\phi^*F)_{\phi^*\pi(\xi)}\end{array} \arrow[r,"\cong","{T_\xi(\pi^*\phi)}"'] & \begin{array}{c}(VF)_{\pi^*\phi(\xi)}\\=T_{\pi^*\phi(\xi)}F_{\pi(\pi^*\phi(\xi))}.
\end{array}
\end{tikzcd}
\normalsize
\end{center}
Finally, note that $T_\xi(\pi^*\phi)\circ (\phi^*P_V)_{\xi}^2=(P_V)_{\pi^*\phi(\xi)}\circ T_\xi(\pi^*\phi)\circ (\phi^*P_V)_\xi=(P_V)_{\pi^*\phi(\xi)}^2\circ T_\xi(\pi^*\phi)=(P_V)_{\pi^*\phi(\xi)}\circ T_\xi(\pi^*\phi)=T_\xi(\pi^*\phi)\circ (\phi^*P_V)_\xi$ and therefore, since $(V\phi^*F)_\xi\ra{T_\xi(\pi^*\phi)}(VF)_{\pi^*\phi(\xi)}$ is an isomorphism, $(\phi^*P_V)_\xi^2=(\phi^*P_V)_\xi$. The same defining square shows that $(\phi^*P_V)_\xi$ is the identity on $(V\phi^*F)_\xi$; hence its image is exactly $V\phi^*F$ and its kernel is a complementary horizontal subspace.
\end{proof}

\begin{remark}
Suppose we have two fibre bundles $F_1\ra{\pi_1}M\la{\pi_2}F_2$. We know that $FB_M$ has finite products. Now, suppose $\pi_i$ is equipped with a connection $(P_V)_i$ and write $F_1\la{\pi^1}F_1\times_M F_2\ra{\pi^2}F_2$ for the projections of the product. Then we can construct a connection $\pi^1 {}^*(P_V)_1+\pi^2 {}^*(P_V)_2=(P_V)_1\times_M(P_V)_2$ on $F_1\times_MF_2$, which is called the \emph{product connection}\index{product connection}.
\end{remark}

\subsection{Parallel Transport}
As noted above, there are many different (more or less) equivalent definitions of connections. I find one approach to be particularly intuitive, much more so than the one I have taken so far. It can be argued that a connection should be viewed as a rule that identifies fibres of a fibre bundle along paths in the base space $M$. This identification is called \emph{parallel transport}\index{parallel transport}. 

I will show how to derive a rule for parallel transport from our definition of a connection. Moreover, it can be shown that the connection can be recovered from its parallel transport. (See \cite{dumcon}.) Apart from being a very intuitive approach to connections and being a very practical tool in, say, general relativity, this definition generalises well to higher categorical analogues of fibre bundles (which appear in higher gauge theory). (See \cite{schpar}.)

In section \ref{sec:hor}, we have seen that we can construct a unique horizontal lift $X_\hor\in \mathcal{X}(F)$ for each vector field $X\in \mathcal{X}(M)$. This is by lifting each vector pointwise. Suppose we are given a curve $]a,b[\ra{c} M$. Notice that the tangent vectors to this curve are elements of $TM$ and can therefore be uniquely lifted to vectors in $HF\subset TF$. We can ask whether these vectors give rise to a (local) curve in $F$. The answer is the following.
\begin{theorem}[Parallel transport] \label{partrans}Let $HF$ be a connection on a fibre bundle $F\ra{\pi}M$ and let $I \ra{c} M$ be a smooth curve. (Here, $I$ is a connected neighbourhood of $0$ in $\R$.) Then there exists a neighbourhood $\Omega_c$ of $F_{c(0)}\times\{0\}$ in $F_{c(0)}\times I$ and a smooth map $\Omega_c\ra{T_c} F$ such that $T_c(f,0)=f$ and, for all $(f,t)\in \Omega_c$:
\begin{enumerate}
\item $\pi(T_c(f,t))=c(t)$;\\
\item $\frac{d}{dt} T_c(f,t)\in HF$;\\
\item $T$ is reparametrisation invariant: If $J\ra{\phi}I$ is an open interval, then, for all $s\in J$: $T_c(f,\phi(s))=T_{c\circ \phi}(T_c(f,\phi(0)),s)$.
\end{enumerate}
\end{theorem}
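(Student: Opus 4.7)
The plan is to realize parallel transport along $c$ as the flow of a single horizontal vector field on the pulled-back bundle $c^*F\ra{c^*\pi}I$, and to apply the classical local-flow theorem for smooth vector fields (which automatically yields openness of the domain together with smooth joint dependence on initial conditions).

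First I would apply the pullback-of-connections construction of the previous subsection to obtain the pullback connection $H(c^*F)\subset T(c^*F)$. Because $\dim I=1$, this is a $1$-dimensional distribution, spanned globally by a single smooth vector field $X:=(\partial_t)_{\hor}\in \mathcal{X}(c^*F)$, namely the horizontal lift of the coordinate field $\partial_t$ on $I$. Feeding $X$ into the standard local-flow theorem yields an open set $\widetilde\Omega\subset c^*F\times\R$ containing $c^*F\times\{0\}$ together with a smooth flow map $\Phi:\widetilde\Omega\ra{}c^*F$ satisfying $\Phi(\xi,0)=\xi$ and $\tfrac{d}{ds}\Phi(\xi,s)=X(\Phi(\xi,s))$. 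Using the natural inclusion $F_{c(0)}\hookrightarrow c^*F$, $f\mapsto(0,f)$, the set
\[\Omega_c:=\bigl\{(f,t)\in F_{c(0)}\times I \;\bigm|\; ((0,f),t)\in\widetilde\Omega\bigr\}\]
is an open neighbourhood of $F_{c(0)}\times\{0\}$, and I define $T_c(f,t)$ to be the image of $\Phi((0,f),t)$ under the canonical projection $\pi^*c:c^*F\ra{}F$.

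I would then check the three listed properties. For (1): $X$ projects under $T(c^*\pi)$ to $\partial_t$ by construction of the pullback connection, so $\Phi$ projects to translation on $I$, which gives $c^*\pi(\Phi((0,f),t))=t$ and hence $\pi(T_c(f,t))=c(t)$. For (2): the pullback connection is designed so that $T(\pi^*c)$ carries $H(c^*F)$ into $HF$; consequently $\tfrac{d}{dt}T_c(f,t)=T(\pi^*c)\bigl(X(\Phi((0,f),t))\bigr)\in HF$. For (3): setting $h(s):=T_c(f,\phi(s))$ and using the chain rule together with fibrewise linearity of horizontal lifting, $h'(s)=\phi'(s)\,(\dot c(\phi(s)))_{\hor}\big|_{h(s)}=\bigl(\tfrac{d}{ds}(c\circ\phi)(s)\bigr)_{\hor}\big|_{h(s)}$, so $h$ satisfies the horizontal ODE for $c\circ\phi$ with initial value $T_c(f,\phi(0))$; the same holds for $k(s):=T_{c\circ\phi}(T_c(f,\phi(0)),s)$, and uniqueness of ODE solutions gives $h=k$.

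The main delicacy, and the reason for passing to $c^*F$ rather than working directly on $F$, is that the horizontal lift of $\dot c(t)$ a priori defines only a ``vector field along $c$'' over the (possibly non-embedded, possibly self-intersecting) image $c(I)\subset M$, and thus does not naturally extend to an honest smooth vector field on an open subset of $F$ to which ODE theory would apply directly. Pulling the bundle back to $I$ repackages this data into the genuine autonomous smooth vector field $(\partial_t)_{\hor}$ on $c^*F$, which in any local trivialization $\psi_\alpha$ reproduces the time-dependent Christoffel ODE $\dot s(t)=-\Gamma^\alpha(\dot c(t),s(t))$ from (\ref{eq:chri}); the standard existence/uniqueness/smooth-dependence theorem then applies verbatim and automatically glues the local solutions from different trivializations via their uniqueness.
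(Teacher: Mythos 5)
Your proof is essentially the paper's primary argument: both pull the connection back along $c$ to the one-dimensional bundle $c^*F\to I$ and integrate the resulting line distribution, the only difference being that you flow the spanning vector field $(\partial_t)_{\hor}$ directly (which indeed hands you openness of the domain and smooth joint dependence for free, and in a trivialisation reduces to the same Christoffel ODE as the paper's third, ``standard'' proof) where the paper invokes Frobenius and inverts $c^*\pi$ on the leaves. The only point where you fall marginally short of the literal statement is that your $\Omega_c$ is open in $F_{c(0)}\times I$ rather than in $F\times I$; the paper patches this by reparametrising $c$ to start at nearby points $c(\epsilon)$, and your flow domain $\widetilde\Omega\supset c^*F\times\{0\}$ already contains exactly the data needed to do the same.
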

\begin{proof}My favourite proof of this theorem is the following. We construct the following two pullbacks along $c$:
\begin{center}
\begin{tikzcd}[column sep=large,row sep=large]
c^* HF \arrow[r,"{c^*\pi_F}"] \arrow[d,"{\pi_F^*\pi^*c}"] \arrow[dr,phantom,"\lrcorner",very near start] & c^* F \arrow[r,"{c^*\pi}"] \arrow[d,"{\pi^* c}"] \arrow[dr,phantom,"\lrcorner",very near start] & I \arrow[d,"c"]\\
HF \arrow[r,"{\pi_F}"] & F \arrow[r,"\pi"] & M.
\end{tikzcd}
\end{center}
Then $c^*HF$ is a connection on $c^*F\ra{c^*\pi}I$. Note that this distribution has dimension $1$, since it is isomorphic to $(c^*\pi)^* TI$. This implies that it is involutive and by Frobenius, it is therefore integrable to a foliation $\mathcal{F}$ of $c^* F$ in $1$-dimensional leaves. Let $\mathcal{F}_{f,0}$ denote the leaf of $(f,0)\in c^*F$. Then, since $T_{(f,0)}\mathcal{F}_{f,0}=c^*HF_{(f,0)}$ is an algebraic supplement to $\ker T_{(f,0)} c^*\pi$ (indeed, $c^*HF$ is a connection), we see (by the inverse function theorem) that $c^*\pi|_{\mathcal{F}_{f,0}}$ defines a local diffeomorphism around $(f,0)$. Denote its local inverse by $T_c(f,-)$. This map obviously satisfies the initial condition and conditions 1 and 2; the same argument after reparametrising $c$ gives condition 3.\\
\\
The following is more explicit.\\
Note that $c'(t)\in T_{c(t)}M$. By using cut-off functions, we can extend $c'(t)$ locally to a time-dependent vector field $M\supset U\ra{X_t} TM|_U$. Let $X_{t,\hor}$ be its horizontal lift. The time-dependent ODE
$$\frac{d}{dt}f(t)=X_{t,\hor}(f(t)),\qquad f(0)=f,$$
has a unique local solution. Its projection solves $\frac{d}{dt}\pi(f(t))=X_t(\pi(f(t)))$ with initial value $c(0)$, and hence equals $c(t)$ after shrinking the interval if necessary. Therefore $T_c(f,t):=f(t)$ satisfies all conditions.\\
\\
Finally, the standard proof is the following.\\
Let $(U_\alpha,\psi_\alpha)$ be a local trivialisation for $F$ around $c(0)$. Note that \textit{2.} can equivalently be stated by demanding that $P(\frac{d}{dt} T_c(f,t))=0$, if we write $P$ for the projection onto the vertical bundle. In the local chart this condition becomes
$$\frac{d}{dt} \Phi_c(s_0,t)=-\Gamma^\alpha (\frac{d}{dt} c(t),\Phi_c(s_0,t)),$$ 
where we write $\Phi_c(s_0,t)$ for $\pi_S(\psi_\alpha (T_c(\psi_\alpha ^{-1}(c(0),s_0),t)))\in S$. Now, $-\Gamma^\alpha (\frac{d}{dt} c(t),\Phi_c(s_0,t))$ is a smooth (time-dependent) vector field, so this equation has a unique maximal local solution $S\times I \supset \Omega_c ' \ra{\Phi_c} S$. Since $\psi_\alpha$ is a diffeomorphism, we obtain our $\Omega_c\ra{T_c}F$ by $T_c(f,t)=\psi_\alpha^{-1}(c(t),\Phi_c(\pi_S(\psi_\alpha(f)),t))$.

Property 3 is also easily checked. Indeed, write $c_\hor(t):=T_c(f,t)$ for a horizontal lift of $c$. Then $\frac{d}{ds}(c_\hor\circ\phi(s))=\phi '(s) \cdot c_\hor '(\phi(s))$ and $\pi\circ c_\hor\circ\phi=c\circ\phi$, so $c_\hor\circ\phi$ is a horizontal lift of $c\circ\phi$. Finally $c_\hor\circ\phi(0)=T_c(f,\phi(0))$, so the conclusion follows.

\end{proof}
$T_c(f,t)$ is called the parallel transport of $f$ along $c$ over time $t$. Note that this parallel transport might not be defined for all $t\in I$.\\
\\
\subsection{Complete Connections and Holonomy}
Unfortunately, the existence of parallel transport is only proved locally by theorem \ref{partrans}. Global transport might not be possible for all connections. However, many useful connections allow global parallel transport. Principal connections and affine connections, with the Levi-Civita connections of a (pseudo)-Riemannian manifold, come to mind. Such connections are said to be \emph{complete}\index{complete connection}.

Now, it is easily verified that, in the proof of theorem \ref{partrans}, we only needed $c$ to be a piecewise-$C^1$-curve\footnote{It can be shown that one can equivalently take the curves to be $C^k$, for any $k\in\N_{>0}$, or even $C^\infty$. \cite{kobfou}}. Thus, for a complete connection, parallel transport along piecewise-$C^1$ loops based at $m\in M$ assigns to each such loop a diffeomorphism of the fibre $F_m$:
\begin{center}
\begin{tikzcd}[column sep=large,row sep=large]
\mathrm{Loop}(M,m) \arrow[r,"{\mathrm{Hol}_m}"] & \mathrm{Diff}(F_m)\\
c \arrow[r,mapsto] & T_c(-,1).
\end{tikzcd}
\end{center}
This assignment is compatible with concatenation and reversal of paths, so its image is a subgroup of $\mathrm{Diff}(F_m)$, known as the \emph{holonomy group}\index{holonomy group} at $m$. This group is a useful tool for classification of fibre bundles with connections, (pseudo)-Riemannian manifolds with the Levi-Civita connection being an important example.

\subsection{Curvature as Infinitesimal Holonomy}
We have seen that holonomy cannot be defined for arbitrary connections, since the parallel transport might not be globally defined along any loop. However, if we make the loops small enough, one would expect that the transport would be possible. Here, "small enough" will mean infinitesimal. We will see that this infinitesimal transport is nothing but the curvature, which, indeed, is defined for any connection.

The following result, in some sense, tells us that we can view the curvature as a limit of the holonomy if we take smaller and smaller loops. (We take loops corresponding to two commuting vector fields, or equivalently (by Frobenius), two local coordinate vector fields.)
\begin{claim}[Curvature as infinitesimal holonomy]\index{curvature as infinitesimal holonomy}\label{claim:curvinf} If $X,Y\in \mathcal{X}(M)$ are two commuting vector fields, we have that
$$-R(X,Y)(f)=\frac{\dau^2}{\dau s\dau t}\left(e^{-t Y_\hor}\circ e^{-s X_\hor}\circ e^{t Y_\hor}\circ e^{s X_\hor}(f)\right)\Big|_{s=t=0}.$$
\end{claim}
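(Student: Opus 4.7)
The plan is to reduce the claim to the classical flow-commutator formula for the Lie bracket of vector fields. By Corollary \ref{curveasy}, $R(X,Y)(f)=[X_\hor,Y_\hor](f)-[X,Y]_\hor(f)$, and the hypothesis $[X,Y]=0$ kills the second term, so the claim reduces to showing
$$[X_\hor,Y_\hor](f)\;=\;\frac{\dau^2}{\dau s\,\dau t}\Big|_{s=t=0}\bigl(e^{-tY_\hor}\circ e^{-sX_\hor}\circ e^{tY_\hor}\circ e^{sX_\hor}\bigr)(f).$$
That $[X_\hor,Y_\hor]$ is automatically vertical here is consistent with this: $T\pi\,[X_\hor,Y_\hor]=[X,Y]\circ\pi=0$, so the bracket lives in $VF$ as $R(X,Y)$ must.

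It therefore suffices to invoke the standard flow-theoretic identity that for any two smooth vector fields $A,B$ on a manifold,
$$[A,B](f)\;=\;\frac{\dau^2}{\dau s\,\dau t}\Big|_{s=t=0}\bigl(e^{-tB}\circ e^{-sA}\circ e^{tB}\circ e^{sA}\bigr)(f),$$
applied with $A=X_\hor$ and $B=Y_\hor$. The horizontal lifts need not have complete flows, but their local existence on a neighbourhood of $f$ makes the four-fold composition smooth in $(s,t)$ near $(0,0)$, which is all one needs. To verify this identity in situ, set $F(s,t):=(e^{-tB}\circ e^{-sA}\circ e^{tB}\circ e^{sA})(f)$; Schwarz's theorem permits computing the mixed partial in either order. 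Differentiating in $t$ first and applying the product rule to the two occurrences of $t$ gives, at $t=0$,
$$\dau_t|_{t=0}F(s,t)\;=\;-B(f)+T(e^{-sA})\cdot B\bigl(e^{sA}(f)\bigr)\;=\;-B(f)+\bigl((e^{sA})^*B\bigr)(f),$$
and then $\dau_s|_{s=0}$ of this produces $(\Li_A B)(f)=[A,B](f)$ by the definition of the Lie derivative as the infinitesimal flow-pullback.

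The only real subtlety is the bookkeeping of signs and composition order in the flow computation; the geometric content is just the well-known observation that the Lie bracket measures the failure of two flows to close up into a loop. The commutativity hypothesis on $X$ and $Y$ plays no role in the flow-commutator identity itself, but is essential at the level of Corollary \ref{curveasy}, where it kills the $[X,Y]_\hor$ correction so that the flow-commutator on the right-hand side picks out exactly the vertical obstruction which is the curvature.
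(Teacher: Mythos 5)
Your proof is correct and takes essentially the same route as the paper: both reduce the claim to $R(X,Y)(f)=\pm[X_\hor,Y_\hor](f)$ via the formula of equation \ref{eq:curv} / Corollary \ref{curveasy} (the hypothesis $[X,Y]=0$ killing the $[X,Y]_\hor$ term) and then conclude by the flow-commutator identity for the bracket, which you derive explicitly where the paper merely cites it as well known. The only discrepancy is a pair of compensating signs: the paper uses $-R(X,Y)(f)=[X_\hor,Y_\hor](f)$ together with $-[X_\hor,Y_\hor](f)=\partial_s\partial_t(\cdots)|_{s=t=0}$, while you use the versions of both identities without the minus signs (your flow computation yielding $+[A,B]$ is the standard one), so the two arguments agree up to where the curvature sign convention is absorbed.
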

\begin{proof}We have already derived formula \ref{eq:curv}, which expresses the curvature in terms of the horizontal lifts of $X$ and $Y$. If $[X,Y]=0$, this reduces to the equality
$$-R(X,Y)(f)=[X_\hor,Y_\hor](f).$$
The claim follows by using the following well-known identity for the bracket of two vector fields:
$$[X_\hor,Y_\hor](f)=\frac{\dau^2}{\dau s\dau t}\left(e^{-t Y_\hor}\circ e^{-s X_\hor}\circ e^{t Y_\hor}\circ e^{s X_\hor}(f)\right)\Big|_{s=t=0}. $$
\end{proof}
\section{Connections on Principal Bundles}
Suppose that $P\ra{\pis}M$ is a principal $G$-bundle. We ask ourselves what a good notion of a connection on a principal bundle is. $G$-equivariant connections turn out to be particularly interesting. Later, we will see that, in the case $G=\GL(\Bbbk^k)$, precisely these connections correspond to the ordinary affine connections on the canonical associated vector bundle $P[\lambda]$. In this sense, they are a generalisation of affine connections to general principal bundles. Before we explain the precise condition of equivariance that we put on the connection, however, let us formulate another, more conventional way of looking at connections on principal bundles. 
\subsection{Principal Connection 1-forms}
We note that the principal right action (which we write as $\rho$) defines fundamental vector fields for the Lie algebra $\g$ of $G$ as follows. For $Z\in\g$ and $p\in P$, we write $Z_P(p):=\frac{d}{dt}\rho_{\exp(tZ)}(p)|_{t=0}=T\rho^p(Z)$, where $G\ra{\rho^p}P$ denotes the smooth map $g\mapsto \rho(p,g)$. For a right action this assignment is a Lie-algebra homomorphism $\g\ra{}\mathcal{X}(P)$. Recall that, since the vertical bundle is integrable, $\Gamma(VP)$ forms a Lie subalgebra of $\mathcal{X}(P)$. Now, since the orbits of the principal right action are the fibres of $P$, the same assignment gives a linear map $\g \ra{} \Gamma(VP)$. Moreover, this map trivialises $VP$.
\begin{claim}\label{cl:triv}
The map 
\begin{center}
\begin{tikzcd}[column sep=large,row sep=large]
P\times\g \arrow[r,"\zeta"] & VP\\
(p,Z) \arrow[r,mapsto] & Z_P(p)
\end{tikzcd}
\end{center}
defines an isomorphism of vector bundles over $P$.
\end{claim}
\begin{proof}$\zeta$ respects the fibre, since $Z_P(p)\in VP_p$. (Put differently, $\zeta$ is the identity map on $P$.) Linearity on each fibre follows because $Z\mapsto Z_P(p)=T_e\rho^p(Z)$ is linear:
$$(Z+\lambda Z')_P(p)=T_e\rho^p(Z+\lambda Z')=Z_P(p)+\lambda Z'_P(p).$$
Let $p\in P$ and let $(U,\psi)$ be a trivialisation of $P$ around $p$.

Now, $\zeta(p,Z)=T\rho^p(Z)$. Note that $\rho^p$ is a diffeomorphism $G\ra{}P_p$. Therefore $\g= T_eG\ra{T_e\rho^p}T_p P_p=VP_p$ is an isomorphism of vector spaces. $\zeta$ is obviously smooth, as $\zeta(p,Z)=T\rho^p(Z)=T\rho(0_p,Z)$. Since $\zeta$ is a smooth map of fibre bundles that is a diffeomorphism on both the base space (there it is even the identity) and on each fibre, we conclude that it is a diffeomorphism on the total space.
\end{proof}
There is a generalised notion of a differential form that is very useful in this situation. Let $M$ be a manifold and let $V\ra{\pi}M$ be a vector bundle. We define a \emph{$\pi$-valued differential $k$-form}\index{vector-valued differential form} on $M$ as a map $\eta$ of vector bundles\footnote{Note that we can equivalently view these $\pi$-valued differential forms as sections of the vector bundle $V\otimes \bigwedge^kT^*M$.}:
\begin{center}
\begin{tikzcd}[column sep=large,row sep=large]
\bigwedge^k TM \arrow[r,"\eta"] \arrow[d] & V \arrow[d,"\pi"]\\
M \arrow[r,equal] & M
\end{tikzcd}
\end{center}
The space of such forms is denoted $\Omega^k(M,\pi)$. In the case of a trivial vector bundle $M\times W\ra{}M$ one also talks about $W$-valued differential forms and writes $\Omega^k(M,W)$.

The previous result tells us that we can equivalently see a (not necessarily principal) connection  $P_V$ and its curvature $R$ as maps of vector bundles $\omega$ and $\Omega^\omega$:
\begin{center}
\begin{tikzcd}[column sep=large,row sep=large]
TP \arrow[r,"{\omega=\zeta ^{-1} \circ P_V}"] \arrow[d] \arrow[dr,phantom,"\com",description] & P\times \g \arrow[d] & \mathrm{and} & TP\wedge TP \arrow[r,"{\Omega^\omega=\zeta^{-1}\circ R}"] \arrow[d] \arrow[dr,phantom,"\com",description] & P\times \g \arrow[d]\\
P \arrow[r,equal] & P & & P \arrow[r,equal] & P.
\end{tikzcd}
\end{center}
We therefore have an interpretation of $P_V$ as a $\g$-valued $1$-form $\omega\in\Omega^1(P;\g)$: $P_V(X_p)= \omega(X_p)_P(p)$ and we can see $R$ as a $\g$-valued $2$-form $\Omega^\omega\in\Omega^2(P;\g)$:
$R(X_p,Y_p)=(\Omega^\omega(X_p,Y_p))_P(p)$. 
\begin{claim}\label{claim:repro} Some $\omega\in\Omega^1(P,\g)$ represents a connection precisely if $\omega(Z_P)=Z$ for all $Z\in\g$.
\end{claim}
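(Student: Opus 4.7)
The plan is to translate the condition ``$\omega$ represents a connection'' back through the isomorphism $\zeta$ of Claim \ref{cl:triv} into a statement purely about the associated projection $P_V$. Recall that by construction $\omega=\zeta^{-1}\circ P_V$, so conversely any $\omega\in\Omega^1(P,\g)$ determines a fibrewise linear map $P_V^\omega:=\zeta\circ(\mathrm{id}_P,\omega):TP\to VP$, and $\omega$ represents a connection iff $P_V^\omega$ is a projection onto $VP$ (equivalently, a fibrewise linear idempotent with image $VP$, so that $TP=\ker P_V^\omega\oplus_P VP$).

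First I would handle $(\Rightarrow)$: if $\omega$ represents a connection, then $P_V$ is a projection onto $VP$, in particular it restricts to the identity on $VP$. For $Z\in\g$ and $p\in P$ the vector $Z_P(p)=\zeta(p,Z)$ is vertical, so
\begin{equation*}
\zeta(p,Z)=Z_P(p)=P_V(Z_P(p))=\zeta(p,\omega(Z_P(p))),
\end{equation*}
and since $\zeta$ is a fibrewise isomorphism we conclude $\omega(Z_P(p))=Z$.

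Next I would do $(\Leftarrow)$: assume $\omega(Z_P)=Z$ for all $Z\in\g$ and form $P_V^\omega$ as above. Since $P_V^\omega$ takes values in $VP$, I only need to check it is the identity on $VP$, from which idempotence (and surjectivity onto $VP$) follows. But any $v\in VP_p$ is of the form $Z_P(p)=\zeta(p,Z)$ for a unique $Z\in\g$, so
\begin{equation*}
P_V^\omega(v)=\zeta\bigl(p,\omega(Z_P(p))\bigr)=\zeta(p,Z)=v.
\end{equation*}
Hence $P_V^\omega$ is an idempotent fibrewise linear endomorphism of $TP$ with image $VP$, giving the splitting $TP=HP\oplus_P VP$ with $HP:=\ker P_V^\omega$ and exhibiting $\omega$ as (the $\g$-translate of) a connection in the sense of Section \ref{geocon}.

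I do not expect any real obstacle here: the entire content of the claim is the observation that, modulo the trivialisation $\zeta$, being a projection onto $VP$ is characterised by acting as the identity on $VP$, and the identity on $VP$ translates via $\zeta$ precisely to the normalisation $\omega(Z_P)=Z$. The only thing worth stating cleanly is that smoothness and fibrewise linearity of $P_V^\omega$ are automatic from those of $\omega$ and $\zeta$, so no further regularity check is required.
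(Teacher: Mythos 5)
Your proof is correct and follows essentially the same route as the paper's: both directions are reduced, via the trivialisation $\zeta$ of Claim \ref{cl:triv}, to the observation that a fibrewise linear map $TP\to VP$ is a projection onto $VP$ precisely when it restricts to the identity on $VP$, which translates to the normalisation $\omega(Z_P)=Z$. If anything, you are slightly more explicit than the paper in noting that the image of $P_V^\omega$ is all of $VP$ and that the kernel supplies the horizontal complement, which is a welcome clarification of the paper's terse ``so the claim follows.''
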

\begin{proof}
If $\omega$ represents a connection, then $\zeta\circ\omega=P_V$ is the vertical projection. Hence $\zeta(\omega(Z_P))=P_V(Z_P)=Z_P$, because $Z_P$ is vertical. Since $Z\mapsto Z_P(p)$ is a linear isomorphism $\g\to VP_p$ for each $p\in P$, this implies $\omega(Z_P)=Z$.

Conversely, suppose $\omega(Z_P)=Z$ for all $Z\in\g$ and put $P_V:=\zeta\circ\omega$. Then $P_V$ restricts to the identity on $VP$, since every vertical vector is of the form $Z_P(p)$. Thus $P_V^2=P_V$ with image $VP$, and therefore $P_V$ is the vertical projection of a connection.
\end{proof}
We can also express $\Omega^\omega$ directly in terms of $\omega$, like we expressed $R$ in terms of $P_V$.
\begin{claim}
 The following formula holds for the curvature: $$\Omega^\omega(X,Y)=-\omega([P_H X, P_H Y]).$$
\end{claim}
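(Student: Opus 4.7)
The plan is to unwind the definitions, essentially noting that this formula is just the previous formula for $R$ expressed via the trivialisation $\zeta$.

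First, recall the two identifications set up just before the claim: $\omega = \zeta^{-1}\circ P_V$ and $\Omega^\omega = \zeta^{-1}\circ R$, where $P\times\g\ra{\zeta}VP$ is the vector bundle isomorphism of claim \ref{cl:triv}. Recall also the defining equation of the curvature from section \ref{sec:curv}:
\[
-R(X,Y) = P_V([P_H X,\, P_H Y]).
\]
Substituting into the definition of $\Omega^\omega$ gives, for $X,Y\in T_pP$ and any extensions to vector fields on $P$,
\[
\Omega^\omega(X,Y) = \zeta^{-1}\bigl(R(X,Y)\bigr) = -\zeta^{-1}\bigl(P_V([P_H X,\, P_H Y])\bigr) = -\omega([P_H X,\, P_H Y]),
\]
which is exactly the claimed identity.

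The only thing to be careful about is tensoriality. The Lie bracket $[P_H X, P_H Y]$ is generically not $C^\infty(P)$-bilinear in $X,Y$, so the formula only makes sense because we already know from the corollary in section \ref{sec:curv} that $R$, and hence $\Omega^\omega$, is tensorial: its value at $p$ depends only on $X(p)$ and $Y(p)$. Thus the right-hand side is independent of the chosen extensions of the tangent vectors to vector fields, and both sides define the same $\g$-valued $2$-form on $P$. No further computation is required; the content of the claim is entirely captured by claim \ref{bramod} and the pointwise identification $\zeta$.
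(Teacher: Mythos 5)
Your proof is correct and takes essentially the same route as the paper: both simply conjugate the defining identity $-R(X,Y)=P_V([P_H X,P_H Y])$ by the trivialisation $\zeta$, using $\omega=\zeta^{-1}\circ P_V$ and $\Omega^\omega=\zeta^{-1}\circ R$. Your added remark on tensoriality (via claim \ref{bramod}) is a sensible clarification the paper leaves implicit, but it does not change the argument.
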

\begin{proof}Indeed, $-\omega([P_H X,P_H Y])_P=-P_V([P_H X,P_H Y])=R(X,Y)=\Omega^\omega(X,Y)_P$. Again, since $\zeta$ is an isomorphism, the conclusion follows.\end{proof}
Now, let us formulate what we mean precisely when we say that a connection is equivariant. We say that $P$ is equipped with a \emph{principal connection}\index{principal connection}, or \emph{equivariant connection}\index{equivariant connection}, if $HP\la{P_H}TP\ra{P_V}VP$ is equivariant in the following sense: $P_V\circ T(\rho_g)=T(\rho_g)\circ P_V$, for all $g\in G$. (Here I write $\rho_{(-)}$ for the principal right action to be explicit.) Note that we can equivalently demand that horizontal lifts are $G$-equivariant or that parallel transport along any curve is. We must ask ourselves what it means for $\omega$, if we are dealing with a principal connection. The answer turns out to be as follows. 

Remember that we have the so-called \emph{adjoint representation}\index{adjoint representation} of $G$ on its Lie algebra $\g$
\begin{center}
\begin{tikzcd}[column sep=large,row sep=large]
G \arrow[r,"{\mathrm{Ad}}"] & \GL(\g)\\
g \arrow[r,mapsto] & T_e C_g,
\end{tikzcd}
\end{center}
where $C_g=g\cdot -\cdot g^{-1}$ is the conjugation map.
\begin{claim}\label{claim:equiv}
Suppose that $\omega\in \Omega^1(P,\g)$ is the $\g$-valued 1-form corresponding to a (not necessarily principal) connection $P_V$. Then this is precisely a principal connection if $\omega$ is $G$-equivariant in the sense that for all $g\in G$ the following diagram commutes:
\begin{center}
\begin{tikzcd}[column sep=large,row sep=large]
TP \arrow[r,"{T\rho_g}"] \arrow[d,"\omega"'] & TP \arrow[d,"\omega"]\\
\g \arrow[r,"{\mathrm{Ad}(g^{-1})}"'] & \g,
\end{tikzcd}
\end{center}
where $G\ra{\mathrm{Ad}} \GL(\g)$ denotes the adjoint representation of $G$.
\end{claim}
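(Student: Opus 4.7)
The plan is to translate the equivariance condition on $P_V$ through the trivialisation $\zeta$ of claim \ref{cl:triv}, using a single computational identity relating the fundamental vector fields $Z_P$ to the principal right action and the adjoint representation.

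The key identity I would prove first is
\[ T\rho_g(Z_P(p)) = (\mathrm{Ad}(g^{-1})Z)_P(p\cdot g) \quad \text{for all } g\in G,\; Z\in\g,\; p\in P, \]
or equivalently $T\rho_g\circ\zeta = \zeta\circ(\rho_g\times \mathrm{Ad}(g^{-1}))$ as maps $P\times\g\to VP$. This follows by unwinding definitions: since $\rho$ is a right action,
\[ T\rho_g(Z_P(p)) = \tfrac{d}{dt}\rho_g(\rho_{\exp(tZ)}(p))\big|_{t=0} = \tfrac{d}{dt}\rho(p\cdot g,\, g^{-1}\exp(tZ)g)\big|_{t=0}, \]
and $g^{-1}\exp(tZ)g=\exp(t\,\mathrm{Ad}(g^{-1})Z)$ by the defining property of $\mathrm{Ad}$, so the right hand side equals $(\mathrm{Ad}(g^{-1})Z)_P(p\cdot g)$.

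Next I would reformulate both conditions as statements about maps into $VP$. On one hand, $P_V$ being principal means $T\rho_g\circ P_V = P_V\circ T\rho_g$ for all $g$. On the other, using $P_V = \zeta\circ(\mathrm{id}_P,\omega)$, I would compute, for $X_p\in T_pP$,
\[ T\rho_g(P_V(X_p)) = T\rho_g(\zeta(p,\omega(X_p))) = \zeta(p\cdot g,\, \mathrm{Ad}(g^{-1})\omega(X_p)), \]
by the key identity, while
\[ P_V(T\rho_g(X_p)) = \zeta(p\cdot g,\, \omega(T\rho_g(X_p))). \]
The equation $T\rho_g\circ P_V = P_V\circ T\rho_g$ therefore becomes
\[ \zeta(p\cdot g,\, \mathrm{Ad}(g^{-1})\omega(X_p)) = \zeta(p\cdot g,\, \omega(T\rho_g(X_p))). \]

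Finally, since $\zeta$ is a fibrewise linear isomorphism by claim \ref{cl:triv}, the map $Z\mapsto \zeta(p\cdot g,Z)$ is injective, so the above equation is equivalent to $\mathrm{Ad}(g^{-1})\circ\omega = \omega\circ T\rho_g$, which is exactly the commutativity of the square in the statement. Since every step is an equivalence, this proves both implications at once. I expect no serious obstacle; the only subtlety is the algebraic identity $g^{-1}\exp(tZ)g=\exp(t\,\mathrm{Ad}(g^{-1})Z)$, and bookkeeping the base point shift from $p$ to $p\cdot g$ when applying $\zeta^{-1}$.
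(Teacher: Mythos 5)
Your proposal is correct and follows essentially the same route as the paper: both hinge on the identity $T\rho_g(Z_P(p))=(\mathrm{Ad}(g^{-1})Z)_P(p\cdot g)$ (equivalently $g^{-1}\exp(tZ)g=\exp(t\,\mathrm{Ad}(g^{-1})Z)$) and on the injectivity of $\zeta$ on fibres to pass between $P_V$-equivariance and $\omega$-equivariance. Your version is slightly tidier in isolating that identity as a lemma and in noting explicitly that every step is reversible, whereas the paper writes out only one direction of the chain.
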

\begin{proof}Let $X\in\mathcal{X}(P)$. Then
\begin{align*}\omega(T\rho_g(X))_P\circ\rho_g&=P_V(T \rho_g(X))\\
&=T\rho_g(P_V(X))\txt{(equivariance of $P_V$)}\\
&=T\rho_g(\omega(X)_P)\\
&=T\rho_g\circ \frac d{dt}\rho_{\exp(t\omega(X))}|_{t=0}\\
&=\frac d{dt}\rho_g\circ\rho_{\exp(t\omega(X))}|_{t=0}\\
&=\frac d{dt}\rho_g\circ\rho_{\exp(t\omega(X))}\circ\rho_{g^{-1}}|_{t=0}\circ\rho_g\\
&=\frac d{dt}\rho_{g^{-1}\exp(t\omega(X))g}|_{t=0}\circ\rho_g\\
&=\frac d{dt}\rho_{\exp(t\,\mathrm{Ad}(g^{-1})\omega(X))}|_{t=0}\circ\rho_g\txt{(*)}\\
&=(\mathrm{Ad}(g ^{-1})\omega(X))_P\circ\rho_g,
\end{align*}
where, in $(*)$, we use the fact that $\mathrm{Ad}(g)=T_e C_g$, if $C_g: h\mapsto ghg^{-1}$, which implies that
\begin{center}
\begin{tikzcd}[column sep=large,row sep=large]
G \arrow[r,"{C_g}"] & G\\
\g \arrow[u,"\exp"] \arrow[r,"{T_e C_g=\mathrm{Ad}(g)}"'] \arrow[ur,phantom,"\com",description] & \g \arrow[u,"\exp"'].
\end{tikzcd}
\end{center}
Finally, since $\rho_g$ is a diffeomorphism, equivariance of $P_V$ implies the displayed equivariance of $\omega$. Conversely, if $\omega$ satisfies the displayed equivariance, the same computation read backwards gives $P_V(T\rho_g(X))=T\rho_g(P_V(X))$. Hence $P_V$ is equivariant, and the claim follows.
\end{proof}
In general, given a representation $\lambda$ of $G$ on a vector space $W$, we say that $\alpha\in \Omega^k(P,W)$ is an \emph{equivariant differential form}\index{equivariant differential form} if $\alpha(T\rho_g(X_1),\ldots,T\rho_g(X_k))=\lambda(g^{-1})\alpha(X_1,\ldots,X_k)$, for all $g\in G$.
\begin{claim}\label{liead}
For a principal connection $\omega$ on a principal bundle $P$, $\Li_{Z_P}\omega=-\mathrm{ad}(Z)\circ \omega$.
\end{claim}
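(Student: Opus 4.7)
The plan is to use the standard identity $\Li_X \omega = \frac{d}{dt}\big|_{t=0}(\Phi_t^X)^*\omega$, where $\Phi_t^X$ denotes the flow of the vector field $X$, combined with the equivariance property of $\omega$ established in Claim \ref{claim:equiv}.

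First I would identify the flow of the fundamental vector field $Z_P$. By definition, $Z_P(p) = \frac{d}{dt}\rho_{\exp(tZ)}(p)\big|_{t=0}$, and since $\{\exp(tZ)\}_{t\in\R}$ is a one-parameter subgroup of $G$, the family $\{\rho_{\exp(tZ)}\}_{t\in\R}$ is a one-parameter group of diffeomorphisms of $P$ whose infinitesimal generator is precisely $Z_P$. Hence the flow of $Z_P$ is $\Phi_t^{Z_P} = \rho_{\exp(tZ)}$.

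Next I would apply Claim \ref{claim:equiv} with $g=\exp(tZ)$, whose inverse is $\exp(-tZ)$, to rewrite the pullback:
\begin{equation*}
(\rho_{\exp(tZ)})^*\omega = \mathrm{Ad}(\exp(-tZ))\circ\omega.
\end{equation*}
Differentiating both sides with respect to $t$ at $t=0$ then yields
\begin{equation*}
\Li_{Z_P}\omega = \frac{d}{dt}\Big|_{t=0}\mathrm{Ad}(\exp(-tZ))\circ\omega = -\mathrm{ad}(Z)\circ\omega,
\end{equation*}
where in the last step I use the standard fact that the derivative of the composition $\mathrm{Ad}\circ\exp\colon \g\to\GL(\g)$ at $0$ is the Lie algebra homomorphism $\mathrm{ad}\colon \g\to\mathfrak{gl}(\g)$, i.e.\ $\frac{d}{dt}\big|_{t=0}\mathrm{Ad}(\exp(tZ)) = \mathrm{ad}(Z)$, together with the chain rule producing the minus sign.

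I do not expect any genuine obstacle: the only point requiring a little care is the justification that $\omega$ is constant in $t$ when pulled out of the derivative (it is, since the only $t$-dependence in $\mathrm{Ad}(\exp(-tZ))\circ\omega$ sits in the scalar factor $\mathrm{Ad}(\exp(-tZ))\in\GL(\g)$ acting pointwise on the $\g$-valued form $\omega$), and the identification of the derivative of $\mathrm{Ad}\circ\exp$ with $\mathrm{ad}$, which is a standard Lie-group fact. The entire proof therefore reduces to one line once the flow of $Z_P$ and the equivariance identity are in hand.
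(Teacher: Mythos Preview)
Your proposal is correct and follows essentially the same route as the paper: identify the flow of $Z_P$ as $\rho_{\exp(tZ)}$, apply the equivariance from Claim \ref{claim:equiv} to rewrite the pullback as $\mathrm{Ad}(\exp(-tZ))\circ\omega$, and differentiate at $t=0$ using $\mathrm{ad}=T_e\mathrm{Ad}$. The paper's proof is the same one-line chain of equalities, just with less commentary on the intermediate justifications.
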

\begin{proof}
\begin{align*}\Li_{Z_P}\omega&:=\frac{d}{dt}\exp(tZ_P)^*\omega|_{t=0}\\
&=\frac{d}{dt}\omega\circ T\exp(tZ_P)|_{t=0}\\
&=\frac{d}{dt}\omega\circ T\rho_{\exp(tZ)}|_{t=0}\\
&=\frac{d}{dt}\mathrm{Ad}(\exp(-tZ))\circ\omega|_{t=0}\txt{(by $G$-equivariance)}\\
&=-\mathrm{ad}(Z)\circ\omega.
\end{align*}
Here the last equality holds by the chain rule. (Recall that $\mathrm{ad}:=T_e\mathrm{Ad}$.)
\end{proof}
The bracket on $\g$ defines a bracket on the graded vector space $\Omega(M,\g):=\bigoplus_{k\in\N}\Omega^k(M,\g)$ of all $\g$-valued differential forms on a manifold $M$, making it into a graded Lie algebra. For $\alpha\in\Omega^k(M,\g)$ and $\beta\in\Omega^l(M,\g)$, we define $[\alpha,\beta]\in\Omega^{k+l}(M,\g)$ by
$$[\alpha,\beta](X_1,\ldots,X_{k+l}):=\frac{1}{k!l!}\sum_\sigma \mathrm{sign}\,\sigma [\alpha(X_{\sigma(1)},\ldots,X_{\sigma(k)}),\beta(X_{\sigma(k+1)},\ldots,X_{\sigma(k+l)})],$$
where $\sigma$ ranges over all permutations of $\{1,\ldots,k+l\}$ and $X_i\in TM$.

The ordinary exterior derivative extends componentwise to $\g$-valued forms. For $\alpha\in \Omega^k(M,\g)$ and $\beta\in \Omega^l(M,\g)$ it satisfies the graded Leibniz rule
$$d[\alpha,\beta]=[d\alpha,\beta]+(-1)^k[\alpha,d\beta].$$
For a $\g$-valued $1$-form, $d\omega(X,Y):=\Li_X(\omega(Y))-\Li_Y(\omega(X))-\omega([X,Y])$. This implies that the following proposition holds for $G$-equivariant connections.

First, however, we introduce one more notation that makes the proposition somewhat cleaner. Moreover, the notation is used in practically all physics literature on gauge theory. Recall that we used $P_H$ to designate the projection onto the horizontal bundle. This allows us to define a map 
$$\Omega(P,W)\ra{P_H^*} \Omega(P,W)$$
$$(P_H^*\alpha)(X_1,\ldots,X_k):=\alpha(P_H(X_1),\ldots,P_H(X_k)).$$
We might call this the projection onto the subspace $\Omega_\hor(P,W)$ of \emph{horizontal differential forms}\index{horizontal differential form}. Obviously, these are precisely the differential forms that vanish on vertical vectors. Finally, we define the \emph{exterior covariant derivative}\index{exterior covariant derivative} or \emph{covariant differential}\index{covariant differential} associated with the connection $\omega$ as the map $\Omega^k(P,W)\ra{d^\omega}\Omega^{k+1}(P,W)$, given by $d^\omega:=P_H^*\circ d$.
 
\begin{theorem}[Cartan's second structural equation] \label{maurer}For a principal connection $\omega$ on $P$ the following formula holds for the curvature:
$\Omega^\omega=d\omega+\frac{1}{2}[\omega,\omega]=d^\omega \omega$.
\end{theorem}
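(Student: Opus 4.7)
The plan is to verify both equalities pointwise. Since $\Omega^\omega$, $d\omega$, $[\omega,\omega]$ and $d^\omega\omega$ are all $\mathfrak{g}$-valued $2$-forms, they are $C^\infty(P)$-bilinear and evaluating them at a point $p$ depends only on the values of the arguments at $p$. Using the decomposition $T_pP = HP_p \oplus VP_p$, it suffices to check the identity on three types of pairs of vectors: (i) both horizontal, (ii) both vertical, (iii) one horizontal and one vertical. For each case I would choose convenient local extensions to vector fields and apply the Koszul-type formula $d\omega(X,Y) = \mathcal{L}_X(\omega(Y)) - \mathcal{L}_Y(\omega(X)) - \omega([X,Y])$, together with the bracket $\tfrac12[\omega,\omega](X,Y)=[\omega(X),\omega(Y)]$.

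In case (i), extend both vectors to horizontal vector fields $X,Y$; then $\omega(X)=\omega(Y)\equiv 0$, so the Lie-derivative terms drop and $d\omega(X,Y) = -\omega([X,Y]) = -\omega([P_HX,P_HY]) = \Omega^\omega(X,Y)$, while $\tfrac12[\omega,\omega]$ and $d^\omega\omega - d\omega$ both vanish. In case (ii), the vertical bundle is trivialised by $\zeta$ (claim \ref{cl:triv}), so I can take the two vectors to be fundamental fields $Z_P,W_P$; then $\omega(Z_P)=Z$ and $\omega(W_P)=W$ are constant, and the key input is that $Z\mapsto Z_P$ is a Lie algebra homomorphism, giving $\omega([Z_P,W_P])=[Z,W]$. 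Hence $d\omega(Z_P,W_P) = -[Z,W]$, which exactly cancels $\tfrac12[\omega,\omega](Z_P,W_P)=[Z,W]$; moreover $\Omega^\omega(Z_P,W_P)=0$ and $d^\omega\omega(Z_P,W_P)=0$ since $P_HZ_P=P_HW_P=0$.

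The only case with any subtlety is (iii), and this is the main obstacle: one has to compute $\omega([X,Z_P])$ for $X$ horizontal and $Z_P$ fundamental. I would overcome it by extending the horizontal vector at $p$ to a locally defined, $G$-invariant horizontal vector field $X$ (possible because $HP$ is $G$-invariant by definition of a principal connection, so the $G$-translates $T\rho_g(X_p)$ stay horizontal along the fibre, and one trivialises to extend across a neighbourhood of the orbit). For such $X$, the flow of $Z_P$, being $\rho_{\exp(tZ)}$, preserves $X$, so $[X,Z_P]=0$. Consequently $d\omega(X,Z_P) = -\omega([X,Z_P]) = 0$, and the right-hand side $d\omega + \tfrac12[\omega,\omega]$ vanishes because $\omega(X)=0$. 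On the other hand $\Omega^\omega(X,Z_P)=-\omega([X,0])=0$ and $d^\omega\omega(X,Z_P)=d\omega(X,0)=0$, matching.

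Putting the three cases together yields the pointwise identity $\Omega^\omega(X,Y) = d\omega(X,Y) + \tfrac12[\omega,\omega](X,Y) = d^\omega\omega(X,Y)$, which is what was to be proved. As a sanity check on the equivariance ingredient used in case (iii), one may alternatively derive $\mathcal{L}_{Z_P}(\omega(X))$ directly from claim \ref{liead}, avoiding the explicit $G$-invariant extension; this gives an equivalent route if one prefers to work with arbitrary horizontal extensions of $X$.
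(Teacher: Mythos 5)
Your proof is correct and follows essentially the same route as the paper: both decompose $T_pP$ into horizontal and vertical parts, use tensoriality to reduce to a pointwise case analysis, and evaluate $d\omega(X,Y)=\Li_X(\omega(Y))-\Li_Y(\omega(X))-\omega([X,Y])$ on suitable extensions (horizontal lifts and fundamental fields). The only divergence is in the mixed horizontal--vertical case, where the paper contracts with $Z_P$ and invokes claim \ref{liead} ($\Li_{Z_P}\omega=-\mathrm{ad}(Z)\circ\omega$), whereas you use a $G$-invariant horizontal extension to force $[X,Z_P]=0$ --- an equivalent use of the equivariance of the connection, as you observe yourself in your closing remark.
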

\begin{proof}
Let $X,Y$ be horizontal vector fields on $P$. Then $R(X,Y):=-P_V([X,Y])=(-\omega([X,Y]))_P$ and \begin{align*}(d\omega+\frac{1}{2}[\omega,\omega])(X,Y)&=\Li_X(\omega(Y))-\Li_Y(\omega(X))-\omega([X,Y])+[\omega(X),\omega(Y)]\\
&=-\omega([X,Y]),
\end{align*}
since $0=P_VX=(\omega(X))_P$ implies that $\omega(X)=0$ and similarly for $Y$. So on horizontal vector fields, both expressions agree.\\
\\
Now note that, since the principal right action is transitive on the fibre, any vertical vector $Z'\in VP_p$ can be written as $Z_P(p)$ for some $Z\in\g$. Now, trivially, $i_{Z_P} R=i_{P_H Z_P} R=0$, so $i_{Z_P}\Omega^\omega=0$.

On the other hand, since $i_{Z_P}\omega=Z$, we have that $i_{Z_P} d \omega=\Li_{Z_P} \omega$. So
\begin{align*}i_{Z_P} (d\omega+\frac{1}{2} [\omega,\omega])&=\Li_{Z_P}\omega+ \frac{1}{2}i_{Z_P}[\omega,\omega]\\
&=\Li_{Z_P}\omega+[Z,\omega]\\
&=\Li_{Z_P}\omega+\mathrm{ad}(Z)\circ\omega
\end{align*}
(We mean $[Z,\omega]$ to be the pointwise $\g$-Lie bracket.) This last equality is the well-known result on the adjoint representation. Since $\omega$ is $G$-equivariant, claim \ref{liead} tells us that this vanishes. Both expressions therefore vanish on vertical vectors. By noting that both are $C^\infty(P)$-bilinear maps, the first equality in the theorem follows.

For the second one, note that
\begin{align*}d^\omega \omega(X\wedge Y)&=(P_H^* d\omega)(X\wedge Y)\\
&=d\omega(P_H (X)\wedge P_H (Y))\\
&=\Li_{P_H(X)}\omega(P_H(Y))-\Li_{P_H(Y)}\omega(P_H(X))-\omega([P_H(X),P_H(Y)])\\
&=-\omega([P_H(X),P_H(Y)])\txt{(since $\omega$ vanishes on horizontal vectors)}.
\end{align*}
Note that this expression is equal to $-\omega([X,Y])$ (and therefore agrees with the expression above) if $X,Y$ are horizontal vectors. Moreover it obviously vanishes on vertical vectors. We conclude that the second identity in the theorem holds.
\end{proof}
There is one more identity regarding the connection that I would like to show, since it is of fundamental importance in physics. It will give us one of the two equations governing the evolution of gauge fields.
\begin{theorem}[Bianchi Identity]\index{Bianchi identity}The curvature form $\Omega^\omega$ of a principal connection $\omega$ satisfies the following equation, known as the Bianchi Identity:
$$d^\omega \Omega^\omega=0.$$
\end{theorem}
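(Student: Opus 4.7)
The plan is to differentiate Cartan's second structural equation $\Omega^\omega = d\omega + \tfrac{1}{2}[\omega,\omega]$ (from Theorem \ref{maurer}) and then observe that the resulting 3-form is killed by the horizontal projection $P_H^*$. Since $d^\omega = P_H^* \circ d$ by definition, it suffices to show that $d\Omega^\omega$ vanishes whenever all three of its arguments are horizontal.

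First I would compute $d\Omega^\omega$ directly. Applying $d$ and using $d^2 = 0$ gives $d\Omega^\omega = \tfrac{1}{2}\,d[\omega,\omega]$. The graded Leibniz rule on $\Omega(P,\g)$ yields $d[\omega,\omega] = [d\omega,\omega] - [\omega,d\omega]$, and graded antisymmetry of the bracket on $\g$-valued forms, $[\omega,d\omega] = -[d\omega,\omega]$, collapses this to $d[\omega,\omega] = 2[d\omega,\omega]$, hence $d\Omega^\omega = [d\omega,\omega]$.

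Next I would apply $P_H^*$. Writing out the definition of the bracket of a $2$-form and a $1$-form,
\[
[d\omega,\omega](X_1,X_2,X_3) = \tfrac{1}{2}\sum_{\sigma}\mathrm{sign}\,\sigma\,[\,d\omega(X_{\sigma(1)},X_{\sigma(2)}),\,\omega(X_{\sigma(3)})\,],
\]
every term carries a factor $\omega(X_{\sigma(3)})$. When $X_1,X_2,X_3$ are horizontal, each such factor vanishes: this is because a principal connection annihilates horizontal vectors, as $P_V X = \omega(X)_P$ and $\zeta$ is the isomorphism of Claim \ref{cl:triv}. Consequently $P_H^*[d\omega,\omega]=0$ and therefore $d^\omega \Omega^\omega = 0$.

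The only real care needed is bookkeeping in the graded bracket: verifying that the Leibniz rule together with graded antisymmetry yields the factor $2$ (rather than $0$) in $d[\omega,\omega]$. Once this is checked, the argument is essentially a one-line evaluation, which is why I would present it precisely in this order rather than trying to unpack the curvature in terms of horizontal lifts as in Corollary \ref{curveasy}.
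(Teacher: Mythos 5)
Your proposal is correct and follows essentially the same route as the paper: differentiate Cartan's structural equation, use $d^2=0$ together with the graded Leibniz rule and antisymmetry to get $d\Omega^\omega=[d\omega,\omega]$, and then kill this under $P_H^*$ because $\omega$ vanishes on horizontal vectors. The only cosmetic difference is that you justify the final vanishing by expanding the definition of the graded bracket term by term, where the paper writes $P_H^*[d\omega,\omega]=[P_H^*d\omega,P_H^*\omega]$ and invokes $P_H^*\omega=0$; these are the same observation.
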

\begin{proof}We have, using the previous result,
\begin{align*}
d^\omega \Omega^\omega&=d^\omega(d\omega+\frac{1}{2} [\omega,\omega])\\
&=P_H^*dd\omega+\frac 1 2 P_H^* d[\omega,\omega]\\
&=\frac 1 2 P_H^*([d\omega,\omega]-[\omega,d\omega])\\
&=P_H^* [d\omega,\omega]\\
&=[P_H^*d\omega,P_H^*\omega]\\
&=0.
\end{align*}
The last identity, of course, holds since $\omega$ vanishes on horizontal vectors.
\end{proof}
It should be noted that any connection on a fibre bundle satisfies a variant of this equation: it is not special to the case of a principal connection. However, the general statement would be much more involved. For more information, see \cite{micnat}.\\
\\
To express the second field equation (known as the Yang-Mills equation) we need a few more definitions. Let us, following convention, first introduce the notation $\Omega_\hor^k(P,W)^G$ for the space of $W$-valued $k$-forms that are both horizontal (vanish on vertical vectors) and equivariant. This is a very important class of differential forms. For instance, note that $\Omega^\omega\in\Omega_\hor^2(P,\g)^G$. We have a similar result for $\omega$.
\begin{claim}
Let $\omega$ be a principal connection on $P\ra{\pis}M$. Then another $\omega'\in \Omega^1(P,\g)$ is a principal connection precisely if $\omega-\omega'\in\Omega_\hor^1(P,\g)^G$. In particular, after fixing one principal connection, the space of principal connections is an affine space modelled on $\Omega_\hor^1(P,\g)^G$.
\end{claim}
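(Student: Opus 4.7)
The plan is to verify both directions of the equivalence using the two characterizing properties of a principal connection $1$-form established in claims \ref{claim:repro} and \ref{claim:equiv}: namely that $\omega \in \Omega^1(P,\g)$ represents a principal connection if and only if (i) $\omega(Z_P) = Z$ for every $Z \in \g$ (the fact that the underlying distribution is a connection), and (ii) $\rho_g^* \omega = \mathrm{Ad}(g^{-1}) \circ \omega$ for every $g \in G$ (the equivariance condition). Both properties are \emph{affine} in $\omega$ in a precise sense, and this is what makes the statement work.

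First I would prove the forward direction. Suppose $\omega$ and $\omega'$ are both principal connections, and set $\eta := \omega - \omega' \in \Omega^1(P,\g)$. For any $Z \in \g$ we compute $\eta(Z_P) = \omega(Z_P) - \omega'(Z_P) = Z - Z = 0$, so $\eta$ vanishes on every vertical vector; since the vertical bundle is pointwise spanned by fundamental vector fields (by claim \ref{cl:triv}) this means $\eta$ is horizontal, i.e.\ $\eta = P_H^* \eta$. For equivariance, pulling back by $\rho_g$ gives $\rho_g^* \eta = \rho_g^* \omega - \rho_g^* \omega' = \mathrm{Ad}(g^{-1})\omega - \mathrm{Ad}(g^{-1})\omega' = \mathrm{Ad}(g^{-1})\eta$, so $\eta \in \Omega_\hor^1(P,\g)^G$.

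For the converse, suppose $\omega$ is a principal connection and $\eta \in \Omega_\hor^1(P,\g)^G$. Define $\omega' := \omega - \eta$. Then $\omega'(Z_P) = \omega(Z_P) - \eta(Z_P) = Z - 0 = Z$ since $Z_P$ is vertical and $\eta$ horizontal, so property (i) is inherited from $\omega$; similarly $\rho_g^* \omega' = \mathrm{Ad}(g^{-1})\omega - \mathrm{Ad}(g^{-1})\eta = \mathrm{Ad}(g^{-1})\omega'$, giving property (ii). Hence $\omega'$ is again a principal connection.

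Finally, for the bijection statement, fix any principal connection $\omega_0$ (which exists by hypothesis). The two directions above combine to say that the map $\omega \mapsto \omega - \omega_0$ is a well-defined bijection from the set of principal connections to $\Omega_\hor^1(P,\g)^G$, with inverse $\eta \mapsto \omega_0 - \eta$. There is essentially no obstacle here: the result is formal, and the only point worth emphasising is that the characterisation of principal connections as $\g$-valued $1$-forms satisfying (i) and (ii) consists of two affine conditions, so the space of solutions is an affine space over the kernel of the associated linear map, which is precisely $\Omega_\hor^1(P,\g)^G$.
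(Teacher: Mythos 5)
Your proposal is correct and follows essentially the same route as the paper: both arguments rest on the characterisation of a principal connection form by the two affine conditions $\omega(Z_P)=Z$ (claim \ref{claim:repro}) and $\rho_g^*\omega=\mathrm{Ad}(g^{-1})\circ\omega$ (claim \ref{claim:equiv}), and observe that their difference satisfies the corresponding homogeneous conditions, i.e.\ lies in $\Omega_\hor^1(P,\g)^G$. The paper compresses this into a single two-line computation via $\zeta$, whereas you spell out both directions and the resulting bijection explicitly, but the content is identical.
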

\begin{proof}Obviously, $\omega'$ is precisely equivariant if $\omega-\omega'$ is. Moreover, for $X_V\in VP$ $\zeta\circ\omega'(X_V)=\zeta\circ(\omega'-\omega)(X_V)+\zeta\circ\omega(X_V)\exeq\zeta\circ\omega(X_V)=P_V(X_V)=X_V$, where $\exeq$ holds precisely if $\omega-\omega'$ is horizontal.\end{proof}
Let us write $\mathscr{C}(P)$ for the subset of $\Omega^1(P,\g)$ of principal connections. In fact, this is an (infinite-dimensional) submanifold of $\Omega^1(P,\g)$. \cite{blegau} We can interpret $\Omega_\hor^1(P,\g)^G$ as its tangent space. Indeed, let $]-\epsilon,\epsilon[\ra{\gamma}\mathscr{C}(P)$ be a path. Then, $\gamma(t)=\gamma(0)+\delta(t)$, with $]-\epsilon,\epsilon[\ra{\delta}\Omega_\hor^1(P,\g)^G$. We see that $\gamma'(0)\in T_0\Omega_\hor^1(P,\g)^G\cong \Omega_\hor^1(P,\g)^G$.

\begin{remark}\label{identifi}I can imagine the definition of $\Omega_\hor^k(P,W)^G$ can seem somewhat far-fetched. How precisely should we think about these differential forms? To answer this question note that, by claim \ref{seccoo}, we have a natural identification $\Omega_\hor^0(P,W)^G=C^\infty(P,W)^G\cong \Gamma(P[\lambda])$. Because of the identification $HP_p\cong TM_{\pis(p)}$ given by the connection and the transitivity of the principal right action on the fibres of $\pis$ we therefore obtain a natural identification $\Omega_\hor^k(P,W)^G\cong \Gamma(P[\lambda]\otimes \bigwedge^kT^*M)\cong\Omega^k(M,P[\lambda]\ra{\pi}M)$. 

If $M$ has an orientation, we can define a Hodge star operation on $\Omega(M,\pi)$. This can be transferred to $\Omega_\hor(P,W)^G$, which will be the subject of the next section. It contains some technical definitions that will be essential for the gauge theories. The reader can skip this paragraph and only return to it later when the physics demands these definitions.\end{remark}

\subsection{The Covariant Codifferential}
Note that $d^\omega$ restricts to a map
$$\Omega_\hor^k(P,W)^G\ra{d^\omega}\Omega_\hor^{k+1}(P,W)^G.$$
What we need to formulate the second field equation is an analogue of the \emph{codifferential} for principal bundles, like the covariant differential is an analogue of the ordinary differential. This will be a map
$$\Omega_\hor^k(P,W)^G\ra{\delta^\omega}\Omega_\hor^{k-1}(P,W)^G,$$
which we shall call the \emph{covariant codifferential}\index{covariant codifferential}. To define it, we first generalise the \index{Hodge star operation}\emph{Hodge star operation}\footnote{We assume the reader to be familiar with the principles of the Hodge star for (scalar-valued) differential forms on an oriented (pseudo)-Riemannian manifold. Recall that the definition of the ordinary codifferential $\delta$ does not depend on the choice of the orientation. So we can always define $\delta$ by choosing an arbitrary orientation locally. The situation here will be similar. We do not need an orientation to define the covariant codifferential.} to $\Omega_\hor(P,W)^G$. 

Let $(M,g)$ be an $n$-dimensional (pseudo)-Riemannian manifold. We generalise the Hodge star to vector-valued differential forms to be the unique $C^\infty(M)$-linear operator $\Omega^{k}(M,W)\ra{*}\Omega^{n-k}(M,W)$, such that $*(w\otimes \eta)=w\otimes *\eta$, for all $\eta\in \Omega^k(M)$ and $w\in W$. \cite{baegau} Now, we seem to run into a problem. There seems to be no way to define a Hodge star on $P$, since $P$ has no metric. We can, however, define the Hodge star on the horizontal differential forms.

Indeed, since we have the isomorphism $HP_p\ra{T_p\pis}T_{\pis(p)}M$, we can define $$\bigwedge^k(HP_p^*)\ra{\tilde *_p}\bigwedge^{n-k}(HP_p^*)$$
by $\tilde *_p(\pis^* \tau):=\pis^*(*_{\pis(p)}\tau)$, for $\tau\in \bigwedge^k T_{\pis(p)}^*M$. (If we are assuming we have a Hodge star $*$ on $M$.) Now, finally, define
$$\Omega_\hor^k(P,W)^G\ra{
\bar *}\Omega_\hor^{n-k}(P,W)^G$$
by setting $(\bar * \alpha)_p$ to be the unique $W$-valued $(n-k)$-form, vanishing on vertical vectors, such that $\bar * \alpha|_{HP_p}=\tilde *_p(\alpha|_{HP_p})$.

It is easily verified that the Hodge star of an equivariant differential form is again equivariant. Indeed, a volume form $\mu \in \Omega^n(M)$ pulls back to a horizontal $n$-form $\pis^*\mu\in\Omega_\hor^n(P)$. This restricts to a top form on $HP_p$ that is invariant under the principal right action. Therefore $\bar *$ respects equivariance. 

Now, in analogy with the ordinary codifferential, we define
$$\Omega_\hor^k(P,W)^G\ra{\delta^\omega}\Omega_\hor^{k-1}(P,W)^G,$$
$$\delta^\omega:=\mathrm{sign}\,(g)(-1)^{nk+n+1}\bar * d^\omega \bar *.$$
This will turn out to be an essential tool in physics, as the second field equation (in the homogeneous case) taking the form
$$\delta^\omega \Omega^\omega=0.$$
As was the case with the ordinary codifferential, this covariant codifferential can be understood to be the adjoint of the covariant differential with respect to a suitable inner product. 

\begin{definition*}[Inner product on tensor space] Let $E$ be a $\Bbbk$-vector space with an inner product $g$. Note that in the real case this defines an isomorphism $E\cong E^*$, and in the Hermitian case the corresponding antilinear identification with $E^*$. Therefore we immediately obtain an inner product $g^*$ on the dual space $E^*$ as well as the pullback of $g$ along this identification. Moreover, suppose that we have a second $\Bbbk$-inner product space $(E',g')$. Then, we obtain an inner product $g g'$ on the tensor space $E\otimes E'$. Indeed, we put $g g'(e\otimes e',f\otimes f'):=g(e,f)g'(e',f')$ and extend this definition to all of $E\otimes E'$ by sesqui-/bilinearity. In particular, if we start with an inner product $g$ on $E$ we obtain an inner product on the space $\bigotimes^{k,l}E=\bigotimes^k E\otimes \bigotimes^l E^*$ of $(k,l)$-tensors. For convenience, we also denote this inner product by $g$. Of course, this restricts to the space $\bigwedge^k E^*$ of $k$-forms.\end{definition*}

Let us now note that we have an inner product $\bar h_p$ on $HP_p$, i.e. the pullback of $g_{\pis(p)}$ along the isomorphism $HP_p\ra{T_p\pis}T_{\pis(p)}M$, and let us suppose that we have an inner product $k_W$ on $W$. By the above reasoning we have a natural inner product $\bar h_p$ on $\bigwedge^k HP_p^*$ and therefore a pointwise inner product $\bar h_p k_W$ on the space $\bigwedge^k HP_p^*\otimes W$ of $W$-valued horizontal $k$-forms. We write $\bar h k_W$ for the induced map $$\Omega_\hor^k(P,W)\times \Omega_\hor^k(P,W)\ra{\bar h k_W}C^\infty(P).$$

Finally, suppose that $k_W$ is invariant for the $G$-action on $W$. Then we have that, for $\alpha,\beta\in\Omega_\hor^k(P,W)^G$, $\bar h k_W(\alpha,\beta)$ is constant on the fibres of $P\ra{\pis}M$. We conclude that, in this case, we obtain a map\label{vreselijkip}
$$\Omega_\hor^k(P,W)^G\times\Omega_\hor^k(P,W)^G\ra{\bar h k_W}C^\infty(M).$$
Now, we can interpret the covariant codifferential as an adjoint. 

Let us define the \emph{projected support}\index{projected support} of a map $F\ra{\psi}V$ of fibre bundles over $M$, where $V\ra{}M$ is a vector bundle and $F\ra{\pi}M$ is an arbitrary fibre bundle, as the topological closure of the set $\{\pi(f)\in M\; | \; f\in F\txt{and}\psi(f)\neq 0\}$.
\begin{claim}\label{cl:codifadj}Suppose we have a global orientation $\mu$ on $M$, that is compatible with the metric. Let $U\subset M$ be open with compact closure, and let $\alpha\in \Omega_\hor^k(P,W)^G$ and $\beta\in\Omega_\hor^{k+1}(P,W)^G$ have projected supports in $U$. Then
$$\int_U\bar h k_W(d^\omega \alpha,\beta)\mu=\int_U\bar h k_W(\alpha,\delta^\omega \beta)\mu.
$$
\end{claim}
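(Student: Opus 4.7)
The plan is to mimic the classical proof that the codifferential is formally adjoint to the exterior derivative, the only novelty being that the pairing uses the $G$-invariant inner product $k_W$ on $W$ to contract the $W$-valued parts, so that after pairing we land in ordinary scalar forms on $P$ (in fact in horizontal, invariant scalar forms, i.e.\ pullbacks of forms on $M$), to which Stokes' theorem can be applied.

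First I would introduce a "wedge-pairing"
$$\Omega_\hor^k(P,W)^G\times \Omega_\hor^l(P,W)^G\ra{\langle\cdot,\cdot\rangle_\wedge}\Omega_\hor^{k+l}(P)^G,$$
defined by wedging the form parts and contracting the $W$-parts via $k_W$. Because $k_W$ is $G$-invariant, this pairing sends equivariant inputs to invariant outputs, and because it is horizontal in each argument, the output is horizontal; hence by the identification of Remark \ref{identifi} the result descends to an ordinary scalar $n$-form on $M$. The crucial pointwise identity, which is an immediate consequence of the definition of $\bar *$ and of the tensor-product inner product on $\bigwedge^k HP_p^*\otimes W$, is
$$\langle\alpha,\bar *\beta\rangle_\wedge = \bar h k_W(\alpha,\beta)\,\pis^*\mu\qquad\text{for }\alpha,\beta\in\Omega_\hor^k(P,W)^G.$$

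The second step is the Leibniz rule
$$d\langle\alpha,\beta\rangle_\wedge = \langle d^\omega\alpha,\beta\rangle_\wedge +(-1)^{|\alpha|}\langle\alpha,d^\omega\beta\rangle_\wedge.$$
On horizontal vectors this reduces to the usual Leibniz identity for $d$ paired with a bilinear form, because the $G$-invariance of $k_W$ means its infinitesimal version kills the contribution from the vertical components in $d^\omega = P_H^*\circ d$; evaluating both sides on horizontal vectors one can simply drop the $P_H^*$'s. On vertical vectors both sides vanish: the left hand side because $\langle\alpha,\beta\rangle_\wedge$ is horizontal (it is invariant under $\pis^*$, so any inner contraction with a fundamental vector field is zero by Cartan's magic formula combined with equivariance), and the right hand side by definition of $d^\omega$.

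Having these ingredients, I would specialise $\beta$ to $\bar *\beta$ with $\beta\in\Omega_\hor^{k+1}(P,W)^G$ and apply Stokes' theorem on $M$ to the identity obtained from the Leibniz rule:
$$d\langle\alpha,\bar *\beta\rangle_\wedge = \langle d^\omega\alpha,\bar *\beta\rangle_\wedge + (-1)^k\langle\alpha, d^\omega\bar *\beta\rangle_\wedge.$$
The left hand side integrates to zero over $\overline U$ because both $\alpha$ and $\bar *\beta$ have projected support in $U$, hence so does $\langle\alpha,\bar *\beta\rangle_\wedge$, which then vanishes on $\partial U$. The first term on the right is exactly $\bar h k_W(d^\omega\alpha,\beta)\pis^*\mu$. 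The second term I would rewrite using $\bar *\bar * = \mathrm{sign}(g)(-1)^{k(n-k)}$ on $k$-forms and the definition
$$\delta^\omega = \mathrm{sign}(g)(-1)^{nk+n+1}\,\bar *\,d^\omega\,\bar *,$$
so that $d^\omega\bar *\beta = \mathrm{sign}(g)(-1)^{n(k+1)+n+1}\bar *\delta^\omega\beta$ up to the sign appearing in $\bar *\bar *$, and a careful sign-count turns $(-1)^k\langle\alpha,d^\omega\bar *\beta\rangle_\wedge$ into $-\bar h k_W(\alpha,\delta^\omega\beta)\pis^*\mu$. Integrating over $U$ and identifying $\pis^*\mu$ with $\mu$ under the horizontal/invariant descent then yields the desired equality.

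The main obstacle I expect is the bookkeeping in the last paragraph: the sign prefactor in $\delta^\omega$ and the sign from $\bar *\bar *$ together have to conspire exactly with the $(-1)^k$ from the Leibniz rule; getting this right uses both that $\beta$ has degree $k+1$ and that the Hodge stars land one step offset in degree. The Leibniz identity itself also requires some care, but once one exploits $G$-invariance of $k_W$ to pass freely between $d$ and $d^\omega$ on the paired object, it is formal; the sign gymnastics is the only place where a computational slip could occur.
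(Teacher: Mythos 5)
Your outline is sound and is in fact the standard adjointness argument; but note that the paper does not prove this claim at all — it remarks that the statement is ``just a technical generalisation of the well-known result for the case that $W=\Bbbk$ and $P=M$'' and refers the proof to \cite{blegau}. The proof in that reference is essentially the one you sketch: pair the forms into a scalar form via $k_W$, use $G$-invariance of $k_W$ and horizontality to see that the pairing is basic (descends along $\pis$), establish the Leibniz rule $d\langle\alpha,\beta\rangle_\wedge=\langle d^\omega\alpha,\beta\rangle_\wedge+(-1)^{|\alpha|}\langle\alpha,d^\omega\beta\rangle_\wedge$, and finish with Stokes. For the Leibniz step, the cleanest justification is slightly different from the one you give: apply $P_H^*$ to the ordinary Leibniz identity for $d$ and observe that $P_H^*$ distributes over the wedge pairing, fixes the horizontal forms $\alpha,\beta$, and fixes $d\langle\alpha,\beta\rangle_\wedge$ because the latter is the pullback of a form on $M$; this avoids having to invoke the infinitesimal invariance of $k_W$ separately. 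The sign bookkeeping you flag does close up with the convention $\delta^\omega=\mathrm{sign}(g)(-1)^{nk+n+1}\bar{*}\,d^\omega\,\bar{*}$, which is chosen precisely so that the $(-1)^k$ from the Leibniz rule and the sign of $\bar{*}\bar{*}$ cancel. So there is no gap of substance: you have supplied a proof where the paper supplies only a citation.
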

This statement is just a technical generalisation of the well-known result for the case in which $W=\Bbbk$ and $P=M$. The proof can be found in \cite{blegau}.

\begin{remark}
We have seen that we can construct a product connection when we are dealing with fibre bundles. Moreover, we know that $PB_M$ has finite products. The question is whether the product of two principal connections is again a principal connection. Obviously, the answer is yes. \cite{blegau} This connection is very important in gauge theory. Physicists often call it the \emph{spliced connection}\index{spliced connection}, in line with their interesting terminology for the product of two principal bundles.
\end{remark}

\subsection{Induced Connections}
\label{sec:indcon}
We have seen that, given a principal $G$-bundle $P\ra{\pis}M$ and an action $G\times S\ra{\lambda} S$ of $G$ on a manifold $S$, we can construct an associated fibre bundle $P[\lambda]$ with fibre $S$. Suppose that $P$ is also equipped with a principal connection $TP\ra{P_V=\zeta\circ\omega} VP$. The question is whether this induces a connection on the associated bundle in some natural way. This is indeed the case and the correspondence is particularly simple. It will be useful especially when we are dealing with the associated bundles that are part of an equivalence, as discussed in section \ref{vbequivalence}. Before we construct the induced connection, however, let us stop for a moment to think about what precisely a connection on the associated bundle would have to be.

Remember that we defined the associated bundle $P[\lambda]$ as the quotient $P\times S/G$, where $G$ acts on $P\times S$ on the right by $P\times S\times G\ra{\rho\times\lambda^{-1}}P\times S$. By this notation I mean $(p,s,g)\mapsto (\rho(p,g),\lambda(g^{-1},s))$. It is useful to describe tangent vectors to $P[\lambda]$ by representatives: every tangent vector at $[p,s]$ can be written as $Tq(\xi_p,\eta_s)$ for some $(\xi_p,\eta_s)\in T_pP\times T_sS$.

We construct the induced connection by the formula
$$P_V[\lambda](Tq(\xi_p,\eta_s)):=Tq(P_V(\xi_p),\eta_s).$$
This is well-defined. Indeed, the principal-connection condition says that $P_V$ commutes with the tangent lift of the principal right action, and on vectors tangent to the $G$-orbits it is the identity on the vertical part. Hence changing the representative of $Tq(\xi_p,\eta_s)$ changes $Tq(P_V(\xi_p),\eta_s)$ by a tangent vector to the same quotient orbit. The formula is fibrewise linear and idempotent. Its image is vertical because $T\pis(P_V(\xi_p))=0$, and it restricts to the identity on vertical tangent vectors, since then $\xi_p$ may be chosen vertical and $P_V(\xi_p)=\xi_p$. Thus $P_V[\lambda]$ is a connection on the associated bundle.

The parallel transport of this connection along a curve $c$ in $M$ is in a particularly simple correspondence with that of the original principal connection. That is, they are related precisely by the quotient map:
\begin{center}
\begin{tikzcd}[column sep=large,row sep=large]
P_{c(0)}\times S \arrow[r,"{T^P_c(-,t)\times \mathrm{id}_S}"] \arrow[d,"q"'] \arrow[dr,phantom,"\com",description] & P_{c(t)}\times S \arrow[d,"q"]\\
P[\lambda]_{c(0)} \arrow[r,dashed,"{T^{P[\lambda]}_c(-,t)}"'] & P[\lambda]_{c(t)}.
\end{tikzcd}
\end{center}
if we use the notation $T^F_c$ for the parallel transport along $c$ on a fibre bundle $F\ra{}M$.
This follows almost immediately from the uniqueness of solutions of ODEs. That is, after we observe that
\begin{align*}P_V[\lambda](\frac d{dt} q(T^P_c(p,t),s))&=P_V[\lambda](Tq(\frac d {dt}T^P_c(p,t),0_s))\\
&=Tq(P_V\times\mathrm{id}_{TS}(\frac d{dt}T^P_c(p,t),0_s))\\
&=Tq(P_V(\frac d{dt} T^P_c(p,t)),0_s)\\
&=Tq(0_{T^P_c(p,t)},0_s)=0,
\end{align*}
where $0_s$ denotes the zero element in $T_s S$ and similarly for $0_{T^P_c(p,t)}$. From this, we see that $q(T^P_c(p,t),s)$ satisfies the defining initial value problem for $T^{P[\lambda]}_c(q(p,s),t)$. The reader can check that the curvatures are also related by the quotient map, in the sense that $R[\lambda]\circ (Tq\wedge_PTq)=Tq\circ (R\times 0_S)$, where we write $R[\lambda]$ for the curvature of $P_V[\lambda]$ and $0_S$ for the zero section of $TS$. \cite{micnat}

\subsection{Principal Connections Are Complete}
Why are we interested in principal connections in the first place? One major advantage of such a connection is that it is automatically a complete connection: parallel transport is globally defined along any curve. One consequence is that we can construct a holonomy group. 

\begin{theorem}\label{prop:princompl}Principal connections are complete, i.e. for every curve $I\ra{c}M$, the parallel transport is defined for all times: $P_{c(0)}\times I\ra{T_c}P$.
\end{theorem}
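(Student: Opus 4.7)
The plan is to use the openness of the domain $\Omega_c$ (guaranteed by the local existence theorem \ref{partrans}) together with two special features of a principal connection: the transitivity of the principal right action on fibres, and the $G$-equivariance of parallel transport (noted right after the definition of a principal connection). Fix $p\in P_{c(0)}$ and let $J_p\subset I$ be the connected component of $\{t\in I\mid (p,t)\in\Omega_c\}$ containing $0$. Then $J_p$ is an open (in $I$) subinterval containing $0$, and I want to show $J_p=I$.

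I argue by contradiction. Suppose $t_0:=\sup J_p$ is strictly less than $\sup I$; the case $\inf J_p>\inf I$ is symmetric, and endpoints of $I$ lying in $I$ are handled by the very same step. Then $t_0\in I$, so Theorem \ref{partrans} applied at $c(t_0)$ supplies some $\epsilon>0$ and parallel transport defined on a neighbourhood of $P_{c(t_0)}\times\{t_0\}$ in $P\times I$. In particular, picking any $q\in P_{c(t_0)}$ we obtain a horizontal lift $\tilde c:(t_0-\epsilon,t_0+\epsilon)\cap I\to P$ of $c$ through $(q,t_0)$.

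Now I use the principal structure. Choose $s\in (t_0-\epsilon,t_0)\cap J_p$, so both $T_c(p,s)$ and $\tilde c(s)$ lie in the fibre $P_{c(s)}$. Since the principal right action is free and transitive on $P_{c(s)}$, there is a unique $g\in G$ with $T_c(p,s)=\tilde c(s)\cdot g$. By $G$-equivariance of the connection, the curve $t\mapsto \tilde c(t)\cdot g$ is again a horizontal lift of $c$, defined on $(t_0-\epsilon,t_0+\epsilon)\cap I$ and agreeing with $T_c(p,\cdot)$ at $t=s$. Uniqueness of horizontal lifts with a given initial condition (the ODE form of \ref{partrans}) forces the two lifts to coincide on the overlap. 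Hence $T_c(p,\cdot)$ extends smoothly past $t_0$, contradicting the maximality of $J_p$. Applying the same argument on the left endpoint completes the proof that $J_p=I$.

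The main obstacle is really only conceptual: one must notice that without the $G$-action one cannot in general rule out that parallel transport ``escapes'' the fibre in finite time, as happens for general (non-principal) connections. The transitivity of the fibrewise $G$-action and the equivariance of $P_V$ together guarantee that, as soon as \emph{some} horizontal lift through $c(t_0)$ exists locally (which we always get from Theorem \ref{partrans}), every horizontal lift of $c$ can be matched to it by a single element of $G$ and thereby extended. Smoothness of the resulting global $P\times I\ra{T_c}P$ is automatic from the smoothness provided by Theorem \ref{partrans} on the now-exhausted open domain $\Omega_c=P\times I$.
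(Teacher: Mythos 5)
Your argument is correct, but it takes a genuinely different route from the one in the text. You run a maximal-interval continuation argument: local existence (theorem \ref{partrans}) gives a horizontal lift through \emph{some} point $q$ of the fibre $P_{c(t_0)}$ for a short time around $t_0$, and the two specifically principal ingredients --- transitivity and freeness of the right action on fibres, plus $G$-equivariance of the horizontal distribution --- let you right-translate that lift by a single $g\in G$ so that it matches $T_c(p,\cdot)$ at an earlier time $s$; ODE uniqueness then glues the two lifts and pushes the domain past $t_0$. The text instead argues globally and constructively: it takes an arbitrary (non-horizontal) lift $d$ of $c$, writes the sought horizontal lift as $\rho(d_t,g_t)$, derives the ODE $\dot g_t=-TR_{g_t}\omega(\dot d_t)$ for the correcting curve in $G$, and concludes from the completeness of right-invariant vector fields on a Lie group. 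Both proofs isolate the same phenomenon --- for a general connection the lift can blow up inside the fibre in finite time, and the group structure is exactly what forbids this --- but yours localises the group-theoretic input in the uniformity of the local existence time across a fibre, while the text's localises it in the completeness of invariant vector fields. One small repair: you define $J_p$ as a connected component of $\{t:(p,t)\in\Omega_c\}$ for the particular $\Omega_c$ produced by theorem \ref{partrans}, which need not be maximal, and then appeal to ``maximality of $J_p$''; you should instead define $J_p$ as the union of all open subintervals of $I$ containing $0$ on which a horizontal lift through $(p,0)$ exists (these lifts agree by uniqueness), after which the contradiction is genuine. With that adjustment, and the standard flow argument for joint smoothness of $(p,t)\mapsto T_c(p,t)$ that you gesture at in the last sentence, the proof is complete.
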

\begin{proof}Let $P\ra{\pis}M$ be a principal $G$-bundle that is equipped with a principal connection $\omega$ and let $I\ra{c}M$ be a $C^1$-curve. We want to show that parallel transport is defined globally, so, in the notation of theorem \ref{partrans}, $\Omega_c=P_{c(0)}\times I$.\\
\\
The idea of the proof is that it is easy to find some lift of $c$ to $P$, which we can transform into a horizontal lift by the action of appropriate group elements. One easily derives the differential equation these group elements have to satisfy. Finally, a maximal solution to this differential equation is obtained by extending a local solution using the group multiplication.\\
\\
The precise argument is the following. Using local trivialisations, it is easy to construct a curve $I\ra{d}P$, such that $\pis\circ d=c$ and $d(0)=p$, if $p\in P_{c(0)}$. We look for a smooth curve $I\ra{g} G$, such that $c_\hor(t):=\rho(d_t,g_t)$ is a horizontal lift of $c$: $\pis\circ c_\hor=c$ and $c_\hor'(t)\in HP$, for all $t\in I$. (Here we write $d_t$ for $d(t)$ and similarly for $g_t$.)

We note that the principal right action respects the fibres of $P$, so the first demand, that $\pis\circ c_\hor=c$, is immediately satisfied for any choice of $t\mapsto g_t$. We verify what conditions the second demand imposes on $t\mapsto g_t$. In terms of $\omega$ it reads
$$0\exeq \omega\left(\frac{d}{dt}\rho(d_t,g_t)\right).$$
Now, by the Leibniz rule, $\frac d{dt} \rho(d_t,g_t)=T\rho_{g_t} \dot d_t+T\rho^{d_t} \dot g_t$, where, as before, $\rho_{h}(p):=\rho(p,h)=:\rho^p(h)$. Now, by claim \ref{claim:equiv}, we know that $\omega(T\rho_{g_t} \dot d_t)=\mathrm{Ad}(g_t^{-1})(\omega(\dot d_t))$. Moreover, since claim \ref{claim:repro} tells us that $\omega$ reproduces the generators of the infinitesimal action, we have that $\omega(T\rho^{d_t}(\dot g_t))=TL_{g_t^{-1}} \dot g_t$, where $G\ra{L_h}G$ denotes the left group multiplication by $h$. Remember that $\mathrm{Ad}(h)=T(L_h\circ R_{h^{-1}})$, if $G\ra{R_h}G$ denotes the right multiplication by $h$. Therefore, we find that the second condition reduces to the demand that
$$\dot g_t=-TR_{g_t}\omega(\dot d_t).\txt{(*)}$$
I claim that this differential equation admits a unique global solution $I\ra{g}G$, which is the integral curve of a (time-dependent) right-invariant vector field $X$:
\begin{center}
\scriptsize
\begin{tikzcd}[column sep=small,row sep=large]
G\times I \arrow[r,"X"] & T(G\times I)\cong TG\times I\times \R \\
(h,s) \arrow[r,mapsto] & X_s(h):=(-TR_h \omega (\dot d_s),d/dt|_s)\in T_h G\times T_s I\cong T_h G\times \R.
\end{tikzcd}
\normalsize
\end{center}
(Of course, by $d/dt$ I mean the canonical unit vector in $T_sI\cong \R$. Moreover, right-invariance is immediate: $X_s(R_a (h))=X_s(ha)=-TR_{ha} \omega(\dot d_s)=TR_a(-TR_h\omega(\dot d_s))=TR_a X_s(h)$.)
Clearly, solutions of $(*)$ correspond precisely with integral curves of $X$. The assertion follows from the basic fact that right-invariant vector fields on a Lie group are complete.
\end{proof}

\subsection{The Ambrose-Singer Theorem}
In the case of a principal connection, the interpretation of the curvature as infinitesimal holonomy that we already observed in claim \ref{claim:curvinf} takes a particularly simple form. This result is the famous theorem by Ambrose and Singer. \cite{kobfou}

\begin{theorem}[Ambrose-Singer]\index{Ambrose-Singer theorem} Let $P\ra{\pis}M$ be a principal bundle with a principal connection $\omega$. Then its holonomy group at a point $p_0\in P$ is a Lie group and its Lie algebra is isomorphic to the subalgebra of $\g$ spanned by the elements of the form $\Omega^\omega_p(X,Y)$, where $p$ runs over all points that can be connected to $p_0$ by a horizontal curve and $X,Y$ run over all elements of $T_pP$.
\end{theorem}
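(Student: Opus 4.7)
The plan is to construct a distribution $E$ on $P$ whose leaf through $p_0$ is the holonomy subbundle $P(p_0) := \{p\in P\mid p\textnormal{ and }p_0\textnormal{ are joined by a piecewise-}C^1\textnormal{ horizontal curve}\}$ and whose vertical tangent at $p_0$ is $\zeta_{p_0}(\mathfrak{h}')$, where $\mathfrak{h}'\subseteq\g$ denotes the curvature subalgebra in the statement. Completeness of principal connections (Theorem \ref{prop:princompl}) makes the projection $P(p_0)\to M$ surjective, and $G$-equivariance of parallel transport shows that any two points of $P(p_0)$ above the same $m\in M$ differ by a unique element of $\mathrm{Hol}(p_0)\subseteq G$. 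Once $P(p_0)$ is identified as a smooth principal subbundle with vertical tangent $\zeta_{p_0}(\mathfrak{h}')$ at $p_0$, both the Lie-group structure on $\mathrm{Hol}(p_0)$ and the identification $\mathrm{Lie}(\mathrm{Hol}(p_0))=\mathfrak{h}'$ drop out at once.

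The distribution is $E_p := H_pP\oplus\zeta(\{p\}\times\mathfrak{h}')$. Involutivity splits into three bracket cases. For horizontal lifts $\tilde X_\hor,\tilde Y_\hor$, Cartan's second structural equation (Theorem \ref{maurer}) gives $\omega([\tilde X_\hor,\tilde Y_\hor])=-\Omega^\omega(\tilde X_\hor,\tilde Y_\hor)\in\mathfrak{h}'$, so the bracket lies in $\Gamma(E)$. For $Z,W\in\mathfrak{h}'$, the Lie-algebra homomorphism property $[Z_P,W_P]=[Z,W]_P$ keeps the bracket in $\zeta(\mathfrak{h}')$ because $\mathfrak{h}'$ is a subalgebra. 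For a mixed bracket $[\tilde X_\hor,Z_P]$ with $Z\in\mathfrak{h}'$, the flow of $Z_P$ is $\rho_{\exp(tZ)}$, which preserves $HP$ because $\omega$ is $G$-equivariant (infinitesimally, Claim \ref{liead}); hence $\Li_{Z_P}\tilde X_\hor$ takes pointwise values in $HP$ and the bracket remains in $\Gamma(E)$. Frobenius (Theorem \ref{thm:frob}) then foliates $P$; let $\mathcal{L}$ denote the leaf through $p_0$.

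The inclusion $P(p_0)\subseteq\mathcal{L}$ is immediate from horizontality of the defining curves. For $\mathcal{L}\subseteq P(p_0)$, fix $p\in P(p_0)$, horizontal $X,Y\in H_pP$, and a horizontal curve $\gamma$ from $p_0$ to $p$; by Claim \ref{claim:curvinf} the commutator flow $e^{-tY_\hor}\circ e^{-sX_\hor}\circ e^{tY_\hor}\circ e^{sX_\hor}(p)$ stays in $P_{\pis(p)}$ and therefore equals $p\cdot g(s,t)$ for a smooth $g:\R^2\to G$ with $g(s,0)=g(0,t)=e$ and $\partial_s\partial_t g|_{(0,0)}=\Omega^\omega_p(X,Y)$. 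Conjugating the underlying small loop in $M$ by the base projection of $\gamma$ and invoking $G$-equivariance of transport produces loops at $\pis(p_0)$ whose holonomy on $p_0$ is again $g(s,t)$, so $g(s,t)\in\mathrm{Hol}(p_0)$. A standard one-parameter-subgroup argument (using the Taylor expansion $g(\sqrt{u},\sqrt{u})=e+u\Omega^\omega_p(X,Y)+O(u^{3/2})$ together with Yamabe's theorem that path-connected subgroups of Lie groups are Lie subgroups) then shows the connected Lie subgroup of $G$ with Lie algebra $\mathfrak{h}'$ is contained in $\mathrm{Hol}(p_0)$. Its right orbit of $p_0$ stays in $P(p_0)$, yielding $\mathcal{L}\cap P_{\pis(p_0)}\subseteq P(p_0)$ and hence $\mathcal{L}\subseteq P(p_0)$. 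Matching tangent spaces at $p_0$ on $\mathcal{L}=P(p_0)$ identifies $\mathrm{Lie}(\mathrm{Hol}^0(p_0))$ with $\mathfrak{h}'$, and the classical remark that $\mathrm{Hol}(p_0)/\mathrm{Hol}^0(p_0)$ is a countable image of $\pi_1(M)$ upgrades this to the full $\mathrm{Hol}(p_0)$.

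The principal obstacle will be the involutivity verification, particularly the mixed bracket: one must argue that $\Li_{Z_P}\tilde X_\hor$ stays horizontal even though $\tilde X_\hor$ itself is not assumed $G$-equivariant, the key point being $G$-invariance of the horizontal subbundle, which is a reformulation of the defining equivariance of a principal connection. A secondary technical obstacle is the passage from the pointwise infinitesimal information (that each $\Omega^\omega_p(X,Y)$ arises as a mixed second derivative of a holonomy-valued surface $g(s,t)$ whose boundary axes are trivial) to the statement that the connected Lie subgroup with Lie algebra $\mathfrak{h}'$ sits inside $\mathrm{Hol}(p_0)$; this one-parameter-subgroup step is the link that forces $\mathcal{L}$ to coincide with $P(p_0)$ rather than merely contain it.
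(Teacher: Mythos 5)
The paper itself gives no proof of this theorem---it defers to Kobayashi--Nomizu---so your argument has to stand on its own. Your strategy is the classical one (integrate the distribution $H P\oplus\zeta(\mathfrak{h}')$ and identify its leaf through $p_0$ with the holonomy subbundle $P(p_0)$), but there is a genuine gap at the very first step: the distribution $E_p=H_pP\oplus\zeta_p(\mathfrak{h}')$ is \emph{not} involutive on all of $P$, so Frobenius cannot be invoked to foliate $P$. Your horizontal--horizontal bracket case needs $\Omega^\omega_p(X,Y)\in\mathfrak{h}'$ at the point $p$ where the bracket is taken, but $\mathfrak{h}'$ is by definition spanned only by curvature values at points of $P(p_0)$. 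At a general point $p=q\cdot g$ with $q\in P(p_0)$, equivariance of the curvature form gives $\Omega^\omega_{q\cdot g}(T\rho_g X,T\rho_g Y)=\mathrm{Ad}(g^{-1})\,\Omega^\omega_q(X,Y)$, so the curvature values there span $\mathrm{Ad}(g^{-1})\mathfrak{h}'$; since $\mathfrak{h}'$ is only $\mathrm{Ad}(\mathrm{Hol}(p_0))$-invariant and not $\mathrm{Ad}(G)$-invariant in general, these leave $\mathfrak{h}'$ and involutivity fails off $P(p_0)$. A related under-justified step is the conclusion of your third paragraph: from $\mathcal{L}\cap P_{\pis(p_0)}\subseteq P(p_0)$ you cannot infer $\mathcal{L}\subseteq P(p_0)$ without showing that every $E$-tangent path out of $p_0$ factors as a horizontal path followed by vertical motion in $\exp(\mathfrak{h}')$-directions---which is essentially the statement being proved.

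The standard repair is to prove the Reduction Theorem first: show that $\mathrm{Hol}(p_0)$ is a Lie subgroup of $G$ (Yamabe's theorem applied to the arcwise connected subgroup $\mathrm{Hol}^0(p_0)$, together with countability of $\mathrm{Hol}(p_0)/\mathrm{Hol}^0(p_0)$---you do invoke Yamabe, but too late and for a different purpose), and that $P(p_0)\ra{}M$ is a principal $\mathrm{Hol}(p_0)$-subbundle to which $\omega$ restricts. On $P(p_0)$ every point lies in the holonomy bundle by construction, so all three of your bracket cases go through and Frobenius applies legitimately; moreover the restricted curvature automatically takes values in $\mathrm{Lie}(\mathrm{Hol}(p_0))$, which hands you the inclusion $\mathfrak{h}'\subseteq\mathrm{Lie}(\mathrm{Hol}(p_0))$ for free (and, as a byproduct, the fact that $\mathfrak{h}'$ is closed under the bracket, which you assumed). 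The leaf through $p_0$ then both contains $P(p_0)$ (horizontal curves are tangent to $E$) and is contained in it, and the dimension count $\dim P(p_0)=\dim M+\dim\mathrm{Hol}(p_0)=\dim M+\dim\mathfrak{h}'$ finishes the proof. Your infinitesimal-holonomy computation via Claim \ref{claim:curvinf} is correct and is a nice direct way to see $\mathfrak{h}'\subseteq\mathrm{Lie}(\mathrm{Hol}(p_0))$, but it cannot substitute for the reduction step.
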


\section{Connections on Vector Bundles}
Just as in the case of principal bundles, where we singled out certain connections that are compatible with the group action, we have certain preferred connections on vector bundles: those that are compatible with the linear structure on each fibre. In fact, we will see that, in some sense, these are a special case of the principal connections we have already seen. What does this compatibility mean concretely?

Let $V\ra{\pi}M$ be a vector bundle with a connection $P_V$. In section \ref{geocon}, we have defined $TV\ra{P_V}VV\subset TV$ to be a morphism of vector bundles over $V$, since this was the only vector bundle structure that was at hand. Now, however, $TV$ also has a second vector bundle structure, namely $TV\ra{T\pi}TM$. We would like the vertical part of a tangent vector, after identifying $VV\cong V\times_M V$, to respect the linear structure on the fibres of $TV\ra{T\pi}TM$ as well. When this is the case, we call $P_V$ an \emph{affine connection}\index{affine connection}.

Note that, in the light of equation \ref{eq:chri}, the affinity of the connection is equivalent to the linearity over the ring $C^\infty(M)$ of the Christoffel forms in the second argument: $\Gamma^\alpha(X,f\cdot v)=f\cdot\Gamma^\alpha(X,v)$. This means that we can view $X\mapsto\Gamma^\alpha(X,-)$ as an $\mathrm{End}(\Bbbk^k)$-valued $1$-form $\Gamma^\alpha\in\Omega^1(U_\alpha,\mathrm{End}(\Bbbk^k))$.

This means that the equation that defines parallel transport will be a linear one, whose solution, obviously, depends on the initial values in a linear way. This means that the parallel transport along a curve $c$ over time $t$ defines a (fibrewise linear) map $V_{c(0)}\ra{T_c( - ,t)}V_{c(t)}$. We will denote this map by $T_{c,t}$. (Compare this with the $G$-equivariance of the parallel transport on a $G$-principal bundle with an equivariant connection.)

\subsection{Covariant Derivatives}
There is hardly any mathematical device that is more recurrent in the theory of general relativity than the so-called covariant derivative. It arises wherever an ordinary derivative would appear in non-relativistic physics. We will see that this derivative is closely related to the notions we discussed in the previous paragraphs: connections, curvature, horizontal lifts, Christoffel forms and parallel transport. It is defined as follows.

Let us take a vector field $M\ra{X}TM$ and a section $M\ra{\sigma}V$. We can canonically identify $VV_v=T_v V_{\pi(v)}\cong V_{\pi(v)}$ for every $v\in V$. On the level of vector bundles over $M$ this becomes an isomorphism
\begin{center}
\begin{tikzcd}[column sep=large,row sep=large]
V\times_M V \arrow[r,"\Phi"] & VV\\
(u_m,v_m) \arrow[r,mapsto] & \frac d {dt}|_{t=0} (u_m+t v_m).
\end{tikzcd}
\end{center}
This means that the effect of the connection $P_V$ is entirely captured in the map $K:=\pi_2\circ \Phi^{-1}\circ P_V$, known as the \emph{connector}\index{connector} of the connection. For an affine connection, $K$ is a map of vector bundles (i.e. respects the linear structure on the fibres) in the following two senses:
\begin{center}
\begin{tikzcd}[column sep=large,row sep=large]
TV \arrow[r,"K"] \arrow[d,"{\pi_V}"'] & V \arrow[d,"\pi"] & & TV \arrow[r,"K"] \arrow[d,"{T\pi}"'] & V \arrow[d,"\pi"]\\
V \arrow[r,"\pi"'] & M & & TM \arrow[r,"{\pi_M}"'] & M.
\end{tikzcd}
\end{center}
The first linearity follows from the fact that $P_V$ is a vector-bundle projection over $V$; the second is precisely the affinity condition.

The \emph{covariant derivative}\index{covariant derivative} $\nabla_X \sigma (m):=K\circ T\sigma\circ X (m)$ of $\sigma$ along $X$ is then an element of $V_{\pi(\sigma(m))}=V_m$. $\nabla_X \sigma$ is simply a section $\nabla_X\sigma\in\Gamma(V)$. This means that $\nabla_X$ is an operator $\Gamma(V)\ra{\nabla_X}\Gamma(V)$. In particular, we can iterate it, to obtain higher covariant derivatives. Moreover, it has the following properties.

\begin{claim}
Let $X,Y\in \mathcal{X}(M)$, $\sigma,\tau \in\Gamma(V)$, $f\in C^\infty (M)$ and $m\in M$. Then the covariant derivative has the following properties:
\begin{enumerate}
\item $\nabla_X\sigma(m)$ only depends on the value of $X$ at $m$ and on that of $\sigma$ on a curve $]-\epsilon,\epsilon[\ra{c}M$ with $c(0)=m$ and $c'(0)=X(m)$.\\
\item $\nabla_{f\cdot X+Y}\sigma=f\cdot \nabla_X\sigma+\nabla_Y \sigma$\\
\item $\nabla_X (\sigma+\tau)=\nabla_X\sigma+\nabla_X\tau$\\
\item $\nabla_X(f\cdot \sigma)=X(f)\cdot \sigma+f\cdot \nabla_X \sigma$.
\end{enumerate}
\end{claim}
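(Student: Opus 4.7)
The plan is to unwind the definition $\nabla_X\sigma(m)=K(T_m\sigma(X(m)))$ and exploit the two compatible vector bundle structures on $TV$: the one over $V$ via $\pi_V$, and the one over $TM$ via $T\pi$. As noted in the statement preceding the claim, the connector $K=\pi_2\circ\Phi^{-1}\circ P_V$ is a morphism of vector bundles with respect to both, hence fibrewise linear in both senses. All four properties follow by identifying which bundle structure is in play and invoking the appropriate linearity.

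For property 1, I would observe that $T_m\sigma(X(m))=\frac{d}{dt}\big|_{t=0}\sigma(c(t))$ for any smooth curve $c$ with $c(0)=m$ and $c'(0)=X(m)$. Thus $T_m\sigma(X(m))$ depends only on $X(m)$ (by definition of a tangent map) and on the values of $\sigma$ along such a curve; composition with the smooth map $K$ preserves this dependence. For property 2, the linearity of $T_m\sigma\colon T_mM\to T_{\sigma(m)}V$ gives
$$T_m\sigma\bigl(f(m)X(m)+Y(m)\bigr)=f(m)\,T_m\sigma(X(m))+T_m\sigma(Y(m)),$$
the sum being taken in the vector space $T_{\sigma(m)}V=\pi_V^{-1}(\sigma(m))$; applying $K$, which is linear on this $\pi_V$-fibre, then yields the identity.

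For property 3, the key fact is that the fibrewise addition $+\colon V\times_M V\to V$ has a tangent map satisfying $T(+)(u,v)=u+_{T\pi}v$ whenever $u,v\in TV$ lie in the same $T\pi$-fibre — this is the general statement that $T$ applied to a vector-bundle addition is addition in the derived bundle over $TM$. Since $(\sigma+\tau)=+\circ(\sigma,\tau)$, we get
$$T_m(\sigma+\tau)(X(m))=T_m\sigma(X(m))+_{T\pi}T_m\tau(X(m))$$
in the $T\pi$-fibre over $X(m)$, and applying $K$, which is linear on this fibre, gives $\nabla_X\sigma(m)+\nabla_X\tau(m)$. Property 4 is the Leibniz rule, and it is where the two bundle structures interact; I would write $f\sigma=\mu\circ(f,\sigma)$ for the scalar multiplication $\mu\colon\mathbb{R}\times V\to V$ and expand
$$T_m(f\sigma)(X(m))=T\mu\bigl(X(m)(f),\,T_m\sigma(X(m))\bigr).$$
A short computation with curves (or, equivalently, the double-vector-bundle splitting of $T\mu$) decomposes this, in the $\pi_V$-fibre over $f(m)\sigma(m)$, as the sum of the vertical lift $\Phi(f(m)\sigma(m),X(m)(f)\cdot\sigma(m))$ and the $T\pi$-scalar multiple $f(m)\cdot_{T\pi}T_m\sigma(X(m))$. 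Applying $K$ to the first term gives $X(m)(f)\cdot\sigma(m)$ because $K\circ\Phi=\pi_2$ by construction, while applying $K$ to the second term gives $f(m)\cdot\nabla_X\sigma(m)$ by $T\pi$-linearity; summing these yields the Leibniz rule.

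The main obstacle is the careful bookkeeping of which of the two vector bundle structures on $TV$ is being used at each step, since properties 2 and 3 invoke the $\pi_V$- and $T\pi$-linearity of $K$ respectively, while property 4 needs both simultaneously and rests on the non-trivial identity $K\circ\Phi=\pi_2$ (which in turn uses that $P_V$ is the identity on vertical vectors). Once this double-vector-bundle viewpoint is in place, the verifications are mechanical.
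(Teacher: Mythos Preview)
Your proposal is correct. For properties 1--3 you follow essentially the same argument as the paper: property 1 from the definition via a curve, property 2 from the $\pi_V$-fibrewise linearity of $K$, and property 3 from its $T\pi$-fibrewise linearity (you simply spell out the intermediate step $T_m(\sigma+\tau)(X(m))=T_m\sigma(X(m))+_{T\pi}T_m\tau(X(m))$, which the paper leaves implicit).

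For property 4 you take a genuinely different route. The paper chooses a local trivialisation $(U,\phi)$, writes $(f\sigma)\circ c$ in coordinates, applies the ordinary product rule in $\Bbbk^k$, and then reads off the result through $K$. Your argument stays coordinate-free: you decompose $T_m(f\sigma)(X(m))$ in the $\pi_V$-fibre over $f(m)\sigma(m)$ as the vertical lift $\Phi(f(m)\sigma(m),X(f)(m)\sigma(m))$ plus the $T\pi$-scalar multiple $f(m)\cdot_{T\pi}T_m\sigma(X(m))$, and then evaluate $K$ on each summand using $K\circ\Phi=\pi_2$ and $T\pi$-linearity respectively. This is the double-vector-bundle proof of the Leibniz rule; it is cleaner conceptually and makes transparent exactly which linearity of $K$ is used where, at the cost of requiring the reader to be comfortable with the two interacting bundle structures on $TV$. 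The paper's chart computation is more pedestrian but perhaps easier to follow for someone meeting the connector for the first time.
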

\begin{proof}Property \textit{1.} follows directly from the definition $\nabla_X \sigma(m)=K(T_m\sigma(X(m))):=K\circ\frac{d}{dt}|_{t=0}(\sigma\circ c)$. Property \textit{2.} comes from the fact that $K$ is a $\pi_V$-$\pi$-fibrewise linear map, while property \textit{3.} comes from the $T\pi$-$\pi$-fibrewise linearity of $K$ (see diagrams above).

For \textit{4.}, choose a local trivialisation $(U,\phi)$ of $V$ around $m$ and write $\sigma_e=\pi_2\circ\phi\circ\sigma$. In these coordinates,
$$T(f\sigma)(X)=T\phi^{-1}\bigl(X, X(f)\sigma_e+f\,T\sigma_e(X)\bigr).$$
The local expression of the connector is
$$\pi_2\circ\phi\circ K\circ T\phi^{-1}(X,\eta)=\eta+\Gamma^\alpha(X,v),$$
for a tangent vector based at $(m,v)\in U\times\Bbbk^k$. Since the connection is affine, $\Gamma^\alpha(X,fv)=f\Gamma^\alpha(X,v)$. Therefore
\begin{align*}
\pi_2\circ\phi(\nabla_X(f\sigma))
&=X(f)\sigma_e+f\,T\sigma_e(X)+\Gamma^\alpha(X,f\sigma_e)\\
&=X(f)\sigma_e+f\bigl(T\sigma_e(X)+\Gamma^\alpha(X,\sigma_e)\bigr),
\end{align*}
which is the coordinate expression for $\nabla_X(f\sigma)=X(f)\sigma+f\nabla_X\sigma$.
\end{proof}
One can also define a map $\mathcal{X}(M)\times \Gamma(V)\ra{\nabla}\Gamma(V)$ satisfying these properties to be a covariant derivative. From this one can construct a connection\footnote{Corollary \ref{contriv} gives a formula for the Christoffel forms in terms of the covariant derivative. Equation \ref{eq:chri} then gives a formula for what connection $P_V$ should be.}. This is easily (see for example \cite{banrie}) seen to be an equivalent characterisation of an affine connection.
\begin{corollary}
\label{contriv}
Let $(U_\alpha,\phi_\alpha)$ be a local trivialisation for $V$ and write $\Gamma^\alpha$ for the associated Christoffel $1$-form. Then, if we write $(e_i)$ for the local frame corresponding to the trivialisation and $\sigma_e=(\sigma^i)$ for the components of a section $\sigma$ with respect to this frame,
$$(\nabla_X\sigma)_e=T\sigma_e(X)+\Gamma^\alpha(X)\sigma_e.$$
\end{corollary}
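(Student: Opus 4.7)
The plan is to unwind the definition $\nabla_X\sigma(m):=K(T_m\sigma(X(m)))$ inside the local trivialisation $\phi_\alpha$, using the already-derived formula \eqref{eq:chri} for the vertical projection $P_V$ in terms of the Christoffel form.

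First I would compute $T\sigma$ in the trivialisation: writing $\sigma|_{U_\alpha}$ as $m\mapsto \phi_\alpha^{-1}(m,\sigma_e(m))$, its tangent at $m$ sends $X(m)$ to the vector in $T(U_\alpha\times\Bbbk^k)\cong TU_\alpha\times \Bbbk^k\times \Bbbk^k$ with components $(X(m),\sigma_e(m),T\sigma_e(X(m)))$, i.e.\ base point $(m,\sigma_e(m))$ and tangent component $(X(m),T\sigma_e(X(m)))$. Next I apply $P_V$ in the trivialisation: by equation \eqref{eq:chri}, $T\phi_\alpha\circ P_V\circ T\phi_\alpha^{-1}(X(m),\eta)=\eta+\Gamma^\alpha(X(m),\pi_{\Bbbk^k}(\eta))$, so the result is the vertical vector based at $\sigma_e(m)$ whose $\Bbbk^k$-component equals $T\sigma_e(X(m))+\Gamma^\alpha(X(m),\sigma_e(m))$.

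Now I would invoke affinity: the hypothesis that $P_V$ is fibrewise linear with respect to $T\pi$ translates, via \eqref{eq:chri}, into $C^\infty(M)$-linearity of $\Gamma^\alpha(X,\cdot)$ in the second slot, so $\Gamma^\alpha(X,v)=\Gamma^\alpha(X)\cdot v$ for a $\GL(\Bbbk^k)$-valued $1$-form $\Gamma^\alpha$, as remarked just before the corollary. Finally, to finish, I need to chase $K=\pi_2\circ\Phi^{-1}\circ P_V$ through the trivialisation. Under $T\phi_\alpha$ the isomorphism $\Phi\colon V\times_M V\ra{\cong}VV$, $(u_m,v_m)\mapsto \frac{d}{dt}\big|_{t=0}(u_m+tv_m)$, becomes the map $((m,u),(m,v))\mapsto (m,u,0,v)$; hence $\Phi^{-1}$ followed by $\pi_2$ simply reads off the last $\Bbbk^k$-component of a vertical vector. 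Applying this to the expression from the previous step yields, in the frame $e=(e_i)$ associated to $\phi_\alpha$,
\[
(\nabla_X\sigma)_e(m)=T\sigma_e(X(m))+\Gamma^\alpha(X(m))\sigma_e(m),
\]
which is exactly the stated formula (and the first displayed identity is the same statement written using $\phi_\alpha^{-1}$ instead of the frame components).

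The only real obstacle is bookkeeping: one has to keep straight the two distinct vector bundle structures on $TV$ (over $V$ via $\pi_V$ and over $TM$ via $T\pi$) and verify that under $T\phi_\alpha$ the canonical identification $\Phi$ really takes the simple coordinate form above. Once that is set up, everything else is a direct substitution into \eqref{eq:chri}, and no further non-trivial computation is required.
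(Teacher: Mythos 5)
Your proof is correct, but it takes a different route from the paper's. You compute $K(T\sigma(X))$ directly for a general section: push $T\sigma$ through $T\phi_\alpha$, apply equation \ref{eq:chri} to get the vertical component $T\sigma_e(X)+\Gamma^\alpha(X,\sigma_e)$, use affinity to replace $\Gamma^\alpha(X,\sigma_e)$ by $\Gamma^\alpha(X)\sigma_e$, and then observe that $\pi_2\circ\Phi^{-1}$ just reads off the fibre component of a vertical vector in the trivialisation. The paper instead first evaluates $\nabla_X e_i$ on the frame elements (where $T e_i(X)=T\psi_\alpha^{-1}(X,0_{\hat e_i})$, so the same substitution into \ref{eq:chri} gives $\nabla_X e_i=\sum_j\Gamma^\alpha(X)^j_i e_j$) and then extends to $\sigma=\sum_i\sigma^i e_i$ via the Leibniz rule established in the preceding claim. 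The two arguments rest on exactly the same ingredients (\ref{eq:chri}, the definition of the connector $K$, and the coordinate form of $\Phi$), so the difference is organisational: your version is a single self-contained computation that does not need the derivation property of $\nabla$, while the paper's reuses the already-proven properties of the covariant derivative and isolates the Christoffel symbols as the coefficients $\nabla_X e_i$, which is the form most directly comparable with the physics literature. Your identification of $\Phi$ as $((m,u),(m,v))\mapsto(m,u;0,v)$ in the trivialisation is the one bookkeeping point that genuinely needs checking, and you have stated it correctly.
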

\begin{proof}We first deal with a simple case. Denote by $\Gamma^\alpha(X)_i^j$ the matrix coefficients of $\Gamma^\alpha(X)$ with respect to the frame $(e_i)$ and write $\hat e_i$ for the $i$-th basis vector of $\Bbbk^k$. Furthermore write $T\Bbbk^k\cong \Bbbk^k\times \Bbbk^k\ra{\pi_{\bullet\times\Bbbk^k}}\Bbbk^k$ for the projection $(x,y)\mapsto y$, where the second $\Bbbk^k$ denotes the tangent space of the first. Then\\
\begin{align*}
(\nabla_X e_i)_e&=\pi_{\bullet\times\Bbbk^k}\circ T\phi_\alpha\circ P_V(T e_i (X))\\
&=\pi_{\bullet\times\Bbbk^k}\circ T\phi_\alpha\circ P_V(T\phi_\alpha ^{-1}(X,0_{\hat{e_i}}))\\
&=\Gamma^\alpha(X)\hat{e_i}.
\end{align*}
Hence $\nabla_X e_i=\sum_j \Gamma^\alpha(X)^j_i e_j$.

This implies that for a general section $\sigma\in\Gamma(V)$,
\begin{align*}\nabla_X \sigma&=\nabla_X \sum_{i=1}^k \sigma^i e_i\\
&=\sum_{i=1}^k T\sigma^i (X)e_i+\sum_{i=1}^k \sigma^i \nabla_X e_i\\
&=\sum_{i=1}^k T\sigma^i (X)e_i+\sum_{i,j=1}^k \sigma^i \Gamma^\alpha(X)^j_i e_j\\
&=\sum_{i=1}^k (T\sigma^i (X)+ (\Gamma^\alpha(X)\sigma_e)^i)e_i.
\end{align*}
\end{proof}
This result shows that $\nabla_X\sigma(m)$ depends only on the value of $X$ at $m$ and that of $\sigma$ on an arbitrarily small path $]-\epsilon,\epsilon[\ra{c}M$, with $c(0)=m$ and $c'(0)=X(m)$.
\begin{claim}\label{cl:covdivpartrans}Let $]-\epsilon,\epsilon[\ra{c} M$ be a $C^1$-curve with $c(0)=a$ and $c'(0)=X(a)$. Then, in the canonical identification $T_{\sigma(a)}V_a\cong V_a$,  $\nabla_X \sigma (a)=\frac{d}{dt}|_{t=0} T_{c,t}^{-1}(\sigma(c(t)))$.\end{claim}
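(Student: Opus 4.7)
The plan is to reduce the identity to a computation in a local trivialization, combining the ODE description of parallel transport from Theorem~\ref{partrans} with the coordinate expression of the covariant derivative from Corollary~\ref{contriv}.

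Pick a local trivialization $(U_\alpha,\psi_\alpha)$ with $a\in U_\alpha$. Affinity of $P_V$ makes the Christoffel form $\Gamma^\alpha(Y,\cdot)$ linear in its second argument for every tangent vector $Y$, so the defining ODE
$$\tfrac{d}{dt}\Phi_c(v,t)=-\Gamma^\alpha(c'(t),\Phi_c(v,t))$$
for parallel transport in the trivialization is a linear time-dependent ODE on $\Bbbk^k$. Let $\Psi\colon\,]-\epsilon,\epsilon[\,\to\GL(\Bbbk^k)$ be its fundamental matrix solution, so $\Psi(0)=\mathrm{id}$ and $\dot\Psi(t)=-\Gamma^\alpha(c'(t))\circ\Psi(t)$, where I write $\Gamma^\alpha(Y)$ for the endomorphism $v\mapsto\Gamma^\alpha(Y,v)$. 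Then $T_{c,t}$ is given in coordinates by multiplication by $\Psi(t)$, and by reparametrization invariance (property~3 of Theorem~\ref{partrans}) --- or equivalently by uniqueness for the linear ODE --- the coordinate expression of $T_{c,-t}$ is $\Psi(t)^{-1}$.

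Set $\sigma_e(t):=\pi_{\Bbbk^k}\circ\psi_\alpha\circ\sigma\circ c(t)\in\Bbbk^k$, so that $T_{c,-t}(\sigma(c(t)))$ has coordinate representative $\Psi(t)^{-1}\sigma_e(t)$. Differentiating at $t=0$, using $\Psi(0)=\mathrm{id}$, $\dot\Psi(0)=-\Gamma^\alpha(X(a))$ (since $c'(0)=X(a)$), the chain rule $\dot\sigma_e(0)=T\sigma_e(X(a))$, and the standard formula $\tfrac{d}{dt}\Psi^{-1}=-\Psi^{-1}\dot\Psi\Psi^{-1}$, one obtains
$$\tfrac{d}{dt}\Big|_{t=0}\bigl(\Psi(t)^{-1}\sigma_e(t)\bigr)=\Gamma^\alpha(X(a))\sigma_e(a)+T\sigma_e(X(a)).$$
By Corollary~\ref{contriv} this is exactly the coordinate expression of $(\nabla_X\sigma)(a)$, so transporting back through $\psi_\alpha^{-1}$ yields the claimed identity.

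No genuine obstacle arises --- the argument is essentially routine bookkeeping. Two minor points warrant explicit care: first, that $T_{c,-t}$ really is the fibrewise inverse of $T_{c,t}$ (linearity of the parallel transport plus uniqueness of solutions to the linear ODE); and second, that the derivative on the right-hand side, which a priori lies in $T_{\sigma(a)}V_a$, corresponds to an element of $V_a$ precisely via the canonical identification $\Phi$ used to define the connector $K$, so that the coordinate-level equality really is an equality of elements of $V_a$ in the sense intended by the statement.
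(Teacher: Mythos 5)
Your proof is correct, but it takes a genuinely different route from the one in the paper. The paper argues coordinate-freely: it inserts $T_{c,t}\circ T_{c,-t}$ into $K\,\frac{d}{dt}|_{t=0}\,\sigma(c(t))$, applies a Leibniz rule to split the derivative into two terms, and then observes that the connector $K$ annihilates the first term because $\frac{d}{dt}|_{t=0}T_{c,t}(\sigma(a))$ is horizontal by the very definition of parallel transport, while on the second (vertical) term $K$ acts as the canonical identification $T_{\sigma(a)}V_a\cong V_a$. You instead work entirely in a local trivialisation, exploit that affinity makes the parallel-transport ODE linear so that $T_{c,t}$ is given by a fundamental matrix $\Psi(t)$, differentiate $\Psi(t)^{-1}\sigma_e(t)$, and match the result against the Christoffel-form expression of Corollary~\ref{contriv}. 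The paper's argument is shorter and makes the conceptual mechanism transparent (the horizontal part of the derivative of a section is exactly what parallel transport subtracts off), though its product-rule step for the bilinear evaluation of a varying linear map on a varying vector is left somewhat informal; your computation is more elementary and routine, makes the invertibility and smoothness of $t\mapsto T_{c,-t}$ explicit via the linear ODE, but depends on Corollary~\ref{contriv} and on checking that the coordinate-level identity transports back through $\psi_\alpha^{-1}$ and the identification $\Phi$ correctly, both of which you do address. Either proof is acceptable.
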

\begin{proof}This is a simple computation. \begin{align*}\nabla_X\sigma(a)&=K ( T_a \sigma (X(a)))\\
&=K\frac d{dt}|_{t=0}\sigma(c(t))\\
&=K\frac d{dt}|_{t=0}T_{c,t}\circ T_{c,t}^{-1}(\sigma(c(t)))\\
&=K\frac d{dt}|_{t=0} T_{c,t} (\sigma(c(0)))+K T_{c,0} \frac d{dt}|_{t=0} T_{c,t}^{-1}(\sigma(c(t)))\\
&=K\frac d{dt}|_{t=0} T_{c,t} (\sigma(a))+K \frac d{dt}|_{t=0} T_{c,t}^{-1}(\sigma(c(t))).\\
\end{align*}
Now, the first term vanishes, since $\frac d{dt}|_{t=0} T_{c,t} (\sigma(a))$ is horizontal, by definition of parallel transport. Moreover, since $T_{c,t}^{-1}(\sigma(c(t)))\in V_a$ for all $t$, $\frac d{dt}|_{t=0} T_{c,t}^{-1}(\sigma(c(t)))$ is a vertical vector. The conclusion follows.
\end{proof}
Finally, I would like to introduce one more notation that is ubiquitous in physics literature. Let $]-\epsilon,\epsilon[\ra{c}M$ denote a smooth curve. Then, we can consider the pullback connection $c^*P_V$ if $P_V$ denotes a given affine connection on $V\ra{\pi}M$. It is conventional to write $D/dt$ for its covariant derivative. Explicitly, $Ds/dt=K\circ s'$, if $]-\epsilon,\epsilon[\ra{s}V$ is a smooth map such that $\pi\circ s=c$ (i.e. $s$ is a section of the pullback bundle $c^*\pi$).

\subsection{Curvature of Affine Connections}
The definition of curvature of corollary \ref{curveasy} takes a particularly easy form if we are dealing with an affine connection. Pointwise, $R$ corresponds to maps $$T_{\pi(v)}M\wedge T_{\pi(v)} M \ra{R_v} VV_v=T_v V_{\pi(v)}\cong V_{\pi(v)},$$ depending smoothly on the basepoint $v\in V$. In terms of the connector, this becomes $R_v(X,Y)=-K([X_\hor,Y_\hor])$. In this form, the curvature is closely related to the second covariant derivative.

\begin{claim}
Suppose $V$ is a vector bundle that is equipped with an affine connection. Let $m\in M$, $\sigma\in \Gamma(V)$ and $X,Y\in \mathcal{X}(M)$. Then the following formula holds for the curvature:
\begin{equation}\label{eq:curvcov}R_{\sigma(m)}(X,Y)=\nabla_X\nabla_Y \sigma(m) -\nabla_Y\nabla_X \sigma(m)-\nabla_{[X,Y]}\sigma(m).\end{equation}
\end{claim}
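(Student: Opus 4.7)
My plan is to verify the identity in a local vector-bundle trivialisation $\psi_\alpha : V|_{U_\alpha} \to U_\alpha \times \Bbbk^k$ around $m$ and to compare the two sides term by term using the Christoffel form $\Gamma := \Gamma^\alpha$. Since the connection is assumed affine, $\Gamma$ is linear in the fibre argument and can be regarded as an $\mathrm{End}(\Bbbk^k)$-valued $1$-form on $U_\alpha$, so that corollary \ref{contriv} reads $(\nabla_Z \sigma)_e = Z(\sigma_e) + \Gamma(Z)\sigma_e$ for every $Z \in \mathcal{X}(M)$.

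For the left-hand side I would simply iterate this coordinate formula. Expanding $(\nabla_X\nabla_Y\sigma)_e$ produces five terms (a pure second derivative of $\sigma_e$, two mixed derivative–$\Gamma$ terms, a derivative of $\Gamma$ applied to $\sigma_e$, and a product $\Gamma(X)\Gamma(Y)\sigma_e$). Antisymmetrising in $X,Y$ the pure second derivative becomes $[X,Y](\sigma_e)$ and the mixed derivative–$\Gamma$ terms cancel, and after subtracting $(\nabla_{[X,Y]}\sigma)_e = [X,Y](\sigma_e) + \Gamma([X,Y])\sigma_e$ I expect to be left with
$$(\nabla_X\nabla_Y\sigma - \nabla_Y\nabla_X\sigma - \nabla_{[X,Y]}\sigma)_e = \bigl(d\Gamma(X,Y) + [\Gamma(X),\Gamma(Y)]\bigr)\sigma_e,$$
which is the affine analogue of Cartan's second structural equation (theorem \ref{maurer}).

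For the right-hand side I will use corollary \ref{curveasy}, which presents $R$ at $\sigma(m) \in V$ as the vertical vector $[X_\hor,Y_\hor] - [X,Y]_\hor$, together with the connector $K$ to reidentify it with an element of $V_m$. Equation (\ref{eq:chri}) tells me that in the trivialisation the horizontal lift of a vector field $Z$ at $(u,w)$ is $(Z|_u, -\Gamma(Z|_u)w)$, so computing $[X_\hor,Y_\hor]$ on $U_\alpha \times \Bbbk^k$ is a direct application of the coordinate formula for the Lie bracket. The base component reduces to $[X,Y]|_u$ and cancels with that of $[X,Y]_\hor$, while the fibre component should produce exactly $\pm(d\Gamma(X,Y) + [\Gamma(X),\Gamma(Y)])w$; evaluating at $(m, \sigma_e(m))$ and applying $K$ then matches the expression found for the left-hand side.

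The only genuinely delicate step is the bookkeeping in the Lie bracket $[X_\hor,Y_\hor]$ on $U_\alpha \times \Bbbk^k$. The fibre component $-\Gamma(Z|_u)w$ depends on both the base variable $u$ (through $\Gamma$ and through $Z$) and linearly on the fibre variable $w$, so one must distinguish two contributions: differentiation along the base, which produces the terms $-X(\Gamma(Y))w + Y(\Gamma(X))w$ that reassemble (together with the $\Gamma([X,Y])w$ coming from $[X,Y]_\hor$) into $d\Gamma(X,Y)w$, and differentiation in the fibre direction, which, because $-\Gamma(Z)w$ is linear in $w$, produces precisely the matrix commutator $[\Gamma(X),\Gamma(Y)]w$. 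Once these two contributions are correctly separated and the sign conventions of corollary \ref{curveasy} and of the connector are kept straight, the identification of both sides is immediate.
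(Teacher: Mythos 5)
Your proposal is correct and follows essentially the same route as the paper's proof: both expand the iterated covariant derivatives via the local formula $(\nabla_Z\sigma)_e = Z(\sigma_e)+\Gamma^\alpha(Z)\sigma_e$ to obtain $\bigl(d\Gamma^\alpha(X,Y)+\Gamma^\alpha(X)\Gamma^\alpha(Y)-\Gamma^\alpha(Y)\Gamma^\alpha(X)\bigr)\sigma_e$, and both compute $R$ through the coordinate expression of $[X_\hor,Y_\hor]-[X,Y]_\hor$ with the connector, separating the base- and fibre-direction contributions exactly as you describe.
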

\begin{proof}We prove the formula in a local frame, using the formula from corollary \ref{contriv}. With respect to a local frame $(e_i)$ with Christoffel $1$-form $\Gamma^\alpha$, we have
\begin{align*}
(\nabla_X\nabla_Y\sigma)_e&=\Li_X(\Li_Y(\sigma_e))+\Li_X(\Gamma^\alpha(Y))\sigma_e+\Gamma^\alpha(Y)\Li_X(\sigma_e)\\
&+\Gamma^\alpha(X)\Li_Y(\sigma_e)+\Gamma^\alpha(X)\Gamma^\alpha(Y)\sigma_e.
\end{align*}
and since $\Li_X(\Li_Y(\sigma_e))-\Li_Y(\Li_X(\sigma_e))=\Li_{[X,Y]}(\sigma_e)$, we find that the right-hand side of equation \ref{eq:curvcov} equals
\begin{align*}&\Big(\Li_X(\Gamma^\alpha(Y))-\Li_Y(\Gamma^\alpha(X))+\Gamma^\alpha(X)\Gamma^\alpha(Y)-\Gamma^\alpha(Y)\Gamma^\alpha(X)-\Gamma^\alpha([X,Y])\Big)\sigma_e.
\end{align*}
On the other hand, in local coordinates, for $v\in \Bbbk^k$,
\begin{align*}(-R_v(X,Y))_e&=\pi_2\circ\psi_\alpha\bigl(K([X_\hor,Y_\hor](v))\bigr)\\
&=\pi_{\bullet\times\Bbbk^k}( T\psi_\alpha([X_\hor,Y_\hor](v)-[X,Y]_\hor(v)))\\
&=\pi_{\bullet\times\Bbbk^k}( [T\psi_\alpha(X_\hor),T\psi_\alpha(Y_\hor)](v)-T\psi_\alpha([X,Y]_\hor(v)))\\
&=\pi_{\bullet\times\Bbbk^k}(T_v(T\psi_\alpha(Y_\hor ))(T\psi_\alpha(X_\hor(v) ))-\\
&(T_v(T\psi_\alpha(X_\hor ))(T\psi_\alpha(Y_\hor (v))) -T\psi_\alpha([X,Y]_\hor(v))),\txt{(*)}
\end{align*}
while
\begin{align*} T\psi_\alpha(X_\hor )(U,W)&= (TX(U)\; ,\;-U(\Gamma^\alpha(X))v-\Gamma^\alpha(X)W),
\end{align*}
which implies that
\begin{align*}
T_v(T\psi_\alpha(Y_\hor ))(T\psi_\alpha(X_\hor ))&=(TY(X)\; , \; -X(\Gamma^\alpha(Y))v+\Gamma^\alpha(Y)\Gamma^\alpha(X)v).
\end{align*}
Moreover, $$\pi_{\bullet\times\Bbbk^k}(T\psi_\alpha[X,Y]_\hor(v))=-\Gamma^\alpha([X,Y])v.$$
We therefore find from $(*)$ that
\begin{align*}(R_v(X,Y))_e&=(\Li_X(\Gamma^\alpha(Y))-\Li_Y(\Gamma^\alpha(X)))v+(\Gamma^\alpha(X)\Gamma^\alpha(Y)-\Gamma^\alpha(Y)\Gamma^\alpha(X)\\
&-\Gamma^\alpha([X,Y]))v.
\end{align*}
The equality follows.
\end{proof}
Using this formula, we can interpret curvature as a measure of the extent to which $X\mapsto \nabla_X$ fails to be a Lie algebra homomorphism\footnote{Here, we understand this to be a map from the Lie algebra of vector fields on $M$ to the commutator Lie algebra of the associative algebra of operators $\Gamma(V)\ra{}\Gamma(V)$ under composition.}.
Moreover, it shows us that $R_v$ depends linearly on $v$, so we can equivalently view $R$ as a morphism of vector bundles over $M$:
$$TM\wedge TM\ra{R}\mathrm{End}(V).$$
This fact can equivalently be expressed as $R\in \Omega^2(M,\mathrm{End}(V)\ra{}M)$. 

In the preceding proof, we have found another useful expression for the curvature. If we agree to write $d\Gamma^\alpha(X,Y):=\Li_X(\Gamma^\alpha(Y))-\Li_Y(\Gamma^\alpha(X))-\Gamma^\alpha([X,Y])$ and $\Gamma^\alpha\wedge\Gamma^\alpha(X,Y):=\Gamma^\alpha(X)\Gamma^\alpha(Y)-\Gamma^\alpha(Y)\Gamma^\alpha(X)$, then the following holds.
\begin{corollary}
In a local frame $(e_i)$, the curvature can be expressed as follows:
$$R(X,Y)_e=\Big(d\Gamma^\alpha (X,Y) +\Gamma^\alpha\wedge\Gamma^\alpha(X,Y)\Big).$$
\end{corollary}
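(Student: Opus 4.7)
The plan is to extract this corollary essentially for free from the computation carried out in the proof of the preceding claim. There we computed the local-frame expression of $R_v(X,Y)$ by writing out $K([X_\hor,Y_\hor]-[X,Y]_\hor)$ through a trivialisation $\psi_\alpha$, and arrived at
\[
(R(X,Y))_e = \bigl(\Li_X(\Gamma^\alpha(Y)) - \Li_Y(\Gamma^\alpha(X)) - \Gamma^\alpha([X,Y])\bigr) + \bigl(\Gamma^\alpha(X)\Gamma^\alpha(Y) - \Gamma^\alpha(Y)\Gamma^\alpha(X)\bigr),
\]
acting on components. So the only content of the corollary is a rebracketing of these terms under the newly introduced shorthand $d\Gamma^\alpha$ and $\Gamma^\alpha\wedge\Gamma^\alpha$.

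Concretely, I would first recall the definitions $d\Gamma^\alpha(X,Y) := \Li_X(\Gamma^\alpha(Y)) - \Li_Y(\Gamma^\alpha(X)) - \Gamma^\alpha([X,Y])$ and $\Gamma^\alpha\wedge\Gamma^\alpha(X,Y) := \Gamma^\alpha(X)\Gamma^\alpha(Y) - \Gamma^\alpha(Y)\Gamma^\alpha(X)$, stressing that the first formula is the standard Cartan formula for the exterior derivative of a $\mathrm{Mat}(\Bbbk^k)$-valued $1$-form and the second is the natural graded commutator wedge induced by matrix multiplication on $\mathrm{Mat}(\Bbbk^k)$. Then I would simply point to the displayed identity above from the previous proof and observe that its first parenthesised summand is, by definition, $d\Gamma^\alpha(X,Y)$ and its second is $\Gamma^\alpha\wedge\Gamma^\alpha(X,Y)$.

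There is no real obstacle; the only mild subtlety is notational, namely to verify that our $\Gamma^\alpha\wedge\Gamma^\alpha$ is indeed the wedge product of the $\mathrm{Mat}(\Bbbk^k)$-valued $1$-form $\Gamma^\alpha$ with itself when the wedge is formed using matrix multiplication (rather than an ordinary commutative product, which would annihilate $\Gamma^\alpha\wedge\Gamma^\alpha$). Once this convention is fixed, the corollary follows by direct inspection, and it may be worth remarking that this is the vector-bundle analogue of Cartan's second structural equation $\Omega^\omega = d\omega + \tfrac12[\omega,\omega]$ proved earlier in Theorem \ref{maurer}, with the Lie bracket on $\g$ replaced by the commutator of endomorphisms of $\Bbbk^k$.
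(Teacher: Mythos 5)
Your proposal is correct and matches the paper exactly: the paper likewise introduces the shorthand $d\Gamma^\alpha$ and $\Gamma^\alpha\wedge\Gamma^\alpha$ immediately after the preceding claim's proof and presents the corollary as a direct restatement of the local-frame expression obtained there, with no further argument. Your remark on the matrix-commutator convention for $\Gamma^\alpha\wedge\Gamma^\alpha$ and the analogy with Cartan's second structural equation are both consistent with the paper's own commentary following the corollary.
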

This result closely resembles that of theorem \ref{maurer} for principal connections. Remember, though, that this result only holds locally, as $\Gamma^\alpha$ is only defined in a local frame. Nevertheless, the resemblance is not a coincidence. As we will see, it is a consequence of the fact that an affine connection is induced from a principal connection on the frame bundle.

\subsection{Equivalent Notions of a Connection}
In section \ref{vbequivalence}, we have seen that there is an equivalence of categories between $\GL(\Bbbk^k)$-principal bundles and $k$-dimensional vector bundles over a manifold $M$, in the form of the associated vector bundle construction and the frame bundle construction. The question is whether one can, given a suitable connection on a vector bundle, construct one on the frame bundle and vice versa. The answer turns out to be affirmative.
\begin{claim}\label{afconind}
Suppose $V\ra{\pi}M$ is a $k$-dimensional vector bundle. Then, there is a canonical one-to-one correspondence between affine connections on $V$ and principal connections on the frame bundle $F^\lambda(V)$, where one direction of the correspondence is precisely the induced connection (if we use the natural identification $F^\lambda(V)[\lambda]\cong V$).
\end{claim}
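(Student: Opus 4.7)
The plan is to establish the bijection via parallel transport, exploiting the identification of $F^\lambda(V)_m$ with the space of linear isomorphisms $\Bbbk^k \to V_m$ together with the fact that both kinds of connections are determined by their parallel transport.

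First I would treat the forward direction. Given a principal connection $\omega$ on $P := F^\lambda(V)$, section \ref{sec:indcon} already produces an induced connection on $V \cong P[\lambda]$ whose parallel transport satisfies $T^V_c(q(p,v), t) = q(T^P_c(p,t), v)$. Under the diffeomorphism $\{p\}\times \Bbbk^k \to V_{\pis(p)}$ from claim \ref{assbund}, this reads $T^V_{c,t}(p(v)) = T^P_c(p,t)(v)$. Since $T^P_c(p,t)$ is itself a linear isomorphism $\Bbbk^k \to V_{c(t)}$, the map $T^V_{c,t}\colon V_{c(0)}\to V_{c(t)}$ is linear. Linearity of parallel transport is equivalent to the Christoffel forms being $C^\infty(M)$-linear in the second argument (cf.\ equation \ref{eq:chri}), i.e., to affineness of the induced connection.

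For the backward direction, let $\nabla$ be an affine connection on $V$ with (globally defined, fibrewise linear) parallel transport $T^V_{c,t}$. I would define a candidate principal parallel transport on $P$ by post-composition, $T^P_c(p,t) := T^V_{c,t} \circ p$, where $p\in P_{c(0)}$ is viewed as an isomorphism $\Bbbk^k \to V_{c(0)}$. Equivariance is then automatic, $T^P_c(p\cdot g, t) = T^V_{c,t}\circ p\circ g = T^P_c(p,t)\cdot g$, and smoothness in its arguments follows, in a local vector bundle trivialization, from smoothness of the solution of the linear ODE defining parallel transport on $V$. Differentiating at $t = 0$ yields a smooth $G$-invariant horizontal distribution $HP \subset TP$, transverse to the fibres by the inverse function theorem, which is the desired principal connection. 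The formula $T^V_{c,t}(p(v)) = T^P_c(p,t)(v)$ from the forward direction is precisely the defining relation $T^P_c(p,t) = T^V_{c,t} \circ p$ from the backward direction, so the two compositions act as the identity on parallel transports. Since a connection is determined by its horizontal lifts, the constructions are mutually inverse.

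The main obstacle will be verifying smoothness and naturality cleanly, and this is best handled in local frames. A vector bundle atlas $(U_\alpha,\psi_\alpha)$ on $V$ induces a principal bundle atlas on $P$ with the same transition cocycle $g_{\alpha\beta}$; the Christoffel $\mathfrak{gl}(\Bbbk^k)$-valued 1-forms $\Gamma^\alpha$ of $\nabla$ pull back, via the local sections $s_\alpha$ of $P$ determined by the trivializations, to local gauge potentials $s_\alpha^*\omega$ of the principal connection. Both sets of local data take values in the same space $\Omega^1(U_\alpha,\mathfrak{gl}(\Bbbk^k))$ and transform under change of chart by the same gauge transformation formula, so setting $s_\alpha^*\omega := \Gamma^\alpha$ produces a well-defined global $\g$-valued connection 1-form on $P$ and confirms that the correspondence is canonical.
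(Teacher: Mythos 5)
Your argument is correct in outline, but it takes a genuinely different route from the paper. The paper's proof never integrates anything: it sends an affine connection $P_V$ to the principal connection $F^{T\lambda}(P_V):=P_V\circ -$, i.e.\ it post-composes a tangent vector of the frame bundle, viewed as a linear map into $TV$, with the vertical projection, checks equivariance pointwise from affineness of $P_V$, recovers $P_V$ as the induced connection on $F^\lambda(V)[\lambda]\cong V$ to get injectivity, and gets surjectivity from the freeness of the $T\GL(\Bbbk^k)$-action on the relevant components. Your version replaces this pointwise algebra with parallel transport ($T^P_c(p,t):=T^V_{c,t}\circ p$) and a local gauge-potential identification $s_\alpha^*\omega=\Gamma^\alpha$. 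What your route buys is transparency: equivariance of the principal transport is one line, and the matching of Christoffel forms with local connection $1$-forms makes the correspondence very concrete. What it costs is two extra ingredients the paper's proof avoids. First, you must recover a connection from its parallel transport (the paper states this fact only with a citation); concretely you need that $X_m\mapsto \frac{d}{dt}\big|_{t=0}T^V_{c,t}\circ p$ is well defined and linear in $X_m$, which follows from corollary \ref{contriv} in a trivialisation, but the distribution you get is complementary to the vertical simply because $T_p\pis$ maps it isomorphically onto $T_mM$ — the inverse function theorem is not the right tool there. Second, your phrase ``globally defined'' parallel transport risks circularity, since the paper derives completeness of affine connections \emph{from} claim \ref{afconind}; either invoke global existence for linear ODEs directly, or note that only the local transport of theorem \ref{partrans} is needed to differentiate at $t=0$. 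Finally, the claim that $\Gamma^\alpha$ and $s_\alpha^*\omega$ obey the same transformation law under change of chart is true but is exactly the computation that carries the content of the statement, so it should be written out rather than asserted.
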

\begin{proof}Suppose we are given an affine connection $P_V$ on $V$. We take $F^{T\lambda}(P_V):=P_V\circ -$ as a connection on $F^{\lambda}(V)$. (This notation is of course suggestive, as it resembles that of the frame bundle functor $F^{T\lambda}$ for the effective action $T\lambda$. However, $P_V$ is not invertible on the fibre, so technically $F^{T\lambda}$ is not defined on it.) Clearly, this yields an equivariant connection precisely since $P_V$ is affine. Then, if we take the induced connection on the associated bundle, we obtain
\begin{align*}(F^{T\lambda}(P_V))[\lambda]([(f,w),(u,v)])&=[F^{T\lambda}(P_V)(f,w),(u,v)]\\
&=[P_V\circ (f,w),(u,v)].
\end{align*}
We conclude that $(F^{T\lambda}(P_V))[\lambda]\circ Tq((f,w),-)=Tq(P_V\circ(f,w), -)$, i.e. $P_V$ and $(F^{T\lambda}(P_V))[\lambda]$ are related by the derivative of the natural isomorphism
\begin{center}
\begin{tikzcd}[column sep=large]
V \arrow[r] & F^\lambda(V)[\lambda].
\end{tikzcd}
\end{center}
(Compare this with equation $(*)$ on page \pageref{eq:asfrid}.) We conclude that $F^{T\lambda}$ is injective.

Conversely, suppose that $P_V'$ is an equivariant connection on $F^\lambda(V)$. Then, by $\GL(\Bbbk^k)$-equivariance of $P_V'$, the induced connection on the associated vector bundle is affine. Suppose that it corresponds to a connection $P_V$ on $V$ through the identification $F^\lambda(V)[\lambda]\cong V$. This means that
$$Tq(P_V'(f,w),-)=P_V'[\lambda]\circ Tq((f,w),-)\exeq Tq(P_V\circ (f,w),-).$$
This implies that $(P_V'(f,w)\cdot g,g^{-1}\cdot -)=(P_V\circ (f,w),-)$, for some $g\in T\GL(\Bbbk^k)$. Since this equality holds for every second component, $g$ acts trivially on $T\Bbbk^k$. By faithfulness of the tangent representation $T\lambda$, $g=(\mathrm{id}_{\Bbbk^k},0)$. Therefore $P_V'(f,w)=P_V\circ (f,w)=F^{T\lambda}(P_V)(f,w)$, so $F^{T\lambda}$ is surjective.
\end{proof}
\begin{remark}
Perhaps this result would be better placed in section \ref{sec:indcon}, in a more general form. Certainly, if one were to formulate an appropriate notion of a `$(G,\lambda)$-connection' in a fibre bundle with structure group $(G,\lambda)$, one would expect that the same construction would work to establish a one-to-one correspondence between such connections on the $(G,\lambda)$-fibre bundle and principal connections on the corresponding generalised frame bundle. (I would opt to define such connections as ones whose vertical projections are maps of $(TG,T\lambda)$-fibre bundles over $TM$). One could even go a step further and define a category of $(G,\lambda)$-fibre bundles with $(G,\lambda)$-connections, in which a natural definition of arrows would be maps of $(G,\lambda)$-fibre bundles such that the two connections involved would be related by pullback along such a fibre bundle, and try to extend the equivalence of categories of theorem \ref{fbcon} to one between such categories. However, such a result would probably have few applications and would be established purely for aesthetic purposes. Therefore, it is omitted. \end{remark}

We see that we can interpret the theory of vector bundles with affine connections as a special case of that of principal bundles with equivariant connections. In particular, we have the following.
\begin{corollary}
Affine connections are complete.
\end{corollary}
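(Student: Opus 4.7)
The plan is to reduce completeness of an affine connection on a vector bundle $V\ra{\pi}M$ to the completeness of a principal connection on the associated frame bundle, which we have already proved in theorem \ref{prop:princompl}. Thus, given an affine connection $P_V$ on $V$, I would first invoke claim \ref{afconind} to produce the corresponding principal connection $F^{T\lambda}(P_V)$ on the $\GL(\Bbbk^k)$-principal bundle $F^\lambda(V)\ra{\pis}M$, with the key property that the induced connection $F^{T\lambda}(P_V)[\lambda]$ on the associated bundle is related to $P_V$ via the natural isomorphism $F^\lambda(V)[\lambda]\cong V$.

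Next, I would apply theorem \ref{prop:princompl} to conclude that parallel transport $T^{F^\lambda(V)}_c$ of the principal connection is defined on all of $F^\lambda(V)\times I$ for any $C^1$-curve $I\ra{c}M$. The heart of the argument is then to transfer this to the associated bundle using the compatibility square from section \ref{sec:indcon}:
\begin{diagram}
F^\lambda(V)_{c(0)}\times \Bbbk^k & \rTo^{T^{F^\lambda(V)}_c(-,t)\times \mathrm{id}_{\Bbbk^k}} & F^\lambda(V)_{c(t)}\times \Bbbk^k\\
\dTo^q & \com & \dTo_q\\
V_{c(0)} & \rDashto_{T^V_c(-,t)} & V_{c(t)}.
\end{diagram}
Since the top arrow is defined for every $t\in I$ and $q$ is surjective on each fibre, the bottom arrow is forced to be defined for every $t\in I$ as well. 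This yields the desired global parallel transport $V\times I\ra{T^V_c} V$, proving completeness.

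The only step that might require a moment's care is that the natural isomorphism $F^\lambda(V)[\lambda]\cong V$ truly intertwines the two parallel transports, so that completeness of $T^{F^\lambda(V)[\lambda]}_c$ really gives completeness of $T^V_c$. But this is already implicit in claim \ref{afconind}, where we checked that $P_V$ and $F^{T\lambda}(P_V)[\lambda]$ correspond under the derivative of this isomorphism; and since parallel transport is determined by the horizontal distribution (equivalently, by the vertical projection), the two transports are identified along $c$. I do not expect any significant obstacle beyond this bookkeeping, since the real work, namely showing that the defining ODE for the group-valued correction in the principal case has a global solution via right-invariance of a Lie-algebra-valued vector field on $G$, has already been carried out in the proof of theorem \ref{prop:princompl}.
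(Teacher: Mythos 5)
Your proof is correct and follows essentially the same route as the paper: invoke claim \ref{afconind} to pass to the corresponding principal connection on the frame bundle, apply theorem \ref{prop:princompl} for global parallel transport there, and transfer back to $V$ via the $q$-relatedness of the two parallel transports established in section \ref{sec:indcon}. The paper additionally remarks that one could instead note that the local parallel-transport ODE is linear for an affine connection and appeal to global existence for linear ODEs, but its primary argument is the one you give.
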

\begin{proof}
In the light of claim \ref{afconind}, this follows immediately from the fact that the induced parallel transport on the associated bundle is $q$-related to that on the original principal bundle, which we already know to be defined along all curves by theorem \ref{prop:princompl}.

Of course, we could also write out the differential equation defining parallel transport in local coordinates, which would be linear in the case of an affine connection. The global existence theorem for linear differential equations would do the rest.
\end{proof}

To complete the equivalence between affine connections on a vector bundle and equivariant connections on its frame bundle, we state the following result, due to Kol\'a\v{r} et al. \cite{micnat}, relating the two derivatives. We have seen in remark \ref{identifi} that for a principal $G$-bundle $P\ra{\pis}M$ and a vector space $W$ with a linear $G$-action $\lambda$ we have an identification $\Omega_\hor^k(P,W)^G\stackrel{\Psi_\lambda}{\ra{\simeq}}\Omega^k(M,P[\lambda])$. In particular, we can identify an equivariant function $P\ra{\Phi}W$ with a section $\sigma$ of the associated bundle $P[\lambda]$. In the presence of a connection, we can take the exterior covariant derivative along the horizontal lift of a vector field $X\in\mathcal{X}(M)$ on one side to obtain another equivariant function $P\ra{d^\omega \Phi(X_\hor)}W$ and on the other we can take the covariant derivative (of the induced connection) along the vector field to obtain another section: $\nabla_X\sigma\in\Gamma(P[\lambda])$. One wonders if the two are related. Indeed, they turn out to coincide in the identification $\Psi_\lambda$.

If we extend $\nabla$ from $\Omega^0(M,P[\lambda])=\Gamma(P[\lambda])$ to $\Omega^k(M,P[\lambda])$, by setting, for $\alpha\in \Omega^k(M,P[\lambda])$ and $X_0,\ldots, X_{k}\in \mathcal{X}(M)$:
$$(d_\nabla\alpha)(X_0,\ldots,X_{k}):=\sum_{i=0}^k (-1)^i\nabla_{X_i}(\alpha(X_0,\ldots, \hat{X_i},\ldots X_k))$$
$$+\sum_{0\leq i<j\leq k}(-1)^{i+j}\alpha([X_i,X_j],X_0,\ldots,\hat X_i,\ldots,\hat X_j,\ldots X_k)$$
- this is just the definition of the ordinary differential $d$ with $\nabla_{X_i}$ replacing $\Li_{X_i}$ - then we have the following.
\begin{claim}\index{covariant derivative, correspondence with exterior covariant derivative}$d_\nabla$ acts on $P[\lambda]$-valued differential forms on $M$ as $d^\omega$ does on equivariant horizontal ones on $P$, in the sense that
\begin{center}
\begin{tikzcd}[column sep=large,row sep=large]
\Omega_\hor^k(P,W)^G \arrow[r,"{\Psi_\lambda}","\simeq"'] \arrow[d,"{d^\omega}"'] \arrow[dr,phantom,"\com",description] & \Omega^k(M,P[\lambda]) \arrow[d,"{d_\nabla}"]\\
\Omega_\hor^{k+1}(P,W)^G \arrow[r,"{\Psi_\lambda}","\simeq"'] & \Omega^{k+1}(M,P[\lambda]).
\end{tikzcd}
\end{center}\end{claim}
The proof, which can be found in \cite{micnat}, proceeds as one would expect. One first checks the identity on $\Omega^0$. (This is a short computation.) Next, one extends it to elements of $\Omega^k(M,P[\lambda])$ that can be decomposed as $\sigma\otimes \eta$, with $\sigma\in \Omega^0(M,P[\lambda])$ and $\eta\in \Omega^k(M)$. Finally, one obtains the general statement by working locally in a frame of $P[\lambda]$: every vector-valued form is a finite sum of such terms, and the two sides are local and $\Bbbk$-linear.

\subsection{Categorical Constructions on Affine Connections}\label{catconstrafcon}
It turns out to be possible to transfer a connection to quite a few categorical constructions we can build out of a vector bundle. We can, for instance, give a sensible definition of a \emph{dual connection}\index{dual connection} on the dual bundle of a vector bundle with a connection. The same applies for the tensor product. 

Suppose $V\ra{\pi}M$ is a $\Bbbk^k$-vector bundle that is equipped with an affine connection $P_V$. We have seen in claim \ref{afconind} that this canonically induces a connection $F^{T\lambda}(P_V)$ on the frame bundle $F^\lambda(V)$. Note that we have a canonical left action $\lambda^*$ of $\GL(\Bbbk^k)$ on $\Bbbk^k{}^*$ by (inverse) precomposition $\GL(\Bbbk^k)\ni g\mapsto -\circ g^{-1}\in \GL(\Bbbk^k{}^*)$. We are in the familiar setting in which we construct an associated vector bundle. Note that this is just the dual vector bundle $V^*\ra{\pi^*}M$. This gives us a canonical way to transfer the connection on $V$ to $V^*$: $F^{T\lambda}(P_V)$ induces an affine connection on the associated bundle $V^*\to M$:\; $F^{T\lambda}(P_V)[\lambda^*]$. It is easily verified, using the characterisation of the covariant derivative given in claim \ref{cl:covdivpartrans} in combination with the Leibniz rule, that this yields the following relation between the covariant derivatives:
$$\ip{\nabla^*_X \xi}{\sigma}+\ip{\xi}{\nabla_X\sigma}=X(\ip{\xi}{\sigma}),$$
where $X\in \mathcal{X}(M)$, $\sigma\in \Gamma(\pi)$ and $\xi\in\Gamma(\pi^*)$.

In the case of a complex vector bundle, there is the concept of the complex conjugate bundle. This is easily seen to be an associated bundle as well. Indeed, let $V\ra{\pi}M$ be a vector bundle and describe it as an associated bundle of $P\ra{\pis}M$ corresponding to the canonical action $\lambda$ of $\GL(\C^k)$ on $\C^k$. Then we have an action $\bar\lambda$ of $\GL(\C^k)$ on $\overline{\C^k}$, by $\bar\lambda(g,\bar v):=\overline{\lambda(g,v)}$. This induces a connection on the complex conjugate bundle $\bar{V}\ra{\bar{\pi}}M$. The covariant derivative $\bar\nabla$ of this \emph{complex conjugate connection}\index{complex conjugate connection} acts on sections $\psi\in\Gamma(\bar\pi)$ as one would expect:
$$\bar\nabla_X \psi=\overline{\nabla_X \bar\psi},$$
where $\nabla$ is the covariant derivative of the original connection on $\pi$ (of which we understand $\bar\psi$ to be a section). This construction is important in physics when one encounters spinors.

Similarly, if $V'\ra{\pi'}M$ is a second vector bundle, of dimension, say, $l$ over $\Bbbk$, on which an affine connection $P_V'$ is defined, we have a canonical notion of a \emph{tensor product connection}\index{tensor product connection} on $V\otimes V'\ra{\pi\otimes\pi'}M$. Again, we note that we obtain principal connections $F^{T\lambda}(P_V)$ and $F^{T\lambda'}(P_V')$ on the frame bundles. This defines a principal connection $F^{T\lambda}(P_V)\times_M F^{T\lambda'}(P_V')$ on the spliced principal $\GL(\Bbbk^k)\times\GL(\Bbbk^l)$-bundle  $F^\lambda(V)\times_M F^{\lambda'}(V')\to M$. Again, we can form an associated vector bundle by the tensor product action $\lambda\otimes \lambda'$ of $\GL(\Bbbk^k)\times\GL(\Bbbk^l)$ on $\Bbbk^k\otimes \Bbbk^l$. This vector bundle is, of course, just the tensor product bundle $V\otimes V'\to M$. Since we have realised this bundle as an associated bundle, we obtain an induced connection $(F^{T\lambda}(P_V)\times_M F^{T\lambda'}(P_V'))[\lambda\otimes \lambda']$ on it. In terms of the covariant derivative, this means that,
$${}^\otimes \nabla_X(\sigma\otimes \sigma')=(\nabla_X\sigma)\otimes \sigma'+\sigma\otimes(\nabla'_X \sigma'),$$
if $X\in\mathcal{X}(M)$, $\sigma\in \Gamma(\pi)$ and $\sigma'\in\Gamma(\pi')$ and if we denote the covariant derivative on $V\otimes V'$ by ${}^\otimes \nabla$. Again, this identity is almost immediate if one uses the relation of claim \ref{cl:covdivpartrans} between the covariant derivative and parallel transport. Note that physicists use these induced connections all the time when dealing with tensor bundles, for example in the context of general relativity.

An analogous construction for which we can also define an induced connection is the direct sum (categorical product). We have already seen how to define the product connection for general fibre bundles. In this context, it will be no different. We can also understand the direct sum bundle to be an associated bundle and construct the connection as we did for the tensor product (using the direct sum action). The result on the covariant derivative is as follows:
$${}^\oplus \nabla_X(\sigma\oplus\sigma')=\nabla_X\sigma\oplus\nabla '_X\sigma'$$
and we extend by linearity.

Finally, the construction works equally for $\mathrm{Hom}$-bundles. Obviously, these can also be seen as associated vector bundles, since we have the canonical isomorphism $\mathrm{Hom}(V,V')\cong V^*\otimes V'$.

\section{Manifolds with Connections}
An important context in which one encounters connections is through a (pseudo-)metric on the tangent bundle of a manifold $M$. This induces the so-called Levi-Civita connection on $TM$. This subject, however, will be dealt with in section \ref{sec:riem}. 

The more general class of arbitrary affine connections on a tangent bundle is dealt with first, in the hope that the reader realises which aspects are special to the case of Riemannian geometry and which structure derives from a more general principle. If a manifold is equipped with an affine connection on its tangent bundle, one sometimes simply says a connection is defined on the manifold. These manifolds form a broader class than the (pseudo)Riemannian manifolds. They have important applications. 

For instance, they are used in the Newton-Cartan theory of gravity, which is a beautiful theory of Newtonian gravity, using a geometry very similar to that of the general theory of relativity. In this theory, the particle moves through Newtonian spacetime along geodesics that are determined by a non-metric connection. In this way, one can use the language of manifolds with connections to describe Newtonian as well as Einsteinian gravity and one can give a particularly clear demonstration that the latter reduces to the former in the limit $c\to \infty$. A useful reference on this subject is a text by Cartan himself, \cite{caronm}.\\
\\
A first remark, although it is not of very fundamental importance, is the following. Let us adopt the notation $T^{(k,l)}M:=\bigotimes_M^k TM\otimes \bigotimes^l T^* M$. As we have seen, the affine connection on $TM$ induces one and therefore also a corresponding covariant derivative on these tensor bundles. We have seen that $\nabla_X\sigma(m)$ only depends on the value of $X$ at $m$ (and on the value of $\sigma$ along a compatible curve). Therefore, if $S\in \Gamma(T^{(k,l)}M)$, then we can interpret $\nabla S$ as an element of $\Gamma(T^*M\otimes T^{(k,l)}M)\cong\Gamma(T^{(k,l+1)}M)$, where
$$(\nabla S)(X;\alpha_1,\ldots,\alpha_k,Y_1,\ldots,Y_l)=(\nabla_XS)(\alpha_1,\ldots,\alpha_k,Y_1,\ldots,Y_l).$$
This is a notation that physicists use a lot in general relativity.

\subsection{Geodesics}
The most important new concept that we encounter when specialising from vector bundles with affine connections to manifolds with affine connections is that of \emph{geodesics}\index{geodesic}. The reader might know the concept of a geodesic from Riemannian geometry, where a geodesic might be defined as a curve that locally minimises the path length (or, equivalently, the energy integral) between two points that lie on it. Of course, this definition cannot be formulated in the context of an arbitrary manifold with an affine connection. However, a geodesic can equivalently be defined as a curve whose tangent vectors are related by parallel transport along the curve itself. To be precise, a curve $]-\epsilon,\epsilon[\ra{c}M$ is called a(n) (affinely parametrised) geodesic if
$$\frac{Dc'}{dt}=0.$$
This definition makes equal sense when the connection is not a Levi-Civita connection of some metric.

Equivalently, for a given affine connection on $TM$, one can define the \emph{geodesic spray}\index{geodesic spray} to be the unique horizontal vector field $W$ on $TM$ such that $T\pi(W(v))=v$ for all $v\in TM$. A geodesic is then precisely the projection along $\pi$ of an integral curve of $W$. More generally, we call a vector field $W$ on $TM$ a \emph{spray}\index{spray}, if $T\pi_M\circ W=\mathrm{id}_{TM}$ and, in a chart induced by the double tangent bundle functor, $W(x,y)=(x,y;y,A_x(y))$, where $y\mapsto A_x(y)$ is quadratic. In this terminology, the geodesic spray of a connection is the unique horizontal spray.

\subsection{Torsion}
We have seen that an affine connection on the tangent bundle uniquely determines a geodesic spray. One can ask whether the converse is also true. The answer turns out to be: almost, once we fix a tensor field known as the torsion.

Given a manifold with an affine connection $(M,P_V)$ we define the \emph{torsion}\index{torsion} $T\in\Gamma(T^{(1,2)}M)$ of the connection to be the $(1,2)$-tensor
$$T(X,Y):=\nabla_X Y-\nabla_Y X-[X,Y].$$
The following theorem gives an interpretation of the torsion tensor as the freedom in the choice of a connection, after we have specified a certain geodesic spray. \cite{ambspr}

\begin{theorem}[Ambrose-Palais-Singer]\index{Ambrose-Palais-Singer theorem}\label{ambpalsing}
Let $M$ be a smooth manifold with a spray $W$. Then, for each type $(1,2)$-tensor field $T$ that is anti-symmetric in the last pair of arguments, there exists a unique affine connection on $TM\ra{}M$ such that $W$ is its geodesic spray and such that $T$ is its torsion.
\end{theorem}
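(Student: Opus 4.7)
The plan is to reduce the statement to the classical torsion-free case of the theorem and then settle that case by a local polarisation of the spray.

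First I would establish a decomposition lemma: affine connections $\nabla$ on $TM$ are in bijection with pairs $(\bar\nabla, T)$, where $\bar\nabla$ is torsion-free and $T$ is a $(1,2)$-tensor field antisymmetric in its last two arguments, via
\[ \nabla_X Y = \bar\nabla_X Y + \tfrac{1}{2} T(X,Y). \]
That this does define a connection is routine — the tensorial correction $\tfrac{1}{2}T(X,Y)$ is $C^\infty(M)$-linear in both slots and so preserves both $C^\infty$-linearity in $X$ and the Leibniz rule in $Y$ — and a direct computation gives
\[ \nabla_X Y - \nabla_Y X - [X,Y] = \bar\nabla_X Y - \bar\nabla_Y X - [X,Y] + \tfrac{1}{2}\bigl(T(X,Y) - T(Y,X)\bigr) = T(X,Y), \]
so the torsion of $\nabla$ is precisely $T$; conversely any $\nabla$ with torsion $T$ arises in this way with $\bar\nabla := \nabla - \tfrac{1}{2}T(\cdot,\cdot)$. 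The essential observation is that $T(v,v) = 0$ by antisymmetry, so $\nabla_{c'} c' = \bar\nabla_{c'} c'$ for every curve $c$: the geodesic equations, and hence the geodesic sprays, of $\nabla$ and $\bar\nabla$ coincide. Consequently it suffices to prove the classical torsion-free case, namely that every spray is the geodesic spray of a unique torsion-free affine connection on $TM$.

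For that case I would argue locally. In a chart $(U, \kappa = (x^i))$, a connection is determined by its Christoffel symbols $\Gamma^k_{ij}$; unwinding the definitions (the horizontal lift to $(x,y) \in TU$ of $y^i \partial_{x^i}$ reads $y^i \partial_{x^i} - \Gamma^k_{ij}(x) y^i y^j \partial_{y^k}$) shows that the associated geodesic spray is
\[ W(x,y) = \bigl(x, y;\; y,\; -\Gamma^k_{ij}(x)\, y^i y^j\bigr), \]
so that only the symmetric part $\Gamma^k_{(ij)}$ enters. For a torsion-free connection $\Gamma^k_{ij} = \Gamma^k_{ji}$, hence the connection is entirely reconstructed from its spray by polarisation: if the quadratic part of a given spray $W$ in the chart is $A_x(y) = A^k_{ij}(x)\, y^i y^j\, \partial_{x^k}$ with $A^k_{ij} = A^k_{ji}$, I would set
\[ \Gamma^k_{ij}(x) := -A^k_{ij}(x) = -\tfrac{1}{2}\,\partial_{y^i}\partial_{y^j} A_x^k. \]
This produces, within each chart, a unique torsion-free connection whose geodesic spray restricts to $W|_{TU}$.

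The only real obstacle is the globalisation step: ensuring that the locally defined Christoffel symbols patch into a connection on all of $TM$. This is guaranteed because $W$ is a globally defined vector field on $TM$: under a change of chart $x = x(\tilde x)$, its quadratic part transforms by the standard double-tangent law, which contributes precisely the inhomogeneous term proportional to $\tfrac{\partial^2 x^k}{\partial \tilde x^i \partial \tilde x^j}\, \tilde y^i \tilde y^j$; polarising in $y$ reproduces exactly the inhomogeneous term in the classical transformation rule of the $\Gamma^k_{ij}$. Hence the symbols built in overlapping charts transform correctly and glue to a global connection; pointwise uniqueness in charts upgrades to global uniqueness. Combining this torsion-free existence and uniqueness with the decomposition lemma gives the theorem: the connection with prescribed spray $W$ and torsion $T$ is $\nabla_X Y := \bar\nabla_X Y + \tfrac{1}{2} T(X,Y)$, with $\bar\nabla$ the unique torsion-free connection whose geodesic spray is $W$.
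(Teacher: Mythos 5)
Your argument is correct, but there is nothing in the paper to compare it with: the thesis states the Ambrose--Palais--Singer theorem without proof and refers to the original article \cite{ambspr}, so your proposal is a genuine addition rather than a variant of an in-text argument. What you give is essentially the standard proof, and it is sound. The reduction to the torsion-free case via $\nabla_X Y=\bar\nabla_X Y+\tfrac{1}{2}T(X,Y)$ is clean: the correction term is tensorial, the torsion of $\nabla$ comes out to exactly $T$ by antisymmetry, and $T(c',c')=0$ forces $\nabla$ and $\bar\nabla$ to have the same geodesic equation and hence the same geodesic spray, so both existence and uniqueness for general $T$ follow from the torsion-free case. The polarisation step is also right, because the paper's definition of a spray requires $A_x(y)$ to be quadratic in $y$, so the symmetric coefficients $A^k_{ij}=\tfrac{1}{2}\partial_{y^i}\partial_{y^j}A^k_x$ are well defined and smooth in $x$, and a torsion-free connection is completely determined by its (symmetric) Christoffel symbols. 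The one step that deserves to be written out in full is the gluing: one must actually check that the inhomogeneous term in the change-of-chart formula for the quadratic part of a second-order vector field is symmetric in the lower indices and coincides, after polarisation, with the inhomogeneous term in the transformation law of the Christoffel symbols; your description of why this works is accurate, but it is the only genuinely computational point of the proof. Two further remarks. First, you work with covariant derivatives rather than with horizontal distributions; this is legitimate here because the paper has already established the equivalence of the two descriptions of an affine connection, with corollary \ref{contriv} tying the Christoffel symbols of the two pictures together. Second, the theorem of Ambrose, Palais and Singer in its original form treats sprays that are only positively homogeneous of degree two, where polarisation fails and a more delicate argument is needed; your proof covers exactly the quadratic case that the paper's definition of a spray singles out, which is all the stated theorem requires.
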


In particular, for every affine connection on $M$, we can find a unique torsion-free one that has the same geodesic spray. If we were to divide the affine connections on a manifold $M$ into equivalence classes of those that have the same geodesic spray, we could take this to be a canonical representative of each class. In computations, torsion-free connections are often easiest to work with; therefore, for example in the general theory of relativity, one uses this freedom to simplify calculations.

In light of this theorem, moreover, we see that it was no coincidence that Cartan was able to formulate Newtonian gravity using the geometry of an affine connection\footnote{Indeed, after adjoining the absolute time coordinate in Newton-Cartan spacetime, the Newtonian gravitational force of one point source gives rise to a spray whose spatial part is $A_{(t,x)}(\dot t,\dot x)=-\gamma x\dot t^2/\norm{x}^3$, where $\gamma$ is some scale constant for the gravitational interaction. Similarly, arbitrary Newtonian gravitational fields give spatial acceleration $-\nabla\Phi(t,x)\dot t^2$. In particular, the acceleration is quadratic in the full spacetime velocity.}. However, when Cartan formulated his Newton-Cartan theory in 1923, this result would not be proved for forty years.

\section{(Pseudo)Riemannian Manifolds and the Levi-Civita Connection}\label{sec:riem}
Suppose $M$ is a (pseudo-)Riemannian manifold, that is a manifold equipped with a symmetric, non-degenerate tensor $g\in \Gamma(T^* M\otimes T^* M)$. Then, there is a canonical choice for a connection on $M$ (an affine connection on the tangent bundle). I will not give the standard derivation\footnote{This would involve demanding that the metric have covariant derivative zero and showing that this defines a unique connection for a specified torsion tensor (the canonical choice being torsion zero). My derivation is somewhat less standard, but yields the same connection.} of this connection. The one I give might be somewhat more involved, but is more in line with the rest of this thesis. In particular, it will be more natural from a physicist's point of view, since it starts from an action principle.

The idea is the following. From the metric, we can define a so-called \emph{energy integral}\index{energy integral} along each path $]-\epsilon,\epsilon[\ra{c}M$: $S_g(c):=\frac 1 2 \int_{-\epsilon}^\eps g(c'(t),c'(t))dt$. We will define the so-called \emph{Levi-Civita connection}\index{Levi-Civita connection} associated with $g$ in such a way that its geodesics are precisely the curves for which the energy integral is stationary with respect to all compactly supported variations in the curve. (Note that this yields precisely the straight lines in the case of $M=\R^{p,q}$ with the standard metric.)

\begin{claim}[First variational formula]\index{first variational formula} \label{firvar} Let $]-\eps,\eps[\times]-\delta,\delta[\ra{H}U\subset \R^n$ be a $1$-parameter family of curves in an open set $U$ of $\R^n$ on which we have a (pseudo)-Riemannian metric and write $c:=H(-,0)$. Suppose that the variation has compact support in the sense that $\mathrm{supp}\, (t\mapsto r(t):=\frac{d}{ds}|_{s=0} H(t,s))$ is compact. Then we have
$$\frac{\dau}{\dau s}|_{s=0} S_g(H(-,s))=$$
$$\int_{-\eps}^\eps \left(-T_{c(t)}g(c'(t))(c'(t),r(t))-g_{c(t)}(c''(t),r(t))+\frac 1 2  T_{c(t)}g(r(t))(c'(t),c'(t))\right)  dt.$$
\end{claim}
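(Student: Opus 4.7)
The plan is to compute $\tfrac{\partial}{\partial s}\big|_{s=0} S_g(H(-,s))$ by differentiating under the integral sign (justified by compact support of $r$), applying the chain rule in $s$, and then integrating by parts in $t$ to eliminate the derivative of $r$. No geometric input (connection, covariant derivative) is needed, since everything happens in the flat chart $U \subset \R^n$ where partial derivatives commute; the proof is essentially a careful application of calculus.

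First I would write out the definition
\begin{equation*}
S_g(H(-,s)) = \tfrac{1}{2} \int_{-\epsilon}^{\epsilon} g_{H(t,s)}\bigl(\partial_t H(t,s),\, \partial_t H(t,s)\bigr)\, dt,
\end{equation*}
differentiate under the integral, and apply the chain rule, remembering that $g$ depends both on the base point (through $H(t,s)$) and bilinearly on its two tangent-vector slots. At $s=0$ this gives
\begin{equation*}
\tfrac{\partial}{\partial s}\big|_{0}\, g_{H(t,s)}(\partial_t H,\partial_t H) = T_{c(t)}g\bigl(r(t)\bigr)(c'(t),c'(t)) + 2\, g_{c(t)}\bigl(\partial_s\partial_t H|_{0},\, c'(t)\bigr),
\end{equation*}
where the factor $2$ arises from the two slots and symmetry of $g$. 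Since $U \subset \R^n$ is flat, $\partial_s\partial_t H|_{0} = \partial_t r(t) = r'(t)$. Halving yields
\begin{equation*}
\tfrac{\partial}{\partial s}\big|_{0} S_g(H(-,s)) = \int_{-\epsilon}^{\epsilon} \Bigl[\tfrac{1}{2}\, T_{c(t)}g(r(t))(c'(t),c'(t)) + g_{c(t)}(r'(t),c'(t))\Bigr]\, dt.
\end{equation*}

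The remaining step is integration by parts on the second term. From the product rule one has
\begin{equation*}
\tfrac{d}{dt}\, g_{c(t)}\bigl(r(t),\, c'(t)\bigr) = T_{c(t)}g(c'(t))\bigl(r(t),c'(t)\bigr) + g_{c(t)}\bigl(r'(t),\, c'(t)\bigr) + g_{c(t)}\bigl(r(t),\, c''(t)\bigr),
\end{equation*}
and since $r$ has compact support in $(-\epsilon,\epsilon)$ the integral of the left-hand side vanishes. Solving for $\int g_{c(t)}(r'(t),c'(t))\, dt$ and substituting back produces exactly the three terms in the claimed formula, after using symmetry of $g$ (which justifies e.g.\ interchanging $(r(t),c'(t))$ with $(c'(t),r(t))$ in the arguments of $T_{c(t)}g(c'(t))$).

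The only ``obstacle'' is notational bookkeeping: one must be comfortable reading $T_{c(t)}g(v)(X,Y)$ as the directional derivative of the $(0,2)$-tensor field $g$ at $c(t)$ in direction $v$, evaluated on the ordered pair $(X,Y)$, and keeping track of which variables are differentiated in which step. Once that is clear, the computation is routine.
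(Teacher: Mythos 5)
Your proposal is correct and follows essentially the same route as the paper: both compute the first-order term of the integrand (you by differentiating under the integral sign, the paper by a Taylor expansion in $s$, which amounts to the same thing) and then integrate by parts on the $\int g_{c(t)}(c'(t),r'(t))\,dt$ term, using compact support of $r$ to discard the boundary contribution. No gaps.
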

\begin{proof}Let us write $H_s$ for $\frac{\dau}{\dau s}H$ and similarly for $H_t$. Then, we Taylor expand $H$ with respect to $s$ at $s=0$: $H(t,s)=H(t,0)+s H_s(t,0)+\mathcal{O}(s^2)=H(t,0)+s r(t)+\mathcal{O}(s^2)$. This leads to
\begin{align*}E_g(H(-,s))&=\frac 1 2 \int_{-\eps}^\eps g_{H(t,s)}(H_t(t,s),H_t(t,s))dt\\
&=\frac 1 2\int_{-\eps}^\eps g_{c(t)+sr(t)+\mathcal{O}(s^2)}(c'(t)+sr'(t)+\mathcal{O}(s^2),c'(t)+sr'(t)+\mathcal{O}(s^2))dt\\
&=\frac 1 2 \int_{-\eps}^\eps \left(g_{c(t)}+s T_{c(t)}g(r(t))+\mathcal{O}(s^2)\right)\left(c'(t)+sr'(t)+\mathcal{O}(s^2),\right. \\&\left. c'(t)+sr'(t)+\mathcal{O}(s^2)\right)dt\\
&=\frac 1 2 \int_{-\eps}^\eps g_{c(t)} (c'(t),c'(t))+2s g_{c(t)}(c'(t),r'(t))\\
&+s T_{c(t)} g (r(t))(c'(t),c'(t))dt+\mathcal{O}(s^2)\\
&=E_g(c)+s\left(\int_{-\eps}^\eps g_{c(t)}(c'(t),r'(t))dt+ \frac 1 2 \int_{-\eps}^\eps T_{c(t)}g(r(t))(c'(t),c'(t))dt \right)\\
&+\mathcal{O}(s^2).
\end{align*}
Thus, we find that
\begin{align*}\frac{\dau}{\dau s}|_{s=0} S_g(H(-,s))&=\int_{-\eps}^\eps g_{c(t)}(c'(t),r'(t))dt+ \frac 1 2 \int_{-\eps}^\eps T_{c(t)}g(r(t))(c'(t),c'(t))dt \\
&=g_{c(t)}(c'(t),r(t))|_{t=-\eps}^\eps-\int_{-\eps}^\eps \left(T_{c(t)}g(c'(t))(c'(t),r(t))+g_{c(t)}(c''(t),r(t))\right)dt\\
&+\frac 1 2 \int_{-\eps}^\eps T_{c(t)}g(r(t))(c'(t),c'(t))dt \txt{(integration by parts of the first term)}\\
&=-\int_{-\eps}^\eps \left(T_{c(t)}g(c'(t))(c'(t),r(t))+g_{c(t)}(c''(t),r(t))\right)dt\\
&+\frac 1 2 \int_{-\eps}^\eps T_{c(t)}g(r(t))(c'(t),c'(t))dt. \txt{(note that $r$ had compact support)}
\end{align*}
\end{proof}
\begin{notation}\label{chrissym}Note that this claim implies that the energy integral of a curve is stationary with respect to all variations with compact support if and only if
$$g_{c(t)}(c''(t),Z)=\frac 1 2 T_{c(t)}g(Z)(c'(t),c'(t))-T_{c(t)}g(c'(t))(c'(t),Z)$$
for all $Z\in\R^n$. Let us therefore define (using non-degeneracy of $g_m$) a map $\R^n\times \R^n\ra{\Gamma_m}\R^n$ by
$$g_m(\Gamma_m(X,Y),Z)=\frac 1 2 \left(T_mg(X)(Y,Z)+T_mg(Y)(X,Z)-T_mg(Z)(X,Y)\right).$$
In this notation, the condition becomes $c''(t)=-\Gamma_{c(t)}(c'(t),c'(t))$.
\end{notation}
Now, let $(U_\alpha,\phi_\alpha)$ be a chart on a pseudo-Riemannian manifold $(M,g)$. Then $(\phi_\alpha(U_\alpha),(\phi_\alpha^{-1})^*g)$ satisfies the conditions of claim \ref{firvar}. Therefore, the energy integral of a curve $]-\epsilon,\epsilon[\ra{c}M$ is stationary with respect to all compactly supported variations if and only if, in a chart $(U_\alpha,\phi_\alpha)$,
$$(\phi_\alpha\circ c)''(t)=-\Gamma^\alpha_{\phi_\alpha\circ c(t)} ((\phi_\alpha\circ c)'(t),(\phi_\alpha\circ c)'(t)).$$
\footnote{Here $\Gamma^\alpha$ is the $\Gamma$ defined in notation \ref{chrissym} for $U=\phi_\alpha(U_\alpha)$. With the sign convention of equation \ref{eq:chri}, these are the Christoffel forms for the Levi-Civita connection with respect to the local trivialisations of $TM$ induced by the charts $\phi_\alpha$.} We see that the tangent lifts of these coordinate curves $\phi_\alpha\circ c$ are precisely the integral curves of the local geodesic spray whose vertical component is $-\Gamma^\alpha\circ \Delta^\alpha$, where $T(\phi_\alpha(U_\alpha))\ra{\Delta^\alpha}T(\phi_\alpha(U_\alpha))\times_{\phi_\alpha(U_\alpha)}T(\phi_\alpha(U_\alpha))$; $X\mapsto (X,X)$. If we now specify a torsion tensor (say, torsion zero), theorem \ref{ambpalsing} gives us a unique affine connection on $TM$ with these curves as geodesics (or, equivalently, with this local geodesic spray). This connection is known as the Levi-Civita connection induced by the metric.

\chapter{Gauge Theories}
This chapter will start with a very imprecise introduction, departing from Einstein's general relativity and progressing along Maxwell's electromagnetism towards a general first-quantised formalism of gauge theories (of the Yang-Mills kind). It will be assumed that the reader has some familiarity with the first two theories. I will therefore not be very strict when discussing them, leaving a lot of mathematical and physical details to be filled in by the reader. They only serve as a stepping stone, setting the philosophical tone for the culmination of this thesis in a general framework of gauge theories on principal bundles.

We will work in units $c=G_\mathrm{Newton}=\hbar=\epsilon_0=\mu_0=1$.

\section{Principal Bundles and Physics}
After this abstract mathematical introduction, I can imagine the reader finds himself wondering how principal bundles ever found their way into physics. Funnily enough, the answer is to be sought in the abstract results of section \ref{catfib}, in particular the interpretation of principal bundles as bundles of frames and of their associated bundles. In this paragraph, I will try to sketch a philosophical framework of ideas, which I hope will help the reader understand the use of principal bundles in physics better. However, to fully appreciate the matter, a specific understanding of the technical details of the physical theories is essential.\\
\\
The reader must realise that a conceptual shift has taken place in mathematical physics, over the last century, which is reflected in the mathematical language used to describe nature. I am talking, of course, about the transition from a coordinate approach to physics, using mostly tools from real analysis, towards a coordinate-free approach, focussing on the geometry of manifolds. I claim that this has at the same time been a shift in a philosophical point of view: from empiricist to realist.

In Newtonian mechanics, according to the principles of Galilean invariance, positions and velocities are defined only relatively. These quantities represent nothing intrinsic about what is happening in spacetime. They are the result of measurements that were performed by a certain observer and therefore tell us as much about this particular observer as they tell us about what that observer is observing. Here, the Galilean transformations relate one (inertial) observer to another and therefore capture, in a sense, the extent to which our description of nature is not intrinsic.

Of course, a similar principle remains when we proceed to the classical form of the special theory of relativity. Here, it is reflected by the Poincar\'e invariance\footnote{The group of Poincar\'e transformations consists of all isometries of Minkowski spacetime, i.e. translations and Lorentz transformations.} of the theory, the Poincar\'e transformations representing the relation between all possible (inertial) observers. If we include Maxwell's electromagnetism, not only the positions and velocities are non-intrinsic quantities of spacetime: the same holds for electric and magnetic fields. What one observer would call an electric field would be a magnetic field for another observer. (See, for example, \cite{jaccla}.)

With general relativity, however, the language of manifolds made its entrance in physics. By using this language, one seeks to remove this ambiguity in the description of nature. In general relativity, all quantities that are of physical interest are represented by mathematical objects related to the spacetime manifold $M$ that are independent of any choice of coordinates on $M$ or any choice of a (local) frame for $TM$. Particular measurements of these quantities correspond to the description of these objects with respect to some coordinates or a (local) frame associated with the observer. One sees the same phenomenon in the recent Hamiltonian formulation of classical mechanics in the language of symplectic manifolds.

However, such an abstract framework is somewhat unnatural or, rather, somewhat incomplete. Since all measurements are made with respect to some reference frame (of an observer) and the measured quantity can never completely be divorced from the measurement process, it would seem foolish not to represent the measurement properly in the mathematical theory describing the physics. It is precisely here that principal bundles come into play. Principal bundles provide a more natural framework for modern theories of physics: they combine the realist aspect of an intrinsic description of nature and the empiricist aspect of the measuring process. They seem to be a useful compromise between the mathematician and the physicist.\\
\\
In many theories of physics, observable quantities are represented by sections of certain fibre bundles over spacetime. If such a fibre bundle is well behaved, in the sense that its structure group is finite-dimensional (i.e. it is a $(G,\lambda)$-fibre bundle for an effective Lie group action $\lambda$), then we can (according to section \ref{sec:asbun}) understand it to be an associated bundle of some principal bundle (the bundle of generalised frames). From this point of view the sections of the original fibre bundle can be interpreted (using claim \ref{seccoo}) as equivariant maps from the principal bundle to some model fibre for the fibre bundle.

The reader is urged to think of the elements of the principal bundle as generalised frames for the original fibre bundle. This means that they correspond to different ways that we can convert the intrinsic dynamics that is described in an abstract way by a section of a fibre bundle to something concrete that we observe. These generalised frames are known to physicists as \emph{gauges}\index{gauge}, the structure group of the principal bundle is called the \emph{gauge group}\index{gauge group} and an automorphism of the principal bundle that fixes the base is called a \emph{gauge transformation}. In the case of a (metrised) vector bundle\footnote{From a physical perspective, this is by far the most important case.} this means that, by choosing a gauge, or an (orthonormal) frame for the fibre, we obtain a set of numbers: the coordinates of the section with respect to the frame.

The gauges often correspond to certain classes of observers (or, more precisely, to the clocks and metre sticks they use). Take, for example, Maxwell's classical electromagnetism on flat spacetime. Here, the different inertial frames are precisely the gauges and the principal right action corresponds to the Lorentz transformations between the inertial frames. In general, however---in particular in modern particle physics---the gauges might represent something more abstract and might include the choice of a basis for the space of internal degrees of symmetry of a quantum particle (think of spin). The picture of the fibre of a principal bundle as a Lie group without a specified position of the identity then reflects the physical principle that all inertial observers are equivalent (and, more generally, that all gauges are equivalent). All equations of motion, as formulated on the principal bundle, are therefore invariant under the principal right action.

The general idea of the principal bundle approach is thus the simultaneous description of nature in a realist and empiricist fashion. The intrinsic description is made by viewing a quantity as a section of an associated bundle. Specific coordinate representations of this section are obtained by choosing a gauge and the structure group acts transitively and freely on the fibre, thereby interrelating all allowed gauges.

\section{Einstein's General Relativity}
The idea of gauge theories starts with Einstein's general theory of relativity, as this is where an intrinsic formulation in terms of manifold geometry was first introduced in physics. Remember that Einstein's special theory of relativity is a theory of mechanics that was devised to solve the incompatibility between Newton's laws of mechanics and Maxwell's electromagnetism\footnote{Indeed, analogues of Newton's three laws of mechanics hold in special relativity.}. Similarly, Einstein's general theory of relativity was born from the need to make the special theory compatible with Newton's law of gravitation\footnote{Here, one can still formulate a form of Newton's three laws of mechanics that holds equally in the presence of gravity.}. It is a theory of gravity and mechanics; however, one would do it more justice by calling it a theory of spacetime geometry, as the paradigm shift it encompasses seems far more interesting than its precise numerical predictions. It is the best theory of gravity we have and its predictions, such as the perihelion shift of Mercury, gravitational lensing, gravitational redshift and even (arguably) the existence of black holes have been successfully verified. However, it is a classical theory in the sense that it does not deal with quantum-mechanical effects. The merger of quantum mechanics and general relativity is still, after almost a century, the holy grail of modern physics. I will briefly discuss the principles of general relativity from a point of view that fits in with the topics discussed in this thesis.

The central observation that might lead one to the mathematical framework of general relativity is the following. Let us consider a test particle in a gravitational field $g$ (a time-dependent vector field on $\R^3$) and let us denote its position at time $t$ by $x(t)\in \R^3$. Then Newton's second law, combined with his law of gravitation, tells us that
$$m_i x''(t)=m_g g,$$
where $m_i$ is the particle's `inertial mass' and $m_g$ is its `gravitational mass', which we have known to coincide since Galileo; we therefore just talk about plain `mass'. This means that the equation of motion of the particle,
$$x''(t)=g,$$
does not depend on the particle's mass. This principle is known as the \emph{weak equivalence principle}. It opens the possibility to describe gravitation not as an interaction that might differ from particle to particle but as an intrinsic property of spacetime itself.

Another, more vague idea that motivates the formalism of general relativity is \emph{Mach's principle}. Briefly, it is the idea that the definition of what one would call non-accelerating or non-rotating motion should not be absolute. It should depend on all matter in the universe. Acceleration should be measured relatively. This principle is satisfied neither by Newtonian nor by special-relativistic mechanics.

These two principles combined with the requirement that general relativity reduces to special relativity and Newtonian gravity in the appropriate limits allegedly led Einstein to his theory of gravitation. \cite{walgen} In general relativity, spacetime is represented by a four-dimensional pseudo-Riemannian manifold $(M,g)$ with metric signature $(-1,+1,+1,+1)$. There are, in general, no global notions of space and time. Of course, we obtain the Levi-Civita connection from the metric and therefore have a concept of spacetime geodesics, which will take the place of straight lines in special relativity, and of curvature. An observer is often modelled as a point particle and therefore as a curve $]-\epsilon,\epsilon[\ra{\gamma}M$. The fact that no observer can exceed the speed of light is formalised by the fact that $\gamma$ has to be a timelike curve, i.e. its tangent vectors have a negative length.

Matter/energy is described by the so-called stress-energy tensor\index{stress-energy tensor} $T\in \Gamma(T^*M\otimes T^*M)$, a symmetric $(0,2)$-tensor on spacetime. The analogue of Newton's second law, the so-called \emph{Einstein equation}\index{Einstein equation} reads
$$8\pi T= \mathrm{Ric}-\frac 1 2 \mathcal{S} g=:G,$$
where the \emph{Ricci tensor}\index{Ricci tensor} $\mathrm{Ric}\in \Gamma(T^*M\otimes T^*M)$ is given by taking the trace\footnote{What do we mean by the trace? By using the metric isomorphism $TM\ra{g}T^*M$ we assume without loss of generality that $T$ is a tensor of type $(0,n)$. By taking the trace $\mathrm{tr}^{k,l}\, T$ of $T$ over input $k$ and $l$, we mean that we pick a local orthonormal frame $(e_\mu)$ for $TM$ and a dual frame $(\omega^\mu)$ for $T^*M$ and set $(\mathrm{tr}^{k,l}\, T)(X_1,\ldots,\hat{X}_k,\ldots,\hat{X}_l,\ldots,X_n)=\sum_{\mu=0}^3T(X_1,\ldots,X_{k-1},e_\mu,X_{k+1},\ldots,X_{l-1},g^{-1}(\omega^\mu),X_{l+1},\ldots,X_n)$  } of the curvature tensor over the second and fourth inputs, so explicitly $\mathrm{Ric}(X,Y)=(\mathrm{tr}^{2,4}\, R)_m(X,Y)=\mathrm{tr}\, (R_m(X,-)Y)$ ($R$ denotes the curvature tensor) and the \emph{curvature scalar}\index{curvature scalar} $\mathcal{S}$ is defined to be $\mathcal{S}(m):=\mathrm{tr}\: \mathrm{Ric}_m$. The right side of the Einstein equation is denoted by $G$ and is known as the Einstein tensor.

Just as Newton's second law has to be completed with a specification of a form of the force $F$ (take, for example, the force of a spring that satisfies Hook's law: $F=- k x$), Einstein's equation has to be completed with a specification of the form of the stress-energy tensor, called an equation of state. There are many common forms that lead to relevant solutions: for instance, the stress-energy tensor of a perfect fluid\footnote{This is used as a model for various states of matter ranging from dust clouds to the interior of stars. Moreover, it is used in the Robertson-Friedmann-Walker model of cosmology as the average energy-momentum tensor of our universe. \cite{walgen}} reads $T=\rho\, u\otimes u+P(g+u\otimes u)$, where $\rho,P\in C^\infty(M)$ represent the mass density and the pressure, respectively, and $u\in \mathcal{X}(M)$ represents the unit tangent vectors to the world lines of the fluid particles. The expression for the stress-energy tensor of an electromagnetic field is also well-known and it can be found in section \ref{sec:max}. 

The Bianchi identity for the curvature $R$ of the Levi-Civita connection gives us the second important equation of general relativity:
$$\mathrm{tr}^{1,2}\, (\nabla G)=0,$$
where $\nabla$ denotes the covariant derivative of the connection on the $(0,2)$-tensor bundle, induced by the Levi-Civita connection. The derivation of this result is a useful exercise for the reader, or can be found in any textbook on general relativity (take for instance \cite{misgra}). The Einstein equation gives a physical interpretation of this equation as a generalised statement of energy-momentum conservation:
$$\mathrm{tr}^{1,2}\, (\nabla T)=0.$$
One of the best known statements of general relativity is that a point particle moves through spacetime along timelike geodesics. What is less known is that this is not an extra assumption but indeed a consequence of the previous equation, if one models a point particle as a limit of certain stress-energy tensors with decreasing support. This result is clearly explained in \cite{ehlequ}.\footnote{In physics textbooks the result is usually derived arguing that on physical grounds an action principle should hold for the motion of the point particle, i.e. they argue that the energy integral of the point particles should be stationary. Of course, we have seen in section \ref{sec:riem} that this is precisely saying that they should be geodesics of the Levi-Civita connection of the metric.}

Although I have never found the comparison in the literature, it seems almost foolish not to make the following remark, in the context of this thesis. I encourage the reader to compare the Einstein equation with the (second) inhomogeneous Maxwell/Yang-Mills equation and to compare this equation of energy-momentum conservation with the (first) homogeneous Maxwell/Yang-Mills equation. Note that in all cases the inhomogeneous equation involves sources of the fields, while the homogeneous equation, which is actually a Bianchi identity in disguise, does not.\\ 
\\
Finally, it is useful to formulate an action from which the Einstein equation follows.\footnote{I will not be very formal about this, since I will be dealing with action principles in a formal way later on. However, this remark should help sketch the resemblance between Einstein's general theory of relativity and more modern gauge theories we will see later.} This was first done by Hilbert. The action he found was
$$S(g):=\int_M \mathcal{S}\;\mu,$$
which is known as the \emph{Einstein-Hilbert action}\index{Einstein-Hilbert action}. Here $\mu$ denotes the Lorentzian volume element on $M$ which we know to exist if $M$ is oriented. (If $M$ is non-orientable, the use of the orientation can be avoided, as we shall later see. Moreover, if $M$ is non-compact, the integral might not converge. Both of these issues are dealt with by considering the integral over small patches of $M$ and demanding it to be stationary on each patch.) If we see this action as a function from the infinite-dimensional manifold of Lorentzian metrics on $M$ to the real numbers, we find that the vacuum Einstein equation is satisfied by $g$ precisely if $S$ has a stationary point at $g$. \cite{walgen}

\section{Maxwell's Electromagnetism}\label{sec:max}
This paragraph on electromagnetism will serve as a stepping stone for proceeding to general Yang-Mills theory. I will start by discussing Maxwell's electromagnetism and will demonstrate that, in a natural fashion, this can give rise to considering a certain principal $U(1)$-bundle over spacetime: we interpret electromagnetism as a Yang-Mills theory.

Let $(M,g)$ be a four-dimensional Lorentz manifold, representing spacetime. We will use the metric convention $(-1,+1,+1,+1)$. Classical electromagnetism is usually described by a $1$-form $j\in\Omega^1(M)$, representing the electric charge and current, and a $2$-form $F\in \Omega^2(M)$, representing the electromagnetic field strength. These charges and electromagnetic fields interact according to \emph{Maxwell's equations}\index{Maxwell's equations}, which, in modern language, take the form
$$dF=0\txt{and} \delta F=j,$$
where we write $\delta$ for the codifferential $\Omega^k(M)\ra{\delta}\Omega^{k-1}$, which is related to $\Omega^k(M)\ra{d}\Omega^{k+1}$ by the Hodge star as follows. On a pseudo-Riemannian manifold $(M,g)$
$$\delta=\mathrm{sign}\; (g) (-1)^{nk+n+1}* d *,$$
where $n=\dim M$ and $\mathrm{sign}\; (g)=\pm 1$ denotes the signature of the metric (so in our case $n=4$ and $\mathrm{sign}\; (g)=-1$). 

These equations describe the propagation of electromagnetic fields along spacetime. Of course, if these electromagnetic fields carry a significant amount of energy, this influences the geometry of spacetime. The electromagnetic contribution to the stress-energy tensor is given by (\cite{walgen})
$$T=\frac 1{4\pi}\left(\mathrm{tr}^{2,4}F\otimes F-\frac 1 4 g\otimes \mathrm{tr}^{1,3}\mathrm{tr}^{2,4}F\otimes F \right).$$
This set of three equations forms the foundation of a theory of electromagnetism on curved spacetime, known as the \emph{Einstein-Maxwell formalism}.

Of course, we can give physical interpretations to these differential forms. Suppose $]- \epsilon, \epsilon[\ra{ c}M$ is the world line of an observer, i.e. an affinely parametrised timelike geodesic (so $g(c'(t),c'(t))=-1$). This is called the observer's \emph{proper time} parametrisation. We can find a neighbourhood $U\subset M$ of $c(0)$ on which geodesic normal coordinates $(x^\mu)_{0\leq \mu \leq 3}$ are defined such that $x^0(c(t))=t$. The interpretation of these coordinates will be what the observer would call time and space, $(x^\mu)_{1\leq \mu\leq 3}$ being the space coordinates and $x^0$ the time. This chart defines a local frame for $TM$ and, more generally, for an arbitrary tensor bundle of $M$. Let us write $T^{k\ldots m}_{n\ldots p}$ for the component of a tensor $T$ with respect to the basis vector $\frac \dau {\dau x^k}\otimes \ldots \otimes \frac \dau{\dau x^m}\otimes dx^n\otimes\ldots\otimes dx^p$. Then the interpretations of $j$ and $F$ are the following for the observer.

$j_0$ is interpreted as a $-\rho$, where $\rho$ is the electric charge density, while $j_1$, $j_2$ and $j_3$ correspond to the three spatial components of the electric current density. Moreover,
$$(F_{\mu\nu})_{0\leq\mu,\nu\leq 3}=\left(\begin{array}{cccc} 0 & E_1 & E_2 & E_3\\
-E_1 & 0 & -B_3 & B_2\\
-E_2 & B_3 & 0 & -B_1\\
-E_3 & -B_2 & B_1 & 0
\end{array}\right),$$
where $(E_1,E_2,E_3)$ and $(B_1,B_2,B_3)$ are the spatial components of what the observer would call, respectively, electric and magnetic fields. It is easily verified that these equations give the ordinary Maxwell equations for $E$ and $B$ for Minkowski spacetime. \cite{thicla}

The Maxwell equations tell us how a charge-current distribution $j$ creates an electromagnetic field. Conversely, the \emph{Lorentz force law}\index{Lorentz force law} describes how the electromagnetic field influences the motion of a charged point particle. In the context of general relativity it can be stated as follows. Let $]-\epsilon,\epsilon[\ra{\gamma} M$ be the world line of a point particle with charge $q$ and mass $m$ and assume that it is parametrised such that $g(\gamma'(t),\gamma'(t))=-1$. Then, in any local frame $(e_\mu)$ for $TM$, using the Einstein summation convention:
$$\frac{D\gamma'}{dt}=\frac{q}{m}\,\bigl(i_{\gamma'}F\bigr)^\sharp.$$
(Here $\sharp$ denotes raising an index with the spacetime metric. Remember that $D/dt$ denotes the covariant derivative corresponding to the pullback connection $\gamma^*P_V$ if $P_V$ denotes the Levi-Civita connection on $TM$. Explicitly, for a vector field $s$ along $\gamma$, $Ds/dt=K\circ s'$.) In particular, a charged particle does not move along spacetime geodesics if electromagnetic fields are present!

It is important to note that the first Maxwell equation tells us that $F$ is closed. By the Poincar\'e lemma, this means that it is locally exact: for every point $m\in M$, we find a neighbourhood $U\subset M$ and an $A\in \Omega^1(U)$ such that $F=dA$ on $U$. This $1$-form $A$ is known as the \emph{vector potential}. It is worth noting that we can globally define $A$ in the case of Minkowski spacetime since it is star-shaped. Note that we have freedom in the choice of $A$. Indeed, we can add arbitrary closed $1$-forms to our vector potential. Of course, physicists use this to their advantage and choose appropriate local potentials for an enormous simplification of calculations.

Again, in the source-free case, we can find an action, called the \emph{Einstein-Hilbert-Maxwell action}, from which the field equations will follow. Indeed, if locally $F=dA$, then $(g,A)$ satisfies the Einstein equation with electromagnetic stress-energy and the source-free Maxwell equation precisely when
$$S(g,A)=\int_M(\mathcal{S}-\frac{1}{2}\mathrm{tr}^{1,3}\mathrm{tr}^{2,4}(F\otimes F))\mu$$
is stationary with respect to $g$ and to compactly supported variations of the potential $A$.

\section{Intermezzo: Geometrised Forces}
\label{sec:geomfor}
There is an even more modern approach to classical electromagnetism, using the language of principal bundles, which this entire thesis has been working towards. Let us first ask why we are not satisfied with the formulation I just discussed. I think the primary argument, which I found in \cite{blegau}, might be the following: it does not provide enough geometric insight. Just like Einstein geometrised Newton's theory of gravity, we would like to geometrise Maxwell's electromagnetism.

In the absence of any forces, Newton's second law tells us that the motion of a particle is a geodesic (i.e. straight line) in three-dimensional Euclidean space. This is a beautiful geometrical property of the theory: if the trajectories of two point particles are tangent at some point, they are the same (although they might be parametrised differently).

This simplicity is lost when we allow gravitational forces to act on the particles. Indeed, imagine we launch two projectiles from the same launching platform in exactly the same direction but with different velocity. It may happen that one escapes the earth's gravitation while the other does not. This problem was resolved by Einstein. By considering a four-dimensional spacetime with a connection, we obtain new geodesics along which point particles travel. In this setting, the motion of two particles again coincides when it is tangent at some point of spacetime. The trajectories of the two projectiles were not tangent in Euclidean space but are in spacetime. By adding a dimension to our space, we geometrised the gravitational force.

However, the same problem still holds for electromagnetism. Suppose the two projectiles were charged, say electrons, then they would still follow different trajectories because of the earth's magnetic field. We can hope to resolve this problem by a similar trick. We try to geometrise electromagnetism by considering the motion of the particle on a five-dimensional `charge-spacetime' with a connection, on which the charged point particle will travel along geodesics. This construction was first worked out by Kaluza (1921) and Klein (1926). 

\section{Electromagnetism as a $U(1)$-Gauge Theory}
In modern language the Kaluza-Klein approach to electromagnetism is best described using a principal $U(1)$-bundle $P\ra{\pis}M$ over spacetime. We will interpret the electromagnetic tensor as the curvature of some principal connection $\omega\in \Omega^1(P,\mathfrak{u}(1))$. The local choice of a gauge potential $A_\alpha$ will then correspond to the pullback of the connection $\omega$ along a local section $e_\alpha\in \Gamma(P|_{U_\alpha})$ of $\pis$: $A_\alpha=ie\, e_\alpha^*\omega$, where $e\in \R$ is the fundamental unit of electric charge of which every other charge is a multiple\footnote{Experiments tell us that such a smallest quantity of charge should exist. \cite{minwea}}. The question still is, however, what principal bundle we should be considering.

\subsection{Potentials and Principal Connections}
Suppose $(U_\alpha)$ is an open cover of $M$ and that we have made a choice of $A_\alpha\in \Omega^1(U_\alpha)$ of local electromagnetic vector potentials, i.e. $dA_\alpha=F|_{U_\alpha}$. Then, on $U_\alpha\cap U_\beta,$\; $d(A_\alpha-A_\beta)=0$. Suppose we have taken a suitable cover, such that $U_\alpha\cap U_\beta$ is contractible, say. Then the first de Rham cohomology group is trivial, so we conclude that, on $U_\alpha\cap U_\beta,$\; $A_\beta-A_\alpha=df_{\alpha\beta}$, where $f_{\alpha\beta}$ is some real-valued smooth function. Obviously, we can define these such that $f_{\alpha\beta}=-f_{\beta\alpha}$. Moreover, let us, by convention, write $\omega_\alpha:=-i A_\alpha/e$. Then,
$$\omega_\beta=\omega_\alpha-i\, df_{\alpha\beta}/e.$$
Let us write 
$$g_{\alpha\beta}:=\exp(-i f_{\alpha\beta}/e).$$
Then $U_\alpha\cap U_\beta \ra{g_{\alpha\beta}} U(1)$ and $g_{\alpha\beta}\cdot g_{\beta\gamma}\cdot g_{\gamma\alpha}=\exp(-i (\delta f)_{\alpha\beta\gamma}/e)=1$,
where we write $(\delta f)_{\alpha\beta\gamma}=f_{\alpha\beta}+f_{\beta\gamma}+f_{\gamma\alpha}$ (following the convention in \v Cech cohomology), if and only if $(\delta f)_{\alpha\beta\gamma}\stackrel{!}{\in} 2\pi e \Z$. By theorem \ref{cfbcon}, we see that this is precisely the condition that states that $(g_{\alpha\beta})$ would define a coordinate principal $U(1)$-bundle $P\ra{\pis}M$, with local trivialisations which we shall write as $(\psi_\alpha)$. Now, why does this condition hold? (This is highly non-trivial.)

We easily see that $d((\delta f)_{\alpha\beta\gamma})=d(f_{\alpha\beta}+f_{\beta\gamma}+f_{\gamma\alpha})=
A_\beta-A_\alpha+A_\gamma-A_\beta+A_\alpha-A_\gamma=0$, thus $(\delta f)_{\alpha\beta\gamma}$ is constant, i.e. equal to some $c_{\alpha\beta\gamma}\in\R$. Although it is far from easy to see that $c_{\alpha\beta\gamma}\in 2\pi e \Z$, this can indeed be argued on physical grounds\footnote{The argument rests on the fact that there is a smallest quantity of electric charge in nature.}. The argument is originally due to Dirac and it is known as the \emph{Dirac quantisation argument}\index{Dirac quantisation argument} or \emph{Dirac quantisation condition}. The interested reader can find a modern treatment in \cite{minwea}\footnote{The nLab page on the 'electromagnetic field' by Urs Schreiber features a brief but very insightful explanation of the argument as well.}.

What can be easily demonstrated, however, is that we can choose $f_{\alpha\beta}$ such that $c_{\alpha\beta\gamma}=0$, if the second \v Cech cohomology group $\breve H^2(M,\R)$ with values in the sheaf of constant real functions is trivial\footnote{The condition that $\breve H^2(M,\R)=\{0\}$ is equivalent to demanding that the second de Rham cohomology group $H^2_{dR}(M)$ is trivial. \cite{botdif} In particular, we see that the condition is satisfied locally by any spacetime. The interested reader can find an account of \v Cech cohomology in \cite{botdif}.}. Indeed, suppose we had made an original choice $\tilde f_{\alpha\beta}$, such that $A_\beta-A_\alpha=d\tilde f_{\alpha\beta}$ and we write $\tilde c_{\alpha\beta\gamma}:=(\delta \tilde f)_{\alpha\beta\gamma}\in\R$. Note that $(\delta\tilde c)_{\alpha\beta\gamma\delta}=(\delta^2\tilde f)_{\alpha\beta\gamma\delta}=0$ to see that $\delta \tilde f=\tilde c$ defines a class $[\tilde c]\in \breve H^2(M,\R)$. If $\breve H^2(M,\R)$ is trivial, we therefore have $\tilde c_{\alpha\beta\gamma}=(\delta \tilde c')_{\alpha\beta\gamma}$, for some $\tilde c'_{\alpha\beta}\in\R$. We choose $f_{\alpha\beta}:=\tilde f_{\alpha\beta}-\tilde c'_{\alpha\beta}$. Then, obviously, $d f_{\alpha\beta}=d\tilde f_{\alpha\beta}-d\tilde c'_{\alpha\beta}=d\tilde f_{\alpha\beta}=A_\beta-A_\alpha$ and $c_{\alpha\beta\gamma}=(\delta f)_{\alpha\beta\gamma}=(\delta \tilde f)_{\alpha\beta\gamma}-(\delta \tilde c')_{\alpha\beta\gamma}=0$.

We see that the electromagnetic potential defines a principal $U(1)$-bundle in this way. If $H^2_{dR}(M)$ is trivial this is obvious, while the general case is more subtle.\\
\\
In this setting, the potential itself will be interpreted as a principal connection on this bundle. Let us write $e_\alpha$ for the local section of $\pis$ defined by $e_\alpha(m):=\psi_\alpha^{-1}(m,1)$. Then, a collection $\omega_\alpha\in \Omega^1(U_\alpha,\mathfrak{u}(1))$ is easily seen (\cite{blegau}, p. 32) to define\footnote{Explicitly, let $U_\alpha\ra{\sigma_\alpha}P$ be the local section of $P\ra{\pis}M$ associated to the local trivialisation $(U_\alpha,\psi_\alpha)$. If $p=\sigma_\alpha(m)\cdot g$, each $X_p\in T_p(\pis^{-1}U_\alpha)$ can be uniquely decomposed as $X_p=T\rho_g(T\sigma_\alpha(Y_m))+Z|_P(p)$, for some $Y_m\in T_mU_\alpha$ and $Z\in\g$. Define $\omega(X_p):=\mathrm{Ad}(g^{-1})\omega_\alpha(Y_m)+Z$. One readily checks, using the displayed compatibility condition on overlaps, that these local definitions glue to a well-defined principal connection.} a principal connection $\omega\in \Omega^1(P,\mathfrak{u}(1))$ such that $\omega_\alpha:=e_\alpha^* \omega$ if and only if
$$\omega_\beta(X_m)=\mathrm{Ad}(g_{\alpha\beta}(m)^{-1})\, (\omega_\alpha(X_m))+TL_{g_{\alpha\beta}(m)}^{-1}(Tg_{\alpha\beta}(X_m))\qquad \forall X_m\in T_m (U_\alpha\cap U_\beta).$$
In our case of an abelian Lie group, this condition takes the simple form (if we also simplify the notation somewhat)
$$\omega_\beta=\omega_\alpha+g_{\alpha\beta}^{-1} d g_{\alpha\beta},$$
which we immediately see to be met in our definition\footnote{Indeed, $dg_{\alpha\beta}=d\exp(-i f_{\alpha\beta}/e)=-i/e\exp(-i f_{\alpha\beta}/e)df_{\alpha\beta}=
-i/eg_{\alpha\beta}(A_\beta-A_\alpha)=g_{\alpha\beta}
(\omega_\beta-\omega_\alpha)$.}. We conclude that we obtain a principal connection $\omega\in \Omega^1(P,\mathfrak{u}(1))$, representing the electromagnetic potential. Its curvature represents the electromagnetic tensor. Indeed, by equation \ref{maurer},
$$e_\alpha^*\Omega^\omega=e_\alpha^*d^\omega \omega=e_\alpha^*d\omega+\frac12 e_\alpha^*[\omega,\omega]=e_\alpha^*d\omega=d\omega_\alpha=-i\, dA_\alpha/e=-i\, F|_{U_\alpha}/e.$$
Note that the vanishing of the term involving the Lie bracket is characteristic for electromagnetism, which is the only gauge theory of particle physics involving an Abelian group. Because of this, we easily see that the Bianchi identity for $\Omega^\omega$ implies the first Maxwell equation
$$0=e_\alpha^*d^\omega \Omega^\omega=e_\alpha^* d\Omega^\omega=-i\, dF|_{U_\alpha}/e,$$
where the first equality holds because of the Bianchi identity and the second because of commutativity of $U(1)$.\\
\\
So we see that electromagnetism gives rise to a principal $U(1)$-bundle over spacetime\footnote{Although we should note that the principal bundle we ended up with depended on a lot of choices, for example that of the cover $(U_\alpha)$ of $M$.} with an equivariant connection. The assumption $H^2_{dR}(M)=\{0\}$ is sufficient for the elementary construction above, and in any case the construction is local on spacetime. Globally, non-trivial $H^2_{dR}(M)$ is precisely the kind of topological obstruction that can force one to work with a non-trivial $U(1)$-bundle and local potentials rather than a single global potential. Of course, whatever our spacetime is, we can construct the principal bundle locally.

We see that such a formalism of electromagnetism can be constructed for important spacetimes and is equivalent to the classical Einstein-Maxwell formalism, but why should we care? For one thing, it interprets electromagnetism as a so-called \emph{Yang-Mills theory}, and it can be used, following the historical development, as a stepping stone to general Yang-Mills theories. In general, a Yang-Mills theory describes some fundamental interaction of matter (electromagnetism, weak interaction, electroweak interaction, strong interaction, etc.) using a very similar setup, in which $U(1)$ is replaced by some other compact Lie group $G$. These theories have been intensively studied and, after quantisation of the fields (also called second quantisation), yield the \emph{standard model of particle physics}, the best theory of particle physics currently available.

However, up to this point, nothing has been said about quantum mechanics. We have seen that the formalism arises naturally in a non-quantum-mechanical setting. This is often overlooked in the literature and many references, like \cite{frageo}, immediately start with a quantum-mechanical formalism and use more complicated arguments to advocate the $U(1)$-principal bundle formalism. Svetlichny, in his celebrated lecture notes on gauge theory, even goes as far as saying: ``Interpreting electromagnetism as a gauge theory on PU(1) does not offer any special advantages except when one considers quantum theory or extensions to situations that transcend classical Maxwellian theory. One such advantage is seen in trying to define magnetic monopoles." \cite{svepre}

I happen to disagree. I think this under-appreciated approach to classical electrodynamics provides a beautiful geometric picture of what is going on in nature and is a more natural formulation of electromagnetism on curved spacetime, along the lines of the discussion in section \ref{sec:geomfor}. I will elaborate.

\subsection{Geodesic Motion}
Note that any real inner product on $\mathfrak{u}(1)$ is $\mathrm{Ad}$-invariant. Let us, for convenience, pick the one such that $k(i,i)=1$. This defines a non-degenerate pseudo-Riemannian metric (of signature $(-1,+1,+1,+1,+1)$) on $P$ by the formula
$$h:=\pis^* g+k\circ (\omega\otimes\omega).$$
(Non-degeneracy is verified by decomposing a vector $X\in TP$ into its horizontal and vertical parts with respect to the connection $\omega$.) This metric includes contributions from both the spacetime metric $g$ (which is analogous to a gravitational potential\footnote{Strictly speaking, this analogy is not entirely correct. Although the curvature tensor tells us a lot about the gravitational field and it is often said in the physics literature, e.g. \cite{walgen}, to represent the field strength, it is not enough to completely characterise it; nor is the Levi-Civita connection. Indeed, the Einstein equation relates the stress-energy tensor (and therefore the spacetime distribution of matter) to the Einstein tensor (which is built from the metric and cannot be built from the connection alone), which therefore comes closest to something that we would like to call a `gravitational field strength'. On the other hand, the analogy is not too bad, as the connection indeed determines the motion of point particles (i.e. geodesics).}) and the electromagnetic potential $\omega$. 
\begin{remark}\label{rem:isom}Note that the principal right action acts by isometries of this metric.\end{remark}

We can argue that this metric plays a similar role in a combined electromagnetic-gravitational theory as the spacetime metric does in the theory of relativity. For one thing, we will see that its scalar curvature defines the correct action density, from which we obtain both Einstein's equation and the (non-Bianchi) Maxwell equation, when we vary both the spacetime metric and the electromagnetic connection. (See theorem \ref{thm:scacurv}.) Thus, it helps us obtain the correct equations that describe the evolution of the fields.

Now, it turns out that we can also obtain the equations of motion for a charged point particle in these fields from it. Indeed, they are just the geodesics of $h$, just as the motion of a point particle in only a gravitational field is along the geodesics of $g$.

\begin{theorem}[Geometrised forces] \label{thm:genchar}
Let $(M,g)$ be a pseudo-Riemannian manifold and let $G$ be a Lie group with an Ad-invariant inner product $k_\g$ on its Lie algebra $\g$. Let $P\ra{\pis}M$ be a principal $G$-bundle over $M$ that is equipped with a principal connection $\omega$ and let us write $\Omega^\omega$ for its curvature. Finally, let $]-\epsilon,\epsilon[\ra{\gamma}P$ be a geodesic for the metric
$$h=\pis^*g+k_\g\circ  (\omega\otimes\omega).$$
Then $\omega\circ \gamma'(t)$ is constant and equal to some element $Q\in \g$. Moreover, $c:=\pis\circ \gamma$ satisfies the equation
$$\frac {Dc'}{dt}=-\left(k_\g(Q,i_{\gamma'}\Omega^\omega)\right)^\sharp,$$
where the horizontal $1$-form $k_\g(Q,i_{\gamma'}\Omega^\omega)$ is identified with a $1$-form along $c$ by $T\pis$, and $\sharp$ raises an index with $g$.
\end{theorem}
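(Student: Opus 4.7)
The plan is to combine the isometry property of the principal right action with an O'Neill-type analysis of the pseudo-Riemannian submersion $\pis:(P,h)\to(M,g)$. For the first assertion, each fundamental vector field $Z_P$, $Z\in\g$, is Killing for $h$ by remark \ref{rem:isom}, so $s\mapsto h(\gamma'(s),Z_P(\gamma(s)))$ is constant along the $h$-geodesic $\gamma$. Because $Z_P$ is vertical the $\pis^*g$-part contributes zero, and claim \ref{claim:repro} gives
$$h(\gamma',Z_P\circ\gamma)=k_\g(\omega(\gamma'),\omega(Z_P\circ\gamma))=k_\g(\omega(\gamma'),Z).$$
Letting $Z$ range over $\g$ and invoking non-degeneracy of $k_\g$ forces $\omega\circ\gamma'$ to be a constant element $Q\in\g$.

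For the second assertion, I extend $c'$ locally to a smooth vector field $X$ on $M$; by part 1 and claim \ref{claim:repro} the tangent splits along $\gamma$ as $\gamma'=X_\mathrm{hor}+Q_P$. Expanding $\nabla^h_{\gamma'}\gamma'=0$ bilinearly produces four terms, which I intend to simplify via three ingredients: (a) $[X_\mathrm{hor},Q_P]=0$, because the $G$-equivariance of $\omega$ renders horizontal lifts invariant under the flow of the fundamental vector field $Q_P$; (b) $\nabla^h_{Q_P}Q_P=0$, because the fibres of $\pis$ pull back via the principal action to $G$ equipped with the bi-invariant metric determined by $k_\g$ (here Ad-invariance is essential) and are therefore totally geodesic; (c) the identity $\omega([X_\mathrm{hor},Y_\mathrm{hor}])=-\Omega^\omega(X_\mathrm{hor},Y_\mathrm{hor})$ extracted from theorem \ref{maurer}, which identifies the vertical part of the bracket of two horizontal lifts as a curvature term.

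Feeding (a)--(c) into the Koszul formula for $h=\pis^*g+k_\g\circ(\omega\otimes\omega)$ yields $\nabla^h_{X_\mathrm{hor}}X_\mathrm{hor}=(\nabla^g_X X)_\mathrm{hor}$ (the vertical curvature correction $-\tfrac12\Omega^\omega(X_\mathrm{hor},X_\mathrm{hor})_P$ vanishes by antisymmetry) and shows that $\nabla^h_{X_\mathrm{hor}}Q_P$ is purely horizontal with
$$h\bigl(\nabla^h_{X_\mathrm{hor}}Q_P,\,W_\mathrm{hor}\bigr)=\tfrac12\,k_\g\bigl(\Omega^\omega(X_\mathrm{hor},W_\mathrm{hor}),\,Q\bigr)\qquad\forall W\in\mathcal X(M).$$
Using (a) to collapse $\nabla^h_{X_\mathrm{hor}}Q_P+\nabla^h_{Q_P}X_\mathrm{hor}=2\nabla^h_{X_\mathrm{hor}}Q_P$, the equation $\nabla^h_{\gamma'}\gamma'=0$ becomes $(\nabla^g_{c'}c')_\mathrm{hor}+2\nabla^h_{X_\mathrm{hor}}Q_P=0$ along $\gamma$. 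Applying $T\pis$ and pairing with an arbitrary $W\in T_{c(s)}M$ via $g$ gives
$$g(\nabla^g_{c'}c',W)=k_\g\bigl(Q,\Omega^\omega(W,c')\bigr),$$
which, after raising an index via $g^{-1}$ and unpacking the notation, is exactly $Dc'/ds=k_\g(Q,\mathrm{tr}^{2,3}\,\Omega^\omega\otimes c')$.

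The main technical obstacle lies in the Koszul bookkeeping: I will need to verify carefully that the vertical part of $\nabla^h_{X_\mathrm{hor}}Q_P$ vanishes and that $\nabla^h_{Q_P}Q_P=0$ (the latter reducing via Koszul to the Ad-invariance identity $k_\g([Z,Z'],Z)=0$), and to keep signs consistent around Cartan's structural equation and the bracket of fundamental vector fields of a right action.
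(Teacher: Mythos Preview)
The paper states this theorem without proof, so there is nothing to compare against directly. Your proposal is correct and is the standard Kaluza--Klein/O'Neill argument for the bundle metric $h=\pis^*g+k_\g\circ(\omega\otimes\omega)$.

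The first assertion via Killing fields is clean. For the second, ingredients (a)--(c) and the Koszul computations you sketch all go through exactly as you describe; in particular $\nabla^h_{Q_P}Q_P=0$ does reduce via Koszul to the Ad-invariance identity $k_\g([Q,Z],Q)=0$, and the vertical part of $\nabla^h_{X_\hor}Q_P$ vanishes because every term in the Koszul expansion is killed either by $[X_\hor,Z_P]=0$ or by constancy of $k_\g(Q,Z)$. Two small caveats worth adding: first, extending $c'$ to a vector field $X$ on $M$ requires $c'\neq 0$ near the point in question, so you should remark that the identity extends to any isolated zeros of $c'$ by continuity (or handle the purely vertical case separately). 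Second, the expression $k_\g(Q,\mathrm{tr}^{2,3}\Omega^\omega\otimes c')$ in the theorem is somewhat informal, since $\Omega^\omega$ descends to $M$ only as an adjoint-bundle-valued form while $Q\in\g$ is a fixed element; your identity $g(\nabla^g_{c'}c',W)=k_\g\bigl(Q,\Omega^\omega(W_\hor,c'_\hor)\bigr)$, evaluated at $\gamma(s)$, is the precise content and matches the generalised Lorentz force law the paper intends.
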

Comparison with the Lorentz force law of section \ref{sec:max} gives the following. In particular, we see that we should interpret $Q$ as a charge-to-mass ratio for a kind of generalised charge, where the number of generalised charges of the particle corresponds to the dimension of the Lie algebra $\g$.
\begin{corollary}
Geodesics on the Kaluza-Klein principal $U(1)$-bundle project down to the motion of a charged pointlike test particle, with electric charge-to-mass ratio $q/m=-iQ/e$, on $M$, as governed by the Lorentz force law. In particular, we see that the motion of a charged point particle in an electromagnetic field on spacetime is determined by the demand that the action (energy integral)
$$S_h(\gamma)=\frac 1 2\int_{-\eps}^\eps h(\gamma'(t),\gamma'(t))dt$$
is stationary with respect to all compactly supported variations.
\end{corollary}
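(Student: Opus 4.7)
The plan is to split the argument into a conservation law (part 1) and a projection of the geodesic equation (part 2), exploiting two structural facts about $(P,h)$: the fundamental vector fields $Z_P$ are Killing (remark \ref{rem:isom}), and $\pi\colon(P,h)\to(M,g)$ is a Riemannian submersion whose horizontal distribution is $\ker\omega$ (since $\omega$ vanishes on horizontal vectors, $h|_{HP}=\pi^*g$) and whose fibres, carrying the bi-invariant metric induced by the $\mathrm{Ad}$-invariant $k_\g$, are totally geodesic.

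First I would prove conservation of $\omega\circ\gamma'$. For $Z\in\g$, Killing's identity gives that $h(\gamma',Z_P)$ is constant along any geodesic $\gamma$. Using that $Z_P$ is vertical ($T\pi\circ Z_P=0$) and $\omega(Z_P)=Z$ (claim \ref{claim:repro}),
\[
h(\gamma'(t),Z_P(\gamma(t)))=g(T\pi\gamma',T\pi Z_P)+k_\g(\omega(\gamma'),\omega(Z_P))=k_\g(\omega(\gamma'(t)),Z).
\]
Since this is constant for every $Z\in\g$ and $k_\g$ is non-degenerate, $\omega(\gamma'(t))\equiv Q$ for some $Q\in\g$. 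This lets me decompose $\gamma'(t)=\widetilde{c'(t)}+Q_P(\gamma(t))$ where tilde denotes the horizontal lift at $\gamma(t)$, and since $Q$ is constant, $Q_P$ is a globally defined vector field on $P$.

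Next I would project the equation $\nabla^h_{\gamma'}\gamma'=0$ onto $M$. Testing against the horizontal lift $\tilde Y$ of an arbitrary $Y\in\mathcal{X}(M)$ and expanding,
\[
h(\nabla^h_{\gamma'}\gamma',\tilde Y)=h(\nabla^h_{\widetilde{c'}}\widetilde{c'},\tilde Y)+2\,h(\nabla^h_{\widetilde{c'}}Q_P,\tilde Y)+h(\nabla^h_{Q_P}Q_P,\tilde Y).
\]
The first term is $g(Dc'/dt,Y)$ by the standard O'Neill formula for Riemannian submersions. The last vanishes because $Q_P$ restricted to a fibre is a left-invariant field on $(G,k_\g)$, whose bi-invariant Levi-Civita connection satisfies $\nabla Q_PQ_P=\tfrac12[Q,Q]=0$. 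For the cross term I would apply the Koszul formula to $\nabla^h_{\widetilde{c'}}Q_P$: using that horizontal lifts of vector fields on $M$ are right-invariant (equivariance of $\omega$), hence $[\widetilde{c'},Q_P]=[\tilde Y,Q_P]=0$; that $h(Q_P,\widetilde{c'})=h(Q_P,\tilde Y)=0$; that $h(\widetilde{c'},\tilde Y)=g(c',Y)\circ\pi$ is constant along fibres so $Q_P\bigl(h(\widetilde{c'},\tilde Y)\bigr)=0$; and finally Cartan's second structural equation (theorem \ref{maurer}) in the form $\omega([\widetilde X,\widetilde Y])=-\Omega^\omega(\widetilde X,\widetilde Y)$, all the Koszul terms collapse to
\[
h(\nabla^h_{\widetilde{c'}}Q_P,\tilde Y)=\tfrac12\,k_\g\bigl(Q,\Omega^\omega(\widetilde{c'},\tilde Y)\bigr).
\]
Assembling everything, $h(\nabla^h_{\gamma'}\gamma',\tilde Y)=0$ becomes $g(Dc'/dt,Y)=-k_\g(Q,\Omega^\omega(\widetilde{c'},\tilde Y))$, which, after using horizontality and equivariance of $\Omega^\omega$ to descend it to a $2$-form on $M$ with values in $P[\mathrm{Ad}]$ (remark \ref{identifi}) and identifying the resulting contraction with $\mathrm{tr}^{2,3}\,\Omega^\omega\otimes c'$, is the stated equation.

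The routine parts are the Killing/Noether conservation and the vanishing of $\nabla^h_{Q_P}Q_P$. The main obstacle is the cross-term computation: one has to verify carefully that the basic horizontal fields $\widetilde{c'}$ and $\tilde Y$ really commute with $Q_P$ and that all the "unwanted" Koszul terms vanish, so that the only surviving contribution is the Lie-bracket term, which via the structural equation delivers precisely the curvature. Matching the final sign/contraction convention to the paper's shorthand $k_\g(Q,\mathrm{tr}^{2,3}\Omega^\omega\otimes c')$ is also a bookkeeping step one must not skip.
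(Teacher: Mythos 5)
Your proposal is, in substance, a proof of theorem \ref{thm:genchar} rather than of the corollary itself: the paper states that theorem without proof and obtains the corollary in a single step, by comparing the resulting equation with the Lorentz force law of section \ref{sec:max}. As a proof of the theorem your outline is sound and the route is a natural one: the conservation of $\omega\circ\gamma'$ via the Killing property of the fundamental vector fields (remark \ref{rem:isom}), the splitting $\gamma'=\widetilde{c'}+Q_P\circ\gamma$, the O'Neill identity for the purely horizontal term, and the Koszul computation in which every term except the bracket term dies and the survivor is converted into curvature by Cartan's structural equation are all correct; the vanishing of $[\tilde X,Q_P]$ for a basic horizontal field $\tilde X$ does follow from $T\rho_g(HP)=HP$, as you say, and your direct verification that $h(\nabla^h_{Q_P}Q_P,\tilde Y)=0$ makes the appeal to totally geodesic fibres unnecessary.

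What your write-up does not yet contain is precisely the content of the corollary. First, the specialisation to $G=U(1)$: you must feed in the dictionary of the preceding subsection, namely $e_\alpha^*\Omega^\omega=-i\,F|_{U_\alpha}$, the normalisation $k(i,i)=1$, and the identification $Q=iq/m$ (equivalently $q=-iQm$), and verify that $k_\g(Q,\mathrm{tr}^{2,3}\,\Omega^\omega\otimes c')$ then reproduces $(q/m)\,\mathrm{tr}^{2,3}F\otimes c'$ with the correct sign --- the ``bookkeeping step'' you explicitly defer is exactly where the claimed value of the charge lives, so it cannot be skipped. Second, the final sentence of the corollary is not addressed at all: the assertion that the motion is characterised by stationarity of $S_h$ under compactly supported variations is a separate point, which follows from section \ref{sec:riem} (claim \ref{firvar} and the construction of the Levi-Civita connection), where geodesics of $h$ are obtained precisely as the stationary points of the energy integral. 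Neither omission reflects a flaw in your geometry --- indeed you have supplied a proof of a result the paper leaves unproven --- but both are needed before the statement as printed is established.
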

We see that the principal-bundle setup of electromagnetism provides a beautiful geometric framework for describing the motion of classical point particles. In particular, it provides a close parallel between electromagnetism and general relativity. Moreover, it exhibits the CPT symmetry (charge, parity, time) which is fundamental in nature: if $]-\epsilon,\epsilon[\ra{\gamma}P$ is a geodesic, so is $\gamma^*(t):=\gamma(-t)$ (with reversed spatial and temporal motion and opposite charge). On the other hand, any other combination of the $C$, $P$ and $T$ reversals does not map geodesics to geodesics.

\begin{remark}[The Aharonov-Bohm Effect]\index{Aharonov-Bohm effect}
The principal bundle formulation of electromagnetism places great emphasis on the potential $A$. We should therefore ask, in retrospect, what its physical significance is. Should we care about it at all, or is it only a mathematical trick to ease calculations? Since Maxwell's original formulation of electromagnetism, this had long been the predominant idea. Indeed, in classical electrodynamics there is no reason to think of the potential as anything more.

In a quantum-mechanical setting, however, an experiment can be conceived to distinguish between two states that have equal electromagnetic fields, but differ in potential. This experiment has been conducted and the predicted effect, called the Aharonov-Bohm effect, has indeed been observed. 

This effect is of tremendous theoretical importance. Indeed, it compels us to interpret the electromagnetic potential as something of real physical importance, as the more fundamental quantity than the electromagnetic field\footnote{To be precise, it presents us with a choice: either we abandon the principle of locality, which most physicists are certainly not willing to do, or we lose the electromagnetic field as the fundamental quantity in electromagnetism.}. Moreover, this marks the triumph of the Lagrangian approach to physics, using action principles and potential energies, over the Newtonian approach, using forces. For a clear exposition of the mathematical details of this effect (which has a useful topological interpretation), the reader is referred to \cite{baegau}.
\end{remark}

\section{Lagrangian Gauge Theories of the Yang-Mills Kind}
The principal bundle formalism for Yang-Mills-like gauge theories is a formalism for describing elementary particles and their interactions in a specially or generally covariant fashion. It is a semi-classical theory in the sense that the matter particles are treated in a quantum-mechanical way and are described by wave functions, while their interactions are governed by classical fields, called \emph{gauge fields}. The fields still have to be ``quantised'' in order to obtain a real quantum theory of fields, like the \emph{standard model}. This, however, goes beyond the scope of these notes. We will restrict ourselves to giving a brief outline of the framework of Yang-Mills theories, in this section.\\
\\
These gauge theories are a generalisation of the Kaluza-Klein electromagnetism in two senses. On the one hand, the generalisation is from electromagnetic fields to more general gauge fields\footnote{Examples of such fields include
\begin{enumerate}
\item $G=U(1)$: electromagnetism
\item $G=SU(2)$: weak interaction
\item $G=U(1)\times SU(2)$: electroweak interaction
\item $G=SU(3)$: strong interaction
\end{enumerate}
and combinations of these.}. We take this first step of generalisation by replacing the $U(1)$ in our formulation of electromagnetism by a general Lie group. On the other hand, it is a generalisation, since our framework will be able to deal with quantum-mechanical matter fields.\\ 
\\
Since physicists are often somewhat vague about what they mean by gauge fields and Yang-Mills-like theories, I will try to give a strict definition. 
\begin{definition*}[Gauge theory (of the Yang-Mills kind)] A gauge theory consists of the following information:
\begin{enumerate}
\item A Lorentzian manifold $(M,g)$;
\item a Lie group $G$ with an $\mathrm{Ad}$-invariant\footnote{Again, such an invariant inner product may be constructed by integration, for a compact Lie group.} inner product $k_\g$ on $\g$;
\item a principal $G$-bundle $P\ra{\pis}M$;
\item an action density $\mathscr{C}(P)\ra{\mathscr{L}_s}C^\infty(M)$, which is smooth in the obvious way, called the \emph{self-action density};\\
---------------------------------------------------------------------
\item a linear left $G$-action on a vector space $W$, on which we have an invariant inner product $k_W$ (from which we can form an associated vector bundle with an inner product);
\item an action density $C^\infty(P,W)^G\times \mathscr{C}(P)\ra{\mathscr{L}_i}C^\infty(M)$, called the \emph{interaction action density}.
\end{enumerate}
Quantities that are invariant under gauge transformations are said to be \emph{gauge-invariant}. In particular, we (almost always) require the action densities to be gauge-invariant.
\end{definition*}
The gauge fields are described by elements of $\mathscr{C}(P)$, while the (quantum) matter fields are sections of the associated vector bundle or, equivalently, elements of $C^\infty(P,W)^G$. One can also use only the ingredients before the horizontal line to obtain a source-free (or vacuum) Yang-Mills theory. These are direct generalisations of the previous section: $U(1)$ is replaced by an arbitrary Lie group and the Einstein-Maxwell action density is replaced by a general one\footnote{In this setting, we can construct the metric $h=\pis^*g+k_\g\circ(\omega\otimes\omega)$ on $P$. A classical (pointlike) test particle of mass $m$ will follow a geodesic $]-\epsilon,\epsilon[\ra{\gamma}P$. Theorem \ref{thm:genchar} tells us that the generalised charge $q=Qm=m\,\omega(\gamma'(t))\in \g$ is a constant of motion. Note that this constant depends on the initial point $\gamma(0)\in P$. In particular, this means that the concept of a generalised charge may not be gauge-invariant if we are dealing with a non-abelian Lie group $G$.}\\
\\
In first-quantised gauge theory, we usually model the states of matter particles as sections of certain metrised vector bundles $(V\ra{\pi}M,g)$ over spacetime. These can be seen as generalisations of the $\C$\,-valued wave functions, i.e. sections of the trivial complex line bundle, from ordinary quantum mechanics. The Hilbert space of the resulting quantum theory will, in general, be the space of $L^2$-sections of this vector bundle, with the inner product induced by the bundle metric. The physical interpretation of $g(\psi,\psi)\in C^\infty(M)$ is a probability density $4$-current corresponding to the probability of finding a particle in a certain region of spacetime. \cite{wacrel}

It is best to think of these vector bundles, along the lines of section \ref{sec:ortheq}, as being associated bundles to some principal bundle $P\ra{\pis}M$ - on which the gauge fields are modelled by principal connections - with a group $G=O(\Bbbk,(p,q))$ of orthogonal linear transformations of the standard fibre $W=\Bbbk^{p,q}$ as a structure group. In this line of thought, using claim \ref{seccoo}, the sections of $\pi$ can also be interpreted as $G$-equivariant maps from $P$ to $W$.

Again, one equation of evolution for the fields will be the Bianchi identity, $d^\omega \Omega^\omega=0$, while the other is derived from an action principle (as in the second (inhomogeneous) Maxwell equation), as is the equation for the matter fields (as in the Dirac equation for electrons). The actions will be integrals over the spacetime manifold $(M,g)$ of the corresponding action densities. These action principles will be the next thing to understand.

\subsection{Action Principles on Principal Bundles}
We have already encountered various action principles, from which the equations of evolution of the physical system can be derived. However, we have been somewhat vague about them. I will now elaborate on a rigorous treatment of action principles in general and more specifically in gauge theories of the Yang-Mills kind, as they are formulated on principal bundles.

The situation is often as follows. One has a mathematical formalism in which we have a space of (formally) possible states of the physical system: the \emph{configuration space} \index{configuration space} $Q$, which we take to be a manifold (that is infinite-dimensional in the case of field theories). However, some states turn out to be physically impossible. The criterion that a state is physically relevant can often be put into the form that it is a stationary point of a certain function $Q\ra{S}\R$, called the \emph{action}\index{action}. 

Very often, $Q$ will be a manifold of mappings $Q\subset C^\infty(M,N)$ (or $C^\infty(N,M)$!) and $S$ is given by an integral $S(q)=\int_M \mathscr{L}(q)$, where $q\in Q$ and $Q\ra{\mathscr{L}} \mathrm{measures}(M)$. Under locality assumptions\footnote{Namely, that $\mathscr{L}$ factors through the bundle of germs of maps from $M$ to $N$. For more details, see section \ref{actionlocal}.} satisfied in many situations\footnote{This has to do with the principle of locality in physics.}, the global demand that $S$ be stationary at $q$ with respect to all compactly supported variations translates into a local differential equation for $q$. Concretely, after integrating by parts one obtains an Euler-Lagrange expression $EL(q)$ such that
$$\frac{d}{dt}\Big|_{t=0}S(q+t\tau)=\int_M \langle EL(q),\tau\rangle$$
for all compactly supported variations $\tau\in T_qQ$, and the fundamental lemma of the calculus of variations then gives stationarity for all such $\tau$ if and only if $EL(q)=0$. This is called the \emph{Euler-Lagrange equation}\index{Euler-Lagrange equation}. We see that the global demand on $q$ that it be a stationary point of the action reduces to a demand that $q$ satisfies a certain differential equation locally (which we can hope to solve).

The idea is that it is easier to argue what the action (that we demand to be stationary) should be than to derive the Euler-Lagrange equation immediately. Take, for example, soap films. It is a well-known fact (and it can indeed be easily argued from a simple physical model) that soap films are stationary points of the action function that takes a soap film and computes its area. However, this is not a very workable criterion for determining which shape a soap film will take given some boundary conditions. We can solve this problem by writing out the corresponding Euler-Lagrange equation, which, in this case, states that at each point of the film the mean curvature\footnote{This is an extrinsic measure of curvature of an embedded submanifold of a Riemannian manifold. (In this case, the soap film is embedded in $\R^3$.) To be more specific, it is a constant multiple of the trace of the second fundamental form.} should be zero. Although it is much harder to guess that this equation should hold for soap films, it is just a partial differential equation, which we can hope to solve. Another example is provided by Fermat's principle, which states that light propagates through an inhomogeneous medium between two points along the path that minimises (actually, we only have that the path is a stationary point of the action) the transit time (which is the action, in this case). From this we can derive Euler-Lagrange equations for the path that we can indeed solve, but which are much less insightful. \cite{gelcal}

Let us now proceed to the particular case at hand: action principles on principal bundles in the context of gauge theories of the Yang-Mills kind. In my treatment of these action principles I follow \cite{blegau}.

\subsection{Mathematical Intermezzo: Jet Bundles}
To give a mathematically rigorous treatment of such an action principle, we need the concept of a \emph{jet bundle}\index{jet bundle}. Recall that given a (pre)sheaf, we can construct its \emph{bundle of germs}, which is a topological \'etal\'e space whose projection is a local homeomorphism. In the case where we started with a sheaf, we can interpret the original sheaf as the sheaf of sections of this space. (The sheaf of sections of this space is isomorphic to the original sheaf.) \cite{macshe} Let us start with a pair of smooth manifolds $M$ and $N$ and consider the sheaf $U\mapsto C^\infty(U,N)$ of smooth maps from open subsets of $M$ to $N$. We obtain a topological \'etal\'e space $\mathrm{germ}(M,N)\ra{}M; \;\mathrm{germ}_mf\mapsto m$ of germs of smooth maps.

Now, we define the bundle of $k$-jets of maps from $M$ to $N$ as a quotient of $\mathrm{germ}(M,N)$: we identify $\mathrm{germ}_m f\sim \mathrm{germ}_m g$ if $f(m)=g(m)$ and, after choosing charts around $m$ and $f(m)$, the coordinate representatives have the same partial derivatives at $m$ up to order $k$. This condition is independent of the chosen charts. Note that this equivalence relation respects the fibres of the bundle of germs. In this way we obtain a topological fibre bundle $J^k(M,N)\ra{}M$, called the jet bundle of order $k$. One can think of $J^k(M,N)$ in charts as recording $$\{\; (m,f(m),T_m f, \partial^2f_m,\ldots, \partial^kf_m)\; | \; U\subset M\txt{open},\; m\in U,\; f\in C^\infty(U,N)\, \}.$$ Note that if we replace $N$ with a vector space $W$, $J^k(M,W)$ has the structure of a vector bundle. Finally, it is easily checked that we can define a smooth structure on $J^k(M,N)$ using that on $M$ and $N$, making it into a smooth fibre bundle. Indeed, we construct a chart for $J^k(M,N)$ from each pair of charts for $M$ and $N$ in the obvious way.

Note that we can also start with a fibre bundle $F\ra{\pi}M$ and consider the bundle of germs of sections of this bundle. We can take a quotient of this to obtain a jet bundle $J^k\pi$ of sections of $\pi$.

\subsection{Action Densities and Lagrangians}
\label{actionlocal}
Suppose we want to deal with a particle whose wave functions are sections of a metrised vector bundle $V\ra{\pi}M$, with standard fibre an inner product space $W$. Write $P\ra{\pis}M$ for its orthonormal frame bundle and write $G$ for its structure group. Then we can equivalently take $C^\infty(P,W)^G \subset C^\infty(P,W)$ as the set of wave functions. What we want to work towards is a function $C^\infty(P,W)^G\ra{\mathscr{L}}C^\infty(M)$, called the action density, from which we will be able to derive the evolution equations of the system. There is a very important principle in physics, known as the \emph{principle of locality}\index{principle of locality}, which states that an event in spacetime can only be influenced by certain events that are sufficiently near it. Put differently, we should be able to formulate the laws of physics locally and make predictions on the basis of only local information. Of course, this is a very practical demand. This demand is represented mathematically by demanding that the action density factors through the bundle of germs of maps from $P$ to $W$. 

There is another, less fundamental\footnote{Indeed, a similar theory can be set up for the case in which the action density factors through the $n$-th jet bundle.}, and even more practical demand that we put on the action density. Ever since Newton, we have been used to formulating our physical theories in a form in which the equations of motion for the particles become second-order differential equations (ordinary or partial). This means that if we know the initial configuration of a particle and its first derivative (or momentum), we can predict its trajectory. Equivalently, in the Lagrangian formalism (which we shall use) of field theory, we demand that the action density depends only on the value of the fields and their first derivatives. Put differently, it does not merely factor through the bundle of germs, but even through the jet bundle $J^1(P,W)$.

An \emph{interaction Lagrangian}\index{interaction Lagrangian} will be a $G$-invariant\footnote{We define the Lagrangian to be an invariant function on the (jet bundle of) the principal bundle. Such a function corresponds precisely to a function on the (jet bundle of) equivariant maps from $P$ to $W$ (or by claim \ref{seccoo}, on the (jet bundle of) sections of the associated vector bundle, which is what physicists would call a Lagrangian instead). Note that the action on $J^1(P,W)$ for which we demand invariance is a natural one induced by the principal right action on $P$. Indeed, if $\psi$ is an equivariant map, then $(p\cdot g,\psi(p\cdot g),d\psi(p\cdot g))=(p\cdot g,g^{-1}\cdot \psi(p),g^{-1}\cdot d\psi(p)\circ T\rho_{g^{-1}})$.} smooth function $J^1(P,W)\ra{L_i}\R$, in the sense that (remember that $G$ acts on $W$ on the left) $L_i(p,w,\xi)=L_i(p\cdot g,g^{-1}\cdot w, g^{-1}\cdot \xi\circ T\rho_{g^{-1}})$, where I write $\rho$ for the principal right action and make the canonical identification of $T_w W\cong W$. 

This compatibility condition with the $G$-actions is chosen precisely so that an interaction Lagrangian defines a function $C^\infty(P,W)^G\ra{\mathscr{L}_n}C^\infty(M)$, called the \emph{naive interaction action density}\index{naive interaction action density}, by $\mathscr{L}_n(\psi)(m):=L_i(p,\psi(p),d\psi(p))$, where $p$ is an arbitrary point in $\pis^{-1}(m)$. (Here I write $d\psi$ for the derivative $T\psi$.)
\begin{claim}
In this way, $\mathscr{L}_n$ is well-defined.
\end{claim}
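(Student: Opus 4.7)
The plan is to verify two things: first, that the value $L_i(p,\psi(p),d\psi(p))$ does not depend on the choice of $p\in\pis^{-1}(m)$, and second, that the resulting function on $M$ is smooth. The key input is the compatibility condition imposed on $L_i$ with respect to the natural $G$-actions on $P$, on $W$, and on the space of linear maps $T_pP\to W$.

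For independence of $p$, I would fix $m\in M$ and take two points $p,p'\in\pis^{-1}(m)$. Since the principal right action is free and transitive on the fibres, there is a unique $g\in G$ with $p'=p\cdot g$. The equivariance of $\psi$ gives $\psi(p\cdot g)=g^{-1}\cdot\psi(p)$, and differentiating the identity $\psi\circ\rho_g=\lambda_{g^{-1}}\circ\psi$ at $p$ (where $\lambda_{g^{-1}}$ denotes the linear action of $g^{-1}$ on $W$, so $T\lambda_{g^{-1}}$ is just $g^{-1}\cdot(-)$ under the identification $T_wW\cong W$) yields
\[
d\psi(p\cdot g)\circ T_p\rho_g \;=\; g^{-1}\cdot d\psi(p),
\]
i.e.\ $d\psi(p\cdot g)=g^{-1}\cdot d\psi(p)\circ T_{p\cdot g}\rho_{g^{-1}}$. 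Substituting $w:=\psi(p)$ and $\xi:=d\psi(p)$ into the invariance hypothesis $L_i(p,w,\xi)=L_i(p\cdot g,\,g^{-1}\cdot w,\,g^{-1}\cdot\xi\circ T\rho_{g^{-1}})$ gives exactly
\[
L_i(p,\psi(p),d\psi(p))\;=\;L_i(p\cdot g,\psi(p\cdot g),d\psi(p\cdot g)),
\]
so the pointwise value is unambiguous.

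For smoothness, I would argue locally. Choose a local trivialisation of $\pis$ over an open $U\subset M$ and pick the associated smooth section $\sigma\in\Gamma(P|_U)$ sending $m$ to the identity element in the trivialisation. On $U$ we then have
\[
\mathscr{L}_n(\psi)|_U\;=\;L_i\circ(\sigma,\psi\circ\sigma,d\psi\circ\sigma),
\]
which is a composition of smooth maps (using smoothness of $L_i$ as a function on the jet bundle $J^1(P,W)$, of $\sigma$, of $\psi$, and of the jet prolongation). Since smoothness is a local property and $M$ is covered by such $U$, it follows that $\mathscr{L}_n(\psi)\in C^\infty(M)$.

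The only mild technical point — the ``main obstacle'' if one wants to call it that — is keeping track of the base points when differentiating the equivariance relation, i.e.\ correctly writing $d\psi(p\cdot g)=g^{-1}\cdot d\psi(p)\circ T\rho_{g^{-1}}$ rather than some variant with $T\rho_g$, and matching this precisely to the way the invariance of $L_i$ was written in the definition. Once those two expressions line up, the equality is immediate and no further computation is needed.
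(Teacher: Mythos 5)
Your argument for the independence of the choice of $p\in\pis^{-1}(m)$ is exactly the one the paper gives: differentiate the equivariance relation $\psi\circ\rho_g=g^{-1}\cdot\psi$ to get $d\psi(p\cdot g)=g^{-1}\cdot d\psi(p)\circ T\rho_{g^{-1}}$ and feed this into the defining invariance property of $L_i$. The additional smoothness verification via a local section is a sensible supplement that the paper leaves implicit, and it is carried out correctly.
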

\begin{proof}
This will follow from the fact that the principal right action is transitive on the fibre. Indeed,
$g^{-1}\cdot\psi(p)=\psi(p\cdot g)$ and $g^{-1}\cdot d\psi(p)=d\psi(p\cdot g)\circ T\rho_g$. Therefore $L_i(p\cdot g,\psi(p\cdot g),d\psi(p\cdot g))=L_i(p\cdot g,g^{-1}\cdot \psi(p),g^{-1}\cdot d\psi(p)\circ T\rho_{g^{-1}})=L_i(p,\psi(p),d\psi(p))$, where the last equality holds by definition of the Lagrangian.\end{proof}
As the terminology `naive action density' suggests, there is a problem with this definition. It has to do with a failure of gauge invariance. Almost all interaction Lagrangians that arise in practice are \emph{$G$-invariant}\index{interaction Lagrangian, $G$-invariant} in the sense that $L_i(p,w,\xi)=L_i(p,g\cdot w, g\cdot \xi)$. If we want this principal-bundle setup of physics to do what I sketched in the introduction to this chapter, namely to formalise the idea that physics cannot be separated from observers and that all observers are equivalent, then the action density (which will give us our equations of motion of the particles) should be gauge-invariant. In particular, we would very much hope that $G$-invariant Lagrangians give rise to gauge-invariant action densities. This turns out not to be the case for our naive definition of an action density.
\begin{claim}
$\mathscr{L}_n$ is not necessarily gauge-invariant, if $L_i$ is a $G$-invariant Lagrangian. 
\end{claim}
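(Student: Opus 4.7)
The plan is to exhibit an explicit counterexample, since the whole purpose of this claim is to motivate upgrading $d\psi$ to a covariant differential $d^\omega\psi$ when defining the interaction action density. Heuristically, a gauge transformation sends $\psi$ to $\Phi(p)^{-1}\cdot\psi(p)$, whose ordinary derivative picks up an extra piece proportional to $T\Phi$; no $G$-invariance of $L_i$ in the $w$- and $\xi$-slots can absorb this piece, because the matter field transforms by a pointwise group action whereas the extra piece is structurally a pullback of the Maurer--Cartan form.

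First I would unpack the action of a gauge transformation on the $1$-jet of $\psi$. By claim~\ref{isomgauge}, $k(p)=p\cdot\Phi(p)$ with $\Phi\in C^\infty(P,G)^G$, so equivariance of $\psi$ gives $\psi\circ k(p)=\Phi(p)^{-1}\cdot\psi(p)$. Differentiating via the decomposition $T_pk\,X = T\rho_{\Phi(p)}X + T\rho^p(T_p\Phi\cdot X)$, combined with $d\psi(p\cdot g)\circ T\rho_g = g^{-1}\cdot d\psi(p)$ and the formula $d\psi(Z_P(q))=-Z\cdot\psi(q)$ on vertical vectors, I would obtain an identity of the shape
$$d(\psi\circ k)(p)(X) \;=\; \Phi(p)^{-1}\cdot d\psi(p)(X)\;+\;R_\Phi(p)(X)\cdot\psi(p),$$
where $R_\Phi(p)(X)\in\g$ is linear in $T_p\Phi\cdot X$ (essentially the pullback along $\Phi$ of the Maurer--Cartan form of $G$). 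The inhomogeneous term $R_\Phi(p)(X)\cdot\psi(p)$ vanishes identically only when $\Phi$ is locally constant, i.e.\ for the trivial gauge transformations only.

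For the actual counterexample I would take the simplest nontrivial setting: $M=\R$, $G=U(1)$, the trivial bundle $P=\R\times U(1)$, and $W=\C$ with the standard representation. The Lagrangian $L_i((x,g),w,\xi):=\abs{\xi(\dau_x|_{(x,g)})}^2$ is easily checked to be $G$-equivariant in the sense required for $\mathscr{L}_n$ to be well-defined, and also $G$-invariant in the $(w,\xi)$-slots. For an equivariant matter field $\psi(x,g):=g^{-1}s(x)$ and the gauge transformation $k(x,g):=(x,g\cdot e^{if(x)})$ with $f\in C^\infty(\R,\R)$ non-constant, a one-line chain-rule computation yields
$$\mathscr{L}_n(\psi)(x) \;=\; \abs{s'(x)}^2, \qquad \mathscr{L}_n(\psi\circ k)(x) \;=\; \abs{s'(x) - if'(x)s(x)}^2,$$
whose difference $2\,\mathrm{Im}(f'(x)s(x)\overline{s'(x)}) + f'(x)^2\abs{s(x)}^2$ is not identically zero, e.g.\ for $f(x)=x$, $s(x)=x+i$.

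The only real obstacle here is careful bookkeeping: one must distinguish the two roles that $\Phi$ plays (pointwise value versus its derivative $T\Phi$) and keep the left/right $G$-actions straight. Once the counterexample is laid out, it transparently isolates the precise term that obstructs gauge invariance and that will later be cancelled by replacing $d\psi$ with $d^\omega\psi:=d\psi+\omega\cdot\psi$, motivating the next step in the development.
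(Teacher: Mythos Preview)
Your proof is correct and in fact more complete than the paper's. The paper proceeds exactly as your second paragraph does: it writes a gauge transformation as $f(p)=p\cdot\tau(p)$, computes $d(\tau^{-1}\cdot\psi)$ via the product rule, and isolates the inhomogeneous term $T\rho_{\tau(p)}(T_p\tau^{-1})\cdot\tau(p)^{-1}\cdot\psi(p)$ as the obstruction to gauge invariance. It then simply asserts that this term does not vanish in general and stops there---no explicit counterexample is given.

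Your explicit $U(1)$ example over $M=\R$ is therefore a genuine addition. It makes the claim airtight (the paper's ``in general, however, the term does not vanish'' is strictly speaking an assertion rather than a demonstration), and it has the pedagogical benefit of exhibiting concretely the $|s'-if's|^2$ versus $|s'|^2$ discrepancy that the covariant derivative is later introduced to cure. The general computation you sketch and the paper's are the same up to notation (your $R_\Phi$ is the pullback of the Maurer--Cartan form, which is exactly what the paper's expression unwinds to).
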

\begin{proof}This proof and the next will make use of the interpretation of gauge transformations given in claim \ref{isomgauge}. Indeed, let $f\in\mathrm{Gau}\, P$. Then this claim gives us a unique $\tau\in C^\infty(P,G)$ such that $f(p)=p\cdot \tau(p)$. Then $f^* \psi=\tau^{-1}\cdot \psi$. Let $]-\epsilon,\epsilon[\ra{c}P$ such that $c(0)=p$ and $c'(0)=X$. Then
\begin{align*}d(\tau^{-1}\cdot \psi)(X)&=\frac d {dt}|_{t=0} \tau^{-1} (c(t))\cdot \psi(c(t))\\
&=\frac d {dt}|_{t=0} \tau^{-1}(p)\cdot \psi(c(t))+\frac{d}{dt}|_{t=0} \tau^{-1} (c(t))\cdot \psi(p)\\
&=\tau^{-1}(p)\cdot d\psi(X)+\frac d {dt}|_{t=0}\tau^{-1}(c(t))\tau(p)\tau(p)^{-1}\cdot \psi(p)\\
&=\tau^{-1}(p)\cdot d\psi(X)+ TR_{\tau(p)}(T_p\tau^{-1})(X)\cdot \tau(p)^{-1}\cdot \psi(p).
\end{align*}

Therefore
\begin{align*}
\mathscr{L}_n(f^*\psi)(m)&=L_i(p,(f^*\psi)(p),d(f^*\psi)(p))\\
&=L_i(p,\tau^{-1}(p)\cdot \psi(p),\tau^{-1}(p)\cdot d\psi(p)+TR_{\tau(p)}(T_p\tau^{-1})\cdot \tau(p)^{-1}\cdot \psi(p))\\
&\exeq L_i(p,\psi(p),d\psi(p))\\
&= L_i(p,\tau^{-1}(p)\cdot \psi(p),\tau^{-1}(p)\cdot d\psi(p)), \txt{(by $G$-invariance)}
\end{align*}
where the $\exeq$ is the condition for gauge invariance. In general, however, the term $TR_{\tau(p)}(T_p\tau^{-1})\cdot \tau(p)^{-1}\cdot\psi(p)$ does not vanish and therefore gauge invariance is not satisfied. 
\end{proof}
Up to this point I have tried to argue why principal bundles arise in physics. However, no real answer has been given to the question why physicists would be interested in principal connections. The connection in the Kaluza-Klein theory of electromagnetism arose as a technical curiosity rather than as an inevitable consequence of some fundamental demand on our physical theories. However, principal connections arise in a very natural way in gauge theory. In fact, we are forced to introduce them when we want our theory to be gauge-invariant!

Remember that we write $\mathscr{C}(P)\subset \Omega^1(P,\g)$ for the space of principal connections. Note that the right action of $\mathrm{Gau}\, P$ on $\Omega^1(P,\g)$ by pullback restricts to $\mathscr{C}(P)$. 
\begin{claim}
Let $f\in\mathrm{Gau}\, P$ and $\omega\in\mathscr{C}(P)$ then $f^*\omega\in\mathscr{C}(P)$.
\end{claim}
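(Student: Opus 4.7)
The plan is to verify directly that $f^*\omega$ satisfies the two characterising conditions of a principal connection: the reproduction property of claim \ref{claim:repro}, namely $(f^*\omega)(Z_P) = Z$ for every $Z \in \g$, and the $G$-equivariance condition of claim \ref{claim:equiv}, namely $\rho_g^*(f^*\omega) = \mathrm{Ad}(g^{-1}) \circ f^*\omega$ for every $g \in G$. Both will fall out of the defining properties of a gauge transformation, $\pis \circ f = \pis$ and $f \circ \rho_g = \rho_g \circ f$.

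The equivariance is the easier half, and I would do it first to get it out of the way. By the functoriality of the pullback and the intertwining relation $f \circ \rho_g = \rho_g \circ f$ we get
\[
\rho_g^*(f^*\omega) = (f \circ \rho_g)^*\omega = (\rho_g \circ f)^*\omega = f^*(\rho_g^*\omega) = f^*(\mathrm{Ad}(g^{-1}) \circ \omega) = \mathrm{Ad}(g^{-1}) \circ f^*\omega,
\]
where the fourth equality uses equivariance of $\omega$ itself and the last holds because $\mathrm{Ad}(g^{-1})$ is a fixed linear map on $\g$, and therefore commutes with pullback by $f$.

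For the reproduction property, the key step is to show that gauge transformations preserve the fundamental vector fields, i.e.\ $T_p f(Z_P(p)) = Z_P(f(p))$. This is the one small computation in the proof, and it is the step I expect to be the only real obstacle in the sense that everything else is purely formal. It follows from $f \circ \rho_g = \rho_g \circ f$ by differentiating at $g = \exp(tZ)$ and $t=0$:
\[
T_p f (Z_P(p)) = T_p f \frac{d}{dt}\bigg|_{t=0} \rho_{\exp(tZ)}(p) = \frac{d}{dt}\bigg|_{t=0} f(\rho_{\exp(tZ)}(p)) = \frac{d}{dt}\bigg|_{t=0} \rho_{\exp(tZ)}(f(p)) = Z_P(f(p)).
\]
Consequently $(f^*\omega)(Z_P)(p) = \omega(T_p f(Z_P(p))) = \omega(Z_P(f(p))) = Z$ by claim \ref{claim:repro} applied to $\omega$, which is the desired reproduction property for $f^*\omega$.

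Having checked both conditions, claims \ref{claim:repro} and \ref{claim:equiv} together imply $f^*\omega \in \mathscr{C}(P)$, completing the argument. No further computation is needed; in particular I would not attempt to rewrite $f^*\omega$ in terms of $\omega$ and some $\mathfrak{g}$-valued $1$-form depending on $f$ (the more elaborate gauge-transformation formula used later in physics), since the two abstract conditions are all that is required to land back in $\mathscr{C}(P)$.
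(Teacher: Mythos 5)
Your proof is correct and follows essentially the same route as the paper's: both halves rest on the intertwining relation $f\circ\rho_g=\rho_g\circ f$, which gives $T_pf(Z_P(p))=Z_P(f(p))$ for the reproduction property and the functoriality computation $\rho_g^*f^*\omega=f^*\rho_g^*\omega=\mathrm{Ad}(g^{-1})\circ f^*\omega$ for equivariance. The only difference is cosmetic (order of the two verifications and a bit more detail on why $\mathrm{Ad}(g^{-1})$ commutes with the pullback).
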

\begin{proof}Let $Z\in\g$ and $f\in\mathrm{Gau}\, P$. Then $(f^*\omega)(Z_P(p))=\omega(Tf Z_P(p))= \omega(\frac d {dt}|_{t=0} f(p\cdot \exp(tZ)))=\omega(\frac d {dt}|_{t=0} f(p)\cdot \exp(tZ))=\omega(Z_P(f(p)))=Z$, so $f^*\omega$ reproduces vertical vectors.

Moreover, since $f$ is a map of principal bundles, $\rho_g^*f^*\omega=f^*\rho_g^*\omega=f^*\mathrm{Ad}_{g^{-1}}\omega=\mathrm{Ad}_{g^{-1}} f^*\omega$, i.e. $f^*\omega$ is equivariant in the sense of claim \ref{claim:equiv}.
\end{proof}
Moreover, $\mathrm{Gau}\, P$ acts on $C^\infty(P,W)^G$ on the right by $f^*\psi(p):=\psi\circ f(p)$.
\begin{theorem}
Let $L_i$ be a $G$-invariant Lagrangian. If we define the \emph{interaction action density}\index{interaction action density} $C^\infty(P,W)^G\times \mathscr{C}(P)\ra{\mathscr{L}_i}C^\infty(M)$ by $\mathscr{L}_i(\psi,\omega)(m):=L_i(p,\psi(p),d^\omega\psi(p))$, for $p\in\pis^{-1}(m)$, then $\mathscr{L}_i$ is gauge-invariant in the sense that $\mathscr{L}_i(\psi,\omega)=\mathscr{L}_i(f^*\psi,f^*\omega)$, for all $f\in\mathrm{Gau}\, P$.
\end{theorem}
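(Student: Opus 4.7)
First I would address well-definedness of $\mathscr{L}_i(\psi,\omega)(m)$. The key lemma is that $d^\omega\psi$ transforms equivariantly under the principal right action: since $\omega$ is a principal connection, $P_H$ commutes with $T\rho_g$, and since $\psi$ is $G$-equivariant, a direct computation gives $d^\omega\psi(p\cdot g) = g^{-1}\cdot d^\omega\psi(p)\circ T\rho_{g^{-1}}$. Combining this with $\psi(p\cdot g)=g^{-1}\cdot\psi(p)$ and the principal-bundle form of $G$-invariance of the Lagrangian, $L_i(p,w,\xi)=L_i(p\cdot g,g^{-1}\cdot w,g^{-1}\cdot\xi\circ T\rho_{g^{-1}})$, shows that $L_i(p,\psi(p),d^\omega\psi(p))$ is independent of the choice of $p\in\pis^{-1}(m)$.

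The central structural observation for gauge invariance is the formula
\[d^{f^*\omega}(f^*\psi)=f^*(d^\omega\psi)\qquad\text{for every }f\in\mathrm{Gau}\,\pis.\]
To see this, note that the horizontal subspace of $f^*\omega$ at $p$ is by definition $(T_pf)^{-1}(HP^\omega_{f(p)})$, and $Tf$ also preserves $VP$ because $\pis\circ f=\pis$; since $Tf$ is a linear isomorphism on each tangent space, it intertwines the two horizontal projections, i.e.\ $Tf\circ P_H^{f^*\omega}=P_H^\omega\circ Tf$. Unwinding the definition $d^\omega=P_H^*\circ d$ and using $d(f^*\psi)=d\psi\circ Tf$ then yields the claimed identity.

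Next I would use the identification of $\mathrm{Gau}\,\pis$ with $C^\infty(P,G)^G$ from claim \ref{isomgauge}, writing $f(p)=p\cdot\tau(p)$. A Leibniz expansion gives $Tf(X)=T\rho_{\tau(p)}(X)+T\rho^{p}(T\tau(X))$, in which the second summand is vertical at $f(p)$. The main obstacle in the proof is precisely to see that this vertical `correction' — the very term whose appearance spoiled gauge invariance of $\mathscr{L}_n$ in the previous claim — is annihilated here, because $d^\omega\psi(f(p))$ kills vertical vectors. What remains is $d^\omega\psi(f(p))\circ T\rho_{\tau(p)}$ applied to $X$, and the equivariance of $d^\omega\psi$ established above converts this to $\tau(p)^{-1}\cdot d^\omega\psi(p)(X)$. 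Together with $(f^*\psi)(p)=\tau(p)^{-1}\cdot\psi(p)$ this gives
\[\mathscr{L}_i(f^*\psi,f^*\omega)(m)=L_i\bigl(p,\,\tau(p)^{-1}\cdot\psi(p),\,\tau(p)^{-1}\cdot d^\omega\psi(p)\bigr).\]

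Finally, applying the hypothesis that $L_i$ is $G$-invariant in the gauge sense, $L_i(p,w,\xi)=L_i(p,g\cdot w,g\cdot\xi)$, with $g=\tau(p)$ collapses the right-hand side to $L_i(p,\psi(p),d^\omega\psi(p))=\mathscr{L}_i(\psi,\omega)(m)$, completing the proof. The whole argument is thus a clean conceptual payoff of replacing $d\psi$ by $d^\omega\psi$: the horizontal projection built into the covariant derivative destroys exactly the non-tensorial piece of $Tf$ that obstructed gauge invariance of $\mathscr{L}_n$.
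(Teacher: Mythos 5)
Your proposal is correct and follows essentially the same route as the paper's proof: well-definedness via the equivariance of $d^\omega\psi$, then the key identity $d^{f^*\omega}(f^*\psi)=f^*(d^\omega\psi)$, the pointwise evaluation $f^*(d^\omega\psi)(p)=\tau(p)^{-1}\cdot d^\omega\psi(p)$, and finally $G$-invariance of $L_i$. You merely supply the justifications (the intertwining of horizontal projections by $Tf$, and the annihilation of the vertical correction term by horizontality of $d^\omega\psi$) that the paper leaves implicit in its chain of equalities.
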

\begin{proof}We begin by checking that $\mathscr{L}_i$ is well-defined in this way. Note that by equivariance of $d^\omega \psi\in \Omega_\hor^1(P,W)^G$, we have $d^\omega\psi(p\cdot g)=g^{-1}\cdot d^\omega\psi(p)\circ T\rho_{g^{-1}}$. Therefore
\begin{align*}L_i(p\cdot g,\psi(p\cdot g),d^\omega\psi(p\cdot g))&=L_i(p\cdot g, g^{-1}\cdot \psi(p),g^{-1}\cdot d^\omega\psi(p)\circ T\rho_{g^{-1}})\\
&=L_i(p, \psi(p),d^\omega\psi(p)),
\end{align*}
where the last identity holds by definition of the interaction Lagrangian.

Now, we verify gauge invariance.
\begin{align*}\mathscr{L}_i(f^*\psi,f^*\omega)(m)&=L_i(p,(f^*\psi)(p),d^{f^*\omega}f^*\psi(p))\\
&=L_i(p,(f^*\psi)(p),f^*(d^\omega \psi)(p))\\
&=L_i(p,\tau(p)^{-1}\cdot \psi(p),\tau(p)^{-1}\cdot d^\omega\psi(p))\\
&=L_i(p,\psi(p),d^\omega\psi(p))\txt{(by $G$-invariance)}\\
&=\mathscr{L}_i(\psi,\omega)(m).
\end{align*}
\end{proof}
We see that the use of principal connections was forced on physicists if they wanted their theory to be gauge-invariant. Now, it turns out that these principal connections have a physical interpretation: that of gauge potentials for the interactions between particles. They are generalisations of the electromagnetic potential.

Now, we know from the theory of electromagnetism that the fields can exhibit non-trivial behaviour, even in the absence of matter. This means that we have to add a term to the action density that depends only on the gauge potential, if we want to formulate an action principle from which this behaviour can be derived. Let us write this term, called the \emph{self-action density}\index{self-action density} of the fields, as $\mathscr{L}_s$. This term will be a map $\mathscr{C}(P)\ra{\mathscr{L}_s}C^\infty(M)$, but we will interpret it as a map $ C^\infty(P,W)^G\times \mathscr{C}(P)\ra{\mathscr{L}_s}C^\infty(M)$. We will write $\mathscr{L}:=\mathscr{L}_i+\mathscr{L}_s$ for the total action density.

Again, $\mathscr{L}_s$ will only depend on the potential and its covariant derivative (i.e. the curvature) at a point. This can be made explicit by demanding that it comes from a \emph{self Lagrangian for the fields} $J^1\mathscr{C}(P)\ra{L_s}\R$, in a similar way as $\mathscr{L}_i$ did. (Here, $J^1\mathscr{C}(P)$ denotes the bundle, over $P$, of $1$-jets of germs of local principal connection forms; it is not the jet bundle of the global affine space $\mathscr{C}(P)$.) 

We have a natural action of $\mathrm{Gau}\, P$ on $J^1\mathscr{C}(P)$. Indeed, we take $f\in\mathrm{Gau}\, P$ and $\tau=(p,\omega(p),d\omega(p))\in J^1\mathscr{C}(P)$ and set $f\cdot\tau:=(f^{-1}(p),(f^*\omega)(f^{-1}(p)),d(f^*\omega)(f^{-1}(p)))$. Moreover, $G$ acts on $J^1\mathscr{C}(P)$ by $(p,\omega(p),d\omega(p))\cdot g:=(p\cdot g,\omega(p\cdot g),d\omega(p\cdot g))$.

A \emph{self Lagrangian for the fields}\index{self Lagrangian for the fields} is then defined to be a map $J^1\mathscr{C}(P)\ra{L_s}\R$ such that $L_s(\tau\cdot g)=L_s(\tau)$, for all $\tau\in J^1\mathscr{C}(P)$ and $g\in G$. We now define $\mathscr{L}_s$ by $\mathscr{L}_s(\pis(p)):=L_s(\tau_p)$, where $\tau_p=(p,\omega(p),d\omega(p))\in (J^1\mathscr{C}(P))_{p}$. This is well-defined since the principal right action is transitive on the fibre and $L_s(\tau_p)=L_s(\tau_p\cdot g)=L_s(\tau_{p\cdot g})$. Moreover, we have the following.
\begin{claim}
$\mathscr{L}_s$ is gauge-invariant if and only if $L_s$ is.
\end{claim}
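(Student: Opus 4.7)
The plan is to show both directions by carefully unwinding the definitions of $\mathscr{L}_s$ and of the two actions of $\mathrm{Gau}\, P$ involved, with the only mildly technical point being extending 1-jets to global connections in the reverse direction.

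\emph{Forward direction} ($L_s$ gauge invariant $\Rightarrow$ $\mathscr{L}_s$ gauge invariant): Fix $f\in\mathrm{Gau}\, P$, $\omega\in\mathscr{C}(P)$, $m\in M$, and pick any $p\in\pis^{-1}(m)$. Since $f$ lies in $\mathrm{Gau}\, P=\mathrm{Aut}_{PB_M^G}(\pis)$, the element $q:=f(p)$ still satisfies $\pis(q)=m$, and $p=f^{-1}(q)$. Unwinding definitions,
\[
\mathscr{L}_s(f^*\omega)(m)=L_s\bigl(p,(f^*\omega)(p),d(f^*\omega)(p)\bigr)=L_s\bigl(f^{-1}(q),(f^*\omega)(f^{-1}(q)),d(f^*\omega)(f^{-1}(q))\bigr)=L_s(f\cdot\tau_q),
\]
where $\tau_q=(q,\omega(q),d\omega(q))$. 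Gauge invariance of $L_s$ then gives $L_s(f\cdot\tau_q)=L_s(\tau_q)=\mathscr{L}_s(\omega)(\pis(q))=\mathscr{L}_s(\omega)(m)$, which is exactly gauge invariance of $\mathscr{L}_s$.

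\emph{Reverse direction} ($\mathscr{L}_s$ gauge invariant $\Rightarrow$ $L_s$ gauge invariant): Let $\tau\in (J^1\mathscr{C}(P))_p$ and $f\in\mathrm{Gau}\, P$. The key technical step is to realise $\tau$ as $\tau_p$ for some global connection: since $\mathscr{C}(P)$ is an affine space modelled on the vector space $\Omega_\hor^1(P,\g)^G$, any locally defined connection representative of $\tau$ near $p$ can be glued to an arbitrary global connection on $P$ by a partition of unity on $M$, producing some $\omega\in\mathscr{C}(P)$ with $(\omega(p),d\omega(p))=\tau$. Then $L_s(\tau)=\mathscr{L}_s(\omega)(\pis(p))$ by definition, and by the same computation as in the forward direction, $L_s(f\cdot\tau)=L_s(f\cdot\tau_p)=\mathscr{L}_s(f^*\omega)(\pis(p))$ (using once more that $\pis\circ f^{-1}=\pis$). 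Gauge invariance of $\mathscr{L}_s$ yields $\mathscr{L}_s(f^*\omega)(\pis(p))=\mathscr{L}_s(\omega)(\pis(p))=L_s(\tau)$, so $L_s(f\cdot\tau)=L_s(\tau)$.

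The only real obstacle is the extension step in the reverse direction, and this is genuinely harmless because connections form an affine bundle over $M$: a local representative always extends globally by a partition-of-unity convex combination with an auxiliary global connection. Apart from this, the proof is essentially bookkeeping, exploiting that $f$ covers $\mathrm{id}_M$ so that $p$ and $f(p)$ always lie in the same fibre and therefore label the same value of $\mathscr{L}_s(\omega)(m)$.
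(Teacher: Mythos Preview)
Your proof is correct and follows essentially the same route as the paper: both reduce the claim to the identity $\mathscr{L}_s(f^*\omega)(\pis(p))=L_s\bigl(f\cdot(p,\omega(p),d\omega(p))\bigr)$ together with the defining relation $\mathscr{L}_s(\omega)(\pis(p))=L_s(p,\omega(p),d\omega(p))$. The one place you go beyond the paper is in the reverse direction, where you explicitly argue (via the affine structure of $\mathscr{C}(P)$ and a partition of unity) that every $1$-jet is realised by a global connection; the paper's proof leaves this implicit, so your version is actually a bit more complete.
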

\begin{proof}The statement follows from comparison of the following two equalities.\\
$\mathscr{L}_s(\omega)(\pis(p))=L_s(p,\omega(p),d\omega(p))$ and \begin{align*}\mathscr{L}_s(f^*\omega)(\pis(p))&=\mathscr{L}_s(f^*\omega)(\pis(f^{-1}(p)))\\
&=L_s(f^{-1}(p),(f^*\omega)(f^{-1}(p)),d(f^*\omega)(f^{-1}(p)))\\
&=L_s(f\cdot (p,\omega(p),d\omega(p))),
\end{align*}
for $f\in\mathrm{Gau}\, P$.
\end{proof}
Note that the curvature $\mathscr{C}(P)\ra{\Omega}\Omega_\hor^2(P,\g)^G$ depends locally on the connection in the sense that it is derived from a function $\mathrm{germ}\,\mathscr{C}(P)\ra{\tilde \Omega}\mathrm{germ}\,\Omega^2(P,\g)$. This map even factors through a map $J^1\mathscr{C}(P)\ra{\hat\Omega}J^0(\Omega_\hor^2(P,\g)^G)$. (Here, $J^0\Omega_\hor^2(P,\g)^G$ denotes the zeroth jet bundle of the bundle of germs of sections of $\Omega_\hor^2(P,\g)^G$.) Indeed, $\hat\Omega(\tau)=d\omega(p)+\frac 1 2 [\omega(p),\omega(p)]=\Omega^\omega(p)$, for $\omega\in\mathscr{C}(P)$ and $\tau=(p,\omega(p),d\omega(p))$. 

This observation is important to understand the following theorem, which is due to Utiyama and gives a characterisation of the gauge-invariant self Lagrangians.
\begin{theorem}[Utiyama's theorem\index{Utiyama's theorem}] A self Lagrangian $L_s$, or equivalently its self-action density $\mathscr{L}_s$, is gauge-invariant if and only if it is of the form $L_s=K\circ\hat\Omega$ for some $\mathrm{Ad}$-invariant $J^0(\Omega_\hor^2(P,\g)^G)\ra{K}\R$.\end{theorem}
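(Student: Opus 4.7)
The plan is to treat the two directions separately, with necessity carrying the substantial content.

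For sufficiency, suppose $L_s = K\circ \hat\Omega$ with $K$ invariant under the natural $G$-action on the fibres of $J^0(\Omega_\hor^2(P,\g)^G)$. Cartan's second structural equation (Theorem~\ref{maurer}) expresses $\Omega^\omega = d\omega + \tfrac{1}{2}[\omega,\omega]$ polynomially in the 1-jet of $\omega$, and since pullback commutes with $d$ and with the graded bracket, $\Omega^{f^*\omega} = f^*\Omega^\omega$ for all $f\in\mathrm{Gau}\, P$. Evaluating at $f^{-1}(p)$ and using equivariance of $\Omega^\omega$ along fibres, $\hat\Omega(f\cdot\sigma)$ differs from $\hat\Omega(\sigma)$ only by an $\mathrm{Ad}$-twist by $\Phi_f(f^{-1}(p))^{-1}$; the assumed invariance of $K$ then closes this direction.

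For necessity, the strategy is to pass to the infinitesimal gauge action. A 1-parameter family of gauge transformations is generated by some $Z\in C^\infty(P,\g)^G$ (via claim~\ref{isomgauge}) and induces $\delta\omega = d^\omega Z = dZ + [\omega, Z]$ together with $\delta(d\omega) = [d\omega,Z]-[\omega,dZ]$, so on the 1-jet at $p$ the variation is determined linearly by the pair $(Z(p),dZ(p))\in\g\oplus (T^*_pP\otimes\g)$. These data may be prescribed freely since equivariant $\g$-valued functions on $P$ admit arbitrary 1-jets at any point (extend locally, then average over $G$). A short calculation using the Jacobi identity in $\g$ gives $\delta\Omega^\omega(p)=[\Omega^\omega(p),Z(p)]$, independent of $dZ(p)$.

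The main obstacle is a finite-dimensional surjectivity step. Restricting first to $Z$ with $Z(p)=0$ yields exactly the curvature-preserving variations (by the formula above), and one must verify that as $dZ(p)$ ranges freely, the induced variation $(\delta\omega(p),\delta(d\omega)(p))$ spans the entire kernel of the differential of $\hat\Omega$ at $\sigma$, namely the linear subspace cut out by $\delta\alpha + [a,\delta a]=0$ inside $(T^*_pP\otimes\g)\oplus(\bigwedge^2 T^*_pP\otimes\g)$. Gauge invariance of $L_s$ then forces $L_s$ to descend through $\hat\Omega$ to a function $K$. Allowing finally $Z(p)\neq 0$, the curvature now transforms non-trivially as $\delta\Omega^\omega(p)=[\Omega^\omega(p),Z(p)]$, and the invariance of $L_s$ translates into the required $\mathrm{Ad}$-invariance of $K$. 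A standard bump-function argument integrating the infinitesimal statement to full (not merely infinitesimal) orbits then completes the proof.
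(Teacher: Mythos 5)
A preliminary remark: the paper itself does not prove this theorem --- it only states it and refers to \cite{blegau} --- so there is no in-text proof to compare against; your proposal has to stand on its own. Its overall shape (the easy direction from naturality of the curvature under pullback, the hard direction by computing the infinitesimal gauge action on $1$-jets and matching it against the kernel of $T\hat\Omega$) is a legitimate and standard route, and your formulas $\delta\omega=d^\omega Z$ and $\delta\Omega^\omega(p)=[\Omega^\omega(p),Z(p)]$ are correct.

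There are, however, two genuine gaps in the necessity direction. First, the assertion that equivariant $\g$-valued functions on $P$ admit arbitrary $1$-jets is false: differentiating $Z(p\cdot g)=\mathrm{Ad}(g^{-1})Z(p)$ along the fibre gives $dZ(p)\bigl(Y_P(p)\bigr)=-[Y,Z(p)]$ for every $Y\in\g$, so the vertical part of $dZ(p)$ is completely determined by $Z(p)$; in particular, if $Z(p)=0$ then $dZ(p)$ is forced to be horizontal (and averaging over $G$ does not repair the jet). Second, and more seriously, the ``finite-dimensional surjectivity step'' that you correctly identify as the main obstacle is both left unverified and set up in the wrong ambient space. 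The fibre of $J^1\mathscr{C}(P)$ over $p$ is not all of $(T_p^*P\otimes\g)\oplus(\bigwedge^2 T_p^*P\otimes\g)$: admissible variations are pairs $(\tau(p),d\tau(p))$ with $\tau\in\Omega_\hor^1(P,\g)^G$, so $\tau(p)$ is horizontal and the vertical contractions of $d\tau(p)$ are pinned down by $\tau(p)$ through equivariance. In your unconstrained space the achievable gauge directions (which, by the first point, have horizontal $\delta a=dZ(p)$) visibly fail to span the kernel $\{\delta\alpha=-[a,\delta a]\}$, so the spanning claim is false as literally stated. It does become true after restricting to the genuine fibre of $J^1\mathscr{C}(P)$ --- there $T\hat\Omega(\tau(p),d\tau(p))$ reduces to the horizontal part of $d\tau(p)$ (the bracket term dies on horizontal vectors since $\omega(p)$ does), its kernel is parametrised freely by $\tau(p)$ alone, and the variations $d^\omega Z$ with $Z(p)=0$ realise exactly these --- but that computation \emph{is} the content of the theorem and cannot be waved at. Finally, integrating the infinitesimal statement only yields invariance under the identity components of $\mathrm{Gau}\, P$ and of $G$; for disconnected $G$ the full $\mathrm{Ad}$-invariance of $K$ requires a separate argument with gauge transformations realising group elements outside the identity component at $p$.
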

A simple proof can be found in \cite{blegau}. What does this theorem tell us? It says that a self Lagrangian is gauge-invariant precisely if its value at a point only depends on the value of the curvature at that point and also depends on it in a $\mathrm{Ad}$-invariant way. Obviously, this result shows that the possibilities for gauge-invariant self Lagrangians are very limited.

\subsection{Actions and the Euler-Lagrange Equations}
We have been talking about action principles for a while, but up to this point we have not been very formal about them. With the machinery of the preceding paragraphs we are in a position to make them mathematically sound.

We will define a function $C^\infty (P,W)^G\times \mathscr{C}(P)\ra{S^U} \R$ for each open $U\subset M$ that has compact closure. These functions will be called the \emph{action functionals}. We will show that the action $S^U$ is stationary for some $(\psi,\omega)\in C^\infty(P,W)^G\times \mathscr{C}(P)$ if and only if $(\psi,\omega)$ satisfies a set of partial differential equations on $\pis^{-1}(U)$, called the \emph{Euler-Lagrange equations} for this action.

Let us fill in the details. We define $S^U(\psi,\omega):=\int_U \mathscr{L}(\psi,\omega)\mu$, where $\mu$ is the volume form on $M$ that we obtain from the spacetime metric\footnote{Note that $\mu$ does not need to exist globally. Indeed, not all spacetimes are orientable. However, every spacetime is locally orientable. We can therefore find a neighbourhood $U$ for each point $m\in M$, such that we can choose a Lorentzian volume form on $U$. Then we can define the action over $U$ and derive the Euler-Lagrange equation, which we find not to depend on the choice of orientation. However, if this technicality bothers you, assume that $M$ has been given a global orientation! (This can be done for many important spacetimes. \cite{walgen})}. Now, we say that $S^U$ is \emph{stationary}\index{stationary action} at $(\psi,\omega)$, if for all $(\sigma,\tau)\in C^\infty(P,W)^G\times \Omega_\hor^1(P,\g)^G$ with projected support contained in $U$,
$$\frac d {dt}|_{t=0} S^U(\psi+t\sigma,\omega+t\tau)=0.$$
We want to show that this is equivalent to $(\psi,\omega)$ satisfying the Euler-Lagrange equations.

To write down the Euler-Lagrange equation for the particle field, we need some notation. I will use the convention of \cite{blegau}, although I do not think it is standard. Let $(p,w,\theta)\in J^1(P,W)$ and let $L_i$ be an interaction Lagrangian. Note that $w'\mapsto \frac{d}{dt}|_{t=0}L(p,w+tw',\theta)$ is a linear map $W\ra{}\R$. By non-degeneracy of the inner product $k_W$ on $W$, this defines a unique element $\nabla_2 L_i(p,w,\theta)\in W$, such that $k_W(\nabla_2L_i(p,w,\theta),w')=\frac{d}{dt}|_{t=0}L(p,w+tw',\theta)$, for all $w'\in W$. Similarly, note that $\theta'\mapsto \frac{d}{dt}|_{t=0} L_i(p,w,\theta+t\theta')$ is a linear map $HP_p^*\otimes W\ra{}\R$. By non-degeneracy of the inner product $(\bar h k_W)_p$ on $HP_p^*\otimes W$ (the space of $W$-valued $1$-forms, vanishing on vertical vectors), we obtain a unique $\nabla_3L_i(p,w,\theta)\in HP_p^*\otimes W$, such that $\bar h k_W(\nabla_3L_i(p,w,\theta),\theta')=\frac{d}{dt}|_{t=0} L_i(p,w,\theta+t\theta'))$, for all $\theta'\in HP_p^*\otimes W$. Finally, we write \label{moeilijkeafgeleide} $\frac{\dau L_i}{\dau \psi}(p):=\nabla_2 L_i(p,\psi(p),d^\omega\psi(p))$ and $\frac{\dau L_i}{\dau(d^\omega\psi)}(p):=\nabla_3 L_i(p,\psi(p),d^\omega\psi(p))$. Obviously, these define a function in $C^\infty(P,W)$ and an element of $\Omega^1(P,W)$, vanishing on vertical vectors, respectively. However, the definitions even turn out to be $G$-equivariant.
\begin{claim}With these definitions, $\frac{\dau L_i}{\dau \psi}\in C^\infty(P,W)^G=\Omega_\hor^0(P,W)^G$ and $\frac{\dau L_i}{\dau(d^\omega\psi)}\in\Omega_\hor^1(P,W)^G$.
\end{claim}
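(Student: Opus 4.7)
The plan is to exploit the $G$-invariance of $L_i$, the equivariance properties $\psi(p\cdot g)=g^{-1}\cdot\psi(p)$ and $d^\omega\psi(p\cdot g)=g^{-1}\cdot d^\omega\psi(p)\circ T\rho_{g^{-1}}$, and the $G$-invariance of the inner products $k_W$ and $\bar h k_W$, then use non-degeneracy to transfer equivariance from the pairings back to the objects themselves.

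First I would handle $\frac{\dau L_i}{\dau\psi}$. Fix $g\in G$ and $w'\in W$, and write $w'=g^{-1}\cdot w''$. By the defining relation,
\[
k_W\!\left(\tfrac{\dau L_i}{\dau\psi}(p\cdot g),\,g^{-1}\cdot w''\right)=\tfrac{d}{dt}\Big|_{t=0}L_i\bigl(p\cdot g,\,\psi(p\cdot g)+t g^{-1}\!\cdot w'',\,d^\omega\psi(p\cdot g)\bigr).
\]
Using the equivariances above and linearity of the $G$-action on $W$, the middle argument becomes $g^{-1}\cdot(\psi(p)+t w'')$ and the third becomes $g^{-1}\cdot d^\omega\psi(p)\circ T\rho_{g^{-1}}$. $G$-invariance of $L_i$ then reduces the expression to $\tfrac{d}{dt}|_{t=0}L_i(p,\psi(p)+t w'',d^\omega\psi(p))=k_W(\tfrac{\dau L_i}{\dau\psi}(p),w'')$. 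Invariance of $k_W$ rewrites this last quantity as $k_W(g^{-1}\cdot\tfrac{\dau L_i}{\dau\psi}(p),g^{-1}\cdot w'')$, and since $w''$ is arbitrary, non-degeneracy forces $\tfrac{\dau L_i}{\dau\psi}(p\cdot g)=g^{-1}\cdot\tfrac{\dau L_i}{\dau\psi}(p)$, i.e.\ equivariance.

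For $\frac{\dau L_i}{\dau(d^\omega\psi)}$, the argument is analogous but requires an extra geometric step. I would first check that the map $HP_p^*\otimes W\to HP_{p\cdot g}^*\otimes W$ sending $\theta\mapsto g^{-1}\cdot\theta\circ T\rho_{g^{-1}}$ is an isometry for $\bar h k_W$. Invariance of $k_W$ handles the $W$-factor; for the $HP^*$-factor I would observe that, since $\omega$ is a principal connection, $T\rho_g$ carries $HP_p$ to $HP_{p\cdot g}$, and since $\pi\circ\rho_g=\pi$ we have $T_{p\cdot g}\pi\circ T\rho_g=T_p\pi$, so the pullback of $g$ along $T\pi$ defining $\bar h$ is preserved. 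Now, given $\theta'\in HP_{p\cdot g}^*\otimes W$, writing $\theta'=g^{-1}\cdot\theta''\circ T\rho_{g^{-1}}$ for a unique $\theta''\in HP_p^*\otimes W$, the perturbed covariant derivative factors as $g^{-1}\cdot(d^\omega\psi(p)+t\theta'')\circ T\rho_{g^{-1}}$, and $G$-invariance of $L_i$ collapses the resulting pairing to $\bar h k_W|_p(\tfrac{\dau L_i}{\dau(d^\omega\psi)}(p),\theta'')$. Applying the isometry and invoking non-degeneracy of $\bar h k_W$ yields
\[
\tfrac{\dau L_i}{\dau(d^\omega\psi)}(p\cdot g)=g^{-1}\cdot\tfrac{\dau L_i}{\dau(d^\omega\psi)}(p)\circ T\rho_{g^{-1}}.
\]
Horizontality is automatic: only horizontal perturbations $\theta'\in HP^*\otimes W$ enter the defining pairing, so $\nabla_3L_i$ lives in $HP^*\otimes W$ by construction, which is exactly the condition of vanishing on vertical vectors.

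The main obstacle is purely bookkeeping—keeping track of which objects are varied at which base point and rewriting a variation at $p\cdot g$ as the group-image of a variation at $p$. The one genuinely geometric input is the invariance of $\bar h$ under the principal right action, which is where the principality of the connection $\omega$ (and not merely its being a connection) enters the argument; without $G$-equivariance of $HP$ the pullback would fail to preserve the horizontal subbundle and the isometry claim would break down.
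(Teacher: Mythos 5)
Your argument for $\frac{\dau L_i}{\dau \psi}$ is correct and is essentially the paper's own proof (the paper perturbs by $w'$ at $p\cdot g$ and ends up pairing with $g\cdot w'$ at $p$, while you substitute $w'=g^{-1}\cdot w''$ up front --- the same computation), and your treatment of $\frac{\dau L_i}{\dau(d^\omega\psi)}$ correctly supplies the step the paper leaves to the reader, namely that $\theta\mapsto g^{-1}\cdot\theta\circ T\rho_{g^{-1}}$ is a $\bar h k_W$-isometry because $T\rho_g$ maps $HP_p$ isometrically onto $HP_{p\cdot g}$ (using $\pis\circ\rho_g=\pis$ and equivariance of the horizontal bundle). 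One terminological caution: the property you actually use is the defining equivariance of an interaction Lagrangian, $L_i(p,w,\xi)=L_i(p\cdot g,\,g^{-1}\cdot w,\,g^{-1}\cdot\xi\circ T\rho_{g^{-1}})$, not what the paper calls $G$-invariance ($L_i(p,w,\xi)=L_i(p,g\cdot w,g\cdot\xi)$); the claim requires only the former, so your proof is fine but the label is misleading in the paper's conventions.
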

\begin{proof}We verify equivariance of $\frac{\dau L_i}{\dau \psi}$. The computation for $\frac{\dau L_i}{\dau(d^\omega\psi)}$ is almost identical and it is left to the reader.

Let $w'\in W$. Then \begin{align*}k_W(\frac{\dau L_i}{\dau \psi}(p\cdot g),w')&= \frac{d}{dt}|_{t=0} L_i(p\cdot g,\psi(p\cdot g)+tw',d^\omega\psi(p\cdot g))\\
&=\frac{d}{dt}|_{t=0} L_i(p\cdot g,g^{-1}\cdot\psi(p)+tw',g^{-1}\cdot d^\omega\psi(p)\circ T\rho_{g^{-1}})\\
&=\frac{d}{dt}|_{t=0} L_i(p,\psi(p)+tg\cdot w', d^\omega\psi(p))\txt{(by definition of $L_i$)}\\
&=k_W(\frac{\dau L_i}{\dau \psi}(p),g\cdot w')\\
&=k_W(g^{-1}\cdot\frac{\dau L_i}{\dau \psi}(p), w').\txt{(since $k_W$ is an invariant inner product)}\\
\end{align*}
Non-degeneracy of $k_W$ does the rest.
\end{proof}
With this notation, the Euler-Lagrange equation for the particle field takes the following simple form.
\begin{theorem}[Euler-Lagrange equation for the particle field]\index{Euler-Lagrange equation for the particle field}
The action is stationary with respect to the particle field, $\frac{d}{dt}|_{t=0}S^U(\psi+t\sigma,\omega)=0$, for all open $U\subset M$ with compact closure and $\sigma\in C^\infty(P,W)^G$ with projected support in $U$, if and only if $\psi$ satisfies the Euler-Lagrange equation $$\delta^\omega\left(\frac{\dau L_i}{\dau (d^\omega\psi)}\right)+\frac{\dau L_i}{\dau\psi}=0.$$
\end{theorem}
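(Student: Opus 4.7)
The plan is to differentiate the action functional under the integral sign, apply the definitions of $\frac{\dau L_i}{\dau\psi}$ and $\frac{\dau L_i}{\dau(d^\omega\psi)}$ to get a pairing integral, then use the adjointness relation of claim \ref{cl:codifadj} to move $d^\omega$ off the variation $\sigma$ and onto $\frac{\dau L_i}{\dau(d^\omega\psi)}$, producing the codifferential $\delta^\omega$. A variational-lemma argument will then convert the integral identity into the pointwise Euler-Lagrange equation.

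More concretely, I first note that $\mathscr{L}_s$ does not involve $\psi$, so only $\mathscr{L}_i$ contributes to the variation. Using linearity of $d^\omega$ and the chain rule, I would compute
\begin{align*}
\frac{d}{dt}\Big|_{t=0}\mathscr{L}_i(\psi+t\sigma,\omega)(m)
&=\frac{d}{dt}\Big|_{t=0}L_i\bigl(p,\psi(p)+t\sigma(p),d^\omega\psi(p)+t\, d^\omega\sigma(p)\bigr)\\
&=k_W\!\left(\tfrac{\dau L_i}{\dau\psi}(p),\sigma(p)\right)
+\bar h k_W\!\left(\tfrac{\dau L_i}{\dau(d^\omega\psi)}(p),d^\omega\sigma(p)\right),
\end{align*}
by the very definitions of $\nabla_2L_i$ and $\nabla_3L_i$ on page \pageref{moeilijkeafgeleide}. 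A short check using $G$-equivariance of all four quantities and $G$-invariance of $k_W$ shows that each pairing is constant on fibres of $\pis$, so it descends to a well-defined function on $M$; thus the variation of $S^U$ equals
$$\int_U \bar h k_W\!\left(\tfrac{\dau L_i}{\dau\psi},\sigma\right)\mu
+\int_U \bar h k_W\!\left(\tfrac{\dau L_i}{\dau(d^\omega\psi)},d^\omega\sigma\right)\mu.$$

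Now I would apply claim \ref{cl:codifadj} to the second integral with $\alpha=\sigma\in\Omega_\hor^0(P,W)^G$ and $\beta=\frac{\dau L_i}{\dau(d^\omega\psi)}\in\Omega_\hor^1(P,W)^G$; since $\sigma$ has projected support in $U$, the hypotheses are met and the second integral becomes $\int_U \bar h k_W\bigl(\sigma,\delta^\omega\frac{\dau L_i}{\dau(d^\omega\psi)}\bigr)\mu$. Collecting terms and using symmetry of $k_W$, the stationarity condition reads
$$\int_U \bar h k_W\!\left(\tfrac{\dau L_i}{\dau\psi}+\delta^\omega\tfrac{\dau L_i}{\dau(d^\omega\psi)},\,\sigma\right)\mu=0\quad\forall\,\sigma.$$
The main (minor) obstacle is the final step: a variational lemma showing that if this integral vanishes for all $\sigma\in C^\infty(P,W)^G$ with projected support in every relatively compact open $U\subset M$, then the $\Omega_\hor^0(P,W)^G$-valued expression $\tfrac{\dau L_i}{\dau\psi}+\delta^\omega\tfrac{\dau L_i}{\dau(d^\omega\psi)}$ vanishes identically. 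I would argue this pointwise: if the quantity were non-zero at some $p_0\in P$, one constructs a bump $\chi\in C^\infty(M)$ supported in a small trivialising neighbourhood $U$ of $\pis(p_0)$ and sets $\sigma$ to be the equivariant extension of $\chi\cdot\bigl(\tfrac{\dau L_i}{\dau\psi}+\delta^\omega\tfrac{\dau L_i}{\dau(d^\omega\psi)}\bigr)$ (viewed as a section of the associated bundle via the identification of claim \ref{seccoo} and invariance of $k_W$), yielding a strictly positive integrand and contradicting stationarity. This gives exactly the stated Euler-Lagrange equation.
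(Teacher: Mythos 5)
Your proposal is correct and follows essentially the same route as the paper's proof: differentiate under the integral, split via the chain rule into the $\nabla_2$ and $\nabla_3$ pairings, apply the adjointness of claim \ref{cl:codifadj} to convert $d^\omega$ into $\delta^\omega$, and conclude by non-degeneracy of $k_W$ together with a fundamental-lemma argument. The only (harmless) difference is that you spell out the bump-function construction for the final step, where the paper simply invokes continuity of the integrand.
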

\begin{proof} We compute
\begin{align*}\int_U\frac d {dt}|_{t=0} \mathscr{L}_i(\psi+t\sigma,\omega)(\pis(p))\mu&=\int_U\frac d {dt}|_{t=0} L_i(p,\psi(p)+t\sigma(p),d^\omega \psi(p)+t d^\omega \sigma(p))\mu\\
&=\int_U\frac d {dt}|_{t=0} L_i(p,\psi(p)+t\sigma(p),d^\omega \psi(p))\mu\\
&+\int_U\frac d {dt}|_{t=0} L_i(p,\psi(p),d^\omega \psi(p)+t d^\omega \sigma(p))\mu\\
&=\int_Uk_W(\frac{\dau L_i}{\dau \psi}(p),\sigma(p))\mu+\int_U\bar h k_W(\frac {\dau L_i}{\dau(d^\omega\psi)}(p),d^\omega\sigma(p))\mu\\
&=\int_Uk_W(\frac{\dau L_i}{\dau \psi}(p),\sigma(p))\mu+\int_U\bar h k_W(\delta^\omega \frac {\dau L_i}{\dau(d^\omega\psi)}(p),\sigma(p))\mu\\
&=\int_Uk_W(\delta^\omega \frac {\dau L_i}{\dau(d^\omega\psi)}(p)+\frac{\dau L_i}{\dau \psi}(p),\sigma(p))\mu,
\end{align*}
where, in the last equality we use that $\bar h k_W=k_W$ on $0$-forms and in the equality before we use claim \ref{cl:codifadj}. Thus, we find that $\frac d {dt}|_{t=0}S^U(\psi+t\sigma,\omega)=0$ iff $\int_U k_W(\delta^\omega\frac{\dau L_i}{\dau(d^\omega\psi)}(p)+\frac{\dau L_i}{\dau \psi}(p),\sigma(p))\mu=0$. Now, the demand that this holds for all open $U\subset M$ with compact closure is equivalent to the demand that the integrand be zero (since the integrand is continuous). The non-degeneracy of $k_W$ does the rest.
\end{proof}
We want to demand that the action be stationary with respect to the fields as well and derive a similar equation. To do this, we have to introduce the concept of a \emph{current}\index{current}, which will play an analogous role to that of the electric current in electromagnetism. Note that my treatment, and therefore the terminology as well, is non-standard. 

Let us note that $T_\omega \mathscr{C}(P)=\Omega_\hor^1(P,\g)^G$. The map $T_\omega\mathscr{C}(P)\ra{}C^\infty(M)$ given by $\tau\mapsto \frac{d}{dt}|_{t=0}\mathscr{L}_i(\psi,\omega+t\tau)$ is linear in $\tau$. This means that, pointwise and by non-degeneracy of $\bar h k_\g$, we have a unique $J(\psi,\omega)\in \Omega_\hor^1(P,\g)^G$ such that $\frac{d}{dt}|_{t=0}\mathscr{L}_i(\psi,\omega+t\tau)=\bar h k_\g(J(\psi,\omega),\tau)$ for all $\tau\in T_\omega \mathscr{C}(P)$. We shall call this $J(\psi,\omega)$ the \emph{interaction current}\index{interaction current}\footnote{This quantity is usually called simply \emph{the} current in the literature.}. Since $\mathscr{L}_s$ may depend on the first jet of $\omega$, its variation need not be pointwise algebraic in $\tau$. We therefore define (following the terminology in \cite{aldint}) a \emph{self current}\index{self current} of the gauge field as the unique $\mathscr{J}(\omega)\in\Omega_\hor^1(P,\g)^G$ such that
$$-\frac d {dt}\Big|_{t=0}\int_U\mathscr{L}_s(\omega+t\tau)\mu=\int_U\bar h k_\g(\mathscr{J}(\omega),\tau)\mu$$
for all open $U\subset M$ with compact closure and all $\tau\in T_\omega \mathscr{C}(P)$ with projected support in $U$, whenever such a form exists. These definitions enable us to formulate the second Euler-Lagrange equation.

\begin{theorem}[Euler-Lagrange equation for the gauge field]\index{Euler-Lagrange equation for the gauge field} The action is stationary at $(\psi,\omega)$ with respect to the gauge field, $\frac{d}{dt}|_{t=0}S^U(\psi,\omega+t\tau)=0$, for all open $U\subset M$ with compact closure and $\tau\in T_\omega\mathscr{C}(P)$ with projected support in $U$, if and only if $\omega$ satisfies the Euler-Lagrange equation
$$\mathscr{J}(\omega)=J(\psi,\omega).$$
\end{theorem}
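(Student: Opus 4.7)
The plan is to mirror the proof of the preceding Euler-Lagrange equation for the particle field, exploiting the fact that the total action density splits as $\mathscr{L} = \mathscr{L}_i + \mathscr{L}_s$, so its variation with respect to $\omega$ splits cleanly into the two pieces that define the currents.

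First I would differentiate under the integral sign:
\begin{align*}
\frac{d}{dt}\Big|_{t=0}S^U(\psi,\omega+t\tau) &= \int_U\frac{d}{dt}\Big|_{t=0}\mathscr{L}_i(\psi,\omega+t\tau)\,\mu + \int_U \frac{d}{dt}\Big|_{t=0}\mathscr{L}_s(\omega+t\tau)\,\mu.
\end{align*}
Here $\tau\in T_\omega\mathscr{C}(P)=\Omega_\hor^1(P,\g)^G$ has projected support in $U$, so the integrand has compact support inside $U$ and differentiation under the integral is harmless. Next I would apply the defining properties of the interaction current and the self current directly: by definition of $J(\psi,\omega)$ the first integrand equals $\bar h k_\g(J(\psi,\omega),\tau)$, and by definition of $\mathscr{J}(\omega)$ the second equals $-\bar h k_\g(\mathscr{J}(\omega),\tau)$. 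Combining the two contributions yields
\begin{align*}
\frac{d}{dt}\Big|_{t=0}S^U(\psi,\omega+t\tau) = \int_U \bar h k_\g\bigl(J(\psi,\omega) - \mathscr{J}(\omega),\,\tau\bigr)\,\mu.
\end{align*}

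The final step is a standard fundamental-lemma-of-calculus-of-variations argument, identical in spirit to the closing of the previous theorem: demanding that the left-hand side vanish for every open $U\subset M$ with compact closure and every $\tau$ with projected support in $U$ forces the continuous integrand $\bar h k_\g(J(\psi,\omega)-\mathscr{J}(\omega),\tau)$ to vanish identically on $P$, and then non-degeneracy of the pointwise inner product $\bar h k_\g$ on $\Omega_\hor^1(P,\g)^G$ (which descends to a non-degenerate pairing on $M$ because the inner product is $\mathrm{Ad}$-invariant, so it only pairs equivariant horizontal forms into $C^\infty(M)$) yields the Euler-Lagrange equation $\mathscr{J}(\omega)=J(\psi,\omega)$. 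The converse direction is immediate from the same identity.

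The main obstacle, such as it is, is not the calculus but the bookkeeping around the variation space: one must check that the tangent space to $\mathscr{C}(P)$ at $\omega$ really is $\Omega_\hor^1(P,\g)^G$ (already established earlier in the text) and that $\tau$ with projected support in $U$ produces enough test forms to separate points of $\Omega_\hor^1(P,\g)^G$ via $\bar h k_\g$ — this is where the compactness of $\overline{U}$, together with a partition-of-unity / bump-function argument on $M$ lifted by equivariance to $P$, closes the loop. Everything else is a transparent unwinding of the definitions of $J$ and $\mathscr{J}$ introduced just before the statement.
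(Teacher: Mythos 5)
Your proposal is correct and follows essentially the same route as the paper's own proof: split the variation into the $\mathscr{L}_i$ and $\mathscr{L}_s$ pieces, substitute the defining identities of $J(\psi,\omega)$ and $\mathscr{J}(\omega)$ to obtain $\int_U \bar h k_\g(J(\psi,\omega)-\mathscr{J}(\omega),\tau)\,\mu$, and conclude by continuity of the integrand and non-degeneracy of $\bar h k_\g$. The extra remarks on the identification $T_\omega\mathscr{C}(P)\cong\Omega_\hor^1(P,\g)^G$ and on producing enough test forms are sensible but not developed further in the paper either.
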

\begin{proof}Using the definitions of the interaction and self currents, we compute
\begin{align*}
\frac{d}{dt}\Big|_{t=0}S^U(\psi,\omega+t\tau)
&=\frac{d}{dt}\Big|_{t=0}\int_U\mathscr{L}_i(\psi,\omega+t\tau)\mu+\frac{d}{dt}\Big|_{t=0}\int_U\mathscr{L}_s(\omega+t\tau)\mu\\
&=\int_U\bar h k_\g(J(\psi,\omega)-\mathscr{J}(\omega),\tau)\mu.
\end{align*}
Again, by continuity of the integrand, we see that $\bar h k_\g(J(\psi,\omega)-\mathscr{J}(\omega),\tau)$ is zero if and only if $\frac{d}{dt}|_{t=0}S^U(\psi,\omega+t\tau)=0$ for all $U$. Now, the non-degeneracy of $\bar h k_\g$ tells us that this holds for all $\tau$ precisely if $J(\psi,\omega)-\mathscr{J}(\omega)=0$.
\end{proof}
One may wonder why these quantities are called currents. For one thing, as we shall see, in the case of electromagnetism this equation reduces to the second (inhomogeneous) Maxwell equation and $J(\psi,\omega)$ will equal the electric current. Another reason for this nomenclature may be the following variant of Noether's theorem\footnote{Actually, Noether's theorem can be proved in a much broader context. Indeed, we obtain a similar conserved quantity if the Lagrangian is invariant under an arbitrary $G$-action. See, for example, \cite{jantra}}.
\begin{theorem}[Noether's theorem]\index{Noether's theorem} Let $L_i$ be a $G$-invariant interaction Lagrangian and suppose that $(\psi,\omega)$ satisfies the Euler-Lagrange equation for the particle field, then $\delta^\omega(J(\psi,\omega))=0$.\end{theorem}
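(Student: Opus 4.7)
The plan is to exploit the gauge invariance of $L_i$ by feeding it an infinitesimal gauge transformation and then using the Euler-Lagrange equation for $\psi$ to eliminate one half of the resulting identity. Concretely, for any equivariant $Z\in C^\infty(P,\g)^G$ (where $G$ acts on $\g$ by $\mathrm{Ad}$) with projected support in an open $U\subset M$ of compact closure, claim \ref{isomgauge} produces a smooth one-parameter family of gauge transformations $f_t\in\mathrm{Gau}\,\pis$ by $f_t(p):=p\cdot\exp(tZ(p))$ with $f_0=\mathrm{id}_P$. I would first establish the two infinitesimal variation formulas
\begin{align*}
\frac{d}{dt}\Big|_{t=0} f_t^*\psi \;=\; -Z\cdot\psi\;\in\; C^\infty(P,W)^G, \qquad \frac{d}{dt}\Big|_{t=0} f_t^*\omega \;=\; d^\omega Z\;\in\;\Omega^1_{\hor}(P,\g)^G.
\end{align*}
The first follows at once from the computation $(f^*\psi)(p)=\tau^{-1}(p)\cdot\psi(p)$ already carried out in the text (in the proof of the non-invariance of $\mathscr{L}_n$); the second comes from expanding the standard transformation rule $f^*\omega=\mathrm{Ad}(\tau^{-1})\omega+\tau^{-1}d\tau$ to first order in $t$ and recognising the resulting $dZ+[\omega,Z]$ as $d^\omega Z$ (horizontal because it kills the generators $Y_P$ by the reproducing property of $\omega$, equivariant because both $d$ and $P_H^*$ commute with the principal right action).

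With these identities in hand, differentiating the gauge invariance relation $S^U(f_t^*\psi,f_t^*\omega)=S^U(\psi,\omega)$ at $t=0$ and splitting by the chain rule gives
\begin{align*}
0 \;=\; \frac{d}{dt}\Big|_{t=0} S^U\bigl(\psi-tZ\cdot\psi,\,\omega\bigr) \;+\; \frac{d}{dt}\Big|_{t=0} S^U\bigl(\psi,\,\omega+t\,d^\omega Z\bigr).
\end{align*}
The computation performed in the proof of the Euler-Lagrange equation for the particle field rewrites the first summand as $\int_U k_W\!\bigl(\delta^\omega\frac{\dau L_i}{\dau(d^\omega\psi)}+\frac{\dau L_i}{\dau\psi},\,-Z\cdot\psi\bigr)\mu$, which vanishes by our hypothesis on $(\psi,\omega)$. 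The second summand is, by the very definition of the interaction current, equal to $\int_U \bar h k_\g\bigl(J(\psi,\omega),\,d^\omega Z\bigr)\mu$. Claim \ref{cl:codifadj} then lets me integrate by parts to obtain
\begin{align*}
0 \;=\; \int_U \bar h k_\g\bigl(J(\psi,\omega),\,d^\omega Z\bigr)\,\mu \;=\; \int_U \bar h k_\g\bigl(\delta^\omega J(\psi,\omega),\,Z\bigr)\,\mu.
\end{align*}
Since this vanishes for every $Z\in C^\infty(P,\g)^G$ of compact projected support in an arbitrary such $U$, continuity of the integrand together with non-degeneracy of $\bar h k_\g$ forces $\delta^\omega J(\psi,\omega)=0$ pointwise.

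The main technical obstacle I expect is the careful verification of the second infinitesimal variation formula $\frac{d}{dt}|_{t=0}f_t^*\omega=d^\omega Z$. This requires either deriving the transformation rule $f^*\omega=\mathrm{Ad}(\tau^{-1})\omega+\tau^{-1}d\tau$ globally on the total space $P$ (the text only records the local version for different trivialisations) or checking $\frac{d}{dt}|_{t=0}(f_t^*\omega)=dZ+[\omega,Z]$ separately on horizontal and vertical tangent vectors and identifying the result with $d^\omega Z$ via $d^\omega=P_H^*\circ d$. Once this identity is in place the rest of the argument is essentially bookkeeping: the particle Euler-Lagrange equation kills one summand, claim \ref{cl:codifadj} supplies the integration by parts, and non-degeneracy of $\bar h k_\g$ completes the pointwise conclusion.
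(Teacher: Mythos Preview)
Your argument is correct and is the standard Noether-type derivation. The paper does not actually give a proof of this theorem; it explicitly defers to \cite{blegau}, saying that the proof ``requires some more definitions.'' Your approach shows that in fact nothing beyond the machinery already developed in the text is needed: the gauge invariance of $\mathscr{L}_i$ (already proved for $G$-invariant $L_i$), the definition of $J$, the particle Euler--Lagrange computation, and claim \ref{cl:codifadj} suffice.

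Two small points of hygiene. First, you write the invariance as $S^U(f_t^*\psi,f_t^*\omega)=S^U(\psi,\omega)$, but the hypothesis only gives you gauge invariance of $\mathscr{L}_i$, not of $\mathscr{L}_s$; since both of your summands involve only $\mathscr{L}_i$-quantities anyway (the first through $\dau L_i/\dau\psi$ and $\dau L_i/\dau(d^\omega\psi)$, the second through the definition of $J$), you should simply write $S_i^U:=\int_U\mathscr{L}_i\,\mu$ throughout. Second, claim \ref{cl:codifadj} as stated asks for the higher-degree form to have compact projected support, whereas in your application it is the $0$-form $Z$ that does; the integration-by-parts identity is of course symmetric in this respect, but you should remark on it. Your identification $d^\omega Z=dZ+[\omega,Z]$ for equivariant $Z\in C^\infty(P,\g)^G$ is correct and follows exactly as you indicate, by evaluating $dZ$ on a vertical vector $Y_P$ and using the equivariance of $Z$ to get $-[Y,Z]$.
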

This is a kind of continuity equation for $J(\psi,\omega)$. Apparently, $J(\psi,\omega)$ represents the current density of some quantity that is locally conserved in spacetime. The proof of this theorem requires some more definitions. Therefore, the interested reader is referred to \cite{blegau}. However, as we shall see, if we have a special form of the self-action, called the Yang-Mills action, we immediately obtain that $\delta^\omega\mathscr{J}(\omega)=0$, even if the interaction Lagrangian is not $G$-invariant. Application of the Euler-Lagrange equation for the gauge field again gives us conservation of the interaction current. We see that we obtain an interpretation of $J(\psi,\omega)$ as a current if either the interaction Lagrangian or the self Lagrangian has a particularly simple form.\\
\\
Summarising, we have found that $S^U(\psi,\omega)$ is stationary with respect to all compactly supported variations in $(\psi,\omega)$ if and only if $\psi$ and $\omega$ satisfy both Euler-Lagrange equations:
$$\delta^\omega\left(\frac{\dau L_i}{\dau (d^\omega\psi)}\right)+\frac{\dau L_i}{\dau\psi}=0\txt{and}\mathscr{J}(\omega)=J(\psi,\omega).$$
(Specific cases of these two equations will be the Dirac equation and the second (inhomogeneous) Maxwell equation, respectively.) As in the case of Maxwell's electromagnetism, we can hope that these two equations, together with the integrability condition provided by the Bianchi identity,
$$d^\omega\Omega^\omega=0,$$
will uniquely determine the evolution of the fields. (Of course, the first Maxwell equation is a specific example of this Bianchi identity.)

\subsection{The Yang-Mills Action}
We have seen that we can interpret Maxwell's electromagnetism as a $U(1)$-gauge theory. From this point of view, the electromagnetic tensor acquired the interpretation of (a multiple of) the curvature $\Omega^\omega$ of some connection $\omega$ ($F=i e\,e_\alpha^*\Omega^\omega$ in a local gauge). We have seen that the Maxwell equations can be derived from an action density $-\frac 1 2 \mathrm{tr}^{1,3}\mathrm{tr}^{2,4} F\otimes F$, which is proportional to $-\frac{1}{2}\bar h k_{\mathfrak{u}(1)}(\Omega^\omega,\Omega^\omega)$ under the normalisation $F=i e\,e_\alpha^*\Omega^\omega$. This turns out to be a very general form of the self-action density for the gauge field. Indeed, it can be used to describe the weak, the unified electroweak, and the strong interactions as well\footnote{In the case of the (electro)weak interaction, this requires some alterations, such as a Higgs mechanism (which can also be understood in this formalism for gauge theory). \cite{blegau}}, albeit on a different principal bundle. For instance, one uses an $SU(2)$-bundle for the weak interaction, a $U(1)\times SU(2)$-bundle for the electroweak interaction, and an $SU(3)$-bundle for the strong one. We therefore write out the Euler-Lagrange equation for the gauge field for this specific form of the self-action density. This leads to the famous \emph{Yang-Mills equation}\index{Yang-Mills equation}.

For this derivation, we need the following result.
\begin{claim}\label{cl:lemmayangmills}If $\tau\in \Omega_\hor^1(P,\g)^G$, then $d^\omega \tau=d\tau+[\omega,\tau]$.
\end{claim}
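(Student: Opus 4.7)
The plan is to verify the identity pointwise on pairs of vector fields $(X,Y)$ on $P$ of three types, exploiting the fact that both sides are $C^\infty(P)$-bilinear and antisymmetric in $(X,Y)$. Since the vertical bundle is spanned pointwise by the fundamental vector fields $Z_P$ for $Z\in\g$, and the horizontal bundle supplies the complement, it suffices to check the equality on the three cases: (i) both $X,Y$ horizontal, (ii) both vertical of the form $Z_P,Z'_P$, and (iii) the mixed case $X=Z_P$ vertical and $Y$ horizontal. First I will unpack the definitions: $d^\omega\tau(X,Y)=d\tau(P_HX,P_HY)$ and, since $\omega,\tau\in\Omega^1(P,\g)$, we have $[\omega,\tau](X,Y)=[\omega(X),\tau(Y)]-[\omega(Y),\tau(X)]$.

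In case (i) both projections $P_HX,P_HY$ act as the identity, so $d^\omega\tau(X,Y)=d\tau(X,Y)$; moreover $\omega$ vanishes on horizontal vectors (by the very definition $\omega=\zeta^{-1}\circ P_V$), so $[\omega,\tau](X,Y)=0$, and both sides coincide. In case (ii) the horizontal projection kills both arguments so the left-hand side vanishes; horizontality of $\tau$ makes $[\omega,\tau](Z_P,Z'_P)=0$ and also kills the first two terms of $d\tau(Z_P,Z'_P)=\Li_{Z_P}\tau(Z'_P)-\Li_{Z'_P}\tau(Z_P)-\tau([Z_P,Z'_P])$; the remaining term vanishes as well because $[Z_P,Z'_P]=[Z,Z']_P$ is again vertical (claim \ref{cl:vertb}) and $\tau$ is horizontal.

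The mixed case (iii) is the technical heart. Here $P_HZ_P=0$ forces $d^\omega\tau(Z_P,Y)=0$, and we must show $d\tau(Z_P,Y)+[\omega,\tau](Z_P,Y)=0$. The key input is the infinitesimal version of the equivariance of $\tau$, obtained exactly as in claim \ref{liead}: differentiating $\rho_g^*\tau=\mathrm{Ad}(g^{-1})\circ\tau$ at $g=e$ gives $\Li_{Z_P}\tau=-\mathrm{ad}(Z)\circ\tau$. Combining this with the Leibniz rule $\Li_{Z_P}(\tau(Y))=(\Li_{Z_P}\tau)(Y)+\tau([Z_P,Y])$ and horizontality of $\tau$, the Cartan formula yields
\begin{align*}
d\tau(Z_P,Y) &= \Li_{Z_P}(\tau(Y))-\Li_Y(\tau(Z_P))-\tau([Z_P,Y]) \\
 &= -[Z,\tau(Y)]+\tau([Z_P,Y])-0-\tau([Z_P,Y]) \\
 &= -[Z,\tau(Y)].
\end{align*}
On the other hand $[\omega,\tau](Z_P,Y)=[\omega(Z_P),\tau(Y)]-[\omega(Y),\tau(Z_P)]=[Z,\tau(Y)]$, using $\omega(Z_P)=Z$ (claim \ref{claim:repro}) and again the horizontality of $\tau$. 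The two contributions cancel.

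The main obstacle is precisely the mixed case, where one has to carefully bookkeep the bracket term in the Cartan formula and the Leibniz rule for $\Li_{Z_P}$ and see that it cancels exactly against the correction introduced by $[\omega,\tau]$. Once the three cases are in place, $C^\infty(P)$-bilinearity extends the identity to arbitrary vector fields, giving $d^\omega\tau=d\tau+[\omega,\tau]$.
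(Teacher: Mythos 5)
Your proof is correct and follows essentially the same route as the paper: reduce by tensoriality to the horizontal/horizontal, vertical/vertical and mixed cases, use integrability of the vertical bundle in the second, and in the mixed case cancel $d\tau(Z_P,Y)$ against $[\omega(Z_P),\tau(Y)]$ using the equivariance of $\tau$. The only cosmetic difference is that you package the key step as the infinitesimal equivariance $\Li_{Z_P}\tau=-\mathrm{ad}(Z)\circ\tau$ plus the Leibniz rule, where the paper differentiates the equivariance relation for $\tau$ directly at a point; the computation is the same.
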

\begin{proof}Let $X,Y\in \mathcal{X}(P)$. We should verify that $(*)$:\\ $d^\omega\tau(X,Y)=d\tau(P_H X,P_H Y)=d\tau(X,Y)+
[\omega,\tau](X,Y)=d\tau(X,Y)+[\omega(X),\tau(Y)]-[\omega(Y),\tau(X)]$. We use bilinearity of both sides to reduce the equality to the case in which $X$ is either horizontal or vertical and the same for $Y$.

If $X$ and $Y$ are both horizontal, then the equality obviously holds, since both sides are equal to $d\tau(X,Y)$. 

Moreover, if both $X$ and $Y$ are vertical, $(*)$ becomes the condition that $d\tau(X,Y)=0$. Now, $d\tau(X,Y)=\Li_X\tau(Y)-\Li_Y\tau(X)-\tau([X,Y])=-\tau([X,Y])$. To see that this equals zero, note that the vertical bundle is integrable (claim \ref{cl:vertb}), so $[X,Y]$ is a vertical vector field.

Finally, if $X$ is vertical and $Y$ is horizontal, we should verify that $0=d\tau(X,Y)+[\omega(X),\tau(Y)]$. Now, $d\tau(X,Y)=\Li_X\tau(Y)-\Li_Y\tau(X)-\tau([X,Y])=\Li_X\tau(Y)-\tau([X,Y])$, so $(*)$ reduces to $\Li_X\tau(Y)-\tau([X,Y])=-[\omega(X),\tau(Y)]$. We verify this equality pointwise. Since both sides are tensorial, it is enough at $p$ to take $X=Z_P$, where $Z\in\g$ is chosen so that $Z_P(p)=X(p)$. Then
\begin{align*}\Li_X (\tau(Y))(p)&=\frac d {dt}\Big|_{t=0}\tau_{\rho_{\exp tZ}(p)}\bigl(Y(\rho_{\exp tZ}(p))\bigr)\\
&=\frac d {dt}\Big|_{t=0} \left(\mathrm{Ad}_{\exp(-tZ)}\circ \tau_p\right)\left( T\rho_{\exp -tZ} \circ Y\circ \rho_{\exp tZ}(p)\right)\txt{(by equivariance of $\tau$)}\\
&=\left(\frac d {dt}\Big|_{t=0} \mathrm{Ad}_{\exp(-tZ)}\circ \tau_p\right)(Y(p))+\tau_p\left(\frac{d}{dt}\Big|_{t=0}T\rho_{\exp -tZ} \circ Y\circ \rho_{\exp tZ}(p) \right)\\
&=-\mathrm{ad}(Z)(\tau(Y))(p)+\tau([X,Y])(p)\\
&=-[Z,\tau(Y)(p)]+\tau([X,Y])(p)\\
&=-[\omega(X)(p),\tau(Y)(p)]+\tau([X,Y])(p)\txt{(by claim \ref{claim:repro})}.
\end{align*}
\end{proof}
The Yang-Mills equation now easily follows.
\begin{theorem}[The Yang-Mills equation] For the \emph{Yang-Mills self-action density}\index{Yang-Mills self-action density} $\mathscr{L}_s(\omega)=-\frac 1 2\bar h k_\g(\Omega^\omega,\Omega^\omega)$, the Euler-Lagrange equation for the gauge field takes the form
$$\delta^\omega \Omega^\omega=J(\psi,\omega).$$
\end{theorem}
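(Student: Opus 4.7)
The plan is to compute $\mathscr{J}(\omega)$ explicitly for the Yang-Mills self action density and observe that it equals $\delta^\omega \Omega^\omega$; the Euler-Lagrange equation then reads $\delta^\omega \Omega^\omega = J(\psi,\omega)$ as claimed.

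First I would vary the curvature. Using $\Omega^\omega = d\omega + \tfrac{1}{2}[\omega,\omega]$, for $\tau \in T_\omega \mathscr{C}(P) = \Omega_\hor^1(P,\g)^G$ we have
\[
\Omega^{\omega + t\tau} = \Omega^\omega + t\bigl(d\tau + [\omega,\tau]\bigr) + \tfrac{t^2}{2}[\tau,\tau],
\]
where I use that $[\omega + t\tau,\omega + t\tau] = [\omega,\omega] + 2t[\omega,\tau] + t^2[\tau,\tau]$ by graded skew-symmetry of the Lie bracket on $\Omega(P,\g)$. Since $\tau \in \Omega_\hor^1(P,\g)^G$, claim \ref{cl:lemmayangmills} gives the crucial identification $d^\omega \tau = d\tau + [\omega,\tau]$, so
\[
\frac{d}{dt}\Big|_{t=0} \Omega^{\omega + t\tau} = d^\omega \tau.
\]

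Next I would differentiate the self action density. Using bilinearity and the symmetry of the inner product $\bar h k_\g$,
\[
\frac{d}{dt}\Big|_{t=0} \mathscr{L}_s(\omega + t\tau) = -\frac{1}{2}\cdot 2\,\bar h k_\g(\Omega^\omega, d^\omega \tau) = -\bar h k_\g(\Omega^\omega, d^\omega \tau).
\]
Hence, for any open $U \subset M$ of compact closure and any $\tau$ with projected support in $U$,
\[
-\frac{d}{dt}\Big|_{t=0} \int_U \mathscr{L}_s(\omega + t\tau)\,\mu = \int_U \bar h k_\g(\Omega^\omega, d^\omega \tau)\,\mu.
\]
Applying claim \ref{cl:codifadj} to the pair $(\Omega^\omega, \tau) \in \Omega_\hor^2(P,\g)^G \times \Omega_\hor^1(P,\g)^G$, the right-hand side equals $\int_U \bar h k_\g(\delta^\omega \Omega^\omega, \tau)\,\mu$, which by the defining property of $\mathscr{J}(\omega)$ (read as an integrated identity, which is the only consistent interpretation because the variation involves derivatives of $\tau$) and the non-degeneracy of $\bar h k_\g$ combined with the arbitrariness of $\tau$ and $U$ forces
\[
\mathscr{J}(\omega) = \delta^\omega \Omega^\omega.
\]
Substituting into $\mathscr{J}(\omega) = J(\psi,\omega)$ gives the Yang-Mills equation.

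The main subtlety, and the only place requiring care, is the passage from the pointwise definition of $\mathscr{J}(\omega)$ to the integration-by-parts step: since $\mathscr{L}_s$ depends on $d\omega$, the variation $\frac{d}{dt}|_{t=0}\mathscr{L}_s(\omega + t\tau)$ genuinely involves $d\tau$, so the naive pointwise matching of $\mathscr{J}(\omega)$ as a vector bundle element is only consistent after integrating against $\mu$ and invoking claim \ref{cl:codifadj}. Everything else is routine: the computation of $d\Omega^{\omega+t\tau}/dt$ reduces cleanly thanks to claim \ref{cl:lemmayangmills}, and the symmetry of $\bar h k_\g$ takes care of the factor $2$ in the derivative of the quadratic form.
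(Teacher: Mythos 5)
Your proof is correct and follows essentially the same route as the paper: expand $\Omega^{\omega+t\tau}$, identify the linear term as $d^\omega\tau$ via claim \ref{cl:lemmayangmills}, differentiate the quadratic form, and pass $d^\omega$ to $\delta^\omega$ under the integral using claim \ref{cl:codifadj} before invoking non-degeneracy. Your closing remark that the defining property of $\mathscr{J}(\omega)$ only makes sense as an integrated identity (since the variation involves $d\tau$) is a fair and slightly more careful observation than the paper makes, but it does not change the argument.
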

\begin{proof}We assume that $U\subset M$ is open with compact closure and $\tau\in\Omega_\hor^1(P,\g)^G$ with projected support in $U$.

By theorem \ref{maurer}, $\Omega^{\omega+t\tau}=d(\omega+t\tau)+\frac 1 2 [\omega+t\tau,\omega+t\tau]=d\omega+\frac 1 2[\omega,\omega]+t(d\tau+\frac 1 2 [\tau,\omega]+\frac 1 2[\omega,\tau])+\mathcal{O}(t^2)=d\omega+\frac 1 2[\omega,\omega]+td^\omega \tau+\mathcal{O}(t^2)$, where the last equality holds by claim \ref{cl:lemmayangmills}. We see that $\frac{d}{dt}|_{t=0}\Omega^{\omega+t\tau}=d^\omega \tau$. Therefore $\int_U\bar h k_\g (\mathscr{J}(\omega),\tau)\mu=-\int_U\frac{d}{dt}|_{t=0}\mathscr{L}_s(\omega+t\tau)\mu=\int_U\frac{d}{dt}|_{t=0}\frac 1 2\bar h k_\g(\Omega^{\omega+t\tau},\Omega^{\omega+t\tau})\mu=\int_U\bar h k_\g(\Omega^\omega,d^\omega \tau)\mu=\int_U\bar h k_\g(\delta^\omega\Omega^\omega,\tau)\mu$, where the last equality holds by claim \ref{cl:codifadj}. Finally, the continuity of the integrand and non-degeneracy of $\bar h k_\g$ allow us to conclude that $\mathscr{J}(\omega)=\delta^\omega \Omega^\omega$.
\end{proof}
We see that this equation reduces to the second (inhomogeneous) Maxwell equation in the case $G=U(1)$. It is well known that electric charge conservation can be derived from this Maxwell equation. We obtain a similar result for the Yang-Mills equation.

\begin{corollary}[Generalised charge conservation]\index{charge conservation}  If $(\psi,\omega)$ satisfies the Yang-Mills equation, we have
$$\delta^\omega J(\psi,\omega)=0.$$
\end{corollary}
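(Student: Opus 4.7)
The plan is to rewrite the desired identity using the Yang-Mills equation as $\delta^\omega J(\psi,\omega) = (\delta^\omega)^2\Omega^\omega$, and then show this vanishes. Rather than unwinding $(\delta^\omega)^2$ directly via the $\bar *$-definition of $\delta^\omega$, I would test it against an arbitrary $\tau\in\Omega_\hor^0(P,\g)^G$ with compact projected support in some $U\subset M$, and invoke the adjoint relation from claim \ref{cl:codifadj} twice:
\[
\int_U \bar h k_\g(\delta^\omega J,\tau)\mu = \int_U \bar h k_\g(J,d^\omega\tau)\mu = \int_U \bar h k_\g(\Omega^\omega,(d^\omega)^2\tau)\mu.
\]
Non-degeneracy of $\bar h k_\g$ together with the freedom in $\tau$ (as in the proof of the Euler-Lagrange equation for the particle field) then reduces the claim to showing that the integrand vanishes pointwise.

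The first subsidiary step will be a Bianchi-type identity $(d^\omega)^2\tau = [\Omega^\omega,\tau]$ for $\tau\in\Omega_\hor^0(P,\g)^G$. Claim \ref{cl:lemmayangmills} extends verbatim to $0$-forms, giving $d^\omega\tau = d\tau+[\omega,\tau]$; the check on horizontal vectors is trivial, and on a fundamental vertical vector $Z_P$ the equivariance $\tau(p\cdot g) = \mathrm{Ad}(g^{-1})\tau(p)$ forces $d\tau(Z_P) = -[Z,\tau]$, which cancels $[\omega(Z_P),\tau] = [Z,\tau]$. Applying $d^\omega$ a second time, using the graded Leibniz rule $d[\omega,\tau] = [d\omega,\tau]-[\omega,d\tau]$, the graded Jacobi identity $[\omega,[\omega,\tau]] = \tfrac12[[\omega,\omega],\tau]$, and theorem \ref{maurer}, I would arrive at $(d^\omega)^2\tau = [d\omega+\tfrac12[\omega,\omega],\tau] = [\Omega^\omega,\tau]$.

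The hard part will be the pointwise vanishing of $\bar h k_\g(\Omega^\omega,[\Omega^\omega,\tau])$. Evaluating at $p\in P$ in an orthonormal basis $(f_i)$ of $HP_p$ yields
\[
\bar h k_\g(\Omega^\omega,[\Omega^\omega,\tau])(p) = \sum_{i<j}\epsilon_i\epsilon_j\, k_\g\bigl(\Omega^\omega(f_i,f_j),[\Omega^\omega(f_i,f_j),\tau(p)]\bigr),
\]
where $\epsilon_i = \pm1$ are the Lorentzian signs. Differentiating the $\mathrm{Ad}$-invariance of $k_\g$ gives the cyclic identity $k_\g(X,[Y,Z]) = k_\g(Z,[X,Y])$; specialising to $X = Y = \Omega^\omega(f_i,f_j)$ collapses each summand to $k_\g(\tau(p),[\Omega^\omega(f_i,f_j),\Omega^\omega(f_i,f_j)])=0$. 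The real conceptual content of the proof is precisely this interplay between the antisymmetry of the $\g$-bracket --- equivalently, the total antisymmetry of the structure constants $c_{abc}$ induced by $\mathrm{Ad}$-invariance of $k_\g$ --- and the symmetry of the contraction $\Omega^{a,\mu\nu}\Omega^b_{\mu\nu}$ in the Lie-algebra indices after tracing against the spacetime metric. With this cancellation secured, the conclusion $\delta^\omega J(\psi,\omega) = 0$ is immediate.
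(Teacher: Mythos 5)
Your argument is correct, but it takes a different route from the one in the text. The thesis proves the statement by a direct one-line computation: $\delta^\omega J=(\delta^\omega)^2\Omega^\omega=\pm\bar{*}\,d^\omega d^\omega\bar{*}\,\Omega^\omega=\pm\bar{*}\,[\Omega^\omega,\bar{*}\,\Omega^\omega]$, and the last bracket vanishes because $\Omega^a\wedge\bar{*}\,\Omega^b$ is \emph{symmetric} in the Lie-algebra indices while $[e_a,e_b]$ is antisymmetric --- note that this kills the expression as a $\g$-valued form, before any inner product enters, so $\mathrm{Ad}$-invariance of $k_\g$ is not needed there. You instead pass to a weak formulation: you move both codifferentials onto a compactly supported test section via claim \ref{cl:codifadj}, reduce to the identity $(d^\omega)^2\tau=[\Omega^\omega,\tau]$ on equivariant $0$-forms, and then kill the integrand pointwise using the infinitesimal $\mathrm{Ad}$-invariance of $k_\g$ in the form $k_\g(X,[X,Z])=0$. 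Both arguments are sound, and in fact both rest on the same Bianchi-type identity $(d^\omega)^2=[\Omega^\omega,-]$ (which the thesis uses without proof on an $(n-2)$-form, whereas you actually sketch the degree-$0$ case you need, extending claim \ref{cl:lemmayangmills}). What the direct computation buys is brevity and independence from the invariance of $k_\g$; what your route buys is that it avoids manipulating the Hodge star of a $\g$-valued form and the wedge-bracket of forms of complementary degree, at the modest cost of invoking the standing hypothesis that $k_\g$ is $\mathrm{Ad}$-invariant and of the usual care with where the compact (projected) support sits when applying the adjointness relation --- the same licence the thesis already takes in the proof of the Euler--Lagrange equation for the particle field.
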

\begin{proof}
This is immediate: $\delta^\omega J(\psi,\omega)=\delta^\omega(\delta^\omega(\Omega^\omega))=\pm \bar * d^\omega \bar * \bar * d^\omega \bar * \Omega^\omega=\pm\bar * d^\omega d^\omega \bar * \Omega^\omega=\pm \bar * [\Omega^\omega,\bar * \Omega^\omega]=0$. For the last equality, write $\Omega^\omega=\sum_a E_a\Omega^a$ in a basis of $\g$; then $\Omega^a\wedge \bar *\Omega^b$ is symmetric in $a,b$, while $[E_a,E_b]$ is antisymmetric.
\end{proof}

\begin{remark}
The logical continuation of this thesis would be the case of a specific interaction Lagrangian, called the \emph{Dirac interaction Lagrangian}, that can be used to describe matter fields like electrons, nucleons (protons and neutrons), quarks and many other particles, which then obey the \emph{Dirac equation}\index{Dirac equation}. Its importance in physics can be compared to that of the Yang-Mills self Lagrangian for the gauge fields. However, there is no room for this here. The reader can find a treatment for the case of the electron and the nucleons with electroweak interaction in \cite{blegau}. Unfortunately, this work does not feature a more general treatment involving general spinor fields obeying a Dirac equation. In particular, an account of the strong interactions between quarks is missing. However, I have yet to find a better text on the subject. In particular, most other texts, even when they feature a theory of the strong interaction, do not take a geometric approach. Usually they are full of phenomenology and computations, proceeding immediately to the mathematically problematic procedure of second quantisation. An introduction to second quantisation can be found in \cite{fadgau}.
\end{remark}

\subsection{The Einstein-Yang-Mills Self-Action Density}
The bundle metric $h$ can be seen to be of even more physical importance than we have seen so far by computing its curvature scalar. This lengthy (but straightforward) computation, which I shall omit (see \cite{blegau}), gives the following theorem.
\begin{theorem}\label{thm:scacurv}
Let us write $\mathcal{S}_h$ for the curvature scalar of $h$ (which, by remark \ref{rem:isom}, is constant along fibres of $\pis$ and therefore descends to a function on $M$), $\mathcal{S}_g$ for that of $g$ and $\mathcal{S}_k$ for that of $k$ (which is constant, since, by invariance of $k$, $G$ acts transitively on itself by $k$-isometries by left multiplication). Then, we have an equality of smooth functions on $M$: 
$$\mathcal{S}_h=\mathcal{S}_g-\frac{1}{2}(gk_\g)(\Omega^\omega,\Omega^\omega)+\mathcal{S}_k.$$
\end{theorem}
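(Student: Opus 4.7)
The plan is to compute $\mathcal{S}_h$ by first choosing a convenient orthonormal frame adapted to the splitting $TP = HP \oplus VP$ induced by $\omega$, and then applying the Koszul formula — or equivalently, the O'Neill formulas for Riemannian submersions with totally geodesic fibers, which is what $\pis\colon (P,h)\to(M,g)$ turns out to be.

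First I would fix a local $g$-orthonormal frame $(e_i)_{i=0}^{n-1}$ on $M$ and horizontally lift it to $(\tilde e_i)$ on $P$, and choose a $k_\g$-orthonormal basis $(\xi_\alpha)$ of $\g$ which, via the infinitesimal principal right action $\zeta$ of claim \ref{cl:triv}, gives vertical vector fields $(\xi_\alpha^*)$. By construction $h(\tilde e_i,\tilde e_j)=g(e_i,e_j)\circ\pis$, $h(\xi_\alpha^*,\xi_\beta^*)=k_\g(\xi_\alpha,\xi_\beta)$, and $h(\tilde e_i,\xi_\alpha^*)=0$, so $(\tilde e_i,\xi_\alpha^*)$ is an $h$-orthonormal frame. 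Next I would record the three families of Lie brackets: $[\tilde e_i,\xi_\alpha^*]=0$ (horizontal lifts of basic fields are $G$-invariant, since $\omega$ is a principal connection); $[\xi_\alpha^*,\xi_\beta^*]=[\xi_\alpha,\xi_\beta]^*$ (infinitesimal right action is a Lie algebra homomorphism); and, by corollary \ref{curveasy} combined with $\zeta$, $[\tilde e_i,\tilde e_j]=\widetilde{[e_i,e_j]}-\bigl(\Omega^\omega(\tilde e_i,\tilde e_j)\bigr)_P$.

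Then I would observe that $\pis\colon(P,h)\to(M,g)$ is a Riemannian submersion whose fibers carry the bi-invariant metric on $G$ determined by $k_\g$ (this is exactly the content of remark \ref{rem:isom} combined with the $\mathrm{Ad}$-invariance of $k_\g$). In particular the fibers are totally geodesic. Using Koszul's identity in the adapted frame I would compute $\nabla^h$ component by component: the horizontal-horizontal component of $\nabla^h_{\tilde e_i}\tilde e_j$ is the horizontal lift of $\nabla^g_{e_i}e_j$ plus $\tfrac{1}{2}\Omega^\omega(\tilde e_i,\tilde e_j)_P$; the horizontal-vertical components are $\nabla^h_{\tilde e_i}\xi_\alpha^*=\nabla^h_{\xi_\alpha^*}\tilde e_i=\tfrac{1}{2}\bigl(\iota_{\tilde e_i}\Omega^\omega(\cdot,\xi_\alpha)\bigr)^{\sharp}$ in the horizontal direction (via the metric dualization by $h$ and $k_\g$); and $\nabla^h_{\xi_\alpha^*}\xi_\beta^*=\tfrac{1}{2}[\xi_\alpha,\xi_\beta]^*$, i.e.\ the Levi-Civita connection of the bi-invariant $(G,k_\g)$ pushed up fiberwise.

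From these formulas I would assemble the Riemann tensor $R^h$ and take the partial traces: the purely horizontal Ricci contracts to $\pis^*\mathrm{Ric}_g$ corrected by a quadratic term in $\Omega^\omega$; the purely vertical Ricci is the pullback of $\mathrm{Ric}_k$ corrected by another quadratic $\Omega^\omega$-term; the mixed Ricci components are divergences of $\Omega^\omega$ which do not contribute to the scalar trace. Summing, the mixed contributions vanish identically and the quadratic $\Omega^\omega$-terms combine into precisely $-\tfrac{1}{2}(gk_\g)(\Omega^\omega,\Omega^\omega)$, where $gk_\g$ is the tensor product inner product of the section on the covariant codifferential (p.\ \pageref{vreselijkip}). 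This yields the claimed identity $\mathcal{S}_h=\mathcal{S}_g-\tfrac{1}{2}(gk_\g)(\Omega^\omega,\Omega^\omega)+\mathcal{S}_k$; since every term descends to $M$ (the left side by remark \ref{rem:isom}, the individual right-hand terms by $G$-invariance of $\Omega^\omega$ under $k_\g$), the equality holds on $M$.

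The main obstacle is purely bookkeeping: tracking the factors of $\tfrac12$ appearing in Koszul's formula and in the combinatorial prefactors of the sectional curvature on $2$-planes, and verifying that the conventions for the inner product $gk_\g$ on $\g$-valued $2$-forms used throughout the thesis produce exactly the coefficient $-\tfrac12$ (and not, say, $-\tfrac14$) in front of $(gk_\g)(\Omega^\omega,\Omega^\omega)$. Everything else is dictated by the symmetry of the three brackets listed above.
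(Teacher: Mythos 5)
Your plan is correct, and it is essentially the computation the paper itself declines to carry out: the text above the theorem explicitly omits the proof and defers to Bleecker \cite{blegau}, whose argument is exactly the adapted-orthonormal-frame/Koszul computation you describe, only without packaging it in O'Neill's language. The three bracket identities you list are the whole content of the computation, and your structural claims check out: the fibres are totally geodesic because $h(\xi_\alpha^*,\xi_\beta^*)=k_\g(\xi_\alpha,\xi_\beta)$ is constant on $P$ (this uses $\omega(Z_P)=Z$ from claim \ref{claim:repro}), the mixed Ricci terms are divergence-type expressions in $\Omega^\omega$ that never enter the trace, and the coefficient comes out right: with $A_{\tilde e_i}\tilde e_j=\tfrac12\mathcal{V}[\tilde e_i,\tilde e_j]=-\tfrac12\bigl(\Omega^\omega(\tilde e_i,\tilde e_j)\bigr)_P$ one gets $|A|^2=\tfrac14\sum_{i,j}k_\g\bigl(\Omega^\omega_{ij},\Omega^\omega_{ij}\bigr)=\tfrac12(gk_\g)(\Omega^\omega,\Omega^\omega)$, and the horizontal sectional corrections $-3|A|^2$ combine with the mixed ones $+2|A|^2$ to give the net $-|A|^2$. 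Two bookkeeping points to fix in the write-up: the vertical part of $\nabla^h_{\tilde e_i}\tilde e_j$ is $-\tfrac12\bigl(\Omega^\omega(\tilde e_i,\tilde e_j)\bigr)_P$, not $+\tfrac12$ (harmless for the scalar curvature, since only the square survives), and since $g$ is Lorentzian you are invoking the pseudo-Riemannian version of the submersion formulas, so the "orthonormal" frame and all traces must carry the signs $\epsilon_i=g(e_i,e_i)=\pm1$; the algebra is unchanged but the statement of O'Neill you cite should be the indefinite one.
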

We see that this action density involves contributions from both a Yang-Mills action density and the Einstein-Hilbert action density. Up to the constant term $\mathcal{S}_k$ (which is zero for Abelian one-dimensional fibres, and otherwise contributes a cosmological-constant term when $g$ is varied), this means that the corresponding action functional is stationary at values $(g,\omega)$ of the spacetime metric and the principal $G$-connection, respectively, if and only if $g$ satisfies the Einstein equation with the stress-energy tensor of the Yang-Mills field and $\omega$ satisfies the source-free Yang-Mills equation.

\section{Gravity on the Frame Bundle?}
We have seen that there exists a parallel between the equations of Einstein's general relativity and those of Yang-Mills theories. Indeed, the field is determined by a field equation involving sources (the Einstein and Yang-Mills equations, respectively) and an integrability condition in the form of a Bianchi identity. This parallel is supported by theorems \ref{thm:genchar} and \ref{thm:scacurv}, which interpret the Yang-Mills fields as pseudo-Riemannian metrics and indeed put source-free Yang-Mills theories in a framework that closely resembles that of general relativity.

Since there is a vast theory of the quantisation of Yang-Mills fields, one might want to do the opposite. Would it be possible to formulate the general theory of relativity as a gauge theory on a principal bundle? If so, this could help lead to a quantum theory of gravity.

Using the equivalence in corollary \ref{metrequi}, we have an obvious candidate for this principal bundle: the bundle of Lorentz frames (a principal $O(1,3)$-bundle\footnote{Perhaps it would be better to start with the Einstein-Cartan theory of gravity and construct the bundle of spin frames, which would be a $SL(2,\C)$-principal bundle.}). Indeed, we can formulate general relativity on this bundle by interpreting tensors as equivariant maps from the bundle of Lorentz frames, using claim \ref{seccoo} (and noting that by section \ref{catconstrafcon} each tensor bundle is associated with the bundle of Lorentz frames). Moreover, by claim \ref{afconind} we can transfer the Levi-Civita connection to a unique principal connection on the bundle of Lorentz frames\footnote{In fact, this claim would let us transfer it to a unique principal connection on the bundle of all frames. However, it is easily seen that this restricts to a connection on the bundle of Lorentz frames, since the horizontal bundle is tangent to the bundle of Lorentz frames. (This property together with the demand that it has zero torsion even uniquely characterises the Levi-Civita connection. \cite{blegau})}. Of course, one can write down the Einstein equation in this setting to obtain a principal bundle formulation of general relativity.

However, for unification purposes, this is not entirely satisfactory. Three of the four fundamental interactions in our current picture of nature can (modulo Higgs mechanisms in the case of the weak and electroweak interactions) be described by a Yang-Mills equation, but the equation for gravity has an entirely different form. Moreover, gravity appears in an entirely different way in the gauge theory than the other interactions. Indeed, it is not only described by the Levi-Civita connection (as a kind of gravitational potential) and its curvature (as a sort of field strength), but it is also present as the metric on the base space of the principal bundle, governing the geometry of spacetime. It is clear that the fundamental forces of nature are still far from being on an equal footing.

\clearpage
\addcontentsline{toc}{chapter}{Index}
\printindex 
\clearpage
\noindent\\
\\
\vspace{15pt}\\ \Huge \textbf{Glossary of Symbols} \normalsize
\addcontentsline{toc}{chapter}{Glossary of Symbols}\\
\quad\\
\\
\quad\\
\begin{tabularx}{\textwidth}{llX}
$\mathrm{Ad}$ && Adjoint representation of a Lie group on its Lie algebra\\
$CC^G$ && Category of $G$-cocycles\\
$CFB$ && Category of coordinate fibre bundles\\
$CFB^\lambda$ && For an action $G\ra{\lambda}\mathrm{Diff}\, S$, the category of coordinate $(G,\lambda)$-fibre bundles\\
$C^\infty(P,S)^G$ && Space of $G$-equivariant mappings (in the sense that, for $\Phi\in C^\infty(P,S)^G$, $\Phi(p\cdot g)=g^{-1}\cdot \Phi(p)$) from a principal bundle $P\ra{\pis}M$ to a manifold $S$, on which we have a left $G$-action\\
$\mathscr{C}(P)$ && Space of principal connections on $P$\\
$D/dt$ && For a path $I\ra{c}M$ and a vector bundle $V\ra{\pi}M$ with a connection $P_V$, the covariant derivative associated with the pullback connection $c^*(P_V)$\\
$\delta^\omega$ && For a principal connection $\omega$, the covariant codifferential\\
$d^\omega$ && For a principal connection $\omega$, the covariant differential\\
$FB$ && Category of fibre bundles\\
$FB^\lambda$ && For an action $G\ra{\lambda}\mathrm{Diff}\, S$, the category of $(G,\lambda)$-fibre bundles\\
$F^\lambda$ && Generalised frame bundle functor: $FB^\lambda\ra{F^\lambda}PB^G$, for an effective $G$-action $\lambda$\\
$F_m$ && For a fibre bundle $F\ra{\pi}M$, the fibre $\pi^{-1}(\{m\})$ over $m$\\
$\g$ && Lie algebra of a Lie group $G$\\
$\Gamma(F\ra{\pi}M)$ & & The sheaf of sections of a fibre bundle $F\ra{\pi}M$\\
$\bar h_f$ & & In the case of a fibre bundle $F\ra{\pi}M$ with a connection, where $M$ has a pseudo-Riemannian metric $g$, the inner product on $HF_f$ given by $\bar h_f(X,Y)=g_{\pi(f)}(T_f\pi(X),T_f\pi(Y))$\\
$HF\ra{}F$ && Horizontal bundle of $F$, if $F\ra{\pi}M$ is a fibre bundle with a connection\\
$\bar h k_W$ && See page \pageref{vreselijkip}\\
$J^k(M,N)\ra{}M$ && Bundle of $k$-jets of maps from $M$ to $N$\\
$K$ && Connector of some connection\\
$J^kF\ra{}M$ && Bundle of $k$-jets of sections of a fibre bundle $F\ra{}M$\\
$k_W$ && For a vector space $W$ with a linear $G$-action, an invariant inner product\\
$\Bbbk$ && Denotes either $\R$ or $\C$\\
$\Bbbk^{p,q}$ && $\Bbbk^k$ with the standard $(p,q)$-inner product\\
$L$ && Left action of a Lie group on itself by left multiplication\\
$\Li$ && Lie derivative\\
$-[\lambda]$ && Associated bundle functor: $PB^G\ra{-[\lambda]}FB^\lambda$, for a left $G$-action $\lambda$ on some manifold $S$
\end{tabularx}
\newpage
\begin{tabularx}{\textwidth}{llX}
${\dau L_i}/{\dau\psi}$ && See page \pageref{moeilijkeafgeleide}\\
${\dau L_i}/{\dau(d^\omega\psi)}$ && See page \pageref{moeilijkeafgeleide}\\
$MVB^k_{\Bbbk,(p,q)}$ && Category of $k$-dimensional $\Bbbk$-vector bundles that are equipped with a $(p,q)$-metric and maps of vector bundles that are isometries\\
$\nabla$ & & Covariant derivative operator\\
$\Omega_\hor(P,W)^G$ && $G$-equivariant, horizontal $W$-valued differential forms on a principal bundle with an equivariant connection\\
$\Omega(M,W)$ && Sheaf of $W$-valued differential forms, for a vector space $W$ and a manifold $M$\\
$\Omega^\omega$ && Curvature $2$-form of a connection $\omega$\\
$PB$ & &  Category of principal bundles \\
$PB^G$ & & Subcategory of $PB$ with bundles with a fixed structure group $G$ and maps that are the identity on $G$\\
$PB_M$  & & Subcategory of $PB$ with bundles over a fixed base space $M$ and maps that are the identity on $M$ \\
$P_H$ && Projection onto horizontal bundle corresponding to a connection\\
$\pi_{\bullet\times\Bbbk^k}$ && Projection $T\Bbbk^k\cong \Bbbk^k\times\Bbbk^k\ra{\pi_{\bullet\times\Bbbk^k}}\Bbbk^k$; $(x,y)\mapsto y$, where the second $\Bbbk^k$ represents the tangent space\\
$P_V$ && Projection onto vertical bundle corresponding to a connection\\
$R$ && Right action of a Lie group on itself by right multiplication\\
$\rho$ && Often denotes the principal right action on a principal fibre bundle.\\
$\rho_g$ && For a group action $P\times G\ra{\rho}P$, for a manifold $P$ and a Lie group $G$, the map $P\ra{\rho_g}P$; $p\mapsto\rho(p,g)$\\
$\rho^p$ && For a group action $P\times G\ra{\rho}P$, for a manifold $P$ and a Lie group $G$, the map $G\ra{\rho^p}P$; $g\mapsto\rho(p,g)$\\
$\mathrm{Ric}$ && For a Levi-Civita connection, the Ricci tensor\\
$\mathcal{S}$ && For a Levi-Civita connection, the curvature scalar\\
$T_c$ && Parallel transport map along a path $c$\\
$\mathrm{tr}^{i,j}$ && Trace of a tensor on a (pseudo)-Riemannian manifold over input slot $i$ and $j$\\
$VB^k_\Bbbk$ && Category of $k$-dimensional $\Bbbk$-vector bundles and maps of vector bundles that are invertible on the fibre\\
$VF\ra{}F$ && Vertical bundle of $F$, if $F\ra{\pi}M$ is a fibre bundle\\
$X_\hor$ && Horizontal lift of a vector field $X\in\mathcal{X}(M)$ to a fibre bundle $F\ra{\pi}M$ on which we have specified a connection\\
$\mathcal{X}(M)$ && Sheaf of vector fields on a manifold $M$\\
$Z_M$ && For a Lie algebra action $\g \ra{} \mathcal{X}(M)$ on a manifold $M$, the image of $Z\in\g$ under this action
\end{tabularx}
\end{document}